\documentclass[11pt,a4paper]{article}
\usepackage[margin=26mm,headheight=14pt]{geometry}
\usepackage[T1]{fontenc}
\usepackage[utf8]{inputenc}
\usepackage{amsmath,amsthm,mathtools,aliascnt}
\usepackage{lmodern,amssymb,mathrsfs}
\usepackage{microtype,enumitem,booktabs,array,xcolor,fancyhdr}
\usepackage{etoolbox}
\usepackage{tikz,float}
\usepackage{amsmath,amssymb,graphicx,tikz}
\usetikzlibrary{arrows.meta,decorations.pathreplacing}
\usetikzlibrary{arrows.meta,positioning,fit}
\usepackage[unicode,bookmarksnumbered,colorlinks=true,%
linkcolor=blue!40!black,citecolor=blue!40!black,%
urlcolor=blue!40!black]{hyperref}
\usepackage[nameinlink,noabbrev]{cleveref}
\crefname{subsection}{section}{sections}
\Crefname{subsection}{Section}{Sections}
\crefname{subsubsection}{section}{sections}
\Crefname{subsubsection}{Section}{Sections}

\hypersetup{pdftitle={Rapid mixing of Gibbs samplers via quantum Dobrushin--Shlosman conditions},pdfauthor={Cambyse Rouze and Daniel Stilck Franca},pdfsubject={Research
manuscript:
quantum belief propagation, arbitrary external fields, and finite-block rapid mixing}}
\allowdisplaybreaks[2]
\numberwithin{equation}{section}
\newtheorem{theorem}{Theorem}[section]
\newaliascnt{lemma}{theorem}
\newtheorem{lemma}[lemma]{Lemma}
\aliascntresetthe{lemma}
\crefname{lemma}{lemma}{lemmas}
\Crefname{lemma}{Lemma}{Lemmas}
\newaliascnt{proposition}{theorem}
\newtheorem{proposition}[proposition]{Proposition}
\aliascntresetthe{proposition}
\crefname{proposition}{proposition}{propositions}
\Crefname{proposition}{Proposition}{Propositions}
\newaliascnt{corollary}{theorem}

\aliascntresetthe{corollary}
\crefname{corollary}{corollary}{corollaries}
\Crefname{corollary}{Corollary}{Corollaries}
\theoremstyle{definition}
\newaliascnt{definition}{theorem}
\newtheorem{definition}[definition]{Definition}
\aliascntresetthe{definition}
\crefname{definition}{definition}{definitions}
\Crefname{definition}{Definition}{Definitions}
\newaliascnt{remark}{theorem}
\newtheorem{remark}[remark]{Remark}
\newtheorem*{remark*}{Remark}
\aliascntresetthe{remark}
\crefname{remark}{remark}{remarks}
\Crefname{remark}{Remark}{Remarks}
\DeclareMathOperator{\Tr}{Tr}
\DeclareMathOperator{\ad}{ad}
\DeclareMathOperator{\Ad}{Ad}
\DeclareMathOperator{\dist}{dist}
\DeclareMathOperator{\supp}{supp}
\DeclareMathOperator{\diam}{diam}
\DeclareMathOperator{\id}{id}
\DeclareMathOperator{\Lip}{Lip}

\DeclareMathOperator{\Cov}{Cov}
\newcommand{\HH}{\mathcal H}
\newcommand{\BB}{\mathcal B}
\newcommand{\CC}{\mathbb C}
\newcommand{\RR}{\mathbb R}
\newcommand{\NN}{\mathbb N}
\newcommand{\ZZ}{\mathbb Z}
\newcommand{\one}{\mathbf1}
\newcommand{\eps}{\varepsilon}
\newcommand{\norm}[1]{\left\lVert#1\right\rVert}
\newcommand{\nOne}[1]{\left\lVert#1\right\rVert_1}
\newcommand{\nW}[1]{\left\lVert#1\right\rVert_{W_1}}
\newcommand{\nD}[1]{\left\lVert#1\right\rVert_\diamond}
\newcommand{\abs}[1]{\left|#1\right|}
\newcommand{\cK}{\mathcal K}

\newcommand{\cT}{\mathcal T}
\newcommand{\cN}{\mathcal N}

\newcommand{\cJ}{\mathcal J}

\newcommand{\proofstep}[1]{\par\smallskip\noindent\textit{#1.}\ }

\setlist[enumerate]{leftmargin=2em,itemsep=3pt,topsep=3pt}

\crefname{condition}{condition}{conditions}
\Crefname{condition}{Condition}{Conditions}
\begin{document}

\begin{center}
 {\LARGE\bfseries Rapid mixing of Gibbs samplers via\\[4pt]
 quantum Dobrushin--Shlosman conditions\par}
 \vspace{10pt}

 {\large Cambyse Rouz\'e$^1$\qquad Daniel Stilck Fran\c{c}a$^2$\par}
 \vspace{7pt}
 {\small $^1$Inria, T\'el\'ecom Paris, LTCI, Institut Polytechnique de Paris, Palaiseau,
 France\par
 $^2$Department of Mathematical Sciences, University of Copenhagen, Denmark\par}
 \vspace{4pt}
 {\small $^1$\href{mailto:cambyse.rouze@inria.fr}{\texttt{cambyse.rouze@inria.fr}}\qquad
 $^2$\href{mailto:dsfranca@math.ku.dk}{\texttt{dsfranca@math.ku.dk}}\par}
 \vspace{8pt}
\end{center}

\begin{abstract}
\noindent 
Classical Dobrushin–Shlosman theory uses heat-bath block updates to identify sharp temperature thresholds for establishing the rapid mixing of Gibbs samplers. In particular, these mixing guarantees often hold even when stricter single-site conditions fail. We extend this approach to noncommuting quantum lattice systems through a quantum Dobrushin--Shlosman condition for finite-block dynamics, which we call smoothed heat-bath dynamics. We establish this condition in two regimes. For finite-range quantum spin chains at every fixed finite temperature, we prove logarithmic trace-norm mixing in system size, with normalized bounds uniform in on-site fields and single exponential in inverse temperature. We also prove stability under small local perturbations of interacting, possibly noncommuting reference Hamiltonians on graphs of polynomial volume growth. For classical references with bounded local interactions, uniform strong spatial mixing implies our quantum condition. With bounded local terms and fixed physical, geometric and block parameters, the resulting samplers prepare Gibbs states to trace-norm error $\varepsilon$ using $N\operatorname{polylog}(N/\varepsilon)$ gates and classical operations.

Our block updates combine a local Gibbs reset with quantum belief propagation to incorporate interactions across the block boundary, yielding exactly Gibbs-preserving, KMS-reversible channels. Two independently tunable parameters control contraction: an internal evolution time suppresses the influence of sites inside a block, while the block size allows interior contraction to dominate boundary influence. This finite-time relaxation has no counterpart in a classical heat-bath block update.

\end{abstract}

\setcounter{tocdepth}{1}
\begingroup
\small
\setlength{\parskip}{0pt}
\makeatletter
\patchcmd{\l@section}{1.0em}{0.5em}{}{}
\makeatother
\tableofcontents
\endgroup
\clearpage

\section{Introduction}
\label{sec:intro}
Preparing Gibbs states is a central problem in quantum many-body computation and has
motivated a growing number of quantum algorithms~\cite{Temme2011,CKBG2023,CKBG2025}.
A natural approach is to construct local dissipative dynamics whose stationary state
is the desired Gibbs state, following the role of Glauber dynamics in classical
sampling~\cite{Martinelli}. Chen, Kastoryano, and Gily\'en~\cite{CKG} constructed
quasi-local quantum Gibbs samplers satisfying exact KMS detailed balance. Subsequent
work has developed further constructions~\cite{DLL,GCDK} and established rapid mixing
in several regimes~\cite{RFA1,RFA2,BLMT,BC}. A central question is how to use spatial
properties of the Gibbs state to prove convergence of these dynamics, particularly
beyond high temperature.

Classically, the Dobrushin condition~\cite{Dob} provides one such connection: it bounds
the sensitivity of a single-site conditional distribution to changes in the surrounding
spins. Finite-volume Dobrushin--Shlosman (DS) conditions~\cite{DS1985} allow the same
comparison on a block of spins. The block size is then a parameter of the criterion.
Increasing it can make the contraction produced by resampling the block dominate the
influence of its boundary, even when a single-site test fails~\cite{Martinelli,CLW}.
Quantum Dobrushin-type criteria have been obtained from recovery maps and transport
inequalities~\cite{DPR,GR}, as well as from influence estimates for quantum Markov
dynamics~\cite{RFA2,BLMT}. 

Here we develop the finite-block approach for noncommuting
Hamiltonians. Our main application is rapid mixing for quantum spin chains at every
fixed finite temperature, with bounds independent of the norms of the
on-site Hamiltonians. We also prove that, under the stated locality
and geometric assumptions, a strict quantum DS condition for a family of block channels, uniform in the volume, persists throughout a
system-size-independent neighbourhood in the local interaction norm.
These results
use the Wasserstein contraction principle of~\cite{BLMT}; our
contribution is to construct block updates and bound their influences
in regimes where single-site estimates are insufficient.

Let $H$ be a finite-range Hamiltonian on a lattice $\Lambda$ of $N$ spins, with Gibbs
state $\sigma_{\beta,H}=e^{-\beta H}/\Tr(e^{-\beta H})$ at inverse temperature
$\beta>0$. For $B\subseteq\Lambda$, let $H_B$ contain the interaction terms supported
entirely in $B$, and set $H_B^0=H_B+H_{B^c}$. Thus $H-H_B^0$ consists of the
interactions crossing the boundary of $B$. Quantum belief propagation
(QBP)~\cite{HastingsQBP}, applied at inverse temperature $\beta/2$, gives an invertible
operator $O_B$ satisfying
\begin{equation}
 e^{-\beta H/2}=O_Be^{-\beta H_B^0/2}O_B^*,\qquad
 \|O_B^{\pm1}\|\le e^{\beta\|H-H_B^0\|/4}.
 \label{intro:congruence}
\end{equation}
Write $\sigma_\beta^B=e^{-\beta H_B}/\Tr(e^{-\beta H_B})$ for the Gibbs state of the
isolated block. We start from the replacement channel
$\mathcal R_B(X)=\sigma_\beta^B\otimes\Tr_B X$ and incorporate the crossing
interactions through the completely positive map
\begin{equation}
 \mathcal T_B(X)=O_B\left[\sigma_\beta^B\otimes
       \Tr_B(O_B^{-1}XO_B^{-*})\right]O_B^*.
 \label{intro:jumpmap}
\end{equation}
The KMS completion of~\cite{AmorimCarlen,DLL} adds the anticommutator and Hamiltonian
commutator terms needed to obtain a KMS-reversible Lindblad generator $\mathcal K_B$.
Its finite-time channel $\Phi_B(T)=e^{T\mathcal K_B}$ preserves $\sigma_{\beta,H}$
exactly. We use these channels as quantum substitutes for block heat-bath updates.
The norm bound in \eqref{intro:congruence} depends only on the crossing interactions;
combined with locality estimates, it allows us to control how much influence a block
update transmits across its boundary. Both the block size and the evolution time $T$
can then be chosen to obtain contraction.

\subsection{The finite-block criterion and its classical meaning}
For a block-update channel $\Phi_B$ and a site $j\in\Lambda$, define the
\emph{influence of site $j$ under the update $\Phi_B$} by
\begin{equation}\label{equ:def_db}
 D_B(j)=\sup_{\substack{X=X^*\ne0\\\Tr_jX=0}}
       \frac{\|\Phi_B(X)\|_{W_1}}{\|X\|_1/2},
\end{equation}
where $\|\cdot\|_{W_1}$ is the quantum Wasserstein norm~\cite{DPMTL}. The condition
$\Tr_jX=0$ represents a difference between states with the same marginal outside $j$.
Thus $D_B(j)$ measures the size of this difference after the update, including its
effect outside $B$.

Suppose that all updates preserve $\sigma_{\beta,H}$, and assign a positive rate
$a_B$ to each block. Our contraction condition is
$\kappa:=\min_{j\in\Lambda}\sum_Ba_B[1-D_B(j)]>0$. In words, for every initial
site $j$, the weighted sum of the influences must be smaller than the total update
rate. The resulting generator $\mathcal L=\sum_Ba_B(\Phi_B-\id)$ contracts the
quantum Wasserstein distance at rate $\kappa$. In particular, its evolution from any
initial state is within trace norm $\varepsilon$ of $\sigma_{\beta,H}$ after time
$\kappa^{-1}\log(2N/\varepsilon)$, for $0<\varepsilon<1$.

For a classical Hamiltonian, this condition recovers the usual block transport
criterion when the updates are heat-bath conditional expectations. Indeed, on
product-basis diagonal inputs, the quantum Wasserstein distance equals the classical
Wasserstein distance for the Hamming metric. A heat-bath update samples the spins in
$B$ from their Gibbs distribution conditioned on the exterior configuration. Consider
two configurations that differ only at $j$. If $j\in B$, their exterior configurations
agree, and the update produces identical distributions. If $j\notin B$, the difference
at $j$ remains and can also change the distribution sampled inside $B$. The diagonal
restriction of \eqref{equ:def_db} therefore equals $0$ in the first case and $1+c_{Bj}(\pi)$
in the second. Here $\pi$ is the Gibbs distribution, and
$c_{Bj}(\pi)$ is the largest Hamming Wasserstein distance between the
two conditional distributions on $B$, over exterior configurations differing only
at $j$.

If each site belongs to exactly $m$ labelled blocks and $a_B=1/m$, the classical
condition becomes
\begin{align}\label{equ:classical_ds}
 \sum_{B:j\notin B}{c_{Bj}(\pi)}\le(1-\kappa)m,\qquad j\in\Lambda.
\end{align}
This is the finite-block transport form of the DS condition~\cite{DS1985,CLW}.
\Cref{thm:classical} shows that a strict inequality of this form remains valid when
heat-bath updates are replaced by the diagonal restrictions of $e^{T\mathcal K_B}$,
for a sufficiently large finite $T$. The approximation error decays exponentially
in $T$, at a rate controlled by the oscillation of the crossing energy. Although
classical heat-bath updates can be used directly, finite-time evolution is essential
to our quantum construction: it provides enough relaxation inside the block while
retaining quantitative locality bounds. These bounds depend on $T$ and do not give
uniform locality as $T\to\infty$. 

Since these channels converge to heat-bath block updates as $T\to\infty$ in the classical setting, we refer to them as \textit{smoothed heat-bath block updates} and the dynamics they generate as the \textit{smoothed heat-bath dynamics}. Moreover, for classical Hamiltonians with uniformly bounded interactions, satisfying uniform DS strong spatial mixing,
\Cref{qp:classical-qds} proves that smoothed heat-bath updates over sufficiently large blocks satisfy the full quantum DS condition, with a uniform bound
on their evolution times. 

\subsection{Main results}
Our main results are the following.
\begin{enumerate}
\item \emph{Rapid mixing in one dimension at every finite temperature.}
We prove rapid mixing for noncommuting spin chains at every fixed inverse temperature
$\beta<\infty$. Consider a chain of $N$ spins of local dimension at most $d$, with
Hamiltonian $H=\sum_{i=1}^N V_i+\sum_{i=1}^{N-1}W_{i,i+1}$, where
  the on-site terms $V_i$ are arbitrary self-adjoint operators.
We use the interaction strength $J_{\mathrm{int}}$ defined in
\eqref{eq:interaction-strength-gibbs}, which excludes the on-site terms.
\Cref{q1:main} shows that there are a block length $\ell$ and an evolution time $T$,
both independent of $N$, for which the translated interval updates
$\Phi_B=e^{T\mathcal K_B}$ satisfy the quantum DS condition with $\kappa=1/2$.
These parameters can be chosen to satisfy
\begin{equation}
 \ell\le C_d(1+\beta J_{\mathrm{int}})^8e^{8\beta J_{\mathrm{int}}},\qquad
 T\le C_de^{2\beta J_{\mathrm{int}}}\log(2+\beta J_{\mathrm{int}}),
 \label{intro:parameters}
\end{equation}
where $C_d$ depends only on $d$. In particular, these bounds are uniform in the norms
of the on-site terms. The generator
$\mathcal L=\ell^{-1}\sum_B(\Phi_B-\id)$ then has trace-norm mixing time at most
$2\log(2N/\varepsilon)$, for $0<\varepsilon<1$.

The generator $\mathcal L$ assigns rate $1/\ell$ to each block update, independently of
its internal evolution time. To account for this time and the strength of the block
generators, we use the bound
$\|\mathcal K_B\|_\diamond\le q_\beta:=C_de^{2\beta J_{\mathrm{int}}}(1+\beta
J_{\mathrm{int}})^4
\log(2+\beta J_{\mathrm{int}})$. Since
$\Phi_B=e^{q_\beta T(\mathcal K_B/q_\beta)}$, each update takes time $q_\beta T$
under a generator of diamond norm at most one. With this normalization,
rescaling the mixing bound of \Cref{q1:main} gives
\begin{equation}
 t_{\mathrm{mix}}(\mathcal L/(q_\beta T),\varepsilon)
 \le C_de^{4\beta J_{\mathrm{int}}}(1+\beta J_{\mathrm{int}})^4\log^2(2+\beta
J_{\mathrm{int}})\log(2N/\varepsilon).
 \label{intro:normclock}
\end{equation}
The dependence on system size is therefore logarithmic, and the dependence on
$\beta J_{\mathrm{int}}$ is at most exponential. The result extends to chains with any
fixed finite
interaction range by grouping consecutive sites, with constants depending on the range.
As in classical rapid-mixing results~\cite{Zeg1D,SZdiscrete}, decay of correlations is
central to the proof: together with locality of the dynamics, it allows us to verify
the block influence condition. The proof is given in Section~\ref{q1:section}.

\item \emph{Stability under local noncommuting perturbations.}

Let $H^{(0)}$ be a finite-range Hamiltonian whose block channels
satisfy the quantum DS condition with constant $\kappa_0>0$,
independently of the system size. We perturb it to $H=H^{(0)}+Q$,
where $Q=\sum_XQ_X$ has prescribed finite range and local strength
$\delta:=\sup_{i\in\Lambda}\sum_{X\ni i}\|Q_X\|$.
Under the geometric and block assumptions of \Cref{qp:general-main},
the same construction for $H$ satisfies this condition with constant $\kappa_0/2$
whenever $\delta\le\delta_*$, for some $\delta_*>0$ independent of $N$.
The threshold depends on the fixed geometric and intersite-interaction
bounds, inverse temperature, block parameters, update rates, evolution
time, and reference DS constant. With these data fixed, it is independent
of the reference on-site norms. The reference terms need not commute,
and the total perturbation norm may grow as $N\delta$.

As a concrete example, consider the ferromagnetic Ising model on a square
torus at $\beta J_{\mathrm{int}}=1.4$ and zero longitudinal field.
The classical single-site test fails, while strong spatial mixing
implies the quantum DS condition for sufficiently large blocks. The general stability theorem
therefore gives rapid mixing after adding a sufficiently small transverse
field (\Cref{qp:examples}).

\end{enumerate}

\paragraph{Implementation of the dynamics.}
   Together with the mixing estimates above, 
\Cref{qi:theorem} 
gives
quantum algorithms preparing  the Gibbs state    with
 $N\operatorname{polylog}(N/\varepsilon)$ elementary gates and
classical operations.

The bound holds at fixed inverse
temperature, interaction geometry and strength, local dimension,
block size, evolution time $T$, update rates, and positive
contraction rate $\kappa$. For this implementation result, we also
impose a fixed bound on the norms of the on-site Hamiltonians.
The local terms and construction parameters are supplied to the
required precision. The algorithm combines approximations of the
QBP operators with Lindblad simulation, using the methods
of~\cite{BCCKS,FLT2023,LW}.

\subsubsection{Proof strategy and main ingredients}
The first step is to establish Gibbs preservation and locality for the block
channels. The QBP identity \eqref{intro:congruence} gives the algebraic relation
needed for KMS detailed balance. Its norm bound depends only on interactions crossing
the block boundary. Lieb--Robinson bounds then control the error made by restricting
the dressing operator and the coherent correction to a neighborhood of that boundary
(\Cref{qg:construction,qg:collars}). Our estimates are uniform in the on-site
Hamiltonians because their evolution is treated in the interaction picture.

The main step in the chain proof is to convert decay of Gibbs correlations into a
bound on the influence of a block update. Duhamel identities separate the part of the
evolution containing no Gibbs replacement from the terms containing a replacement.
The first contribution decays exponentially in the internal time $T$. For the second,
clustering controls the influence between the block interior and its boundary, even
when the input is entangled with the exterior. Before summing these estimates, we
take their minimum with the bound from trace-norm contractivity. This controls the
short-distance terms independently of the clustering prefactor and gives the polynomial
factors
in \eqref{intro:parameters} (\Cref{q1:influence}). The required
clustering estimate for Gibbs states of finite intervals is proved using QBP and
inclusion--exclusion in \Cref{qc:clustering}. We then choose $\ell$ and $T$ so that
the contraction in the interior exceeds the total boundary influence.

Our perturbation argument compares the block channels on finite
regions and controls the locality errors for both the reference and
perturbed Hamiltonians. Telescoping both sequences of restricted
channels bounds the change in total influence uniformly in $N$. A sufficiently
small local perturbation therefore preserves the quantum DS
condition with a positive constant (\Cref{qp:general-main}). For a classical reference, an influence estimate for the native
QBP conditional expectations gives the required influence bounds;
a uniform bound on the interactions combined with a
compactness argument yield bounded block evolution times
(\Cref{qp:classical-qds}).

\subsection{Relation to previous work}
For commuting Hamiltonians, Kastoryano--Brand\~ao~\cite{KBcommuting} related
clustering to a uniform spectral gap. Subsequent work established rapid mixing at
every finite temperature in commuting spin chains~\cite{BCGcommuting,BCGMLSI,KACR}.
For commuting code Hamiltonians, Stengele et al.\ derived modified logarithmic
Sobolev inequalities from DS-type equilibrium conditions, including all-temperature
rapid mixing for two-dimensional Abelian quantum double models~\cite{SCGLcss,SCGL2026}.
Our chain theorem establishes logarithmic trace-norm mixing without a commutativity
assumption.

Several preparation algorithms use equilibrium structure directly.
Brand\~ao--Kastoryano~\cite{BKpreparation} gave a preparation scheme assuming decay of
correlations and an approximate Markov property for Gibbs states on induced subgraphs.
Other approaches include the dimension-reduction algorithm of
Bilgin--Boixo~\cite{BilginBoixo} and tensor-network approximations, which take the
form of matrix product operators in one dimension~\cite{MSVC}. Approximate Markov
properties have also been studied for one-dimensional Gibbs states~\cite{KatoBrandao}
and at high temperature through bounds on conditional mutual
information~\cite{KKB,KKBerratum,KatoKuwaharaBP}.

For noncommuting Hamiltonians, Rouz\'e--Stilck Fran\c{c}a--Alhambra proved
high-temperature rapid mixing and nearly linear preparation for KMS Gibbs
samplers~\cite{RFA1,RFA2}. Bakshi--Tan obtained high-temperature rapid mixing
uniformly in arbitrary external fields~\cite{BT}, and \v{S}m\'id et al.\ proved
rapid mixing for sufficiently weak interactions around noninteracting reference
systems~\cite{SMB2025}. Our stability result allows any finite-range
reference whose smoothed heat-bath channels satisfy a strict quantum DS
condition, including noncommuting references. For classical systems
satisfying uniform DS strong spatial mixing, we prove the quantum DS
condition for the QBP dynamics and apply the same theorem. The allowed
perturbation is controlled by the reference DS constant, the block
evolution times, and locality bounds.

Bergamaschi--Chen~\cite{BC} proved that noncommuting quantum spin chains admit Gibbs
samplers with a system-size independent spectral gap at every finite temperature,
and obtained nearly linear preparation by a separate adiabatic construction. A
spectral-gap estimate alone gives a worst-case trace-norm mixing bound involving
$\log\lambda_{\min}(\sigma_{\beta,H})^{-1}$, where $\lambda_{\min}$ denotes the
smallest eigenvalue. For bounded local Hamiltonians this term can grow linearly in
$N$. Our block dynamics instead satisfy a direct logarithmic trace-norm mixing bound.

Harrow--Mehraban--Soleimanifar~\cite{HMS} studied the relation between
partition-function analyticity and correlation decay, extending another aspect of
classical DS theory to quantum systems. Their analyticity condition allows local
measurement insertions; classically, these include fixing boundary spins. Our
condition concerns the influence of dynamical updates on arbitrary initial states
and reduces to the classical block transport condition. The two approaches thus
address equilibrium sensitivity and dynamical contraction, respectively; an implication
from their analyticity condition to our quantum block condition is not assumed here.

The most direct comparison is with Bakshi--Liu--Moitra--Tang~\cite{BLMT}, whose
Wasserstein criterion applies to general quantum channels and is verified there for
updates centered on single sites. We use the same contraction principle. The new
ingredient is a construction of block channels whose influences can be controlled
by decay of correlations, with quantitative bounds on both the block length $\ell$
and the internal time $T$.

For classical heat-bath dynamics, the update is an exact conditional expectation, so
its definition does not require an internal time $T$. The analogous projection is
problematic quantumly: a state-preserving conditional expectation onto the full
complementary matrix algebra would force the state to be a product across the cut
(by the bimodule property of conditional expectations~\cite{Tomiyama1957}).
We therefore use the finite-time channels
$e^{T\mathcal K_B}$ as pseudo-conditional expectations. Increasing $T$ reduces the
remaining interior influence, while the error in approximating the channel by local
dynamics grows linearly in $T$. Increasing $\ell$ changes the balance between the
number of interior sites and the boundary influence. In one dimension, bounded
boundary size and decay of correlations allow both parameters to be chosen so that
contraction holds at every finite temperature.

\begin{figure}[t]
  \centering

\begingroup
\definecolor{bqpInk}{HTML}{263344}
\definecolor{bqpMuted}{HTML}{657383}
\definecolor{bqpLine}{HTML}{D5DDE5}
\definecolor{bqpBlue}{HTML}{456DB4}
\definecolor{bqpRed}{HTML}{C45566}
\definecolor{bqpGold}{HTML}{AE7A37}
\tikzset{
  bqp/title/.style={font=\fontsize{11}{13}\selectfont\bfseries,
                   text=bqpInk,anchor=west,inner sep=0pt},
  bqp/subtitle/.style={font=\fontsize{9}{11}\selectfont,
                      text=bqpMuted,anchor=west,inner sep=0pt},
  bqp/text/.style={font=\fontsize{9}{11}\selectfont,
                  text=bqpMuted,align=center,inner sep=0pt},
  bqp/math/.style={font=\fontsize{10.4}{12.5}\selectfont,
                  text=bqpInk,align=center,inner sep=0pt},
  bqp/chain/.style={draw=bqpInk!68,line width=.65pt,line cap=round},
  bqp/arrow/.style={-{Stealth[length=1.55mm,width=1.1mm]},
                   draw=bqpMuted!80,line width=.65pt,
                   rounded corners=2pt},
  bqp/site/.style={draw=bqpInk!80,fill=white,line width=.65pt},
  bqp/step/.style={font=\fontsize{9.5}{11.5}\selectfont\bfseries,
                  text=bqpInk,inner sep=0pt,align=center}
}
\resizebox{\textwidth}{!}{
\begin{tikzpicture}[x=1cm,y=1cm]
  \path[use as bounding box] (0,0) rectangle (16,12);

  \path[rounded corners=3mm,fill=bqpBlue!3,draw=bqpBlue!32,
        line width=.6pt] (0,7.4) rectangle (7.75,12);
  \node[bqp/title,text=bqpBlue] at (.38,11.53) {(a)};
  \node[bqp/title] at (1.02,11.53) {Single-site updates};
  \node[bqp/math,font=\fontsize{17}{19}\selectfont,text=bqpBlue]
       at (3.875,10.24) {$\Phi_i$};

  \begin{scope}[shift={(3.875,9.30)}]

    \path[fill=bqpBlue,fill opacity=.16]
      (-3.50,0)
      .. controls (-1.80,.035) and (-.85,.14) .. (-.40,.40)
      .. controls (-.12,.58) and (.12,.58) .. (.40,.40)
      .. controls (.85,.14) and (1.80,.035) .. (3.50,0)
      .. controls (1.80,-.025) and (.85,-.07) .. (.40,-.20)
      .. controls (.12,-.29) and (-.12,-.29) .. (-.40,-.20)
      .. controls (-.85,-.07) and (-1.80,-.025) .. cycle;
    \draw[bqp/chain] (-3.30,0) -- (3.30,0);
    \foreach \k in {-4,...,4}
      \path[bqp/site] ({.75*\k},0) circle[radius=1.12mm];
    \path[draw=bqpBlue!85!black,fill=bqpBlue!72,line width=.8pt]
      (0,0) circle[radius=1.40mm];
    \node[bqp/math] at (-3.55,0) {$\cdots$};
    \node[bqp/math] at (3.55,0) {$\cdots$};
    \node[bqp/math] at (0,-.45) {$i$};
  \end{scope}
  \node[bqp/text,text=bqpInk] at (3.875,8.02)
    {Quasi-local around a single site.};
  \node[bqp/text] at (3.875,7.65)
    {Block size fixed to one site.};

  \path[rounded corners=3mm,fill=bqpRed!3,draw=bqpRed!35,
        line width=.6pt] (8.25,7.4) rectangle (16,12);
  \node[bqp/title,text=bqpRed] at (8.63,11.53) {(b)};
  \node[bqp/title] at (9.27,11.53) {Block construction};
  \node[bqp/subtitle] at (8.63,11.03)
    {Adjustable block $B_x=[x,x+\ell-1]$};
  \node[bqp/math,font=\fontsize{17}{19}\selectfont,text=bqpRed]
       at (12.125,10.24) {$\Phi_{B_x}(T)$};

  \begin{scope}[shift={(12.125,9.30)}]

    \path[fill=bqpRed,fill opacity=.14]
      (-3.50,0)
      .. controls (-2.90,.035) and (-2.28,.22) .. (-1.82,.38)
      .. controls (-1.30,.51) and (1.30,.51) .. (1.82,.38)
      .. controls (2.28,.22) and (2.90,.035) .. (3.50,0)
      .. controls (2.90,-.025) and (2.28,-.15) .. (1.82,-.25)
      .. controls (1.30,-.33) and (-1.30,-.33) .. (-1.82,-.25)
      .. controls (-2.28,-.15) and (-2.90,-.025) .. cycle;
    \path[rounded corners=1.5mm,fill=bqpRed,fill opacity=.075,
          draw=bqpRed!60,line width=.55pt]
      (-1.84,-.28) rectangle (1.84,.32);
    \draw[bqp/chain] (-3.30,0) -- (3.30,0);
    \foreach \k in {-4,...,4}
      \path[bqp/site] ({.75*\k},0) circle[radius=1.12mm];
    \foreach \k in {-2,...,2}
      \path[draw=bqpRed!85!black,fill=bqpRed!55,line width=.75pt]
        ({.75*\k},0) circle[radius=1.25mm];
    \node[bqp/math] at (-3.55,0) {$\cdots$};
    \node[bqp/math] at (3.55,0) {$\cdots$};
    \node[bqp/math] at (-1.5,-.48) {$x$};
    \node[bqp/math] at (1.5,-.48) {$x+\ell-1$};
  \end{scope}
  \node[bqp/text,text=bqpInk] at (12.125,8.02)
    {Quasi-local around an extended block.};
  \node[bqp/text] at (12.125,7.65)
    {The block length $\ell$ can be chosen.};

  \path[rounded corners=3mm,fill=white,draw=bqpLine,line width=.65pt]
    (0,0) rectangle (16,6.95);
  \node[bqp/title] at (.38,6.51) {(c)};
  \node[bqp/title] at (1.02,6.51) {Constructing the block channel};
  \node[bqp/subtitle,anchor=east] at (15.62,6.51) {$B=B_x$};

  \node[bqp/step] at (3.875,5.96) {Gibbs replacement};
  \begin{scope}[shift={(3.875,5.22)}]
    \path[rounded corners=1.2mm,fill=bqpRed!9,
          draw=bqpRed!42,line width=.5pt]
      (-1.84,-.22) rectangle (1.84,.22);
    \draw[bqp/chain] (-3.30,0) -- (-1.98,0);
    \draw[bqp/chain] (-1.76,0) -- (1.76,0);
    \draw[bqp/chain] (1.98,0) -- (3.30,0);

    \foreach \c in {-1.875,1.875}{
      \draw[draw=bqpMuted!60,line width=.5pt]
        ({\c-.055},-.10) -- ({\c+.055},.10);
    }
    \foreach \k in {-4,...,4}
      \path[bqp/site] ({.75*\k},0) circle[radius=.86mm];
    \foreach \k in {-2,...,2}
      \path[draw=bqpRed!80!black,fill=bqpRed!38,line width=.6pt]
        ({.75*\k},0) circle[radius=.93mm];
    \node[bqp/math,font=\fontsize{10}{11}\selectfont,text=bqpRed]
      at (0,.34) {$\sigma_\beta^B$};
    \node[bqp/math] at (-3.55,0) {$\cdots$};
    \node[bqp/math] at (3.55,0) {$\cdots$};
  \end{scope}
  \node[bqp/math] at (3.875,4.57)
    {$\mathcal R_B(X)=\sigma_\beta^B\otimes\operatorname{Tr}_B X$};
  \node[bqp/text,font=\fontsize{8.6}{10.4}\selectfont] at (3.875,4.13)
    {$H_B^0=H_B+H_{B^c}$: crossing bonds removed.};

  \node[bqp/step] at (12.125,5.96)
    {Boundary QBP at $\beta/2$};
  \begin{scope}[shift={(12.125,5.22)}]
    \path[rounded corners=1.2mm,fill=bqpRed!4,
          draw=bqpRed!24,line width=.5pt]
      (-1.84,-.22) rectangle (1.84,.22);
    \foreach \c in {-1.875,1.875}{
      \begin{scope}[shift={(\c,0)}]
        \path[fill=bqpGold,fill opacity=.21]
          (-1.6,0)
          .. controls (-.82,.015) and (-.50,.14) .. (-.20,.27)
          .. controls (-.07,.33) and (.07,.33) .. (.20,.27)
          .. controls (.50,.14) and (.82,.015) .. (1.6,0)
          .. controls (.82,-.015) and (.50,-.09) .. (.20,-.18)
          .. controls (.07,-.22) and (-.07,-.22) .. (-.20,-.18)
          .. controls (-.50,-.09) and (-.82,-.015) .. cycle;
      \end{scope}
    }
    \draw[bqp/chain] (-3.30,0) -- (3.30,0);
    \draw[draw=bqpGold,line width=1.25pt] (-2.25,0) -- (-1.50,0);
    \draw[draw=bqpGold,line width=1.25pt] (1.50,0) -- (2.25,0);
    \foreach \k in {-4,...,4}
      \path[bqp/site] ({.75*\k},0) circle[radius=.86mm];
    \foreach \k in {-2,...,2}
      \path[draw=bqpRed!60!black,fill=bqpRed!17,line width=.6pt]
        ({.75*\k},0) circle[radius=.93mm];
    \node[bqp/math,font=\fontsize{10}{11}\selectfont,text=bqpGold]
      at (0,.34) {$O_B$};
    \node[bqp/math] at (-3.55,0) {$\cdots$};
    \node[bqp/math] at (3.55,0) {$\cdots$};
  \end{scope}
  \node[bqp/math] at (12.125,4.57)
    {$O_Be^{-\beta H_B^0/2}O_B^*=e^{-\beta H/2}$};
  \node[bqp/text,font=\fontsize{8.6}{10.4}\selectfont] at (12.125,4.13)
    {$H_B(u)=H_B^0+uW_{\partial B},\qquad 0\le u\le1$};

  \draw[bqp/arrow] (3.875,3.85) -- (3.875,3.66) -- (5.40,3.66) -- (5.40,3.48);
  \draw[bqp/arrow] (12.125,3.85) -- (12.125,3.66) -- (10.60,3.66) -- (10.60,3.48);
  \path[rounded corners=2mm,fill=bqpInk!3,draw=bqpLine,line width=.5pt]
    (3.02,2.52) rectangle (12.98,3.48);
  \node[bqp/math,font=\fontsize{11.1}{13}\selectfont] at (8,3.17)
    {$\mathcal T_B=\operatorname{Ad}_{O_B}\circ\mathcal R_B
                           \circ\operatorname{Ad}_{O_B^{-1}}$};
  \node[bqp/text,font=\fontsize{8.7}{10.4}\selectfont] at (8,2.76)
    {Completely positive; generally not trace preserving.};

  \draw[bqp/arrow] (8,2.50) -- (8,2.12);
  \node[bqp/text,anchor=east,font=\fontsize{8.5}{10}\selectfont]
    at (7.76,2.30) {KMS completion};
  \node[bqp/math,font=\fontsize{10.4}{12}\selectfont]
    at (5.45,1.63)
    {$\mathcal K_B(X)=\mathcal T_B(X)-\tfrac12\{S_B,X\}-i[G_B,X]$};
  \node[bqp/math,font=\fontsize{9.0}{11}\selectfont]
    at (5.45,.86)
    {$S_B=\mathcal T_B^*(\mathbf 1),\qquad
      G_B=\tfrac i2\tanh\bigl(\tfrac\beta4\operatorname{ad}_H\bigr)(S_B)$};
  \node[bqp/text,font=\fontsize{8}{9.5}\selectfont] at (5.45,.34)
    {$\operatorname{Ad}_D(X)=DXD^*,\qquad \operatorname{ad}_H(X)=[H,X]$};

  \draw[bqp/arrow] (10.43,1.63) -- (11.08,1.63);
  \node[bqp/text,font=\fontsize{8}{9.5}\selectfont] at (10.75,1.97) {time $T$};
  \path[rounded corners=2mm,fill=bqpRed!5,draw=bqpRed!35,line width=.6pt]
    (11.30,.66) rectangle (15.55,2.16);
  \node[bqp/math,text=bqpRed,font=\fontsize{14}{16}\selectfont]
    at (13.425,1.61) {$\Phi_{B_x}(T)=e^{T\mathcal K_B}$};
  \node[bqp/math,font=\fontsize{9.5}{11.5}\selectfont]
    at (13.425,1.04) {$\Phi_{B_x}(T)(\sigma_{\beta,H})=\sigma_{\beta,H}$};
  \node[bqp/text,font=\fontsize{8.5}{10.5}\selectfont]
    at (13.425,.33) {CPTP and Gibbs preserving};
\end{tikzpicture}
}
\endgroup
\caption{\textbf{Single-site and block updates.}
(a) The channels analyzed in \cite{RFA2,BLMT} are centered on individual sites.
(b) Our channels act around intervals $B_x=[x,x+\ell-1]$ of adjustable length.
The shaded regions indicate decaying spatial tails; neither channel is strictly
supported on the highlighted sites.
(c) The block construction starts from the Gibbs replacement $\mathcal R_B$.
Writing $W_{\partial B}=H-H_B^0$, QBP along
$H_B(u)=H_B^0+uW_{\partial B}$ gives the operator $O_B$ in
\eqref{intro:congruence}. Dressing $\mathcal R_B$ gives the completely positive map
$\mathcal T_B$. KMS completion then yields the generator $\mathcal K_B$ and the
Gibbs-preserving channel $\Phi_B(T)$. The arrows describe this algebraic construction.}
\end{figure}

For mutually commuting Hamiltonian terms, our field-free single-site qubit
specialization agrees with the BLMT sampler up to a fixed rescaling of time;
see \Cref{rem:commuting-BLMT} for more details.
The relation between
the contraction criteria can also be made explicit. For single-site channels
$\Phi_i$, consider the random-site update $P=N^{-1}\sum_{i=1}^N\Phi_i$ and the
influence matrix
\[
 \mathsf M_{ij}
 =\frac{1}N
   \sup_{\substack{X=X^*\ne0\\ \Tr_jX=0}}
   \frac{\|\Phi_i(X)\|_{W_1}}{\|X\|_{W_1}}
 =\frac{1}N D_{\Phi_i}(j).
\]
The equality follows from $\|X\|_{W_1}=\|X\|_1/2$ when $\Tr_jX=0$.
Rows index the update $i$, and columns index the site $j$ of the initial difference.
A column sum therefore measures the total influence of that difference, averaged
over updates. The BLMT condition bounds the largest column sum by
$\|\mathsf M\|_{1\to1}\le1-\gamma/N$, with $\gamma>0$ independent of $N$.
For our weighted block updates, the same calculation gives
\[
 a=\sum_Ba_B,\qquad
 \mathsf M_{Bj}=\frac{a_B}{a}D_B(j),\qquad
 \|\mathsf M\|_{1\to1}=1-\frac{\kappa}{a}<1.
\]
Thus the distinction is in the updates and the estimates used to verify contraction.
The general matrix criterion accommodates blocks in either formulation. We use the
term finite-block DS condition to emphasize the choice of a spatial scale on which
the influence bound holds. This distinction is separate from row-sum versus
column-sum conventions, which are also used to distinguish Dobrushin and DS conditions
in the literature, including~\cite[Section~2.1, footnote~5]{BLMT}.

\paragraph{Classical block dynamics and functional inequalities.}
The classical theory relates spatial mixing to functional inequalities uniformly
over finite volumes and boundary conditions. Stroock--Zegarlinski established the
equivalence between DS mixing and logarithmic Sobolev inequalities for the finite-range
spin systems considered in~\cite{SZmix,SZdiscrete}. Lu--Yau developed block and
martingale methods for Glauber and Kawasaki dynamics~\cite{LuYau}, while
Zegarlinski~\cite{Zeg92} obtained logarithmic Sobolev inequalities from Dobrushin
influence estimates and local inequalities. In one dimension, finite-state systems
with bounded finite-range interactions satisfy a uniform logarithmic Sobolev
inequality, and hence logarithmic mixing, at every finite
temperature~\cite{Zeg1D,SZdiscrete}. Our chain theorem gives a corresponding quantum
result in Wasserstein distance and trace norm.

The role of blocks is particularly clear when a single-site condition fails inside
the uniqueness regime. For the ferromagnetic nearest-neighbor Ising model on
$\ZZ^D$, Shlosman~\cite{Shlosman} showed that finite-volume DS tests, optimized over
the volume, detect the entire high-temperature uniqueness regime. A single-site test
does not have this property in general. Larger blocks are not necessary for every
model: for the homogeneous one-dimensional zero-field Ising heat bath, the exact
single-site influence sum is $\tanh(\beta J_{\mathrm{int}})<1$ at every finite temperature.
Allowing blocks provides a criterion that can also exploit correlations over several
sites when the single-site estimate is insufficient.

Wasserstein contraction is also related to coarse Ricci curvature~\cite{Ollivier}.
M\"unch connected this curvature to isoperimetry and logarithmic Sobolev
inequalities~\cite{Munch}. Caputo--M\"unch--Salez proved that, under nonnegative
sectional curvature, a finite-time Wasserstein contraction implies relative-entropy
contraction with the same coefficient~\cite{CMS}. This provides a further classical
connection between transport estimates and entropy decay. Our results establish
Wasserstein contraction and trace-norm mixing for quantum block dynamics; deriving
analogous quantum functional inequalities remains an open direction.

\paragraph{Organization of the paper.}
After the preliminaries, Section~\ref{qg:construction} introduces
the smoothed heat-bath generators, establishes their real-time locality
estimates, and states the finite-time implementation theorem.
Section~\ref{sec:criteria} states the contraction criterion and
proves that, for uniformly bounded interactions,
uniform classical DS strong spatial mixing implies the full quantum DS condition. Section~\ref{q1:section} states and proves the
one-dimensional mixing theorem, using decay of correlations and
locality of the block dynamics. Section~\ref{qp:section} establishes
stability under local perturbations and gives applications to
classical reference systems.
Appendix~\ref{app:clustering} contains the clustering proofs.
Appendix~\ref{app:implementation} proves \Cref{qi:theorem} about efficient implementation of our generators.

\section{Preliminaries}
\label{sec:prelim}

\subsection{Interactions and external fields}
\label{sec:interactions}
Let $\Lambda$ be a finite set of $N$ sites   . At each site $i$ we associate a
Hilbert space $\HH_i\simeq\CC^{d_i}$, with $2\le d_i\le d$, and write
$\HH_U=\bigotimes_{i\in U}\HH_i$ for $U\subseteq\Lambda$.
The algebra of operators on $\HH_U$ is denoted by $\BB(\HH_U)$.
For $A\in\BB(\HH_U)$, we denote by $A^*$ its adjoint.
A state is a positive operator of trace one. For a state $\rho$ on
$\HH_\Lambda$, its marginal on $U$ is $\rho_U=\Tr_{U^c}\rho$, where
$U^c=\Lambda\setminus U$. Operators on $U$ are extended by the identity
on $U^c$ whenever they occur in a full-system expression; tensor factors
are reordered according to their site labels.
For a nonnegative radius $a$, set
$U^{\le a}=\{i\in\Lambda:\dist(i,U)\le a\}$.
We distinguish the marginal $(\sigma_{\beta,H})_U$ from the Gibbs state
$\sigma_\beta^U$ of the truncated Hamiltonian $H_U$ defined below.

We
  describe the interactions by a finite labelled family
$(h_\alpha)_{\alpha\in\mathcal A}$ and separate the on-site terms:
\begin{equation}
 \begin{gathered}
 H=\sum_{i\in\Lambda}V_i+\sum_{\alpha\in\mathcal A}h_\alpha,
 \qquad V_i=V_i^*,\quad\supp V_i\subseteq\{i\},\\
 h_\alpha=h_\alpha^*,\qquad\supp h_\alpha\subseteq X_\alpha,
 \qquad 2\le|X_\alpha|\le k,\quad\diam X_\alpha\le r.
 \end{gathered}
 \label{eq:labelled-interactions}
\end{equation}
   Here $r\ge1$ and $k,\nu\in\NN$ are
fixed, and each site belongs to
at most $\nu$ of the sets $X_\alpha$, counting labels. Distinct labels
may have the same support. The operators  $V_i$    and $h_\alpha$ act on
$\HH_i$ and $\HH_{X_\alpha}$, respectively, and are tensored with the
identity on the remaining sites. The  on-site    operators may have
arbitrary finite norms. Distances and diameters are measured in the
underlying graph or in the lattice metric specified in the estimate;
for a lattice, $\Lambda\subset\ZZ^D$. The  interaction graph joins
   distinct  sites whenever
they    belong to a common
$X_\alpha$ ; its
maximum degree   $g$  satisfies
$g\le\nu(k-1)$.

 We say that the underlying geometry has \emph{polynomial volume growth}
if there are constants $C_{\rm vol}\ge1$ and $D\ge1$, independent of
$N$, such that
 
  \[
 |U^{\le u}|\le C_{\rm vol}|U|(1+u)^D,
 \qquad \varnothing\ne U\subseteq\Lambda,\quad u\ge0.
\] 
 It suffices to check this bound for $U=\{i\}$, since
$U^{\le u}=\bigcup_{i\in U}\{i\}^{\le u}$.

The interaction strength between sites and the Gibbs state are
\begin{equation}
 J_{\mathrm{int}}
 :=\sup_{i\in\Lambda}\sum_{\alpha:\,i\in X_\alpha}\|h_\alpha\|,
 \qquad
 \sigma_{\beta,H}=\sigma_\beta
 :=\frac{e^{-\beta H}}{\Tr e^{-\beta H}}.
 \label{eq:interaction-strength-gibbs}
\end{equation}
Thus $J_{\mathrm{int}}$ excludes the on-site terms $V_i$ .
  Similarly, for a region
$\Omega\subseteq\Lambda$, we define
\begin{equation}
 H_\Omega:=\sum_{i\in\Omega}V_i
          +\sum_{\substack{\alpha\in\mathcal A\\X_\alpha\subseteq\Omega}}
                  h_\alpha,
 \qquad
 \sigma_\beta^\Omega:=\frac{e^{-\beta H_\Omega}}{\Tr_{\HH_\Omega}e^{-\beta H_\Omega}}.
 \label{eq:restricted-hamiltonian}
\end{equation}
The restriction keeps exactly the labelled terms whose specified
supports are entirely contained in $\Omega$, with no added boundary
terms. In the definition of $\sigma_\beta^\Omega$, $H_\Omega$ acts on
$\HH_\Omega$. When used as an operator on $\HH_\Lambda$, it is
extended by the identity on $\HH_{\Lambda\setminus\Omega}$.

For a  block $B$   ,
the  decoupled Hamiltonian and crossing interaction
  are
\begin{equation}
 H_B^0=H_B+H_{B^c},\qquad
 W_{\partial B}=H-H_B^0
 =\sum_{\substack{\alpha\in\mathcal A\\
      X_\alpha\cap B\ne\varnothing,\ X_\alpha\cap B^c\ne\varnothing}}
       h_\alpha.
 \label{eq:crossing-interaction}
\end{equation}
   In
particular, $W_{\partial B}$ is supported on the union of these
crossing supports. We write
\[
 w_B=\|W_{\partial B}\|\le C_B,
 \qquad
 C_B:=\sum_{\substack{\alpha\in\mathcal A\\
      X_\alpha\cap B\ne\varnothing,\ X_\alpha\cap B^c\ne\varnothing}}
       \|h_\alpha\|.
\]

  For
nearest-neighbor
chains, the bond term $W_{i,i+1}$ is supported
on $\{i,i+1\}$ and satisfies $\|W_{i,i+1}\|\le J_{\mathrm{int}}$.
Every interval $B$ has at most two crossing bonds, so
$w_B\le C_B\le2J_{\mathrm{int}}$.
Unless stated otherwise, constants
  may depend on the fixed geometric
parameters (including $D$ on a lattice), $d,r,k,\nu,J_{\mathrm{int}}$
and $\beta$, but not on $N$ or the  norms of the on-site terms.

We set
$\Ad_D(X)=DXD^*$  for any operator $D$, without requiring it to
be unitary   , and
$D^{-*}=(D^{-1})^*$ when $D$ is invertible.
The symbols $\one$ and $\id$ denote an identity operator and an identity
map, respectively. Commutation with the Hamiltonian defines the linear map
\begin{equation}
\ad_H(A):=[H,A]=HA-AH.
\label{eq:hamiltonian-commutator}
\end{equation}
For complex $z$, we define conjugation by the Hamiltonian and its
real-time specialization by
\begin{equation}
\alpha_z^H(A)=e^{zH}Ae^{-zH},\qquad
 \tau_t^H(A)=\alpha_{it}^H(A)=e^{itH}Ae^{-itH}.
\label{eq:hamiltonian-conjugation}
\end{equation}
The Bohr frequencies of $H$ are the differences $\omega=E-E'$ between
its eigenvalues. Writing $P_E$ for the spectral projection associated
with $E$, define the component of $A$ at frequency $\omega$ by
\[
A_\omega=\sum_{E-E'=\omega}P_EAP_{E'}.
\]
Thus $\ad_H(A_\omega)=\omega A_\omega$. For any function $q$ defined on
the Bohr frequencies of $H$, the spectral functional calculus of
$\ad_H$ is given by
\begin{equation}
q(\ad_H)(A):=\sum_\omega q(\omega)A_\omega
 =\sum_{E,E'}q(E-E')P_EAP_{E'}.
\label{eq:bohr-functional-calculus}
\end{equation}
Here $E,E'$ range over the eigenvalues of $H$, and $\omega$ over their
distinct differences. Taking $q_z(\omega)=e^{z\omega}$ recovers the
conjugation maps introduced in \eqref{eq:hamiltonian-conjugation}:
\[
\alpha_z^H(A)=q_z(\ad_H)(A)=e^{z\ad_H}(A)
 =\sum_\omega e^{z\omega}A_\omega.
\]

\subsection{Channels, norms, and transport}
\label{sec:channels-transport}
We denote by $\norm\cdot$ and $\nOne\cdot$ the operator and trace norms. For a linear map
$\mathcal F:\BB(\HH_{\rm in})\to\BB(\HH_{\rm out})$ between finite-dimensional matrix
algebras, we denote by $\mathcal F^*$ its
Hilbert--Schmidt adjoint. Its diamond norm is
\[
\nD{\mathcal F}=\sup_{m\ge1}\sup_{X\ne0}
\frac{\nOne{(\mathcal F\otimes\id_m)(X)}}{\nOne X}.
\]
Here $m$ ranges over the positive integers, and $X$ over all nonzero operators in
$\BB(\HH_{\rm in}\otimes\CC^m)$, the matrix algebra of the input and an $m$-dimensional
reference
system.
Its completely bounded (cb) norm is defined using the
operator norm:
\[
 \|\mathcal F\|_{\mathrm{cb}}
 =\sup_{m\ge1}\sup_{X\ne0}
   \frac{\norm{(\mathcal F\otimes\id_m)(X)}}{\norm X}
 =\nD{\mathcal F^*}.
\]
The last equality follows from duality of the operator and trace norms.

We recall that a channel is a completely positive
trace-preserving (CPTP) map. Given a lattice $\Lambda$, by a channel supported on a
region $R\subset\Lambda$ we mean a channel on $\BB(\HH_R)$ tensored with the
identity on $R^c$.
A Lindbladian is a linear map $\mathcal L$ for which $\Phi_t=e^{t\mathcal L}$ is CPTP
for every
$t\ge0$. These maps form a continuous semigroup, with $\Phi_0=\id$ and
$\Phi_{t+s}=\Phi_t\circ\Phi_s$ for $s,t\ge0$, so that the evolution
$\rho_t=\Phi_t(\rho_0)$ satisfies
$\dot\rho_t=\mathcal L(\rho_t)$.

   We use  $\Phi$ and
  $\mathcal L$    for channels and generators acting
on
states. Their adjoints act on observables. In particular,  $\Phi^*$
 is
unital and satisfies
\[
  \Tr[O\Phi(X)]=\Tr[\Phi^*(O)X],\qquad \Phi^*(\one)=\one.
\]
The corresponding Heisenberg generator is $\mathcal L^*$, so
$(e^{t\mathcal L})^*=e^{t\mathcal L^*}$.
 For a faithful state $\sigma$, the KMS inner product on observables is
\[
 \langle A,C\rangle_\sigma
 =\Tr(A^*\sigma^{1/2}C\sigma^{1/2}).
\]
We call a generator KMS reversible when $\mathcal L^*$ is self-adjoint
for this inner product. Since $\mathcal L^*(\one)=0$, reversibility
implies $\Tr(\sigma\mathcal L^*(A))=0$ for every $A$, hence
$\mathcal L(\sigma)=0$.
 The following    two forms of Duhamel's identity will be used to estimate
the influence of a block update.

\begin{lemma}
  \label{lem:trajectories}
  Let $\cT(X)=\sum_a V_aXV_a^*$ for a finite family of operators $V_a$, with
  $\sum_a V_a^*V_a=\cT^*(\one)\ge s\one$, and let $G=G^*$. Define
  \[
    \cK(X)=\cT(X)-\tfrac12\sum_a\{V_a^*V_a,X\}-i[G,X],\qquad
    Z_t=e^{-t(\frac{1}{2}\sum_a V_a^*V_a+iG)}.
  \]
  Put $\cN_t(X)=Z_tXZ_t^*$ and $\Phi_t=e^{t\cK}$. Then $\norm{Z_t}^2\le e^{-st}$,
  $\Phi_t$ is a channel, and $\cJ_T=\Phi_T-\cN_T$ is CP, with the Duhamel identities
  \begin{equation}
    \cJ_T=\int_0^T\Phi_{T-t}\cT\cN_t\,dt
    =\int_0^T\cN_{T-t}\cT\Phi_t\,dt.
    \label{eq:duhamelidentities}
  \end{equation}
\end{lemma}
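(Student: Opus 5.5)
The plan is to split $\cK$ into a non-jump part and the jump part $\cT$, and to read off all four claims from the variation-of-constants formula.

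\emph{Non-jump part.} Write $K_0=\tfrac12 S+iG$ with $S=\sum_aV_a^*V_a=\cT^*(\one)$, and set $\cK_0(X)=-K_0X-XK_0^*$. Since $G$ is self-adjoint, $K_0^*=\tfrac12S-iG$. A direct expansion shows $-K_0X-XK_0^*=-\tfrac12\{S,X\}-i[G,X]$, so $\cK=\cT+\cK_0$. Moreover $e^{t\cK_0}(X)=e^{-tK_0}Xe^{-tK_0^*}=Z_tXZ_t^*=\cN_t(X)$, because left and right multiplication commute.

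\emph{Norm bound on $Z_t$.} For a vector $\psi$, set $\psi_t=Z_t\psi$. Then
\[
\frac{d}{dt}\|\psi_t\|^2=-\langle\psi_t,(K_0+K_0^*)\psi_t\rangle=-\langle\psi_t,S\psi_t\rangle\le-s\|\psi_t\|^2 .
\]
Grönwall's inequality gives $\|Z_t\|^2\le e^{-st}$.

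\emph{Duhamel identities.} Both follow from the standard identity
\[
e^{T(\cK_0+\cT)}-e^{T\cK_0}=\int_0^T e^{(T-t)(\cK_0+\cT)}\,\cT\,e^{t\cK_0}\,dt .
\]
To verify it, differentiate $t\mapsto e^{(T-t)\cK}e^{t\cK_0}$ and integrate from $0$ to $T$. This gives the first form, $\cJ_T=\int_0^T\Phi_{T-t}\cT\cN_t\,dt$. Differentiating $t\mapsto e^{(T-t)\cK_0}e^{t\cK}$ instead gives the second form, $\int_0^T\cN_{T-t}\cT\Phi_t\,dt$. Everything is finite-dimensional, so no domain issues arise.

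\emph{$\Phi_t$ is a channel.} This is the one step that is not pure computation. Iterating the second Duhamel identity, i.e. writing the Dyson series in the interaction picture, gives
\[
\Phi_T=\cN_T+\sum_{n\ge1}\int_{0<t_1<\dots<t_n<T}\cN_{T-t_n}\cT\cN_{t_n-t_{n-1}}\cT\cdots\cT\cN_{t_1}\,dt .
\]
The series converges in norm since $\|\cT\|$ and $\sup_{t\le T}\|\cN_t\|$ are finite. Each integrand is a composition of CP maps, so $\Phi_T$ is CP. The remainder $\Phi_T-\cN_T$ is the sum of the $n\ge1$ terms, so $\cJ_T$ is CP as well.

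For trace preservation, note that $\cK^*(\one)=\cT^*(\one)-\tfrac12\{S,\one\}-i[G,\one]=S-S=0$. Hence $\Tr\cK(X)=0$ for all $X$, and $\Tr\Phi_t(X)=\Tr X$. Thus $\Phi_t$ is CPTP. Alternatively, this is the standard GKLS form.

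The only delicate point is the CP claim. Positivity of $\cJ_T$ does not follow from the single Duhamel integral, because $\Phi_{T-t}$ is only known to be CP after we know $\Phi$ is CP. The Dyson expansion above, or a Trotter/Lie product argument, resolves this circularity. Once $\Phi_t$ is known to be CP, the first Duhamel formula itself also exhibits $\cJ_T$ as an integral of CP maps.
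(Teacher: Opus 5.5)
Your proposal is correct and follows essentially the same route as the paper: the splitting $\cK=\cT+(\cK-\cT)$ with $e^{t(\cK-\cT)}=\cN_t$, the derivative argument for $\|Z_t\|^2\le e^{-st}$, the two Duhamel formulas, and complete positivity of $\cJ_T$ as an integral of compositions of CP maps. The only difference is that the paper simply invokes the Lindblad (GKLS) form to get that $\Phi_t$ is CPTP. You instead give a self-contained Dyson-series argument for complete positivity and check $\cK^*(\one)=0$ for trace preservation, which is the standard proof of that fact.
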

\begin{proof}
  The derivative of $\norm{Z_t\psi}^2$ is at most $-s\norm{Z_t\psi}^2$, so
  $\norm{Z_t}^2\le e^{-st}$ and $\nD{\cN_t}=\norm{Z_t}^2\le e^{-st}$.
  Since $\cK$ is in Lindblad form, $\Phi_t=e^{t\cK}$ is CPTP.
  The generator of $\cN_t$ is $\cK-\cT$, so the two forms of Duhamel's identity
  give \eqref{eq:duhamelidentities} directly. Each integrand is a composition of
  the completely positive maps $\Phi_t$, $\cT$, and $\cN_t$.
  Their integrals are therefore completely positive, proving that $\cJ_T$ is CP.
\end{proof}

\noindent The transport norm or Wasserstein $1$-norm is defined by decomposing a traceless
Hermitian operator into terms that
vanish when one site is traced out. For $X=X^*$ with $\Tr X=0$, define \cite{DPMTL}
\begin{equation}
  \nW X=\frac{1}{2}\inf\left\{\sum_{j\in\Lambda}\nOne{X_j}:
  X=\sum_jX_j,\quad X_j=X_j^*,\quad\Tr_jX_j=0\right\}.
  \label{eq:Wdef}
\end{equation}

The dual seminorm of the global norm \eqref{eq:Wdef} on Hermitian observables is
\cite[Propositions~8--9]{DPMTL}
\begin{equation}
  \Lip(O)=2\max_j\inf_{Y=Y^*}\norm{O-I_j\otimes Y}.
  \label{eq:Lip}
\end{equation}
With $\mathcal E_j(X)=I_j/d_j\otimes\Tr_jX$, where $d_j=\dim\HH_j$, Proposition~15 of
Appendix~B in
\cite{DPMTL} gives
\[
\max_j\norm{(\id-\mathcal E_j)O}\le\Lip(O)
\le2\max_j\norm{(\id-\mathcal E_j)O}.
\]
For a channel $\Psi$ on $\BB(\HH_\Lambda)$ and a site $j\in\Lambda$, define its
\emph{influence coefficient at $j$} by
\begin{align}\label{eq:influence}
  D_\Psi(j):=
  \sup_{\substack{X=X^*\ne0\\\Tr_jX=0}}
  \frac{\nW{\Psi(X)}}{\nOne X/2},
\end{align}
where the supremum runs over operators $X\in\BB(\HH_\Lambda)$.
We now collect some elementary properties of the Wasserstein distance.
\begin{lemma}[Elementary transport estimates]
  \label{lem:transport}
  Let $\varnothing\ne R\subseteq\Lambda$, with $N=|\Lambda|$ and $m=|R|$, and let
  $\nW\cdot$ denote the
  global transport norm \eqref{eq:Wdef}. For traceless Hermitian $X\in\BB(\HH_\Lambda)$,
  \begin{equation}
    \tfrac12\nOne X\le\nW X\le\tfrac N2\nOne X.
    \label{eq:Wcompare}
  \end{equation}
  If $\Tr_jX=0$ for some $j\in\Lambda$, then $\nW X=\nOne X/2$.
  If $Y=Y^*\in\BB(\HH_\Lambda)$, $\Tr_RY=0$, and $j_1,\ldots,j_m$ enumerate $R$, then
  \begin{equation}
    \nW Y\le\sum_{b=0}^{m-1}\nOne{\Tr_{j_1,\ldots,j_b}Y}
    \le |R|\nOne Y.
    \label{eq:telescoping}
  \end{equation}
  Here $\Tr_\varnothing=\id$, so the term for $b=0$ is $\nOne Y$.
  {More generally, for any traceless Hermitian $Y$ and any
  $S\subseteq\Lambda$,
  \begin{equation}
    \|Y\|_{W_1}\le |S|\,\|Y\|_1+
                     \|\Tr_SY\|_{W_1}.
    \label{qp:compact-reconstruction}
  \end{equation}
  The last norm is taken on $\Lambda\setminus S$ and is understood
  as zero when $S=\Lambda$.}
  A channel $\Psi$ supported away from $j$ satisfies $D_\Psi(j)=1$.
  If two channels $\Phi,\Psi$ are supported on $R$, then
  \begin{equation}
    \nW{(\Phi-\Psi)(X)}\le |R|\nD{\Phi-\Psi}\nOne X
    \quad(X=X^*).
    \label{eq:localW}
  \end{equation}
\end{lemma}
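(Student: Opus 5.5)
We prove the items of \Cref{lem:transport} in the order stated, each directly from the definition \eqref{eq:Wdef} or from the dual formula \eqref{eq:Lip}.

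\emph{Comparison \eqref{eq:Wcompare}.} For the lower bound, take any admissible decomposition $X=\sum_jX_j$. The triangle inequality gives $\nOne X\le\sum_j\nOne{X_j}$; taking the infimum yields $\nOne X/2\le\nW X$. For the upper bound, enumerate $\Lambda=\{j_1,\dots,j_N\}$ and telescope:
\[
X=\sum_{b=1}^{N}\Bigl(\tfrac{I}{d}\otimes\Tr_{j_1..j_{b-1}}X-\tfrac{I}{d}\otimes\Tr_{j_1..j_b}X\Bigr).
\]
Here the identities are normalized and padded on the traced sites, and the last term vanishes because $\Tr X=0$. The $b$-th summand has vanishing partial trace over $j_b$. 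Its trace norm is at most $\nOne{\Tr_{j_1..j_{b-1}}X}+\nOne{\Tr_{j_1..j_b}X}\le2\nOne X$, since partial traces contract the trace norm. Summing gives $\nW X\le\frac12\cdot 2N\nOne X$; the constant matches $N/2$ up to this crude factor. To obtain exactly $N/2$, use instead the dual \eqref{eq:Lip}: $|\Tr(OX)|\le\|O\|_{\rm osc}\nOne X/2$, and replacing $O$ by $O-c\one$ for a Lipschitz $O$ gives oscillation at most $N\Lip(O)$. This follows by telescoping $O$ site by site with the $\mathcal E_j$ and using $\|(\id-\mathcal E_j)O\|\le\Lip(O)$. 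If $\Tr_jX=0$, then $X$ itself is an admissible one-term decomposition, so $\nW X\le\nOne X/2$, and equality follows from the lower bound.

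\emph{Telescoping \eqref{eq:telescoping} and \eqref{qp:compact-reconstruction}.} For \eqref{eq:telescoping}, set $Y_b=\frac{I}{d}\otimes\Tr_{j_1..j_{b-1}}Y-\frac{I}{d}\otimes\Tr_{j_1..j_b}Y$ for $b=1,\dots,m$. These sum to $Y$ because $\Tr_RY=0$, and $\Tr_{j_b}Y_b=0$. Bounding $\frac12\nOne{Y_b}\le\frac12\bigl(\nOne{\Tr_{j_1..j_{b-1}}Y}+\nOne{\Tr_{j_1..j_b}Y}\bigr)$ and regrouping the sum, each $\nOne{\Tr_{j_1..j_b}Y}$ with $0\le b\le m-1$ appears with total coefficient at most $1$. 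This is the first inequality; the second follows from trace-norm contractivity of partial traces. For \eqref{qp:compact-reconstruction}, write
\[
Y=\Bigl(Y-\tfrac{I_S}{d_S}\otimes\Tr_SY\Bigr)+\tfrac{I_S}{d_S}\otimes\Tr_SY.
\]
The first bracket has vanishing $\Tr_S$, so by \eqref{eq:telescoping} its $W_1$ norm is at most $|S|\nOne{Y-\frac{I_S}{d_S}\otimes\Tr_SY}$. This is at most $2|S|\nOne Y$; to get the stated constant, run the telescoping directly on $Y$ with the same $j_b\in S$, which gives $\sum_{b<|S|}\nOne{\Tr_{j_1..j_b}Y}\le|S|\nOne Y$ without the extra step. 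For the second piece, any admissible decomposition of $\Tr_SY$ on $\Lambda\setminus S$, tensored with $I_S/d_S$, is admissible on $\Lambda$ with the same trace norms. This gives the $\|\Tr_SY\|_{W_1}$ term.

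\emph{Channels.} If $\Psi$ acts on $R\not\ni j$ and $\Tr_jX=0$, then $\Tr_j\Psi(X)=\Psi'(\Tr_jX)=0$, where $\Psi'$ is $\Psi$ viewed on $\Lambda\setminus\{j\}$. Hence $\nW{\Psi(X)}=\nOne{\Psi(X)}/2\le\nOne X/2$, so $D_\Psi(j)\le1$. Equality is attained by product inputs $X=(\rho_j-\rho'_j)\otimes\omega$ with $\omega$ a fixed point of $\Psi$, for instance a state whose $R$-marginal is $\Psi$-invariant, which exists by Brouwer's theorem. For \eqref{eq:localW}, put $Y=(\Phi-\Psi)(X)$; then $\Tr_RY=0$ because both channels are trace preserving on $R$. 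Applying \eqref{eq:telescoping} gives $\nW Y\le|R|\nOne Y\le|R|\nD{\Phi-\Psi}\nOne X$.

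The only delicate point is tracking the constant $\frac12$ versus $1$ in the telescoping sums. I will handle it by keeping the signed telescoping terms rather than bounding each difference by two norms. Concretely, $Y_b$ has norm at most $\nOne{\Tr_{j_1..j_{b-1}}Y}$ up to the mixing by $I/d$, which is contractive, so I expect the stated constants to follow.
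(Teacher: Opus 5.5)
Most of your argument is sound and matches the paper's own proof: the telescoping pieces $Y_b$ with $\Tr_{j_b}Y_b=0$ and the regrouping for \eqref{eq:telescoping}; the tensoring of a decomposition of $\Tr_SY$ for \eqref{qp:compact-reconstruction}; the partial-trace argument for channels supported away from $j$; and trace preservation for \eqref{eq:localW}. In \eqref{qp:compact-reconstruction}, when $\Tr_SY\ne0$ the leftover $\tfrac12\nOne{\Tr_SY}$ is absorbed because the $b=0$ term only carries coefficient $\tfrac12$. For the channel case, any product input $A\otimes\rho$ already gives equality, so no fixed point is needed.

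The gap is the constant $N/2$ in the upper bound of \eqref{eq:Wcompare}, and neither of your repairs delivers it. On the dual side, telescoping $O$ with the $\mathcal E_j$ and using $\|(\id-\mathcal E_j)O\|\le\Lip(O)$ bounds $\min_c\|O-c\one\|$ by $N\Lip(O)$. That bounds the oscillation only by $2N\Lip(O)$, which gives back $\nW X\le N\nOne X$.

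Your last paragraph claims $\nOne{Y_b}\le\nOne{\Tr_{j_1,\ldots,j_{b-1}}Y}$ because mixing with $I/d$ is contractive. This is false: $\id-\mathcal E_j$ is not trace-norm contractive. Take $Y=|0\rangle\langle0|\otimes Z$ on $\CC^d\otimes\CC^2$, with $Z$ the Pauli matrix and $d\ge3$. Then $\nOne Y=2$ but $\nOne{(\id-\mathcal E_1)Y}=4-4/d$. With the ordering (site $1$, then site $2$), your telescoped cost is $3-2/d>2=\frac N2\nOne Y$. So no bookkeeping of the $\mathcal E_j$ decomposition can certify $N/2$.

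The missing idea is to telescope through states rather than through tracial conditional expectations. Since $\Tr X=0$, write $X=\frac{\nOne X}2(\rho-\sigma)$, with $\rho,\sigma$ the normalized positive and negative parts of $X$. Put $\omega_k=\sigma_{\{j_1,\ldots,j_k\}}\otimes\rho_{\Lambda\setminus\{j_1,\ldots,j_k\}}$, so that $\omega_0=\rho$ and $\omega_N=\sigma$. Each $\omega_{k-1}-\omega_k$ is a difference of two states, so it has trace norm at most $2$, and $\Tr_{j_k}(\omega_{k-1}-\omega_k)=0$. Hence $\nW{\rho-\sigma}\le N$, i.e.\ $\nW X\le\frac N2\nOne X$.

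This is \cite[Proposition~2]{DPMTL}, which the paper simply cites for \eqref{eq:Wcompare} and the one-site identity rather than reproving them. The constant matters later: \Cref{thm:qds} uses $\nW{\rho-\sigma}\le N$ to obtain the $\log(2N/\varepsilon)$ mixing bound.
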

   \begin{proof}
     The norm comparison
\eqref{eq:Wcompare} and the   one-site identity
follow from
\cite[Proposition~2, equations~(10)--(11)]{DPMTL}
.
  We prove the remaining bounds directly on the full system ,
     keeping the exterior $R^c$ throughout.

  {For the telescoping estimates, let $Y$ be any traceless
  Hermitian operator. Set $R_b=\{j_1,\ldots,j_b\}$, with
  $R_0=\varnothing$, and define
  \[
    Y^{(0)}=Y,\qquad
    Y^{(b)}=\frac{\one_{R_b}}{\dim\HH_{R_b}}\otimes\Tr_{R_b}Y
    \quad(1\le b\le m).
  \]
  The Hermitian operators $Z_b=Y^{(b-1)}-Y^{(b)}$ satisfy
  \[
    Y=\sum_{b=1}^m Z_b+Y^{(m)},\qquad \Tr_{j_b}Z_b=0.
  \]
  Tensoring a transport decomposition of $\Tr_RY$ with the normalized
  identity on $R$ preserves its cost, so
  \[
    \nW{Y^{(m)}}\le\nW{\Tr_RY}.
  \]
  By \eqref{eq:Wdef} and trace-norm contractivity of partial trace,
  \begin{align*}
    \nW Y
    &\le\frac12\sum_{b=1}^m\nOne{Z_b}+\nW{\Tr_RY}\\
    &\le\frac12\sum_{b=1}^m
       \bigl(\nOne{\Tr_{R_{b-1}}Y}+\nOne{\Tr_{R_b}Y}\bigr)
       +\nW{\Tr_RY}\\
    &\le\sum_{b=0}^{m-1}\nOne{\Tr_{R_b}Y}+\nW{\Tr_RY}\\
    &\le |R|\nOne Y+\nW{\Tr_RY}.
  \end{align*}
  This proves \eqref{qp:compact-reconstruction} with $S=R$;
  the case $S=\varnothing$ is immediate. When $\Tr_RY=0$, the same
  calculation gives \eqref{eq:telescoping}.}

 If  $\Psi$  is  supported away
from $j$   , write
   $\Psi=\id_j\otimes\Psi_{j^c}$,  where
$j^c=\Lambda\setminus\{j\}$.
  For $X=X^*$  with   $\Tr_jX=0$,  we have
   $\Tr_j\Psi(X)=\Psi_{j^c}(\Tr_jX)=0$. The
one-site identity and
  trace-norm contractivity give

  \[
    \nW{\Psi(X)}=\tfrac12\nOne{\Psi(X)}\le\tfrac12\nOne X.
  \]
     Thus  $D_\Psi(j)\le1$.
Equality is attained by $X=A\otimes\rho$,
  where $A$ is a nonzero traceless Hermitian operator on $\HH_j$
  and $\rho$ is    a  state on
$\HH_{j^c}$:  since  $\Psi_{j^c}(\rho)$ is
  a state,   $\nOne{\Psi(X)}=\nOne A=\nOne X$.

  Finally,
  if $\Phi$ and $\Psi$ are channels supported on $R$,
  trace preservation gives
  \[
    \Tr_R\Phi(X)=\Tr_RX=\Tr_R\Psi(X).
  \]

  Therefore $Z=(\Phi-\Psi)(X)$
satisfies $\Tr_RZ=0$,  and    the
  telescoping bound yields
  \[
    \nW Z\le |R|\nOne Z
    \le |R|\nD{\Phi-\Psi}\nOne X,
  \]
  proving
\eqref{eq:localW}.
\end{proof}

\subsection{   Locality
estimates and Lieb--Robinson  bounds}
\label{sec:locality-kernels}
We recall   the
  locality estimate used in Section~\ref{qg:construction}
to control the effect of truncating the Hamiltonian in the QBP
construction. It compares the evolution of an observable supported
on $S$ under $H$ with its evolution under $H_\Omega$, defined in
\eqref{eq:restricted-hamiltonian}, when $\Omega$ contains a
neighbourhood of $S$.
We assume polynomial volume growth with constants $C_{\rm vol},D$
as defined in Section~\ref{sec:interactions}.

  \begin{theorem}[Real-time locality, {\cite[Theorem~3.4(ii)]{NSY}}]
\label{qg:lr}
Let $H$ satisfy \eqref{eq:labelled-interactions}, with interaction
strength $J_{\mathrm{int}}$ as in
 \eqref{eq:interaction-strength-gibbs}. There are constants
$C_{\rm LR},\mu_{\rm LR}>0$, depending only on
 $r,D,C_{\rm vol}$,
such that, with $v_{\rm LR}=C_{\rm LR}J_{\mathrm{int}}$,

  \begin{equation}
 \norm{\tau_t^H(O)-\tau_t^{H_\Omega}(O)}
 \le C_{\rm LR}|S|\norm O\,
             (e^{v_{\rm LR}|t|}-1)e^{-\mu_{\rm LR}u}
 \label{qg:lr-eq}
\end{equation}
for every operator $O$ supported on a nonempty set
$S\subseteq\Lambda$, every $t\in\RR$, $u\ge0$, and
$S^{\le u}\subseteq\Omega\subseteq\Lambda$
 .
The constants are independent of the norms of the on-site terms
$V_i$ and of the local Hilbert-space dimensions.
\end{theorem}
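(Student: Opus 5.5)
The argument is the standard Lieb--Robinson comparison in the interaction picture, which is what makes the bound independent of the on-site norms. First I write $H=V+K$ and $H_\Omega=V_\Omega+K_\Omega$, where $V=\sum_iV_i$, $V_\Omega=\sum_{i\in\Omega}V_i$, and $K$, $K_\Omega$ collect the $h_\alpha$ (all of them, respectively those with $X_\alpha\subseteq\Omega$). Because the $V_i$ are on-site and mutually commuting, $\tau_t^V$ is a product of single-site automorphisms. It therefore maps operators supported on any set $X$ to operators supported on $X$ with the same norm. For an operator $O$ supported on $S\subseteq\Omega$, the sites outside $\Omega$ do not see $V-V_\Omega$, so I may replace $H$ by $H'=H_\Omega+\sum_{\alpha:X_\alpha\not\subseteq\Omega}h_\alpha$ plus the commuting external fields. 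Those fields act trivially on the relevant commutators and can be absorbed.

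Next I use the Duhamel formula
\[
\tau_t^H(O)-\tau_t^{H_\Omega}(O)=\int_0^t\frac{d}{ds}\bigl[\tau_s^{H}\tau_{t-s}^{H_\Omega}(O)\bigr]ds
=\int_0^t\tau_s^H\bigl(i[H-H_\Omega,\tau_{t-s}^{H_\Omega}(O)]\bigr)\,ds .
\]
Since $\tau_s^H$ is isometric, the norm of the left-hand side is at most $\sum_{\alpha:X_\alpha\cap\Omega^c\neq\varnothing}\int_0^{|t|}\norm{[h_\alpha,\tau_{s}^{H_\Omega}(O)]}ds$; the outside fields commute with $\tau^{H_\Omega}_s(O)$ and drop out. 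Each commutator is controlled by a Lieb--Robinson bound for $H_\Omega$ that is uniform in the on-site terms.

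The core step is to prove this bound. I pass to the interaction picture, $\tau_s^{H_\Omega}=\tau_s^{V_\Omega}\circ\gamma_s$, where $\gamma_s$ is the cocycle generated by the time-dependent interaction $\tilde h_\alpha(s)=\tau_{-s}^{V_\Omega}(h_\alpha)$ (with the appropriate ordering). The new interaction has the same supports and the same norms. I then apply the standard iteration for time-dependent interactions, as in Nachtergaele--Sims--Young. With $C_B(X,s)=\sup_{\|A\|=1}\norm{[\gamma_s(A),B]}$ one gets the recursion $C_B(X,s)\le C_B(X,0)+2\sum_{Z\cap X\ne\varnothing}\|h_Z\|\int_0^sC_B(Z,r)dr$. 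Iterating this against the $F$-function $F(x)=(1+x)^{-(D+1)}$, weighted by $e^{-\mu x}$, gives
\[
\norm{[\gamma_s(A),B]}\le 2\|A\|\|B\|\,C|X|(e^{v|s|}-1)e^{-\mu\,\dist(X,\supp B)},\qquad v=C\,J_{\mathrm{int}}.
\]
The convolution constant of $F$ is finite by polynomial volume growth, and finite range with $\diam X_\alpha\le r$ lets $\sum_{\alpha\ni i}\|h_\alpha\|e^{\mu\,\diam X_\alpha}F(\diam X_\alpha)^{-1}\le e^{\mu r}(1+r)^{D+1}J_{\mathrm{int}}$. The constants depend only on $r$, $D$, $C_{\rm vol}$. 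The outer $\tau^{V_\Omega}$ does not change supports, so it preserves commutator norms against operators with disjoint support.

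Finally I sum the contributions. Each $\alpha$ with $X_\alpha\cap\Omega^c\ne\varnothing$ has $\dist(S,X_\alpha)\ge u-r$ (in fact $\ge u$ minus a constant, since $S^{\le u}\subseteq\Omega$). Summing $\|h_\alpha\|e^{-\mu\dist(S,X_\alpha)}$ over such $\alpha$, using polynomial growth of shells around $S$, gives $C|S|J_{\mathrm{int}}e^{-\mu' u}$ with slightly smaller $\mu'$. Then $\int_0^{|t|}J_{\mathrm{int}}(e^{v s}-1)ds\le C^{-1}(e^{v|t|}-1)$, which yields \eqref{qg:lr-eq}. The loss $e^{\mu r}$ is absorbed into $C_{\rm LR}$.

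I expect the main obstacle to be bookkeeping rather than anything conceptual. The prefactor must come out as $(e^{v|t|}-1)$ rather than $e^{v|t|}$, and nothing may depend on the local dimensions. The first point uses the fact that the iteration starts from $C_B(X,0)=0$ for disjoint supports. The second holds because only norms of commutators enter.
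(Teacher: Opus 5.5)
Your argument is the standard one: a Duhamel comparison of $\tau^H_t$ with $\tau^{H_\Omega}_t$, where the outside fields drop out of the commutators, combined with a Lieb--Robinson bound for the interaction-picture dynamics. The paper does not prove this statement itself; it imports exactly this argument by citing \cite[Theorem~3.4(ii)]{NSY}. Two pieces of your bookkeeping, however, do not go through as written. With the repairs below, the proof is complete. Either ordering of the interaction-picture factorization works, since the time-dependent bound holds for the two-parameter dynamics in both time arguments.

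\textbf{The factor $|S|$.} You collapse the Lieb--Robinson bound to $C|S|(e^{v|s|}-1)e^{-\mu\dist(S,X_\alpha)}$. You then bound $\sum_\alpha\|h_\alpha\|e^{-\mu\dist(S,X_\alpha)}$ over the crossing terms by $C|S|J_{\mathrm{int}}e^{-\mu'u}$, and that estimate is correct. The product of the two, however, scales like $|S|^2$, not $|S|$. The shell sum genuinely carries a factor $|S|$: if $S$ consists of widely separated sites, each contributes its own shell of crossing terms at distance about $u$.

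To obtain \eqref{qg:lr-eq}, keep the uncollapsed bound $\frac{2\|A\|\|B\|}{C_F}(e^{v|s|}-1)\sum_{x\in S}\sum_{y\in X_\alpha}F_\mu(d(x,y))$ and exchange the sums. Every $y$ in a crossing $X_\alpha$ satisfies $d(y,S)>u-r$, so with $g(y)=\sum_{x\in S}F_\mu(d(x,y))$,
\[
\sum_{\alpha\ \mathrm{crossing}}\|h_\alpha\|\sum_{y\in X_\alpha}g(y)\le J_{\mathrm{int}}\sum_{x\in S}\ \sum_{y:\,d(x,y)>u-r}F_\mu(d(x,y))\le J_{\mathrm{int}}\|F\|e^{\mu r}|S|e^{-\mu u},
\]
where $\|F\|=\sup_x\sum_yF(d(x,y))$. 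This produces a single factor $|S|$.

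\textbf{Overlapping supports when $u<r$.} The form of the bound with $(e^{v|s|}-1)$ and no $\|[A,B]\|$ term holds only for disjoint supports. A crossing $X_\alpha$ is only guaranteed to satisfy $\dist(X_\alpha,S)>u-r$, so for $u<r$ it may meet $S$; take for instance $u=0$ and $\Omega=S$. Then $\|[h_\alpha,\tau^{H_\Omega}_s(O)]\|$ does not vanish at $s=0$.

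For these terms, use either the iterated bound $\|[\tau_s(A),B]\|\le\|[A,B]\|+\frac{2\|A\|\|B\|}{C_F}(e^{v|s|}-1)\sum_{x,y}F_\mu(d(x,y))$ or simply the trivial estimate $2\|O\|\|h_\alpha\|$. Their total contribution is at most
\[
2|t|\,\|O\|J_{\mathrm{int}}|S|\le 2C_{\rm LR}^{-1}|S|\|O\|(e^{v_{\rm LR}|t|}-1).
\]
This case arises only for $u<r$, so a factor $e^{\mu r}$ converts it into the required $e^{-\mu u}$.
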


\section{Gibbs resets    and
 quantum belief propagation}
\label{qg:construction}
We consider a local Hamiltonian in the form of \eqref{eq:labelled-interactions}. In what
follows,
the local Gibbs state associated with $H$ on a region $B\subseteq \Lambda$ and its reset channel
are respectively denoted by:
\begin{equation}
 \sigma_\beta^B=\frac{e^{-\beta H_B}}{\Tr e^{-\beta H_B}},
 \qquad \text{ and }\qquad \mathcal R_B(X)=\sigma_\beta^B\otimes\Tr_B X.
 \label{qg:reset}
\end{equation}
   When the
interactions crossing the boundary of $B$ are removed,
$\mathcal R_B$ replaces the state on $B$ by the corresponding Gibbs
state. We use quantum belief propagation to modify this map so that
the resulting dynamics preserve the Gibbs state of the full Hamiltonian .

\subsection{KMS-symmetric maps through quantum belief propagation}

Quantum belief propagation (QBP)
 describes how  a Gibbs operator
 changes  under a perturbation of the Hamiltonian
  \cite{HastingsQBP,KimQBP,KatoBrandao}.
The    relation between the two Gibbs operators is given by
\eqref{qg:congruence-eq}, where $O_B$  solves a differential equation
involving real-time evolution. For reasons that will become clear in the following, we apply
this construction at inverse
temperature $\beta/2$.

   For $\beta>0$ and a self-adjoint operator $K$ on a
finite-dimensional
Hilbert space, define
\begin{equation}
 p_\beta(t)=\frac{4}{\beta\pi}\log\coth\frac{\pi|t|}{\beta},
 \qquad
 \mathcal F_K(A)=\int_{\mathbb R}p_\beta(t)\tau_t^K(A)\,dt,
 \label{qg:F}
\end{equation}
where the first formula is for $t\ne0$ and $A$ is any operator on that
Hilbert space. The integral defining $\mathcal F_K$
converges because
$p_\beta$ has an integrable logarithmic singularity at zero and decays
exponentially at infinity.    \begin{lemma}
\label{lem:qbp-kernels}
For $\beta>0$ and $K=K^*$, the kernel and map in
\eqref{qg:F}
satisfy
\begin{equation}
 p_\beta\ge0,\qquad \int_{\mathbb R}p_\beta(t)\,dt=1,\qquad
 \widehat p_\beta(\omega)
 =\frac{\tanh(\beta\omega/4)}{\beta\omega/4},
 \label{qg:F-kernel}
\end{equation}
with Fourier convention $\widehat p(\omega)=\int_{\mathbb R}p(t)e^{it\omega}\,dt$
and quotient equal to one at zero. For every $L\ge0$,
\begin{equation}
 \int_{|t|>L}p_\beta(t)\,dt\le e^{-2\pi L/\beta},\qquad
 \int_{\mathbb R}|t|p_\beta(t)\,dt
 =\frac{4\beta}{\pi^3}\sum_{n\ge0}(2n+1)^{-3}\le\beta.
 \label{eq:qbp-kernel-tail}
\end{equation}
Moreover, $\|\mathcal F_K(A)\|\le\|A\|$ for every operator $A$.
\end{lemma}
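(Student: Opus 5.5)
The plan is to prove everything by explicit computation with the kernel $p_\beta(t)=\frac{4}{\beta\pi}\log\coth\frac{\pi|t|}{\beta}$. After the substitution $s=\pi t/\beta$ it suffices to treat $q(s)=\frac{4}{\pi^2}\log\coth|s|$, because $p_\beta(t)\,dt = q(s)\,ds$.

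\textbf{Positivity and the first moment.} Positivity is immediate, since $\coth|s|>1$ for $s\neq 0$. For the moments I will expand
\[
\log\coth s=\log\frac{1+e^{-2s}}{1-e^{-2s}}=2\sum_{n\ge0}\frac{e^{-2(2n+1)s}}{2n+1},\qquad s>0.
\]
All terms are nonnegative, so termwise integration is justified by monotone convergence. This gives
\[
\int_{\RR}q=\frac{8}{\pi^2}\sum_n\frac{1}{(2n+1)^2}=1,
\]
and
\[
\int_{\RR}|t|\,p_\beta(t)\,dt=\frac{\beta}{\pi}\cdot\frac{8}{\pi^2}\sum_n\frac{2}{(2n+1)\cdot 4(2n+1)^2}=\frac{4\beta}{\pi^3}\sum_n(2n+1)^{-3}.
\]
This sum is at most $\frac{4\beta}{\pi^3}\cdot\frac{7}{8}\zeta(3)<\beta$.

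\textbf{Tail bound.} Integrating the same series from $L'=\pi L/\beta$ to infinity gives
\[
\int_{|t|>L}p_\beta=\frac{8}{\pi^2}\sum_n\frac{e^{-2(2n+1)L'}}{(2n+1)^2}.
\]
The factor $e^{-2(2n+1)L'}$ is at most $e^{-2L'}$ for every $n$, so the tail is at most $e^{-2L'}\cdot 1=e^{-2\pi L/\beta}$.

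\textbf{Fourier transform.} By the same series and evenness,
\[
\widehat q(\xi)=\frac{8}{\pi^2}\sum_n\frac{1}{2n+1}\cdot\frac{2(2n+1)}{4(2n+1)^2+\xi^2}=\frac{16}{\pi^2}\sum_n\frac{1}{(2n+1)^2\pi^{0}\,4+\xi^2\,(\dots)},
\]
and after simplification this should be recognized as the Mittag-Leffler partial-fraction expansion
\[
\frac{\tanh z}{z}=\sum_{n\ge0}\frac{8}{(2n+1)^2\pi^2+4z^2}.
\]
The identification is with $\xi=\beta\omega/\pi$ and $z=\beta\omega/4$. This is the one step where bookkeeping could go wrong. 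The normalization check is automatic: the value at $\omega=0$ must be $1$, which matches the total mass computed above. As a fallback, I would directly compute $\int_0^\infty \log\coth(s)\cos(\xi s)\,ds=\frac{\pi}{2\xi}\tanh\frac{\pi\xi}{4}$ by integrating by parts, using that the derivative of $\log\coth s$ is $-2/\sinh(2s)$, together with the standard integral $\int_0^\infty \sin(a x)/\sinh(bx)\,dx=\frac{\pi}{2b}\tanh\frac{\pi a}{2b}$.

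\textbf{Norm bound on $\mathcal F_K$.} This follows because $p_\beta\ge0$ has integral one and each $\tau^K_t$ is an isometry in operator norm, so the triangle inequality for the Bochner integral gives $\|\mathcal F_K(A)\|\le\int_{\RR}p_\beta(t)\|A\|\,dt=\|A\|$.

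The main obstacle is the Fourier identity, where the constants in the partial-fraction matching have to be tracked carefully.
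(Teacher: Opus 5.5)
Your proof is correct and is essentially the paper's argument: you expand $\log\coth$ as a series in $e^{-2(2n+1)|s|}$, integrate termwise to get the mass, tail and first moment, identify the Fourier transform through the partial-fraction expansion of $\tanh$, and use unitary invariance for $\|\mathcal F_K(A)\|\le\|A\|$. The only slip is a missing factor of $2$ in your displayed $\widehat q$. The two-sided transform $\int_{\RR}e^{-a|s|}e^{i\xi s}\,ds=2a/(a^2+\xi^2)$ puts $4(2n+1)$ in the numerator, not $2(2n+1)$, so $\widehat q(\xi)=\frac{32}{\pi^2}\sum_n\frac{1}{4(2n+1)^2+\xi^2}$. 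This equals $\tanh z/z$ at $z=\pi\xi/4=\beta\omega/4$, and your normalization check at $\xi=0$ would catch the factor.
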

\begin{proof}
For $t\ne0$, the power series for the logarithm gives
\[
 p_\beta(t)=\frac{8}{\beta\pi}\sum_{n=0}^\infty
       \frac{e^{-2\pi(2n+1)|t|/\beta}}{2n+1}.
\]
All summands are positive. Termwise integration gives
\[
 \int_{|t|>L}p_\beta(t)\,dt
 =\frac{8}{\pi^2}\sum_{n\ge0}
       \frac{e^{-2\pi(2n+1)L/\beta}}{(2n+1)^2}
 \le e^{-2\pi L/\beta}.
\]
At $L=0$, the equality and $\sum_{n\ge0}(2n+1)^{-2}=\pi^2/8$
give unit mass. Integrating with a factor $|t|$ gives the first
moment in
\eqref{eq:qbp-kernel-tail}. Fourier transformation yields
\[
 \widehat p_\beta(\omega)
 =\frac{8}{\pi^2}\sum_{n\ge0}
   \frac{1}{(2n+1)^2+(\beta\omega/(2\pi))^2}
 =\frac{\tanh(\beta\omega/4)}{\beta\omega/4},
\]
where we used the partial-fraction identity
$\sum_{n\ge0}((2n+1)^2+z^2)^{-1}
=\pi\tanh(\pi z/2)/(4z)$, extended continuously at zero.
Finally, unitary invariance of the operator norm gives
\[
 \|\mathcal F_K(A)\|
 \le\int_{\mathbb R}p_\beta(t)\|\tau_t^K(A)\|\,dt
 =\|A\|.
\]
\end{proof}

\noindent Next, let $O_B(u)$ be the solution of
\begin{equation}
 O_B'(u)=-\frac{\beta}{4}\mathcal F_{H_B(u)}(W_{\partial B})O_B(u),
 \qquad O_B(0)=\one,
 \qquad O_B=O_B(1).
 \label{qg:ODE}
\end{equation}

\begin{proposition}
\label{qg:congruence}
The    operator $O_B$ defined in
\eqref{qg:ODE}
 satisfies
\begin{align}
 O_B e^{-\beta H_B^0/2}O_B^*&=e^{-\beta H/2},\label{qg:congruence-eq}\\
 \|O_B^{\pm1}\|&\le e^{\beta\|W_{\partial B}\|/4},\label{qg:O-norm}\\
 \|O_B-\one\|+\|O_B^{-1}-\one\|&\le2\bigl(e^{\beta\|W_{\partial B}\|/4}-1\bigr).
 \label{qg:O-identity}
\end{align}
   More generally, for every $0\le u\le v\le1$,
\begin{equation}
 \max\left\{\|O_B(v)O_B(u)^{-1}\|,
             \|O_B(u)O_B(v)^{-1}\|\right\}
 \le e^{(v-u)\beta\|W_{\partial B}\|/4}.
 \label{qg:O-interval}
\end{equation}

\end{proposition}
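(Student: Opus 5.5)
The plan is to prove the congruence by differentiating along the interpolation $H_B(u)=H_B^0+uW_{\partial B}$. The norm bounds will then follow from Grönwall estimates on the defining ODE \eqref{qg:ODE}.

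\textbf{Step 1: the Hastings derivative formula.} Set $\rho(u)=e^{-\beta H_B(u)/2}$. I would first show
\[
\partial_u\rho(u)=-\tfrac{\beta}{4}\bigl\{\mathcal F_{H_B(u)}(W_{\partial B}),\rho(u)\bigr\}.
\]
Duhamel's formula gives
\[
\partial_u\rho=-\tfrac{\beta}{2}\int_0^1 e^{-s\beta H_B(u)/2}\,W_{\partial B}\,e^{-(1-s)\beta H_B(u)/2}\,ds .
\]
I would decompose $W_{\partial B}$ into Bohr components $W_\omega$ of $H_B(u)$ and use \eqref{eq:bohr-functional-calculus}. Then the right-hand side becomes
\[
-\tfrac{\beta}{2}\sum_\omega\frac{1-e^{-\beta\omega/2}}{\beta\omega/2}\,W_\omega\,\rho .
\]
For the anticommutator, $\{W_\omega,\rho\}=(1+e^{-\beta\omega/2})W_\omega\rho$. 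It therefore suffices to check the scalar identity
\[
\frac{1-e^{-x}}{x}=\frac{\tanh(x/2)}{x/2}\cdot\frac{1+e^{-x}}{2},\qquad x=\beta\omega/2 .
\]
This holds because $\tanh(x/2)=(1-e^{-x})/(1+e^{-x})$. Finally, \Cref{lem:qbp-kernels} identifies $\mathcal F_K$ with the multiplier $\widehat p_\beta(\omega)=\tanh(\beta\omega/4)/(\beta\omega/4)$, which has exactly this form. One point to watch is the sign convention in $\tau_t$ versus the Fourier convention. Since $\widehat p_\beta$ is even, the convention does not matter.

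\textbf{Step 2: the congruence \eqref{qg:congruence-eq}.} Write $F(u)=\mathcal F_{H_B(u)}(W_{\partial B})$. Because $p_\beta$ is even and real, $F(u)$ is self-adjoint. Define $Y(u)=O_B(u)\rho(0)O_B(u)^*$. Then $Y'=-\frac\beta4(FY+YF)$, which is the same linear ODE that $\rho(u)$ satisfies by Step 1. Both start at $Y(0)=\rho(0)$, so uniqueness for linear ODEs gives $Y(u)=\rho(u)$ for all $u$. At $u=1$ this is \eqref{qg:congruence-eq}.

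\textbf{Step 3: the norm bounds.} By \Cref{lem:qbp-kernels}, $\|F(u)\|\le\|W_{\partial B}\|=:w$.

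\begin{itemize}
\item For $0\le u\le v\le1$, consider $M(v)=O_B(v)O_B(u)^{-1}$. It solves $M'=-\frac\beta4FM$ with $M(u)=\one$, so Grönwall gives $\|M(v)\|\le e^{(v-u)\beta w/4}$.
\item The inverse satisfies $(O_B^{-1})'=\frac\beta4O_B^{-1}F$. The same argument applied to $O_B(u)O_B(v)^{-1}$ gives the matching bound, which proves \eqref{qg:O-interval}.
\item Taking $u=0$ and $v=1$ yields \eqref{qg:O-norm}.
\item For \eqref{qg:O-identity}, write $O_B-\one=-\frac\beta4\int_0^1F(u)O_B(u)\,du$. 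Its norm is at most $\int_0^1\frac{\beta w}{4}e^{u\beta w/4}du=e^{\beta w/4}-1$, and the same computation bounds $O_B^{-1}-\one$.
\end{itemize}

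\textbf{Main obstacle.} The only delicate step is the scalar identity in Step 1, together with the normalization ($\beta/2$ versus $\beta/4$) that matches $\widehat p_\beta$. Everything else is routine linear-ODE bookkeeping.
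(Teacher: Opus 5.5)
Your proposal is correct and follows essentially the same route as the paper. The paper likewise derives $Q'(u)=-\tfrac{\beta}{4}\{\mathcal F_{H_B(u)}(W_{\partial B}),Q(u)\}$, obtaining it from the divided-difference form of the Fréchet derivative in an eigenbasis, which is the same computation as your Duhamel/Bohr-component check of the scalar identity. It then concludes \eqref{qg:congruence-eq} by uniqueness for the linear ODE satisfied by $O_B(u)Q(0)O_B(u)^*$, and gets \eqref{qg:O-interval}, \eqref{qg:O-norm} and \eqref{qg:O-identity} by Grönwall on \eqref{qg:ODE} and the inverse equation, followed by integrating from $0$ to $1$, exactly as you do.
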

\begin{proof}
Set $Q(u)=e^{-\beta H_B(u)/2}$. In an eigenbasis of $H_B(u)$,
its Fr\'echet derivative has entries
\[
 Q'(u)_{ab}=
 \frac{e^{-\beta E_a/2}-e^{-\beta E_b/2}}{E_a-E_b}
 (W_{\partial B})_{ab},
\]

 where the divided difference is interpreted by continuity when
the two eigenvalues coincide . By
\eqref{qg:F-kernel}, this is precisely
\[
 Q'(u)=-\frac{\beta}{4}
 \{\mathcal F_{H_B(u)}(W_{\partial B}),Q(u)\}.
\]
The    operator
$O_B(u)Q(0)O_B(u)^*$    satisfies
  the same linear matrix
differential equation with the same initial value. Uniqueness proves
\eqref{qg:congruence-eq}.    By
\Cref{lem:qbp-kernels},
\[
 \left\|\frac{\beta}{4}\mathcal F_{H_B(u)}(W_{\partial B})\right\|
 \le\frac{\beta}{4}\|W_{\partial B}\|.
\]
The inverse satisfies
\[
 \frac{d}{du}O_B(u)^{-1}
 =\frac{\beta}{4}O_B(u)^{-1}\mathcal F_{H_B(u)}(W_{\partial B}).
\]
We apply  Gr\"onwall's inequality to
\eqref{qg:ODE}
and this equation,
starting at interpolation parameter $u$, to obtain
\eqref{qg:O-interval}.
Taking $u=0$, $v=1$ proves
\eqref{qg:O-norm}. Integrating from zero
to one also gives, for each choice of sign,
\[
 \|O_B^{\pm1}-\one\|
 \le\int_0^1\frac{\beta\|W_{\partial B}\|}{4}
              e^{v\beta\|W_{\partial B}\|/4}\,dv
 =e^{\beta\|W_{\partial B}\|/4}-1,
\]
which  proves \eqref{qg:O-identity}.
\end{proof}

\noindent For $K=K^*$, define
\[
 \mathcal G_K(A)=\frac{i}{2}
 \tanh\left(\frac{\beta}{4}\ad_K\right)(A).
\]

 \begin{lemma}
\label{lem:qbp-coherent-transform}
For $\beta>0$, $K=K^*$ and every operator $A$,
\begin{align}
 \mathcal G_K(A)
 &=\frac{i\beta}{8}\int_{\mathbb R}p_\beta(t)
                         \tau_t^K([K,A])\,dt,
 \label{eq:qbp-G-source}\\
 &=\frac{1}{\beta}\int_0^\infty
 \frac{\tau_t^K(A)-\tau_{-t}^K(A)}{\sinh(2\pi t/\beta)}\,dt.
 \label{qg:G-integral}
\end{align}
Both integrals converge in operator norm, and
$\|\mathcal G_K(A)\|\le\beta\|[K,A]\|/8$.
\end{lemma}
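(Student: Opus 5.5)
The plan is to reduce both identities to scalar identities, one Bohr frequency at a time, using the functional calculus \eqref{eq:bohr-functional-calculus}. Since $K$ acts on a finite-dimensional space, decompose $A=\sum_\omega A_\omega$ with $\ad_K(A_\omega)=\omega A_\omega$ and $\tau_t^K(A_\omega)=e^{it\omega}A_\omega$. Both sides of \eqref{eq:qbp-G-source} and \eqref{qg:G-integral} are linear in $A$ and commute with this decomposition. It therefore suffices to check, for each real $\omega$, that the scalar integrals reproduce the multiplier $\tfrac i2\tanh(\beta\omega/4)$. Convergence in operator norm will follow from convergence of the scalar integrals, because there are finitely many frequencies.

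For \eqref{eq:qbp-G-source}, note that $[K,A]=\sum_\omega\omega A_\omega$. Hence
\[
\frac{i\beta}{8}\int_{\RR}p_\beta(t)\,\tau_t^K([K,A])\,dt=\frac{i\beta}{8}\sum_\omega\omega\,\widehat p_\beta(\omega)A_\omega .
\]
By the Fourier formula \eqref{qg:F-kernel}, $\omega\widehat p_\beta(\omega)=\tfrac4\beta\tanh(\beta\omega/4)$, and this also holds at $\omega=0$. So the right-hand side equals $\tfrac i2\sum_\omega\tanh(\beta\omega/4)A_\omega=\mathcal G_K(A)$. This integral converges absolutely in norm because $p_\beta\in L^1$ and $\tau_t^K$ is isometric. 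The same observation gives the norm bound: using $p_\beta\ge0$ and unit mass from \Cref{lem:qbp-kernels},
\[
\|\mathcal G_K(A)\|\le\frac\beta8\int p_\beta(t)\,\|\tau_t^K([K,A])\|\,dt=\frac\beta8\|[K,A]\| .
\]

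For \eqref{qg:G-integral}, apply the integral to $A_\omega$. Set $a=2\pi/\beta$. The frequency-$\omega$ component becomes
\[
\frac1\beta\int_0^\infty\frac{e^{it\omega}-e^{-it\omega}}{\sinh(at)}\,dt=\frac{2i}{\beta}\int_0^\infty\frac{\sin(\omega t)}{\sinh(at)}\,dt .
\]
Convergence is clear. Near $t=0$ the integrand is bounded by $|\omega|t/\sinh(at)\le|\omega|/a$. At infinity it decays like $e^{-at}$. In operator terms, $\|\tau_t^K(A)-\tau_{-t}^K(A)\|\le2t\|[K,A]\|$ handles the origin. The key step is the classical integral
\[
\int_0^\infty\frac{\sin(\omega t)}{\sinh(at)}\,dt=\frac{\pi}{2a}\tanh\Bigl(\frac{\pi\omega}{2a}\Bigr).
\]
With $a=2\pi/\beta$, this turns the component into $\tfrac{2i}{\beta}\cdot\tfrac\beta4\tanh(\beta\omega/4)=\tfrac i2\tanh(\beta\omega/4)$, as required.

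The main obstacle is justifying that classical integral rather than merely citing it. I would prove it in the same spirit as \Cref{lem:qbp-kernels}. First expand
\[
\frac1{\sinh(at)}=2\sum_{n\ge0}e^{-(2n+1)at}\qquad(t>0).
\]
Then integrate termwise; this is justified by monotone or dominated convergence after bounding $|\sin(\omega t)|\le|\omega|t$. Each term gives
\[
\int_0^\infty\sin(\omega t)\,e^{-(2n+1)at}\,dt=\frac{\omega}{(2n+1)^2a^2+\omega^2}.
\]
The resulting series is evaluated by the partial-fraction identity $\sum_{n\ge0}\bigl((2n+1)^2+z^2\bigr)^{-1}=\pi\tanh(\pi z/2)/(4z)$, already used in \Cref{lem:qbp-kernels}, with $z=\omega/a$.
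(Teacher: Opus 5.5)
Your proposal is correct and follows essentially the same route as the paper: the first identity comes from the Fourier formula for $p_\beta$ applied to each Bohr component, and the second from expanding $1/\sinh(2\pi t/\beta)=2\sum_{n\ge0}e^{-2\pi(2n+1)t/\beta}$, integrating the sine termwise and invoking the partial-fraction identity. Your explicit justification of the termwise integration, by dominating with $|\omega|t/\sinh(at)$, is a welcome detail that the paper leaves implicit.
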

\begin{proof}
Applying
\eqref{qg:F-kernel}
to each Bohr component of $A$ gives
\[
 \mathcal G_K(A)=\frac{i\beta}{8}\mathcal F_K([K,A]),
\]
which proves
\eqref{eq:qbp-G-source}. For
\eqref{qg:G-integral}, expand
$1/\sinh(2\pi t/\beta)=2\sum_{n\ge0}e^{-2\pi(2n+1)t/\beta}$
for $t>0$ and integrate the sine to obtain
\[
 \int_0^\infty\frac{\sin(\omega t)}{\sinh(2\pi t/\beta)}\,dt
 =\frac{\beta}{4}\tanh(\beta\omega/4).
\]
Applying this identity to each Bohr component proves
\eqref{qg:G-integral}.
The first integral converges because $p_\beta$ is integrable. In the
second, $\tau_t^K(A)-\tau_{-t}^K(A)=O(t)$ at zero, and the denominator
grows exponentially at infinity. Finally,
\Cref{lem:qbp-kernels}
gives
\[
 \|\mathcal G_K(A)\|
 =\frac{\beta}{8}\|\mathcal F_K([K,A])\|
 \le\frac{\beta}{8}\|[K,A]\|.
\]
\end{proof}

\noindent Next, we choose a Hermitian Hilbert--Schmidt orthonormal basis
$(F_{B,a})_{a=1}^{(\dim\HH_B)^2}$ of $\BB(\HH_B)$ , normalized by
$\Tr(F_{B,a}F_{B,b})=\delta_{ab}$.  Completeness of this basis gives,
for every operator $X$ on $B$ and any exterior reference,
\begin{align}
 \sum_{a=1}^{(\dim\HH_B)^2}F_{B,a}XF_{B,a}
 =\one_B\otimes\Tr_BX~ \Rightarrow~  \sum_{a=1}^{(\dim\HH_B)^2}
  (\sigma_\beta^B)^{1/2}F_{B,a}XF_{B,a}(\sigma_\beta^B)^{1/2}
 =\mathcal R_B(X).
 \label{eq:qbp-basis-completeness}
\end{align}
The first identity follows for matrix units, and the sum is unchanged
by an orthonormal change of operator basis.

   Using  the KMS
completion \cite{AmorimCarlen,DLL},    we define the jump operators and block generator by

\begin{align}
 L_{B,a}&=O_B(\sigma_\beta^B)^{1/2}F_{B,a}O_B^{-1},\notag\\
 \mathcal T_B(X)&=\sum_aL_{B,a}XL_{B,a}^*
 =\bigl(\operatorname{Ad}_{O_B}\circ\mathcal R_B
 \circ\operatorname{Ad}_{O_B^{-1}}\bigr)(X),\notag\\
 S_B&=\mathcal T_B^*(\one)=\sum_aL_{B,a}^*L_{B,a},\notag\\
 G_B&=\mathcal G_H(S_B),\notag\\
 \mathcal K_B(X)&=\mathcal T_B(X)-\tfrac12\{S_B,X\}-i[G_B,X],
 \qquad \Phi_B(T)=e^{T\mathcal K_B}.
 \label{qg:generator}
\end{align}

The map $\operatorname{Ad}_D$ is completely positive for every $D$,
but need not preserve the trace. The same is true of $\mathcal T_B$.
The anticommutator and coherent correction in \eqref{qg:generator}
give its KMS-symmetric Lindblad completion~\cite{AmorimCarlen,DLL}.
As proved in \Cref{qg:KMS}, each jump satisfies
$L_{B,a}e^{-\beta H/2}=e^{-\beta H/2}L_{B,a}^*$.
Equivalently,
\[
  A_{B,a}:=e^{\beta H/4}L_{B,a}e^{-\beta H/4}=A_{B,a}^*,
  \qquad
  L_{B,a}=e^{-\beta H/4}A_{B,a}e^{\beta H/4}.
\]
This is the canonical representation of~\cite{DLL}. However, here the
Hermitian operators $A_{B,a}$ depend on the QBP construction,
rather than being fixed local proposals. Importantly, our locality and mixing estimates use the particular structure of QBP-dressed block Gibbs resets.

\begin{proposition}[Gibbs preservation and KMS detailed balance]
\label{qg:KMS}
The map $\mathcal K_B$ in \eqref{qg:generator} is a Lindblad generator.
Its associated quantum dynamics preserves $\sigma_\beta=e^{-\beta H}/\Tr(e^{-\beta H})$ and is
KMS
reversible: its Heisenberg adjoint is self-adjoint for
\[
 \langle A,C\rangle_{\sigma_\beta}
 =\Tr\bigl(A^*\sigma_\beta^{1/2}C\sigma_\beta^{1/2}\bigr).
\]
Every $\Phi_B(T)$ is consequently a Gibbs-preserving, KMS-reversible
channel.
\end{proposition}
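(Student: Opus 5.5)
The plan is to reduce everything to the single algebraic identity $L_{B,a}e^{-\beta H/2}=e^{-\beta H/2}L_{B,a}^*$ for every jump operator, and then invoke the general KMS-completion theorem of \cite{AmorimCarlen,DLL}.

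\textbf{Lindblad form.} Since $\mathcal T_B(X)=\sum_aL_{B,a}XL_{B,a}^*$, $S_B=\sum_aL_{B,a}^*L_{B,a}$ and $G_B=G_B^*$, the generator $\mathcal K_B$ is in standard GKSL form, and $\Phi_B(T)$ is CPTP by \Cref{lem:trajectories}. For self-adjointness of $G_B$, note that $S_B$ is Hermitian, so its Bohr components satisfy $(S_B)_\omega^*=(S_B)_{-\omega}$. Because $\tanh$ is odd and real, $\tanh(\tfrac\beta4\ad_H)(S_B)$ is anti-Hermitian, and multiplying by $i/2$ gives a Hermitian operator.

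\textbf{Jump identity.} Write $\sigma^B:=\sigma_\beta^B$, which commutes with $H_{B^c}$ because the two act on disjoint sites. Hence $e^{-\beta H_B^0/2}=e^{-\beta H_{B^c}/2}\,c\,(\sigma^B)^{1/2}$ with $c=(\Tr e^{-\beta H_B})^{1/2}>0$. Using \Cref{qg:congruence}, i.e.\ $e^{-\beta H/2}=O_Be^{-\beta H_B^0/2}O_B^*$, we compute
\[
L_{B,a}e^{-\beta H/2}=O_B(\sigma^B)^{1/2}F_{B,a}\,e^{-\beta H_B^0/2}O_B^*
= c\,O_B(\sigma^B)^{1/2}F_{B,a}(\sigma^B)^{1/2}e^{-\beta H_{B^c}/2}O_B^*.
\]
The operators $F_{B,a}$ and $e^{-\beta H_{B^c}/2}$ act on disjoint sites and so commute, which is what allows the exterior factor to be moved past $F_{B,a}$. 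Similarly, using $O_B^{-*}O_B^*=\one$,
\[
e^{-\beta H/2}L_{B,a}^*=O_Be^{-\beta H_B^0/2}O_B^*\,O_B^{-*}F_{B,a}(\sigma^B)^{1/2}O_B^*
=c\,O_B\,e^{-\beta H_{B^c}/2}(\sigma^B)^{1/2}F_{B,a}(\sigma^B)^{1/2}O_B^*,
\]
where $F_{B,a}=F_{B,a}^*$ was used. The two expressions coincide once the commuting exterior factor is moved. This is why QBP is taken at $\beta/2$: the two half-powers of $\sigma^B$ land exactly symmetrically around $F_{B,a}$.

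\textbf{KMS reversibility.} Set $\Delta(X)=\sigma_\beta^{1/2}X\sigma_\beta^{-1/2}$, so that $e^{-\beta H/2}\propto\sigma_\beta^{1/2}$ and the jump identity reads $\Delta^{-1}(L_{B,a})=L_{B,a}^*$, i.e.\ $L_{B,a}^*=\sigma_\beta^{-1/2}L_{B,a}\sigma_\beta^{1/2}$. First, the Heisenberg jump part $\mathcal T_B^*(A)=\sum_aL_{B,a}^*AL_{B,a}$ is KMS-symmetric. Indeed,
\[
\langle C,\mathcal T_B^*(A)\rangle_{\sigma_\beta}=\sum_a\Tr\bigl(C^*\sigma_\beta^{1/2}L_{B,a}^*AL_{B,a}\sigma_\beta^{1/2}\bigr),
\]
and substituting $\sigma_\beta^{1/2}L_{B,a}^*=L_{B,a}\sigma_\beta^{1/2}$ and $L_{B,a}\sigma_\beta^{1/2}=\sigma_\beta^{1/2}L_{B,a}^*$ gives $\sum_a\Tr\bigl(L_{B,a}^*C^*L_{B,a}\,\sigma_\beta^{1/2}A\sigma_\beta^{1/2}\bigr)$. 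This is symmetric in $A$ and $C$ after conjugation.

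Second, the remaining part $A\mapsto-\tfrac12\{S_B,A\}+i[G_B,A]$ must also be KMS-symmetric. The approach is to write $\Gamma(A)=\sigma_\beta^{1/2}A\sigma_\beta^{1/2}$, so that KMS symmetry means $\Gamma\mathcal M$ is Hilbert--Schmidt symmetric. Working Bohr component by Bohr component, the map $A\mapsto -\tfrac12(SA+AS)+i(GA-AG)$ is KMS-symmetric exactly when $G$ solves
\[
i\bigl(G\sigma^{1/2}-\sigma^{1/2}G\bigr)=\tfrac12\bigl(\sigma^{1/2}S-S\sigma^{1/2}\bigr)
\]
up to the appropriate symmetrization. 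In the eigenbasis of $H$, with $E,E'$ the energies of a matrix element, this yields
\[
G_{EE'}=\tfrac i2\,\frac{e^{-\beta E/2}-e^{-\beta E'/2}}{e^{-\beta E/2}+e^{-\beta E'/2}}\cdot(-1)\,S_{EE'}=\tfrac i2\tanh\bigl(\tfrac\beta4(E-E')\bigr)S_{EE'},
\]
which is precisely $G_B=\mathcal G_H(S_B)$. I would either cite \cite{DLL} for this step, which is the canonical KMS completion given Hermitian $A_{B,a}$, or verify it directly. I expect this to be the main, if routine, obstacle: the signs and conventions of $\Delta$ versus $\ad_H$ must be matched carefully.

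\textbf{Stationarity.} KMS self-adjointness together with $\mathcal K_B^*(\one)=0$, which holds by trace preservation, gives $\mathcal K_B(\sigma_\beta)=0$, as noted in the preliminaries. Hence $\Phi_B(T)(\sigma_\beta)=\sigma_\beta$, and $\Phi_B(T)$ is a Gibbs-preserving, KMS-reversible channel.
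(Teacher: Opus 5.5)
Your proposal follows the paper's proof. The QBP congruence at $\beta/2$ gives $L_{B,a}e^{-\beta H/2}=e^{-\beta H/2}L_{B,a}^*$. Combined with an intertwining relation for the non-jump part, this yields $\Gamma_Q^{-1}\mathcal K_B\Gamma_Q=\mathcal K_B^*$ for $\Gamma_Q(X)=QXQ$, $Q=e^{-\beta H/2}$, which is equivalent to your inner-product formulation. Your jump-identity computation and the KMS symmetry of $\mathcal T_B^*$ are correct.

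One displayed step needs fixing. With $D=-\tfrac12 S_B+iG_B$, the non-jump Heisenberg part is $A\mapsto DA+AD^*$. Its KMS symmetry is equivalent to $\sigma_\beta^{1/2}D=D^*\sigma_\beta^{1/2}$, that is,
\[
 i\bigl(G_B\sigma_\beta^{1/2}+\sigma_\beta^{1/2}G_B\bigr)=\tfrac12\bigl(\sigma_\beta^{1/2}S_B-S_B\sigma_\beta^{1/2}\bigr),
\]
with an anticommutator on the left, not the commutator you wrote. Your commutator version would force $(G_B)_{EE'}=\tfrac i2(S_B)_{EE'}$ off the energy diagonal, which is not even Hermitian. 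The anticommutator version gives exactly the $\tanh$ formula you display next.

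The paper settles this step directly, with no appeal to the general KMS-completion theorem and no unspecified ``symmetrization''. It notes that $(D_B)_{ab}=-\frac{q_b}{q_a+q_b}(S_B)_{ab}$ in the eigenbasis of $Q$, from which $QD_B=D_B^*Q$ is immediate.
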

   \begin{proof}
Put $Q=e^{-\beta H/2}$ and $Q_0=e^{-\beta H_B^0/2}$.
For $L_{B,a}^0=(\sigma_\beta^B)^{1/2}F_{B,a}$, Hermiticity of the
basis elements gives $L_{B,a}^0Q_0=Q_0(L_{B,a}^0)^*$.
The    identity
$Q=O_BQ_0O_B^*$ therefore implies\[
 L_{B,a}Q=O_BL_{B,a}^0Q_0O_B^*
       =O_BQ_0(L_{B,a}^0)^*O_B^*=QL_{B,a}^*.
\]
Let $\Gamma_Q(X)=QXQ$ and $D_B=-S_B/2+iG_B$.
In an eigenbasis of $Q$, with eigenvalues $q_a>0$, the definition of
$G_B$ gives
\[
 (G_B)_{ab}=\frac{i}{2}\frac{q_b-q_a}{q_a+q_b}(S_B)_{ab},\qquad
 (D_B)_{ab}=-\frac{q_b}{q_a+q_b}(S_B)_{ab},\qquad
 QD_B=D_B^*Q.
\]
Since $\mathcal K_B(X)=\mathcal T_B(X)+D_B^*X+XD_B$, the preceding
intertwining relations yield
\[
 \Gamma_Q^{-1}\mathcal T_B\Gamma_Q=\mathcal T_B^*,\qquad
 \Gamma_Q^{-1}\mathcal K_B\Gamma_Q=\mathcal K_B^*.
\]
This is KMS self-adjointness of $\mathcal K_B^*$.
The displayed expression \eqref{qg:generator} is in Lindblad form,
so $\mathcal K_B^*(\one)=0$ and its semigroup is CPTP.
Consequently $\mathcal K_B(Q^2)=\Gamma_Q\mathcal K_B^*(\one)=0$.
Normalizing $Q^2$ proves Gibbs preservation. Exponentiation preserves
KMS self-adjointness, proving the assertion for $\Phi_B(T)$.
     \end{proof}

  \begin{lemma}
 \label{qg:loss}
The above construction satisfies
\begin{equation}
 e^{-\beta w_B}\one\le S_B\le e^{\beta w_B}\one,
 \qquad \|\mathcal T_B\|_\diamond=\|S_B\|\le e^{\beta w_B}.
 \label{qg:loss-eq}
\end{equation}
These estimates    also hold when
\eqref{qg:ODE}
is solved with a restricted Hamiltonian, keeping  the same boundary interaction and
  reset channel.
\end{lemma}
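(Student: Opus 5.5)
Since $\mathcal T_B=\Ad_{O_B}\circ\mathcal R_B\circ\Ad_{O_B^{-1}}$, the first step is to compute $S_B=\mathcal T_B^*(\one)$ explicitly. The adjoint of $\mathcal R_B$ is $\mathcal R_B^*(Y)=\one_B\otimes\Tr_B(\sigma_\beta^B Y)$, which is unital and completely positive. Therefore
\[
 S_B=\Ad_{O_B^{-1}}^*\bigl(\mathcal R_B^*(O_B^*O_B)\bigr)
 =O_B^{-*}\,\mathcal R_B^*(O_B^*O_B)\,O_B^{-1}.
\]
Here $\Ad_D^*(Y)=D^*YD$, applied with $D=O_B^{-1}$.

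The upper bound follows by operator monotonicity. We have $O_B^*O_B\le\|O_B\|^2\one$, and $\mathcal R_B^*$ is positive and unital, so $\mathcal R_B^*(O_B^*O_B)\le\|O_B\|^2\one$. Conjugating by $O_B^{-1}$ gives
\[
 S_B\le\|O_B\|^2\,O_B^{-*}O_B^{-1}\le\|O_B\|^2\|O_B^{-1}\|^2\one .
\]
By \eqref{qg:O-norm} the right-hand side is at most $e^{\beta w_B}\one$.

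The lower bound is symmetric. Use $O_B^*O_B\ge\|O_B^{-1}\|^{-2}\one$, valid because $\|O_B\psi\|\ge\|\psi\|/\|O_B^{-1}\|$, together with $O_B^{-*}O_B^{-1}\ge\|O_B\|^{-2}\one$. This gives $S_B\ge(\|O_B\|\|O_B^{-1}\|)^{-2}\one\ge e^{-\beta w_B}\one$.

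For the diamond norm, note that for any CP map $\mathcal T$ one has $\|\mathcal T\|_\diamond=\|\mathcal T^*\|_{\mathrm{cb}}=\|\mathcal T^*(\one)\|$ (Russo--Dye for CP maps). Applying this with $\mathcal T=\mathcal T_B$ gives $\|\mathcal T_B\|_\diamond=\|S_B\|\le e^{\beta w_B}$.

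For the restricted version, observe that the argument only uses the norm bounds $\|O^{\pm1}\|\le e^{\beta\|W_{\partial B}\|/4}$. When \eqref{qg:ODE} is solved with $H_B(u)$ replaced by a restricted interpolation, the Grönwall argument of \Cref{qg:congruence} still applies verbatim. It relies only on $\|\mathcal F_K(W_{\partial B})\|\le\|W_{\partial B}\|$ for an arbitrary self-adjoint $K$, which \Cref{lem:qbp-kernels} provides. The congruence identity is not needed for this lemma.

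No step is a serious obstacle. The only care needed is to track the correct order of the adjoint conjugations and to confirm that the product bound $\|O_B\|\|O_B^{-1}\|\le e^{\beta w_B/2}$ is what \eqref{qg:O-norm} provides.
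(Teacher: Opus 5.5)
Your proof is correct and follows the paper's argument. It uses the same adjoint formula $S_B=O_B^{-*}\mathcal R_B^*(O_B^*O_B)O_B^{-1}$, positivity and unitality of $\mathcal R_B^*$, a second conjugation controlled by the norm bounds on $O_B^{\pm1}$, and the identity $\|\mathcal T\|_\diamond=\|\mathcal T^*(\one)\|$ for completely positive maps. Your observation that the restricted case needs only the Gr\"onwall norm bounds, and not the congruence identity, is exactly what justifies the lemma's extension to restricted Hamiltonians.
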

\begin{proof}
The adjoint reset is positive and unital, and
\[
 S_B=O_B^{-*}\mathcal R_B^*(O_B^*O_B)O_B^{-1}.
\]
Since $e^{-\beta w_B/2}\one\le O_B^*O_B\le e^{\beta w_B/2}\one$, positivity and a second
conjugation give the order bounds. The diamond norm of a completely
positive map equals the operator norm of its adjoint on the identity.
\end{proof}

\begin{remark}[Dependence on the boundary interaction]
The    bounds on $O_B$ and
$O_B^{-1}$  depend on the block only through its crossing
interaction $\|W_{\partial B}\|$. For an interval in a nearest-neighbor chain,
$\|W_{\partial B}\|\le2J_{\mathrm{int}}$   .
Thus the bounds on    $O_B^{\pm1}$ and  $S_B$ are independent of
the interval length. Increasing the
block length can therefore improve contraction in the interior
without increasing these boundary bounds. This observation is at the core of our
proof of rapid mixing for one-dimensional systems
at every finite temperature (\Cref{q1:main}).
\end{remark}

\begin{remark}
\label{qg:BLMT}
\label{rem:classicalrates}
When $H=H_B+H_{B^c}$, we have $O_B=\one$, $S_B=\one$ and
$G_B=0$. Since $\mathcal R_B^2=\mathcal R_B$,
\[
 \mathcal K_B=\mathcal R_B-\id,\qquad
 \Phi_B(T)=e^{-T}\id+(1-e^{-T})\mathcal R_B.
\]
Thus the state on $B$ is replaced by $\sigma_\beta^B$ at rate one,
removing correlations with $B^c$ while leaving its marginal unchanged.
When interactions cross the boundary, the conjugations in
$\mathcal T_B$ modify this replacement, and the completed generator
preserves the full Gibbs state.

Next, assume that all interaction terms
are diagonal in a common product
basis. For $U\subseteq\Lambda$, write
$\Omega_U=\prod_{i\in U}[d_i]$, where $[d_i]=\{1,\ldots,d_i\}$.
A configuration is $(u,\xi)\in\Omega_B\times\Omega_{B^c}$, with
associated projection $P_{u,\xi}=|u,\xi\rangle\langle u,\xi|$.
Write the diagonal entries of the Hamiltonian as
\[
 H(u,\xi)=H_B(u)+W_\xi(u)+H_{B^c}(\xi),\qquad
 W_\xi(u)=\langle u,\xi|W_{\partial B}|u,\xi\rangle.
\]
The full Gibbs distribution is
$\pi(u,\xi)=e^{-\beta H(u,\xi)}/\Tr e^{-\beta H}$.
The block Gibbs law and the conditional Gibbs law are, respectively,
\[
 \sigma_\beta^B(u)=\langle u|\sigma_\beta^B|u\rangle
 =\frac{e^{-\beta H_B(u)}}{\sum_{v\in\Omega_B}e^{-\beta H_B(v)}},
 \qquad
 \pi_B^\xi(u)
 =\frac{\sigma_\beta^B(u)e^{-\beta W_\xi(u)}}{\sum_{v\in\Omega_B}\sigma_\beta^B(v)e^{-\beta
W_\xi(v)}}.
\]
In particular, $\pi_B^\xi$ includes the interactions with the fixed
exterior configuration $\xi$, whereas $\sigma_\beta^B$ does not.

 Commutativity gives
$\mathcal F_{H_B(t)}(W_{\partial B})=W_{\partial B}$, so
\eqref{qg:ODE}
yields
\[
 O_B=e^{-\beta W_{\partial B}/4}
     =e^{-\beta H/4}e^{\beta H_B^0/4}.
\]
The completeness identity
\eqref{eq:qbp-basis-completeness}
then
shows that $\mathcal T_B$ preserves the diagonal algebra and that
$S_B$ is diagonal. Hence $[H,S_B]=0$ and $G_B=0$.
For distinct $u,v\in\Omega_B$, the transition rate is
\begin{align}
 r_\xi(u,v)
 &:=\Tr[P_{v,\xi}\mathcal K_B(P_{u,\xi})]
   =\sum_a|\langle v,\xi|L_{B,a}|u,\xi\rangle|^2\notag\\
 &=\sigma_\beta^B(v)e^{\beta[W_\xi(u)-W_\xi(v)]/2}
   =\sqrt{\sigma_\beta^B(u)\sigma_\beta^B(v)}
     \sqrt{\frac{\pi_B^\xi(v)}{\pi_B^\xi(u)}}.
 \label{eq:classicalrates}
\end{align}
Only the jump term contributes to this off-diagonal matrix entry.
For the diagonal entry, its term with $v=u$ cancels against the
corresponding anticommutator term, leaving
$-\sum_{v\ne u}r_\xi(u,v)$. Transitions changing $\xi$ have zero rate. Thus the classical
restriction, acting on observables $f:\Omega_\Lambda\to\mathbb R$, is
\begin{equation}
 (K_Bf)(u,\xi)=\sum_{v\ne u}r_\xi(u,v)
                   [f(v,\xi)-f(u,\xi)].
 \label{qg:classical-generator}
\end{equation}
These formulas hold for every Hermitian orthonormal basis chosen
in
\eqref{eq:qbp-basis-completeness}.
For a diagonal state $\rho_p=\sum_{u,\xi}p(u,\xi)P_{u,\xi}$, set
$R_B(T)=e^{TK_B}$, with rows indexed by the initial configuration.
Then
\begin{equation}
 \Phi_B(T)(\rho_p)
 =\sum_{v,\xi}(pR_B(T))(v,\xi)P_{v,\xi}.
 \label{qg:classical-channel}
\end{equation}
Detailed balance follows directly from
\[
 \pi_B^\xi(u)r_\xi(u,v)
 =\sqrt{\sigma_\beta^B(u)\sigma_\beta^B(v)
         \pi_B^\xi(u)\pi_B^\xi(v)}
 =\pi_B^\xi(v)r_\xi(v,u).
\]
Since the exterior configuration is unchanged, these rates also
satisfy detailed balance for the full Gibbs distribution $\pi$.

For comparison, Petz recovery after deleting $B$ is the map
\cite{Petz1986}
\begin{equation}
 \mathcal P_B(X)=\sigma_{\beta,H}^{1/2}
 \left[I_B\otimes\sigma_{\beta,B^c}^{-1/2}
       (\Tr_BX)\sigma_{\beta,B^c}^{-1/2}\right]
 \sigma_{\beta,H}^{1/2},\qquad
 \sigma_{\beta,B^c}=\Tr_B\sigma_{\beta,H}.
 \label{eq:Petz}
\end{equation}
This formula is defined for arbitrary Hamiltonians. In the present
classical setting, its action on a diagonal state is
\[
 \mathcal P_B(\rho_p)
 =\sum_{u,\xi}\pi_B^\xi(u)
       \left(\sum_{v\in\Omega_B}p(v,\xi)\right)P_{u,\xi}.
\]
It therefore induces the conditional heat-bath kernel
\begin{equation}
 \mathsf H_B((u,\xi),(v,\xi'))
 =\one_{\{\xi=\xi'\}}\pi_B^\xi(v),\qquad
 \pi\mathsf H_B=\pi,\qquad \mathsf H_B^2=\mathsf H_B.
 \label{eq:fulltargetheatbath}
\end{equation}
On diagonal observables, this is the Gibbs conditional expectation
$\mathsf H_Bf(u,\xi)=\sum_v\pi_B^\xi(v)f(v,\xi)$.
The classical restriction of $\mathcal P_B-\id$ has rates
$r_\xi^{\rm hb}(u,v)=\pi_B^\xi(v)$ for $u\ne v$, and
\begin{equation}
 r_\xi(u,v)
 =\sqrt{\frac{\sigma_\beta^B(u)\sigma_\beta^B(v)}{\pi_B^\xi(u)\pi_B^\xi(v)}}\,
    r_\xi^{\rm hb}(u,v).
 \label{qg:heatbath-rate-comparison}
\end{equation}
The heat-bath rate depends only on the destination $v$. The QBP
rate contains an additional factor symmetric in $u,v$ and generally
depends on the starting configuration. When $W_{\partial B}=0$,
$\pi_B^\xi=\sigma_\beta^B$, and the two generators coincide.

For each fixed $\xi$, all rates between distinct block
configurations are positive. The unique stationary law is therefore
$\pi_B^\xi$, and the finite-state Markov chain converges to it. Thus,
for every diagonal state $\rho_p$,
\[
 \lim_{T\to\infty}R_B(T)=\mathsf H_B,\qquad
 \lim_{T\to\infty}\Phi_B(T)(\rho_p)=\mathcal P_B(\rho_p).
\]
At finite $T$, $\Phi_B(T)$ is generally neither the heat-bath channel
nor a scalar convex combination of that channel and the identity.
Section~\ref{sec:classicalDSbridge}
quantifies the convergence and
shows that sufficiently large finite $T$ preserves a strict
heat-bath DS condition (
\Cref{thm:classical}
).

For noncommuting interactions, the same structure controls interior influences. After localizing
the boundary operators, they act trivially on an interior site $j$;
when $\Tr_jX=0$, the partial trace in $\mathcal R_B$ therefore makes
the jump contribution in Duhamel's formula vanish. For a
nearest-neighbor chain, \Cref{q1:influence} uses this cancellation
to bound the interior influence of the localized dynamics by
$\exp[-T e^{-2\beta J_{\mathrm{int}}}]$, independently of block length.
Passing to the full dynamics adds the locality error controlled in
\Cref{qg:chain-transport}.

This motivates interpreting $\Phi_B(T)$ as a pseudo-conditional
expectation: it preserves the Gibbs state and approximately removes
interior influences while its action outside the block is controlled
by locality. The
finite-time approximation to classical heat bath
described above explains the terminology; keeping  $T$
   finite also allows the locality estimates of
\Cref{qg:chain-locality}
to be used for noncommuting interactions .

\end{remark}

\begin{remark}[Commuting reduction and the BLMT normalization]
\label{rem:commuting-BLMT}
If $[H_B^0,W_{\partial B}]=0$, then
$\mathcal F_{H_B(u)}(W_{\partial B})=W_{\partial B}$ for every
$u\in[0,1]$. Solving \eqref{qg:ODE} gives
\begin{equation}
  O_B=e^{-\beta W_{\partial B}/4}
     =e^{-\beta H/4}e^{\beta H_B^0/4}.
  \label{qg:commuting}
\end{equation}
This condition holds when all Hamiltonian terms, including the
on-site terms, mutually commute.

For a qubit site $B=\{i\}$ with $V_i=0$, choose the normalized
Pauli basis $F_{i,a}=P_a/\sqrt2$, where
$P_a\in\{\one,\sigma_x,\sigma_y,\sigma_z\}$.
Then $\sigma_\beta^{\{i\}}=\one/2$ and
$H_{\{i\}}^0=H_{\{i\}^c}$ acts trivially on $i$.
Under the preceding commuting assumption, \eqref{qg:generator}
therefore gives
\begin{equation}
  L_{i,a}:=L_{\{i\},a}
  =\frac12 e^{-\beta H/4}P_a e^{\beta H/4}.
  \label{qg:BLMT-jumps}
\end{equation}
Let $\mathcal L_i^{\mathrm{BLMT}}$ denote the single-site generator
of \cite[Definition~4.1]{BLMT}, written in the Schr\"odinger
picture, and put $\mathcal K_i^{\mathrm{QBP}}:=\mathcal K_{\{i\}}$.
The identity jump $\one/2$ has zero dissipator, and its contribution
$\one/4$ to the loss has zero coherent correction.
Every other dissipator is one quarter of the corresponding BLMT
term. Since $\mathcal G_H$ is linear, the coherent correction has
the same factor. Consequently,
\begin{equation}
  \mathcal K_i^{\mathrm{QBP}}
  =\frac14\mathcal L_i^{\mathrm{BLMT}},
  \qquad
  e^{T\mathcal K_i^{\mathrm{QBP}}}
  =e^{(T/4)\mathcal L_i^{\mathrm{BLMT}}}.
  \label{qg:BLMT-factor}
\end{equation}
With nonzero jointly commuting on-site fields,
\eqref{qg:commuting} still holds, but the local Gibbs weight
$(\sigma_\beta^{\{i\}})^{1/2}$ remains in the jumps in
\eqref{qg:generator}.
\end{remark}

\subsection{Real-time locality and restriction estimates}

   We use the  real-time
  estimate of
\Cref{qg:lr}
to compare the QBP
operators with operators constructed from a Hamiltonian on a smaller
region. We first bound the coherent transform
on a finite region, then estimate the error made by restricting its
Hamiltonian.

 \begin{lemma}[A coherent multiplier bound]
\label{qg:multiplier}
For a self-adjoint matrix $K$ on a $D$-dimensional Hilbert space and
any real $c$,
\begin{equation}
 \left\|\frac{i}{2}\tanh(c\operatorname{ad}_K)\right\|_{\mathrm{cb}}
 \le4+\log_2D.
 \label{qg:multiplier-eq}
\end{equation}
In particular, the bound on a region $R$ of local dimension at most
$d$ is $4+|R|\log_2d$.
\end{lemma}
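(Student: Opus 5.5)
The plan is to split the multiplier into a sign part, handled by triangular truncation, and a smooth remainder, handled by an absolutely convergent Fourier integral of unitary conjugations. Work in an eigenbasis of $K$ with eigenvalues ordered as $E_1\le\dots\le E_D$. Then $\tanh(c\,\ad_K)$ is the Schur multiplier with symbol $\tanh(c(E_a-E_b))$. For Schur multipliers the norm on $\BB(\CC^D)$ equals the cb norm, but below we bound cb norms directly anyway. We may assume $c>0$, since $\tanh$ is odd and $c=0$ is trivial. Write
\[
\tanh(c\omega)=\operatorname{sgn}(\omega)+h_c(\omega),\qquad h_c(\omega):=\tanh(c\omega)-\operatorname{sgn}(\omega),\qquad \operatorname{sgn}(0)=0 .
\]
Then $\frac i2\tanh(c\,\ad_K)=\frac i2\operatorname{sgn}(\ad_K)+\frac i2 h_c(\ad_K)$, and we bound the two pieces separately.

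\emph{Sign part.} In the ordered basis, $\operatorname{sgn}(E_a-E_b)$ equals $+1$ on pairs with $E_a>E_b$, equals $-1$ on pairs with $E_a<E_b$, and vanishes on degenerate pairs. Group degenerate eigenvalues into blocks. Then $\operatorname{sgn}(\ad_K)=\mathcal P_{\rm low}-\mathcal P_{\rm up}$, where $\mathcal P_{\rm low}$ and $\mathcal P_{\rm up}$ are the strictly block-lower and strictly block-upper triangular truncations. Also $\mathcal P_{\rm low}+\mathcal P_{\rm up}=\id-\mathcal P_{\rm diag}$, with $\mathcal P_{\rm diag}$ the block-diagonal pinching, which has cb norm one. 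Hence $\operatorname{sgn}(\ad_K)=2\mathcal P_{\rm low}-\id+\mathcal P_{\rm diag}$, and
\[
\|\operatorname{sgn}(\ad_K)\|_{\rm cb}\le 2\|\mathcal P_{\rm low}\|_{\rm cb}+2 .
\]
For $\mathcal P_{\rm low}$ I would use the standard dyadic argument, padding $D$ up to a power of two $2^m$ with $m=\lceil\log_2D\rceil$. At level $k$ the strictly lower part is a direct sum of off-diagonal corners of $2^{k}$ disjoint pinchings. Each level therefore acts as a compression $X\mapsto PXQ$ composed with a pinching, and has cb norm at most one. Summing over levels gives $\|\mathcal P_{\rm low}\|_{\rm cb}\le\lceil\log_2 D\rceil$, at least for nondegenerate spectrum. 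With degenerate blocks the same bound holds after merging, since a block-triangular truncation is a pinching composed with the bisection scheme. This yields
\[
\tfrac12\|\operatorname{sgn}(\ad_K)\|_{\rm cb}\le \log_2D+2 .
\]

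\emph{Smooth remainder.} Use the Fourier transforms, in the principal-value sense, of $\tanh(c\omega)$ and $\operatorname{sgn}(\omega)$. These are kernels proportional to $1/\sinh(\pi t/(2c))$ and to $1/t$ respectively; compare with \eqref{qg:G-integral}, which is exactly the $\tanh$ kernel. Their difference
\[
\varphi_c(t)\propto \frac{\pi}{2c\,\sinh(\pi t/(2c))}-\frac1t
\]
is odd, bounded near $0$ and integrable at infinity, and $h_c(\omega)=\int\varphi_c(t)e^{it\omega}\,dt$ for every $\omega$. The case $\omega=0$ holds because both sides vanish by oddness. Rescaling $t$ shows that $\|\varphi_c\|_{L^1}=\frac1\pi\int_{\RR}\bigl|\frac1{\sinh x}-\frac1x\bigr|dx$, which is independent of $c$. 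This constant equals $\frac{2\log 2}{\pi}\cdot$const and is at most $2$; I would check it numerically once. Then $h_c(\ad_K)=\int\varphi_c(t)\,\tau^K_t\,dt$, and since each $\tau_t^K$ is a unitary conjugation with cb norm one, $\|\tfrac12 h_c(\ad_K)\|_{\rm cb}\le 1$.

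Combining the two pieces gives $\log_2D+3\le 4+\log_2D$. The regional statement follows from $\dim\HH_R\le d^{|R|}$. The main obstacle is the triangular-truncation step: obtaining the clean constant $\log_2 D$ (rather than $C\log D$) with degenerate spectra. If the bisection bookkeeping gets delicate, I would perturb $K$ to split degeneracies. I would then handle sign ambiguities on formerly degenerate pairs by noting that the pinching onto eigenspaces has cb norm one. Finally, the multiplier is continuous away from the points where $\operatorname{sgn}$ jumps, and these are exactly the degenerate pairs absorbed by that pinching, so the bound passes to the limit.
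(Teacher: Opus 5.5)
Your sign part is essentially right (for degenerate spectrum, run the bisection on the eigenspace blocks as matrix-valued entries), but the smooth-remainder step fails, and it cannot be repaired within your decomposition. The kernel $\frac{1}{\sinh x}-\frac1x$ is $O(x)$ near $0$, but it behaves like $-\frac1x$ as $|x|\to\infty$. Hence $\int_{\RR}\bigl|\frac1{\sinh x}-\frac1x\bigr|\,dx=\infty$, with a logarithmic divergence, and the constant you planned to check numerically is infinite. The identity $h_c(\omega)=\int\varphi_c(t)e^{it\omega}\,dt$ holds only as a conditionally convergent improper integral, which gives no norm bound on $\int\varphi_c(t)\tau_t^K\,dt$. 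This is forced, not accidental. Indeed $h_c(0^{\pm})=\mp1$ while $h_c(0)=0$, and the Fourier transform of an $L^1$ function is continuous. Subtracting $\operatorname{sgn}$ removes the singularity of the $\tanh$ kernel at $t=0$ only by moving the jump of $\operatorname{sgn}$ into $h_c$. Worse, no $D$-independent bound on $h_c(\ad_K)$ exists. Since $\tanh(c\,\ad_K)=\tanh(\ad_{cK})$, letting $cK\to0$ with simple spectrum gives $h_c(\ad_K)\to-\operatorname{sgn}(\ad_K)=-(2\mathcal P_{\rm low}-\id+\mathcal P_{\rm diag})$. Its norm grows like $\log D$, because the norm of triangular truncation does. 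So the claimed bound $\|\tfrac12 h_c(\ad_K)\|_{\mathrm{cb}}\le1$ is false, and both pieces of your split genuinely cost order $\log D$.

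The missing idea is to avoid subtracting $\operatorname{sgn}$ altogether. Instead, write $\tanh$ itself as a probability average of \emph{shifted} signs, $\tanh z=\int_{\RR}\operatorname{sgn}(z-s)\,\tfrac12\operatorname{sech}^2 s\,ds$. This is valid because $\tanh$ is increasing from $-1$ to $1$. You then need a bound on the multipliers $\operatorname{sgn}(cE_a-cE_b-s)$ that is uniform in $s$. For $s\neq0$ these masks are not triangular in the eigenvalue ordering. That is why your dyadic argument, which is exactly the case $s=0$, does not apply directly. Off ties, however, $\operatorname{sgn}(cE_a-cE_b-s)=2\cdot\mathbf 1_{\{x_a\ge y_b\}}-1$ with $x_a=cE_a$ and $y_b=cE_b+s$. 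Sort all $2D$ numbers jointly, placing $y$'s before $x$'s when values coincide. The rectangular mask $\mathbf 1_{\{x_a\ge y_b\}}$ is then a compression of the strictly lower triangular truncation on $2D$ indices. Your bisection bound gives each mask cb norm at most $\lceil\log_2(2D)\rceil\le2+\log_2D$. Ties occur only for a null set of $s$. Averaging then yields $\tfrac12\bigl(2(2+\log_2D)+1\bigr)\le4+\log_2D$, which is the paper's route.
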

\begin{proof}
The Schur multiplier retaining the lower triangular part of an
$n\times n$ matrix has completely bounded norm at most
$1+\lceil\log_2n\rceil$. For $n=2^m$, split into two equal diagonal
blocks and one lower off-diagonal block. The direct sum of the
diagonal-block maps has the previous norm bound, while the
 off-diagonal compression has norm one. Induction gives $m+1$;
padding to a power of two proves the stated bound. This argument
works unchanged with matrix-valued entries.

For any lists $(a_i),(b_j)$ of length $D$, jointly order their $2D$
entries. The rectangular mask $\mathbf1_{\{a_i\ge b_j\}}$ is a
compression of a triangular mask on that ordered list, with ties
ordered consistently. It therefore has completely bounded norm at
most $2+\lceil\log_2D\rceil$. Diagonalize $K$ and put $a_i=cE_i$
and $b_j=cE_j+s$. The identity
\[
 \tanh z=\int_{\mathbb R}\operatorname{sign}(z-s)
                       \tfrac12\operatorname{sech}^2s\,ds
\]
expresses the desired Schur multiplier as an average of twice those
threshold masks minus the identity Schur multiplier. Multiplication
by $i/2$ gives a bound
$5/2+\lceil\log_2D\rceil\le4+\log_2D$. The finitely many threshold
ties have measure zero in this integral and do not affect the result.
\end{proof}

\begin{lemma}
\label{qg:transform-locality}

 For a nearest-neighbour chain, there are universal constants
$c,C>0$ such that the following holds with
$\lambda=c/(1+\beta J_{\mathrm{int}})$.
If $A$ is supported on a
nonempty set $S$ and $S^{\le q}\subseteq\Omega\subseteq\Lambda$,
with $q\ge0$, then
\begin{align}
 \|\mathcal F_H(A)-\mathcal F_{H_\Omega}(A)\|
 &\le C|S|\|A\|e^{-\lambda q},\label{qg:allT-F}\\
 \|\mathcal G_H(A)-\mathcal G_{H_\Omega}(A)\|
 &\le C\log(2+\beta J_{\mathrm{int}})|S|\|A\|e^{-\lambda q}.
 \label{qg:allT-G}
\end{align}
   The same bounds hold for $H_{\Omega'}$ and $H_\Omega$ when
$S\subseteq\Omega\subseteq\Omega'\subseteq\Lambda$ and
$S^{\le q}\cap\Omega'\subseteq\Omega$ .
\end{lemma}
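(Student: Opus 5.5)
We prove \eqref{qg:allT-F} by splitting the time integral defining $\mathcal F$ at a cutoff $L>0$, to be optimized at the end. For $|t|\le L$ we use the real-time locality bound of \Cref{qg:lr}. For a nearest-neighbour chain we have $D=1$ and $C_{\rm vol}=3$, so $C_{\rm LR},\mu_{\rm LR}$ are universal and $v_{\rm LR}=C_{\rm LR}J_{\mathrm{int}}$. For $|t|>L$ we use the tail bound \eqref{eq:qbp-kernel-tail} together with $\|\tau_t(A)\|=\|A\|$. This gives
\[
\|\mathcal F_H(A)-\mathcal F_{H_\Omega}(A)\|
\le C_{\rm LR}|S|\|A\|e^{-\mu_{\rm LR}q}\int_{|t|\le L}p_\beta(t)\bigl(e^{v_{\rm LR}|t|}-1\bigr)dt+2\|A\|e^{-2\pi L/\beta}.
\]
We bound the first integral crudely by $e^{v_{\rm LR}L}$. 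Now choose $L=\mu_{\rm LR}q/(2v_{\rm LR}+2\pi/\beta)$. Then both terms are at most $e^{-\lambda' q}$ with
\[
\lambda'=\frac{\mu_{\rm LR}\,2\pi/\beta}{2v_{\rm LR}+2\pi/\beta}\gtrsim\frac{1}{1+\beta J_{\mathrm{int}}}.
\]
This is the stated form $\lambda=c/(1+\beta J_{\mathrm{int}})$. The bound $2\le 2|S|$ absorbs the tail term into $C|S|\|A\|$. (If $J_{\mathrm{int}}=0$ the cutoff is unnecessary.) The variant with $\Omega\subseteq\Omega'$ follows in the same way. We apply the Lieb--Robinson estimate to $H_{\Omega'}$ on the lattice $\Omega'$, whose restriction to $\Omega$ is $H_\Omega$. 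The hypothesis $S^{\le q}\cap\Omega'\subseteq\Omega$ is exactly the neighbourhood condition needed inside $\Omega'$.

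For \eqref{qg:allT-G} we cannot simply use \eqref{eq:qbp-G-source}. The commutator $[K,A]$ contains the on-site terms, and the bound must be uniform in $\|V_i\|$. Instead we use \eqref{qg:G-integral}, whose kernel $1/(\beta\sinh(2\pi t/\beta))$ is singular like $1/(2\pi t)$ at zero. We split the integral into three ranges $[0,\epsilon]$, $[\epsilon,L]$ and $[L,\infty)$.

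On $[\epsilon,L]$ we apply \Cref{qg:lr} to both $\tau_{\pm t}$. The kernel integral there is at most
\[
\tfrac{1}{2\pi}\log\bigl(\coth(\pi\epsilon/\beta)\bigr)\lesssim\log(2+\beta/\epsilon).
\]
On $[L,\infty)$ we use the exponential decay of the kernel, which gives $\lesssim e^{-2\pi L/\beta}\|A\|$.

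The main obstacle is the short-time range $[0,\epsilon]$. Here $\|\tau_t(A)-\tau_{-t}(A)\|$ is not $O(t)$ uniformly in the on-site fields. The plan is to use \Cref{qg:lr} itself at short time. Its factor $e^{v_{\rm LR}t}-1\le v_{\rm LR}t\,e^{v_{\rm LR}t}$ vanishes linearly in $t$ with a constant independent of $V_i$, because the on-site dynamics are handled in the interaction picture. We apply it to the difference of the two $\tau$'s for $H$ and for $H_\Omega$, not to each $\tau$ separately. The contribution of this range is then at most
\[
C|S|\|A\|e^{-\mu_{\rm LR}q}\int_0^\epsilon\frac{v_{\rm LR}t\,e^{v_{\rm LR}t}}{\beta\sinh(2\pi t/\beta)}dt\lesssim|S|\|A\|e^{-\mu_{\rm LR}q}\,v_{\rm LR}\epsilon.
\]
Once this is done, the range $[0,\epsilon]$ merges with $[\epsilon,L]$, and we can take $\epsilon$ of order $\min\{\beta,1/v_{\rm LR}\}$. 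The singular kernel then contributes a factor $\log(\beta/\epsilon)\lesssim\log(2+\beta J_{\mathrm{int}})$, which is the stated logarithm. As a fallback, the multiplier bound of \Cref{qg:multiplier} could be used on the localized region to control the near-diagonal frequencies. Choosing $L$ as before gives the exponential rate $\lambda$, and the $\Omega'$ variant follows as for $\mathcal F$.
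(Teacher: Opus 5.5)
Your proposal takes essentially the paper's route, but one estimate in the $\mathcal G$ part is false as stated.

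The argument for \eqref{qg:allT-F} is correct and is the paper's. Your cutoff $\mu_{\rm LR}q/(2v_{\rm LR}+2\pi/\beta)$, in place of the paper's $\mu_{\rm LR}q/(2v_{\rm LR})$, is an equivalent choice. For \eqref{qg:allT-G} you also identify the paper's key ingredients:
\begin{itemize}
\item the $\sinh$ representation \eqref{qg:G-integral} instead of the commutator form;
\item the Lieb--Robinson bound applied to the \emph{difference} of the $H$ and $H_\Omega$ integrands, so that it vanishes linearly at $t=0$ uniformly in the on-site fields;
\item the bound $\beta\sinh(2\pi t/\beta)\ge2\pi t$.
\end{itemize}

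The step that fails is the long-time range. The kernel is not integrable at $0$: $\int_L^\infty[\beta\sinh(2\pi t/\beta)]^{-1}dt=\frac1{2\pi}\log\coth(\pi L/\beta)$. This is $\lesssim e^{-2\pi L/\beta}$ only when $L\gtrsim\beta$, and it grows like $\log(\beta/L)$ as $L\to0$.

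With $L$ chosen as for $\mathcal F$, one has $L\lesssim\beta$ as soon as $q\lesssim1+\beta J_{\mathrm{int}}$, and $L\to0$ as $q\to0$. Your three ranges presuppose $\epsilon\le L$, which then fails. At $q=0$, which the statement allows, the claimed tail bound $e^{-2\pi L/\beta}\|A\|=\|A\|$ is false: the trivial bound $4\|A\|$ integrated against the kernel over $[0,\infty)$ diverges.

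The repair is to cut at $\max\{L,\epsilon\}$ with $\epsilon\sim\min\{\beta,1/v_{\rm LR}\}$.
\begin{itemize}
\item The trivial bound then gives a tail of at most $C\log(2+\beta J_{\mathrm{int}})\|A\|$ in general, and at most $Ce^{-2\pi L/\beta}\|A\|$ when $L\gtrsim\beta$.
\item In the complementary regime $\lambda q$ is bounded, so the logarithmic bound already has the required form $C\log(2+\beta J_{\mathrm{int}})|S|\|A\|e^{-\lambda q}$.
\item If $\epsilon>L$, the Lieb--Robinson part on $[0,\epsilon]$ is still $O(|S|\|A\|e^{-\mu_{\rm LR}q})$, since $v_{\rm LR}\epsilon=O(1)$.
\end{itemize}
This is exactly the paper's case distinction: whether $q/(\beta J_{\mathrm{int}})$ is bounded below or not, plus a separate split at $1/v_{\rm LR}$ for $q=0$.

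Note also where the logarithm comes from. In the paper, $\log(2+\beta J_{\mathrm{int}})$ enters through this tail, not through the Lieb--Robinson range. That range needs no logarithm, because $\int_0^L\frac{e^{v_{\rm LR}t}-1}{2\pi t}\,dt\le e^{v_{\rm LR}L}/(2\pi)$. Your factor $\log(2+\beta/\epsilon)$ on $[\epsilon,L]$ is valid but cruder, and it does not cover the tail.
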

\begin{proof}

  By
\Cref{lem:qbp-kernels},
\[
 \int_{|t|>s}p_\beta(t)\,dt\le e^{-2\pi s/\beta}.
\]

 For $q\ge1$   , split the integrals at
$t_* = \mu_{\rm LR}q/(2v_{\rm LR})$ .
Before $t_*$,    integrating
\eqref{qg:lr-eq}
against the probability
density $p_\beta$  costs at most    $Ce^{-\mu_{\rm LR}q/2}$.
After  $t_*$
 , use the kernel tail and unitary norm preservation.
Since $\beta v_{\rm LR}=C_{\rm LR}\beta J_{\mathrm{int}}$, both  contributions are at
most    $Ce^{-cq/(1+\beta J_{\mathrm{int}})}$, up to the
common factor $|S|\norm A$ .
For $q=0$ ,  the norm-one property of each $\mathcal F$ suffices.

For $\mathcal G$,    use
\eqref{qg:lr-eq}
and
$[\beta\sinh(2\pi t/\beta)]^{-1}\le(2\pi t)^{-1}$.
The  short-time contribution
  , with the common factor omitted, is at most
\[
 Ce^{-\mu_{\rm LR}q}
 \int_0^{\mu_{\rm LR}q/2}\frac{e^z-1}{z}\,dz
 \le Ce^{-\mu_{\rm LR}q/2}.
\]
The long-time contribution is at most a constant times
$\log\coth(\pi t_*/\beta)$. If $q/(\beta J_{\mathrm{int}})$ is bounded below,    its
exponential tail  bounds it by
 $Ce^{-cq/(1+\beta J_{\mathrm{int}})}$ . Otherwise it is at most
$C\log(2+\beta J_{\mathrm{int}})$, and $q/(1+\beta J_{\mathrm{int}})$ is bounded
above. Decreasing $c$ combines
these bounds into
$C\log(2+\beta J_{\mathrm{int}})e^{-cq/(1+\beta J_{\mathrm{int}})}$ . At $q=0$   ,
split instead at
$t=1/v_{\rm LR}$ ; the same small-time
 bound  and long-time integral
give $C\log(2+\beta J_{\mathrm{int}})$.    Restoring $|S|\norm A$  and using the
reference-system version of \Cref{qg:lr} proves the stated bounds.
\end{proof}

 \subsection{Implementation on graphs of polynomial volume growth}
\label{qi:section}

 We next consider the implementation of the smoothed heat-bath dynamics for
a prescribed evolution time. For a labelled family $\mathscr B$
of $M$ blocks, update rates $a_B>0$, and prescribed block durations
$0<T_B\le T$ with a fixed upper bound $T>0$, set
\begin{equation}
 \Phi_B=e^{T_B\mathcal K_B},\qquad
 \mathcal L=\sum_{B\in\mathscr B}a_B(\Phi_B-\id).
 \label{qi:dynamics}
\end{equation}
Here $\mathcal K_B$ is defined in 
\eqref{qg:generator}. The time $t$
below is the evolution time of $\mathcal L$, while $T_B$ is the duration
of the update on $B$. The earlier common-time construction is the
special case $T_B=T$ for every block.

  \begin{theorem}[Implementation of the smoothed heat-bath dynamics]
 \label{qi:theorem}
Let $H$ be a finite-range Hamiltonian with uniformly bounded local
terms on a graph of $N$ sites with polynomial volume growth and
bounded local dimension. Let $\mathscr B$  
 consist  of $M=O(N)$
 nonempty  connected blocks, with $|B|\le b$ for every
$B\in\mathscr B$, where $b$ is independent of $N$   , and let
the
positive rates $a_B$ be uniformly bounded. Fix $\beta,T>0$ and prescribe
$0<T_B\le T$ for each block.
Given the interaction terms, rates, block durations, and
construction parameters as
in Appendix~\ref{app:implementation:model}, for every  $t\ge0$ and
 $0<\varepsilon<1$ there is a circuit  
  whose channel
$\widehat{\mathcal E}_t$ satisfies
$\|\widehat{\mathcal E}_t-e^{t\mathcal L}\|_\diamond\le\varepsilon$
and uses at most
\begin{equation}
 C' N(1+t)\left[1+\log\frac{2N(1+t)}{\varepsilon}\right]^a
 \label{qi:complexity}
\end{equation} 
elementary gates and classical bit operations.
The constants $C',a$ depend only on  
 $\beta,T$ , the local dimension,
the bounds on interaction norms and range,  the bound on the update
rates,  and the graph and block geometry; they are independent of
$N,t,\varepsilon$.
For the chain construction, the circuit-cost prefactor grows
 doubly exponentially with $\beta J_{\mathrm{int}}$.
\end{theorem}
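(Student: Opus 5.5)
The plan is to reduce simulation of $e^{t\mathcal L}$ to simulating $N$-independent-size local pieces, in three layers: (i) localize each block channel $\Phi_B=e^{T_B\mathcal K_B}$ to a channel $\widetilde\Phi_B$ supported on a collar $B^{\le R}$ with $R=O(\log(N(1+t)/\varepsilon))$; (ii) simulate the jump-type process $e^{t\mathcal L}$ with the localized updates; (iii) compile each localized update into gates. The total error budget $\varepsilon$ is split into three parts. For (i), we replace $H$ by $H_\Omega$ with $\Omega=B^{\le R}$ in the ODE \eqref{qg:ODE} defining $O_B$ and in the coherent term $G_B=\mathcal G_H(S_B)$. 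Here \Cref{qg:lr} and the argument of \Cref{qg:transform-locality} (valid on polynomial-growth graphs with $\mathcal F$ and $\mathcal G$; the multiplier bound \Cref{qg:multiplier} controls the local $\mathcal G$) give $\|O_B-O_B^{\Omega}\|$, $\|S_B-S_B^\Omega\|$ and $\|G_B-G_B^\Omega\|\le C e^{-cR}$. The $O_B^{\Omega}$ error propagates through Grönwall with the factor $e^{\beta w_B/4}$ from \eqref{qg:O-interval}. The localized generator $\widetilde{\mathcal K}_B$ must still be a genuine Lindbladian. It is in Lindblad form by construction, since it is \eqref{qg:generator} with restricted operators; \Cref{qg:loss} applies to restricted Hamiltonians, so $\|\widetilde{\mathcal K}_B\|_\diamond$ is bounded. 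Hence $\|\Phi_B-\widetilde\Phi_B\|_\diamond\le T\,\|\mathcal K_B-\widetilde{\mathcal K}_B\|_\diamond\le CTe^{-cR}$ by Duhamel.

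For (ii), $\mathcal L=\sum_B a_B(\Phi_B-\id)$ is a bounded generator of total rate $a=\sum_B a_B=O(N)$. Thus $e^{t\mathcal L}=e^{-at}\sum_n \frac{(at)^n}{n!}\bigl(a^{-1}\sum_B a_B\Phi_B\bigr)^n$ is a Poisson mixture. Directly sampling it costs $O(Nt)$ updates, each a constant-size channel, so the linear scaling is already present. The more circuit-friendly alternative is a first-order Trotter/product splitting of $\mathcal L$ into $O(1)$ colour classes of blocks with disjoint collars (possible since collars have size $\mathrm{polylog}$ and the graph has polynomial growth; the class count is $\mathrm{polylog}$). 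Classes with disjoint supports commute, so each Trotter step is a parallel layer, and higher-order product formulas or the Poisson sampling with a classical random clock keep the step count at $N(1+t)\,\mathrm{polylog}$. Replacing each $\Phi_B$ by $\widetilde\Phi_B$ in the expansion costs at most $at\max_B\|\Phi_B-\widetilde\Phi_B\|_\diamond$ by telescoping and contractivity. This forces $e^{-cR}\lesssim\varepsilon/(N(1+t))$, giving $R=O(\log(N(1+t)/\varepsilon))$ and collar size $|B^{\le R}|\le C_{\rm vol}b(1+R)^D=\mathrm{polylog}$.

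For (iii), each $\widetilde\Phi_B$ acts on $\mathrm{polylog}$ qubits. We compute classically the operators $O_B^\Omega$ (integrating the ODE with the kernel $p_\beta$ truncated via \eqref{eq:qbp-kernel-tail}), $S_B^\Omega$ and $G_B^\Omega$ to precision $\varepsilon/\mathrm{poly}(N t)$. These are matrices of dimension $d^{\mathrm{polylog}}$, which is the main obstacle: dense classical linear algebra there is quasi-polynomial, not polylogarithmic. I would therefore avoid explicit matrices. Instead, I would use the block-encoding approach of \cite{BCCKS,FLT2023,LW}. We build block encodings of $O_B^{\Omega}$ via a Dyson series for \eqref{qg:ODE}, where $\mathcal F$ is realized as a linear combination of Hamiltonian simulations $\tau_t^{H_\Omega}$ with the $p_\beta$ weights discretized. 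The reset $\mathcal R_B$ requires $\sigma_\beta^B$ on the constant-size block, prepared exactly up to precision. $G_B$ is realized through the integral \eqref{qg:G-integral}, again as a linear combination of time evolutions. The Lindbladian simulation algorithm of \cite{LW,FLT2023} with $\mathrm{polylog}(1/\varepsilon')$ cost per unit of $\|\widetilde{\mathcal K}_B\|$-time then implements $\widetilde\Phi_B$ with $\mathrm{polylog}$ gates.

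The step I expect to be hardest is controlling the normalization factors in these block encodings. The factors $e^{\beta w_B/4}$, the $\log$-singular kernel, and the subnormalization from LCU all enter, and they must stay $N$-independent constants; this is where the doubly exponential dependence on $\beta J_{\mathrm{int}}$ in the chain case enters, through $T$ and the bound $\|\mathcal K_B\|_\diamond\le q_\beta$. Multiplying $N(1+t)$ updates by $\mathrm{polylog}$ cost each gives \eqref{qi:complexity}.
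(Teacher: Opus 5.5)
Your architecture matches the paper's proof. You Poissonize $e^{t\mathcal L}=\mathbb E[P^Z]$, localize each update to a neighbourhood of logarithmic radius by Lieb--Robinson bounds, and encode $O_R^{\pm1}$ via a quadrature of the filter $\mathcal F$ combined with Hamiltonian simulation and a time-dependent propagator construction. You then finish with the Lindblad simulation of \cite{LW}, which accepts inexact encodings.

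Your account of the chain case, however, is wrong. The doubly exponential prefactor does not come from $T$ or $q_\beta$. Both are single exponential in $\beta J_{\mathrm{int}}$ and enter the simulation cost polynomially, and the QBP normalizations $e^{\beta w_B/4}\le e^{\beta J_{\mathrm{int}}/2}$ are also single exponential. It comes from the block length itself. For the chain, $b=\ell\le C_d(1+\beta J_{\mathrm{int}})^8e^{8\beta J_{\mathrm{int}}}$. The matrices $(\sigma_\beta^B)^{1/2}F_{B,a}$ are handled densely in dimension $d^{\ell}$, and the $d^{2\ell}$ jump operators are encoded separately, so $C'$ is exponential in $\ell$. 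Your phrase ``$\sigma_\beta^B$ on the constant-size block'' is exactly where this cost is hidden.

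Two steps also need repair.

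First, the theorem bounds the gate count in the worst case, so you must truncate the Poisson variable. Take $\lambda=at$ and $K=\lceil2\lambda+\log(4/\varepsilon)\rceil$. Markov's inequality for $e^Z$ gives $\Pr(Z>K)\le\varepsilon/4$, so conditioning on $Z\le K$ costs at most $\varepsilon/2$ in diamond norm. Each of the at most $K$ updates is then simulated to accuracy $\varepsilon/(2K)$. Drop the product-formula alternative: fixed-order splittings need a number of steps polynomial in $1/\varepsilon$, not polylogarithmic.

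Second, $\|G_B-G_B^\Omega\|\le Ce^{-cR}$ is not a direct consequence of \Cref{qg:transform-locality}. The term $(\mathcal G_H-\mathcal G_{H_\Omega})(S_B^\Omega)$ has no buffer, because $S_B^\Omega$ is supported on all of $\Omega$. Bounding $\mathcal G_H(S_B-S_B^\Omega)$ through \Cref{qg:multiplier} on the full system also costs a factor $4+N\log_2d$.
\begin{itemize}
\item The paper telescopes over radii. It splits the localized loss operator into $S_{B,R_B}$ plus increments supported on $R_B^{\le m}$, with norms exponentially small in $m$. It then applies the local multiplier bound and the restriction estimate to each piece. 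This gives $\|\mathcal K_{B,v+1}-\mathcal K_{B,v}\|_\diamond\le C_s(v+1)^{D+1}e^{-2\mu v}$, which sums to \eqref{qi:tail}.
\item For the implementation alone, a two-scale split also works: compare $S_B^\Omega$ with $S_B^{\Omega'}$, where $\Omega'=B^{\le R/2}$. Factors polynomial in $N$ are then absorbed by the logarithmic radius.
\end{itemize}

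A smaller point: the normalizations only need to be polylogarithmic, not $N$-independent constants. The paper encodes $G_B^R$ through $\mathcal G_{H_R}(A)=\frac\beta8\mathcal F_{H_R}(i[H_R,A])$ with the same positive quadrature, at normalization $O(\beta|R|)$. Your $\sinh$-integral route also works. Its weight is not integrable at $t=0$, though, so it needs a small-time cutoff, which costs a logarithmic normalization.
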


\noindent
The complexity bound is uniform over all prescribed durations
$0<T_B\le T$. The localization error for block $B$ is proportional to
$T_B$ and hence bounded using $T$; its simulation cost is bounded in
the same way. Thus allowing different block durations changes neither
\eqref{qi:complexity} nor its dependence on system size.
The proof is given in Appendix~\ref{app:implementation}. In a nutshell, we express
$e^{t\mathcal L}$ as an average over a Poisson-distributed number
of block updates. Lieb--Robinson bounds localize each update to a
neighbourhood of logarithmic radius; encodings of its jump operators
and coherent correction then allow its implementation by the
Lindblad simulation algorithm of~\cite{LW}. 

\begin{remark}[Temperature dependence in one dimension]
\label{qi:temperature-remark}
We believe that the doubly exponential temperature dependence of the
one-dimensional circuit-cost prefactor in \Cref{qi:theorem} can be
reduced to a single exponential in a polynomial of
$\beta J_{\mathrm{int}}$. The present bound comes from dense
representations of the block Gibbs states and separate encodings of
exponentially many jump operators. A possible approach is to prepare
block Gibbs purifications recursively, joining adjacent intervals
through QBP, and then use the purification circuit to encode all
dressed reset jumps collectively. The normalization at each joining
step is controlled by the interaction across a single bond. This
suggests a prefactor $\exp[O((1+\beta J_{\mathrm{int}})^2)]$, while
retaining nearly linear dependence on $N$ and polylogarithmic
dependence on $N/\varepsilon$ at fixed physical parameters, under the
input assumptions of \Cref{qi:theorem}. We leave a detailed complexity
analysis for future work.
\end{remark}

\section{Quantum Dobrushin--Shlosman contraction}
\label{sec:criteria}\label{sec:quantumDS}
   We now give a condition on
a family of Gibbs-preserving channels
that implies Wasserstein contraction and rapid mixing. The condition
allows arbitrary blocks, update rates and evolution times. For the
QBP channels, we will verify it by estimating separately the
contraction inside each block and the influence near its boundary.
{The section concludes by proving that uniform classical DS strong
spatial mixing implies this full quantum condition for classical
Hamiltonians with uniformly bounded interactions.}

\subsection{Classical Dobrushin--Shlosman theory and the choice of block dynamics}
\label{sec:classicalDSbridge}
\label{sec:classicalproof}

Before we introduce the condition in full generality, we motivate it by considering the
classical setting.
We recall that the block heat-bath dynamics repeatedly replaces the spins in a chosen block by a
sample
from their Gibbs distribution conditioned on the remaining spins in the finite system. In a
nutshell, the
Dobrushin--Shlosman (DS) condition allows these
updates to be coupled so that, for sufficiently large blocks, the expected number of sites
at which two configurations differ decreases, effectively yielding a contraction in the
Wasserstein metric.
This mechanism is described in \cite[Theorem~7.1 and Corollary~7.4, Eqs.~(7.5)--(7.6)]{CLW}; the
discrete-time
transport interpretation is that of \cite[Propositions~19--20]{Ollivier}. Here we briefly recall
the
connection between the classical Dobrushin--Shlosman condition and more standard notions of
spatial mixing and their dynamical consequences, before formulating a quantum analogue of DS.

\paragraph{Strong spatial mixing}
Fix a finite lattice $\Lambda\subset\ZZ^D$, $N=|\Lambda|$, with spin space
$[d]$, and a bounded interaction of range $r$. Let $H$ be the corresponding
product-basis diagonal Hamiltonian and let
\[
  \pi(x)=Z^{-1}e^{-\beta H(x)},\qquad x\in[d]^\Lambda,
\]
be its full Gibbs distribution. This $\pi$ is the target measure throughout
this section. For every nonempty $U\subseteq\Lambda$ and configuration
$\xi\in[d]^{\Lambda\setminus U}$, we denote its conditional distribution on $U$ by $
  \pi_U^\xi(u)$, where $\xi$ specifies only the spins in $\Lambda\setminus U$.

\begin{definition}[Dobrushin--Shlosman strong spatial mixing]
  \label{def:classicalDS}
  Write $Q_L=\{0,\ldots,L-1\}^D$ and let $\mathscr V_L$ be the family of
  nonempty finite unions of the disjoint cubes $Lx+Q_L$, $x\in\ZZ^D$.
  We say that $\pi$ satisfies DS strong spatial mixing if there are an integer
  $L_0\ge1$ with $\Lambda\in\mathscr V_{L_0}$ and constants
  $C_{\rm DS},a_{\rm DS}>0$ such that, for every
  $U\in\mathscr V_{L_0}$ with $U\subseteq\Lambda$, every nonempty
  $A\subseteq U$, every $j\in\Lambda\setminus U$, and all
  $\xi,\xi'\in[d]^{\Lambda\setminus U}$ agreeing off $j$,
  \begin{equation}
    \bigl\|(\pi_U^\xi)_A-(\pi_U^{\xi'})_A\bigr\|_{\rm TV}
    \le C_{\rm DS}e^{-a_{\rm DS}\dist(A,j)}.
    \label{eq:DSspatial}
  \end{equation}
  Here total variation is one half of the $\ell^1$ distance. The constants
  are independent of $U,A,j,\xi,\xi'$. For a family of finite-volume
  Hamiltonians, we require the same $L_0,C_{\rm DS},a_{\rm DS}$ and
  interaction range $r$ to obtain bounds uniform in $\Lambda$.
  Equivalently, up to changing the constants, the condition can be written
  in terms of Wasserstein distance: for the same $U,A,j,\xi,\xi'$,
  \begin{equation}
    W_{H,A}\bigl((\pi_U^\xi)_A,(\pi_U^{\xi'})_A\bigr)
    \le C_W e^{-\delta\dist(j,A)},
    \label{eq:DSwasserstein}
  \end{equation}
  where $W_{H,A}$ is the Wasserstein distance for the Hamming metric on
  $[d]^A$, and $C_W,\delta>0$ are uniform over these choices.
  Starting from \eqref{eq:DSspatial}, one may take $\delta=a_{\rm DS}/2$ and
  $C_W=2C_{\rm DS}\sum_{v\in\ZZ^D}e^{-a_{\rm DS}\|v\|_\infty/2}<\infty$.
\end{definition}

\begin{proof}[Proof of equivalence of \eqref{eq:DSspatial} and \eqref{eq:DSwasserstein}]
To derive \eqref{eq:DSwasserstein}, write $p=(\pi_U^\xi)_A$ and
$p'=(\pi_U^{\xi'})_A$. For a $1$-Lipschitz function $f$ on $[d]^A$ in the
Hamming metric, write $p(f):=\sum_{u\in[d]^A}p(u)f(u)$, and similarly for $p'(f)$.
Order $A=\{i_1,\ldots,i_s\}$ by nonincreasing distance from $j$, fix
$u^0\in[d]^A$, and set $f_0=f(u^0)$ and
$f_k(u)=f(u_{i_1},\ldots,u_{i_k},u^0_{i_{k+1}},\ldots,u^0_{i_s})$.
Then $f-f_0=\sum_{k=1}^s(f_k-f_{k-1})$, where each summand has absolute
value at most one and depends only on $A_k=\{i_1,\ldots,i_k\}$.
Since $\dist(A_k,j)=\dist(i_k,j)$, \eqref{eq:DSspatial} gives
\[
  |p(f)-p'(f)|
  \le2\sum_{k=1}^s\|p_{A_k}-p'_{A_k}\|_{\rm TV}
  \le2C_{\rm DS}\sum_{i\in A}e^{-a_{\rm DS}\dist(i,j)}.
\]
Taking the supremum over $f$ gives the same bound for $W_{H,A}(p,p')$
by Kantorovich duality. Split each exponential into two equal factors,
bound one by $e^{-\delta\dist(j,A)}$, and sum the other over $\ZZ^D$
to obtain \eqref{eq:DSwasserstein}.
Conversely, $\|p-p'\|_{\rm TV}\le W_{H,A}(p,p')$ gives
\eqref{eq:DSspatial} with constants $C_W,\delta$.
Taking $A=U$ gives \cite[Eq.~(7.5)]{CLW} for Hamming spin distance,
with decay rate $\delta/D$ when using their $\ell^1$ lattice distance.
\end{proof}
\noindent Conditions \eqref{eq:DSspatial} and \eqref{eq:DSwasserstein} above are imposed on the
conditional distributions of the chosen
$H$, including when $H$ has open boundary conditions. When $H$ is obtained
from an infinite-lattice interaction with fixed exterior spins, the usual
DS spatial estimate implies \eqref{eq:DSspatial} by consistency of the
conditional distributions. The case $L_0=1$ is treated in
\cite[Eq.~(1.2)]{DS1985}; the extension to general $L_0$ is treated in
\cite[Definition~2.4 and Theorem~2.7]{Martinelli} and \cite[Section~1]{CaputoParisi}.

Classical results relate condition \eqref{eq:DSspatial} to spectral-gap and
logarithmic Sobolev estimates; see \cite[Theorem~2.7 and Section~4.2]{Martinelli},
\cite{SZmix,SZdiscrete}, and \cite{Cesi}.
Although these works seek bounds uniform over arbitrarily large volumes and
boundary conditions in infinite-lattice systems, the same proofs, restricted
to conditional distributions within $\Lambda$, give the inequalities directly
for $\pi$ under \eqref{eq:DSspatial}.
Here we use the latter directly to prove Wasserstein contraction of block
heat-bath dynamics towards the full Gibbs measure $\pi$.

\paragraph{Wasserstein contraction of block heat-bath dynamics}
Keep the same $H$ and $\pi$.
For a block $B\subseteq\Lambda$ and configurations $x,y\in[d]^\Lambda$,
write $x=(u,\xi)$ and $y=(v,\xi')$, where $u,v\in[d]^B$ are their respective
restrictions to $B$, and $\xi,\xi'\in[d]^{\Lambda\setminus B}$ are their respective
restrictions to $\Lambda\setminus B$. The full-system heat-bath kernel on
$[d]^\Lambda$ is
\begin{equation}
  \mathsf H_B\bigl((u,\xi),(v,\xi')\bigr)
  =\one_{\{\xi=\xi'\}}\pi_B^\xi(v),
  \qquad\text{ so that }\qquad \pi\mathsf H_B=\pi.
  \label{eq:classical-heatbath-recalled}
\end{equation}
 On diagonal states it is
exactly the Petz channel $\mathcal P_B$ of \eqref{eq:Petz}.

Next, let $\mathscr B$ be a finite labelled family of nonempty blocks, with
$M=|\mathscr B|$, $1\le |B|\le b$ for every $B\in\mathscr B$, and exactly $m$ blocks containing
each site.
 Let $\partial B$ be the sites in
$\Lambda\setminus B$ sharing an interaction term with $B$. The conditional law
$\pi_B^\xi$ depends on $\xi$ only on $\partial B$. We use the Hamming metric
$d_H$ on $[d]^\Lambda$ and its restriction to $[d]^B$, with Wasserstein
distances $W_H$ and $W_{H,B}$, respectively. Write $x\sim_j y$ when $x$ and
$y$ differ at $j$ and agree at every other site. Let $R_B$ be a stochastic
matrix on $[d]^\Lambda$ that leaves the configuration outside $B$ unchanged
and preserves the conditional Gibbs law $\pi_B^\xi$ of $H$ for every fixed
$\xi\in[d]^{\Lambda\setminus B}$; explicitly,
\[
  R_B\bigl((u,\xi),(v,\xi')\bigr)=0\quad\text{if }\xi\ne\xi',
  \qquad
  \sum_{u\in[d]^B}\pi_B^\xi(u)R_B\bigl((u,\xi),(v,\xi)\bigr)
  =\pi_B^\xi(v).
\]
The heat-bath kernel $\mathsf H_B$ in \eqref{eq:classical-heatbath-recalled} is one
such matrix. For a family $(R_B)_{B\in\mathscr B}$ and each site $j\in\Lambda$, define
\[
  D_{R_B}^{\rm diag}(j)
  :=\max_{x\sim_j y}W_H\bigl(R_B(x,\cdot),R_B(y,\cdot)\bigr),
  \qquad
  \Gamma_{\mathscr B}((R_B)_{B\in\mathscr{B}})
  :=\frac1m\max_{j\in\Lambda}\sum_{B\in\mathscr B}
       \bigl(D_{R_B}^{\rm diag}(j)-1\bigr).
\]
This is the diagonal restriction of \eqref{eq:influence}: a diagonal input
with $\Tr_jX=0$ decomposes, separately for each configuration off $j$, into
signed pairs differing at $j$, with total coefficient mass $\|X\|_1/2$.
The equality of classical and quantum $W_1$ on diagonal states is
\cite[Proposition~6]{DPMTL}.

Next, for $j\in\Lambda\setminus B$, introduce the conditional response
{%
\begin{equation}
  c_{Bj}(\pi):=\max_{x\sim_j y}
  W_{H,B}\bigl(\pi_B^{x_{B^c}},\pi_B^{y_{B^c}}\bigr),
  \qquad B^c=\Lambda\setminus B.
  \label{eq:classical-conditional-response}
\end{equation}
}
\noindent Heat bath erases the original disagreement if $j\in B$. If $j\notin B$,
that disagreement remains outside $B$ and contributes exactly one to every
transport cost. Consequently,
\[
  D_{\mathsf H_B}^{\rm diag}(j)
  =\begin{cases}0,&j\in B,\\1+c_{Bj}(\pi),&j\notin B.\end{cases}
\]
We now recall the large-block conclusion of \cite[Corollary~7.4]{CLW}.
Its Wasserstein hypothesis follows from \eqref{eq:DSwasserstein} with
$A=U$, and its block-counting argument applies on the coarse lattice
$L_0\ZZ^D$. For an integer block scale $\ell\ge1$, take the nonempty blocks
\begin{equation}
  B_x=\Lambda\cap(L_0x+Q_{\ell L_0}),\qquad x\in\ZZ^D,
  \label{eq:classical-blocks}
\end{equation}
labelled by $x$, with repeated blocks counted with their multiplicities.
Each block belongs to $\mathscr V_{L_0}$ and has size at most $(\ell L_0)^D$,
and each site belongs to exactly $m=\ell^D$ blocks.
For $\ell$ sufficiently large, depending only on
$D,r,L_0,C_{\rm DS},a_{\rm DS}$, the family
$\mathcal P=(\mathcal P_B)_{B\in\mathscr B}$ therefore satisfies
\begin{equation}
    \Gamma_{\mathscr B}(\mathcal P)
    :=\Gamma_{\mathscr B}((\mathsf H_B)_{B\in\mathscr B})\le-\kappa\quad
\quad\Longleftrightarrow\quad
    \sum_{\substack{B\in\mathscr B\\j\notin B}}c_{Bj}(\pi)
    \le(1-\kappa)m
    \quad\text{for every }j\in\Lambda,
  \label{eq:classicaltransportcriterion}
\end{equation}
with $\kappa>0$; by increasing $\ell$, one may take $\kappa=1/2$.
Here the factor $1/m$ in $\Gamma_{\mathscr B}$ accounts for our
normalization of the generator, whereas \cite{CLW} uses the sum of the
block generators without this factor.

More explicitly, let $h=\lceil r/L_0\rceil+1$.
For a fixed site $j$, a block excluding $j$ can have
$c_{Bj}(\pi)\ne0$ only if it contains a site within distance $r$ of $j$.
Its allowed starting coordinates then lie in a coarse-lattice cube
of side $\ell+2h$. Removing the $\ell^D$ labels whose blocks contain
$j$ leaves at most $(\ell+2h)^D-\ell^D$ labels.
Since \eqref{eq:DSwasserstein} gives $c_{Bj}(\pi)\le C_W$,
\begin{equation}
 \Gamma_{\mathscr B}(\mathcal P)
 \le-1+C_W\big[(1+2h/\ell)^D-1\big].
 \label{eq:classical-block-count}
\end{equation}
The second term tends to zero as $\ell\to\infty$;
choosing it at most $1/2$ gives \eqref{eq:classicaltransportcriterion} with $\kappa=1/2$.

In turn, \cite[Theorems~7.1 and~2.2(b)]{CLW} show that, under
\eqref{eq:classicaltransportcriterion}, the block heat-bath generator
\begin{equation}
  L_{\rm hb}:=\frac1m\sum_B(\mathsf H_B-\id)
             =\frac Mm(P_{\rm hb}-\id),
  \qquad P_{\rm hb}:=\frac1M\sum_B\mathsf H_B,
  \label{eq:HBclocks}
\end{equation}
satisfies, for all probability measures $\mu,\nu$ on $[d]^\Lambda$,
\begin{equation}
  W_H(\mu e^{tL_{\rm hb}},\nu e^{tL_{\rm hb}})
  \le e^{-\kappa t}W_H(\mu,\nu),\qquad t\ge0.
  \label{eq:HBcontraction}
\end{equation}
The corresponding discrete-time estimate follows by coupling the same
block label and applying \cite[Propositions~19--20]{Ollivier}:
\[
  W_H(\mu P_{\rm hb}^k,\nu P_{\rm hb}^k)
  \le\left(1-\frac{\kappa m}{M}\right)^kW_H(\mu,\nu),\qquad k\ge0.
\]
For $\eps>0$, define the worst-case total-variation mixing times by
\[
  \begin{aligned}
    t_{\rm mix}^{\rm hb}(\eps)
    &:=\inf\left\{t\ge0:\ \sup_{\mu}
      \|\mu e^{tL_{\rm hb}}-\pi\|_{\rm TV}\le\eps\right\},\\
    k_{\rm mix}^{\rm hb}(\eps)
    &:=\min\left\{k\in\mathbb{N}:\ \sup_{\mu}
      \|\mu P_{\rm hb}^k-\pi\|_{\rm TV}\le\eps\right\}
  \end{aligned}
\]
where the suprema run over all probability measures $\mu$ on $[d]^\Lambda$.
For the mixing-time scalings, restrict to a box $\Lambda=Q_L$,
where $L=N^{1/D}$ is a positive integer multiple of $L_0$, so $N=L^D$.
The construction above slides a cube of side $\ell L_0$ in steps of $L_0$
and retains only the sites lying in $\Lambda$.
With the block scale fixed sufficiently large,
$m=\mathcal O(1)$, $M=\mathcal O(N)$, and $M/m=\mathcal O(N)$,
while $\kappa$ is bounded below independently of $N$.
Since $\|\nu-\pi\|_{\rm TV}\le W_H(\nu,\pi)\le N$, the preceding
contraction estimates give
$t_{\rm mix}^{\rm hb}(\eps)=\mathcal O(\log(N/\eps))$ and
$k_{\rm mix}^{\rm hb}(\eps)=\mathcal O(N\log(N/\eps))$ for $0<\eps<1$.

\paragraph{Replacing heat bath by classical pseudo-conditional updates}
Keep the fixed full target distribution $\pi$ and its diagonal Hamiltonian
$H$. Let $K_B$ be the classical restriction
of the block generator $\cK_B$ in \eqref{qg:generator}, so that
$K_B(x,y)=\langle y|\cK_B(|x\rangle\langle x|)|y\rangle$.
See \Cref{rem:classicalrates} for this classical restriction.
By \eqref{eq:classicalrates}, its off-diagonal rates within a fixed
exterior configuration $\xi$ are
\[
  r_\xi(u,v)
  =\pi_B^\xi(v)
     \sqrt{\frac{\sigma_\beta^B(u)}{\pi_B^\xi(u)}
            \frac{\sigma_\beta^B(v)}{\pi_B^\xi(v)}},\qquad u\ne v.
\]
Transitions changing $\xi$ have zero rate. Write $R_B(T)=e^{TK_B}$ for the corresponding
finite-time kernels.

\begin{theorem}[Classical DS contraction for finite-time QBP updates]

\label{thm:classical}
Let $\mathscr B$ be an $m$-fold covering, let $b=\max_B|B|$, and suppose
\eqref{eq:classicaltransportcriterion} holds. For each exterior configuration, write
\[
 H(u,\xi)=H_B(u)+W_\xi(u)+H_{B^c}(\xi),
 \qquad \omega_B=\sup_\xi\big[\max_uW_\xi(u)-\min_uW_\xi(u)\big].
\]
Set
\[
 s_* =\min_{B,\xi}\left[e^{\beta\min_uW_\xi(u)/2}
       \sum_v\sigma_\beta^B(v)e^{-\beta W_\xi(v)/2}\right]
 \ge e^{-\beta\max_B\omega_B/2}\ge e^{-\beta\max_B C_B},
\]
and $\chi= m^{-1}\max_j\#\{B:j\in B\cup\partial B\}$. Here $\partial B$ is the set of
exterior sites sharing an interaction with $B$; in a range-$r$ lattice,
$\chi\le(2r+1)^D$. Let $R_B(T)$ be the diagonal restriction of $e^{T\mathcal K_B}$ for
the QBP generator. Then
\begin{equation}
 \Gamma_{\mathscr B}((R_B(T))_B)
 \le-\kappa+2b\chi e^{-s_*T}.
 \label{eq:classicaltransportinheritance}
\end{equation}
In particular,
\begin{equation}
 T\ge s_*^{-1}\log(4b\chi/\kappa)
 \label{qd:classical-time}
\end{equation}
{implies $\Gamma_{\mathscr B}((R_B(T))_B)\le-\kappa/2$.} In a nearest-neighbor chain,
$C_B\le2J_{\mathrm{int}}$ for
every interval, independently of its length.
\end{theorem}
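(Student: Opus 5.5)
We prove \eqref{eq:classicaltransportinheritance} by comparing each finite-time kernel $R_B(T)$ with the heat-bath kernel $\mathsf H_B$, block by block and site by site. The whole comparison happens inside $B$ with the exterior configuration $\xi$ frozen, since $K_B$ never changes $\xi$. Fix $\xi$, and let $K_B^\xi$ be the generator on $[d]^B$ with rates $r_\xi(u,v)$. It is reversible with respect to $\pi_B^\xi$ by \eqref{eq:classicalrates}.

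The first step is a Doeblin-type lower bound. Write $r_\xi(u,v)=\sigma_\beta^B(v)e^{\beta[W_\xi(u)-W_\xi(v)]/2}$. Then the total rate into any destination $v$ from any $u$ dominates a common measure:
\[
 r_\xi(u,v)\ge \sigma_\beta^B(v)e^{\beta[\min W_\xi-W_\xi(v)]/2}=:\nu_\xi(v),
 \qquad \sum_v\nu_\xi(v)\ge s_*.
\]
This gives the decomposition $K_B^\xi=\bar\nu_\xi(\Pi_\xi-\id)+K'$, where $\bar\nu_\xi=\sum_v\nu_\xi(v)$, $\Pi_\xi$ is the rank-one kernel to $\nu_\xi/\bar\nu_\xi$, and $K'$ is a generator with nonnegative off-diagonal rates. (The diagonal term $v=u$ is harmless because it only adds a self-loop.) By a standard coupling, or a Duhamel expansion in the style of \Cref{lem:trajectories}, two copies started at $u,u'$ can be coalesced at the first ring of the rate-$\bar\nu_\xi$ clock. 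Hence $\|R_B^\xi(T)(u,\cdot)-R_B^\xi(T)(u',\cdot)\|_{\rm TV}\le e^{-s_*T}$. Using reversibility, $\pi_B^\xi R^\xi_B(T)=\pi_B^\xi$, so averaging over $u'\sim\pi_B^\xi$ gives $\|R_B^\xi(T)(u,\cdot)-\pi_B^\xi\|_{\rm TV}\le e^{-s_*T}$ for every $u$. Since the Hamming diameter of $[d]^B$ is $b$, this converts to $W_{H,B}\le b\,e^{-s_*T}$. The lower bound $s_*\ge e^{-\beta\omega_B/2}$ is immediate from $W_\xi(v)-\min W_\xi\le\omega_B$.

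The second step bounds $D^{\rm diag}_{R_B(T)}(j)-D^{\rm diag}_{\mathsf H_B}(j)$ for $x\sim_j y$, splitting into three cases. If $j\notin B\cup\partial B$, both kernels leave $j$ untouched and the conditional laws coincide; the same coupling (identical $\xi$-dynamics on $B$ when $x_B=y_B$) gives influence exactly $1$ for both, so the difference is $0$. If $j\in B$, then $\xi=\xi'$ and $\mathsf H_B$ gives $0$, while $R_B(T)$ gives at most $b e^{-s_*T}$ by the first step. If $j\in\partial B$, the triangle inequality through $\pi_B^{\xi}$ and $\pi_B^{\xi'}$ gives
\[
 D\le 1+2be^{-s_*T}+W_{H,B}(\pi_B^\xi,\pi_B^{\xi'})
 \le 1+c_{Bj}(\pi)+2be^{-s_*T}.
\]
So in all cases the excess is at most $2be^{-s_*T}\mathbf 1_{j\in B\cup\partial B}$. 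Summing over $B$, dividing by $m$, and using the definition of $\chi$ together with \eqref{eq:classicaltransportcriterion} yields \eqref{eq:classicaltransportinheritance}. Substituting \eqref{qd:classical-time} gives $2b\chi e^{-s_*T}\le\kappa/2$. Finally, the chain remark follows from $w_B\le C_B\le 2J_{\rm int}$, and $\omega_B\le 2C_B$ gives the displayed lower bound on $s_*$.

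The main obstacle is the first step: identifying the correct minorization constant so that it matches the stated $s_*$ exactly, and justifying the coalescence bound cleanly. The point is that the rates do not minorize uniformly in $u$ by $\pi_B^\xi$, but they do minorize by $\nu_\xi$, because the factor $e^{\beta W_\xi(u)/2}$ is bounded below by its minimum. The other point needing care is the case $j\in\partial B$, where one must make sure the term $1$ for the disagreement outside $B$ is counted only once.
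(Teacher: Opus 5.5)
Your proposal is correct and follows essentially the paper's argument. Your minorizing measure $\nu_\xi$ is exactly $(\min_u\lambda_\xi(u))\,\mu_\xi$ in the paper's product factorization $r_\xi(u,v)=\lambda_\xi(u)\mu_\xi(v)$. From there, the common-reset coalescence coupling, the comparison with the stationary law $\pi_B^\xi$, and the triangle inequality, which gives an excess of at most $2be^{-s_*T}$ only when $j\in B\cup\partial B$, are the same steps as in the paper.

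The one item you leave unproved is the parenthetical bound $\chi\le(2r+1)^D$. It follows because any block with $j\in B\cup\partial B$ contains a site within distance $r$ of $j$, and each of these at most $(2r+1)^D$ sites lies in exactly $m$ blocks.
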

\begin{proof}

  By \Cref{rem:classicalrates}, the diagonal restriction has rates
\eqref{eq:classicalrates}, leaves the exterior configuration fixed,
and has stationary block law $\pi_B^\xi$.

 Put
\[
 Z_\xi=\sum_v\sigma_\beta^B(v)e^{-\beta W_\xi(v)/2},\qquad
 \mu_\xi(v)=Z_\xi^{-1}\sigma_\beta^B(v)e^{-\beta W_\xi(v)/2},\qquad
 \lambda_\xi(u)=Z_\xi e^{\beta W_\xi(u)/2}.
\]
 For an observable $f:\Omega_B\to\RR$, write
$\mu_\xi(f)=\sum_v\mu_\xi(v)f(v)$.     For this fixed exterior configuration, the generator
 is
   \[
 K_Bf(u)=\sum_{v\ne u}r_\xi(u,v)[f(v)-f(u)]
       =\lambda_\xi(u)[\mu_\xi(f)-f(u)].
\]
 Since $\lambda_\xi(u)\ge s_*$, split it into a common reset to $\mu_\xi$ at rate
$s_*$
and the nonnegative residual rates
$(\lambda_\xi(u)-s_*)\mu_\xi(v)$.
Couple two copies by using the same common reset times and draws. After their first
common reset they coincide, and their residual transitions can be coupled identically
thereafter. Thus their probability of disagreement at time $T$ is at most $e^{-s_*T}$.
Coupling one copy initially distributed as $\pi_B^\xi$, which is stationary, proves
\[
 \sup_xW_H(R_B(T)(x,\cdot),\mathsf H_B(x,\cdot))\le |B|e^{-s_*T}.
\]
The rate bound follows from
$Z_\xi\ge e^{-\beta\max W_\xi/2}$ and $\omega_B\le2C_B$.
   For every pair $x\sim_jy$, the triangle inequality gives
\begin{align*}
 W_H(R_B(T)(x,\cdot),R_B(T)(y,\cdot))
 &\le W_H(\mathsf H_B(x,\cdot),\mathsf H_B(y,\cdot))
       +2|B|e^{-s_*T},\\
 D_{R_B(T)}^{\rm diag}(j)
 &\le D_{\mathsf H_B}^{\rm diag}(j)+2|B|e^{-s_*T}.
\end{align*}  The change is
zero if $j\notin B\cup\partial B$, because both kernels then act independently of $j$
and keep it unchanged. Summing proves \eqref{eq:classicaltransportinheritance}. For the
range bound, every contributing block contains a site in the radius-$r$ neighborhood of
$j$, and every such site belongs to $m$ blocks.
\end{proof}

For $T$ satisfying \eqref{qd:classical-time}, the classical dynamics
  \begin{equation}
    P_T=\frac1M\sum_B R_B(T),\qquad
    L_T=\frac1m\sum_B(R_B(T)-\id)=\frac Mm(P_T-\id)
    \label{eq:classicalouter}
  \end{equation}
  satisfy, for all probability measures $\mu,\nu$ on $[d]^\Lambda$,
  integers $k\ge0$, and $t\ge0$,
  \begin{equation}
    W_H(\mu (P_T)^k,\nu (P_T)^k)
      \le\left(1-\frac{\kappa m}{2M}\right)^kW_H(\mu,\nu),\quad
    W_H(\mu e^{tL_T},\nu e^{tL_T})
      \le e^{-\kappa t/2}W_H(\mu,\nu).
    \label{eq:classicalmodifiedcontraction}
  \end{equation}
Coupling the same block label and applying
\cite[Propositions~19--20]{Ollivier} gives the discrete bound;
Poissonization gives the continuous bound.

\subsection{Quantum Dobrushin--Shlosman contraction}

The preceding section motivates a quantum analogue of the
classical contraction condition in \Cref{thm:classical}: replace
$R_B(T)$ by the pseudo-conditional expectations $\Phi_B(T)=e^{T\mathcal K_B}$.
For a classical $H$, these channels restrict to $R_B(T)$ on diagonal
states, where quantum $W_1$ agrees with $W_H$
\cite[Proposition~6]{DPMTL}. For an $m$-fold covering with rates $1/m$,
restricting the following condition to diagonal inputs therefore
recovers the classical finite-time DS condition. The quantum condition
tests all Hermitian inputs satisfying $\Tr_jX=0$ in \eqref{eq:influence}.
Here, we also allow each block update to be applied at its own rate $a_B>0$.
Choosing the relative rates can improve convergence by updating slowly
relaxing regions more often, as in classical Gibbs sampling
\cite{CLR2018,MM2017}.

\begin{definition}[Weighted quantum Dobrushin--Shlosman condition]

\label{def:quantumDS}
Let $(\Phi_B)_{B\in\mathscr B}$ be a finite labelled family of channels with common
fixed state $\sigma$, and let $a_B>0$. With $D_{\Phi_B}(j)$ from \eqref{eq:influence},
define
\[
 \kappa=\min_{j\in\Lambda}\sum_{B\in\mathscr B}a_B[1-D_{\Phi_B}(j)].
\]
The family satisfies the weighted quantum Dobrushin--Shlosman condition when $\kappa>0$.
For an $m$-fold covering with $a_B=1/m$, this is
\begin{equation}
 \Gamma_{\mathscr B}(\Phi):=\frac{1}m\max_j\sum_B[D_{\Phi_B}(j)-1]\le-\kappa.
 \label{eq:qds}
\end{equation}
   The
sums count every label , including when    distinct labelled blocks
have the same intersection with the finite lattice .
\end{definition}
\noindent  For a channel $P$ and a generator $\mathcal L$ with stationary state
$\sigma$, we use trace-norm mixing times
\begin{equation}
 \begin{split}
 t_{\rm mix}(\mathcal L,\eps)
 &:=\inf\{t\ge0:\sup_\rho\|e^{t\mathcal L}(\rho)-\sigma\|_1\le\eps\},\\
 k_{\rm mix}(P,\eps)
 &:=\min\{k\in\NN_0:\sup_\rho\|P^k(\rho)-\sigma\|_1\le\eps\}.
 \end{split}
 \label{eq:quantum-mixing-definition}
\end{equation}
Both suprema are over density matrices on the full system.

  \begin{theorem}[Wasserstein contraction and mixing times]

\label{thm:qds}
Under the preceding condition, put $a=\sum_Ba_B$ and
\[
 P=\frac{1}a\sum_Ba_B\Phi_B,\qquad
 \mathcal L=\sum_Ba_B(\Phi_B-\id)=a(P-\id).
\]
For every traceless Hermitian $X$, every $t\ge0$, and every integer $k\ge0$,
\begin{equation}
 \|P^k(X)\|_{W_1}\le(1-\kappa/a)^k\|X\|_{W_1},\qquad
 \|e^{t\mathcal L}(X)\|_{W_1}\le e^{-\kappa t}\|X\|_{W_1}.
 \label{eq:qdscontraction}
\end{equation}
In particular, $\sigma$ is the unique stationary state and, for $0<\varepsilon<1$,
\begin{equation}
 t_{\mathrm{mix}}(\mathcal L,\varepsilon)\le\kappa^{-1}\log(2N/\varepsilon),\qquad
 k_{\mathrm{mix}}(P,\varepsilon)\le
 \left\lceil\frac{a}{\kappa}\log(2N/\varepsilon)\right\rceil.
 \label{eq:qdsTV}
\end{equation}
\end{theorem}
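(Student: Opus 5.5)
The plan is to prove the contraction for a single averaged step $P$ first, and then lift it to the semigroup and to mixing times.

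\textbf{One step of $P$.} Fix a traceless Hermitian $X$ and an almost optimal decomposition $X=\sum_jX_j$ in \eqref{eq:Wdef}, with $X_j=X_j^*$, $\Tr_jX_j=0$, and $\tfrac12\sum_j\nOne{X_j}\le\nW X+\eta$. Since each $X_j\ne0$ satisfies $\Tr_jX_j=0$, the definition \eqref{eq:influence} gives
\[
\nW{\Phi_B(X_j)}\le D_{\Phi_B}(j)\nOne{X_j}/2 .
\]
Because $\nW\cdot$ is a norm on traceless Hermitian operators, the triangle inequality yields
\[
\nW{P(X)}\le\frac1a\sum_j\sum_Ba_BD_{\Phi_B}(j)\,\frac{\nOne{X_j}}2 .
\]
By the definition of $\kappa$, for every $j$ we have $\sum_Ba_BD_{\Phi_B}(j)\le a-\kappa$. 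Hence $\nW{P(X)}\le(1-\kappa/a)(\nW X+\eta)$, and letting $\eta\to0$ gives the one-step bound. Iterating gives the bound for $P^k$. (The infimum in \eqref{eq:Wdef} is attained by compactness, but the $\eta$-argument avoids needing this.)

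\textbf{Continuous time.} Write $e^{t\mathcal L}=e^{-at}\sum_k\frac{(at)^k}{k!}P^k$. Applying the triangle inequality gives
\[
\nW{e^{t\mathcal L}X}\le e^{-at}\sum_k\frac{(at)^k}{k!}(1-\kappa/a)^k\nW X=e^{-\kappa t}\nW X .
\]
This uses $\kappa\le a$, which holds because influences are nonnegative.

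\textbf{Uniqueness and mixing times.}
\begin{itemize}
\item For uniqueness, let $\rho$ be any stationary state. Then $X=\rho-\sigma$ is traceless and satisfies $X=P(X)$. The one-step bound gives $\nW X\le(1-\kappa/a)\nW X$ with $\kappa>0$, so $\nW X=0$. By \eqref{eq:Wcompare}, $X=0$.
\item For the mixing times, take any state $\rho$. Then $\nOne{e^{t\mathcal L}\rho-\sigma}\le2\nW{e^{t\mathcal L}(\rho-\sigma)}$ by \eqref{eq:Wcompare}, and this is at most $2e^{-\kappa t}\nW{\rho-\sigma}$. Again by \eqref{eq:Wcompare}, $\nW{\rho-\sigma}\le\tfrac N2\nOne{\rho-\sigma}\le N$, so the distance is at most $2Ne^{-\kappa t}$. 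This is $\le\eps$ once $t\ge\kappa^{-1}\log(2N/\eps)$.
\item Similarly, $\nOne{P^k\rho-\sigma}\le2N(1-\kappa/a)^k\le2Ne^{-k\kappa/a}$, which gives the stated bound on $k_{\rm mix}$.
\end{itemize}

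The main point requiring care is the one-step estimate. It needs the convexity/triangle structure of $\nW\cdot$ together with the fact that a single decomposition of $X$ works simultaneously for all blocks. The decomposition does not depend on $B$, and that is what makes the column-sum condition suffice. The remaining steps are routine.
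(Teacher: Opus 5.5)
Your proposal is correct and follows the paper's proof essentially step for step. You obtain the one-step bound from a (near-)optimal transport decomposition combined with the column-sum condition, lift it to the semigroup via the Poisson expansion $e^{t\mathcal L}=e^{-at}\sum_k\frac{(at)^k}{k!}P^k$, and derive the mixing times from $\tfrac12\|\cdot\|_1\le\|\cdot\|_{W_1}\le\tfrac N2\|\cdot\|_1$. Your explicit uniqueness argument via $P(\rho-\sigma)=\rho-\sigma$ simply spells out what the paper leaves implicit in the contraction estimate.
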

\begin{proof}
For an admissible decomposition $X=\sum_jX_j$ in \eqref{eq:Wdef},
\[
 \|P(X)\|_{W_1}
 \le\frac{1}a\sum_j\sum_Ba_B D_{\Phi_B}(j)\frac{\|X_j\|_1}{2}
 \le(1-\kappa/a)\frac{1}{2}\sum_j\|X_j\|_1.
\]
Taking the infimum and iterating proves the discrete estimate. The nonnegative
coefficients in
\[
 e^{t\mathcal L}=e^{-at}\sum_{k\ge0}\frac{(at)^k}{k!}P^k
\]
give the continuous estimate. Each channel fixes $\sigma$, and \Cref{lem:transport}
gives
$\frac{1}{2}\|\rho-\sigma\|_1\le\|\rho-\sigma\|_{W_1}\le N$.
Thus $\|e^{t\mathcal L}\rho-\sigma\|_1\le2Ne^{-\kappa t}$, and the analogous discrete
bound follows using $(1-z)^k\le e^{-kz}$.
\end{proof}

\begin{remark}
\label{rem:quantumDS-comparison}
The Wasserstein contractions derived above can be viewed as
quantum coarse Ricci curvature bounds in the sense of Gao--Rouz\'e
\cite{GR}, extending Ollivier's classical notion.
De Palma--Rouz\'e \cite[Section~4, Proposition~9]{DPR} establish such
contraction at high temperature for commuting Gibbs states using
single-site recovery maps. Related Dobrushin-type conditions underlie
the optimal mixing bounds of \cite{RFA2,BLMT}; their high-temperature
constructions use jump operators obtained from single-site operators.
The general contraction principle of \cite{BLMT} also applies to block
channels; here we construct such updates and control their influences.
The update size is a parameter to optimize: the channels $\Phi_B$ relax entire
blocks, allowing contraction in the interior
to compensate for the influence near the boundary.
In spirit, this extends the previous quantum results as the classical
Dobrushin--Shlosman theory extends the single-site Dobrushin approach
\cite{DS,CLW}.
{The following sections use this freedom to prove rapid mixing over a
wider temperature range;} in particular, \Cref{q1:main} gives rapid mixing at
every fixed finite $\beta$ for bounded finite-range quantum chains,
beyond the general high-temperature bounds of \cite{RFA2,BLMT}.
The classical theory already illustrates the reason for passing to
blocks: a single-site Dobrushin bound can fail while a block condition
still holds \cite{DS,CLW}.
Classically, failure of that bound does not preclude rapid mixing of
the single-site dynamics \cite{Martinelli}.
We leave relating the mixing times of our block dynamics to those of
the samplers considered in \cite{RFA2,BLMT} to future work.
\end{remark}

{%
\subsection{Classical DS implies quantum DS}
\label{sec:classical-to-quantum}
\label{qp:classical-example}
We now prove that, for uniformly bounded interactions, uniform classical DS strong spatial mixing implies the full
quantum DS condition for the smoothed heat-bath block channels, with sufficiently
large blocks and evolution times bounded independently of the volume. Let $H^{\mathrm c}=\sum_Xh_X^{\mathrm c}$ be any finite-range,
product-basis diagonal Hamiltonian. Fix a range bound $r$
at least as large as the interaction range, and choose
$R_B=B^{\le r}$, containing the supports of all prescribed interactions
crossing $B$. Put $E_B=R_B\setminus B$. For $u\in\Omega_B$
and an exterior configuration $\zeta\in\Omega_{B^c}$, recall the
crossing interaction and its conditional energy:
\begin{equation}
 W_{\partial B}^{\mathrm c}
 :=\sum_{\substack{X\cap B\ne\varnothing\\X\cap B^c\ne\varnothing}}
 h_X^{\mathrm c},\qquad
 W_\zeta(u):=\langle u,\zeta|W_{\partial B}^{\mathrm c}|u,\zeta\rangle.
 \label{qd:block-crossing-energy}
\end{equation}
Thus $W_\zeta$ is the interaction energy across the boundary of
$B$ with the exterior spins fixed to $\zeta$. In particular,
\[
 H^{\mathrm c}(u,\zeta)=H_B(u)+W_\zeta(u)+H_{B^c}(\zeta),\qquad
 \pi_B^\zeta(u)=\frac{e^{-\beta[H_B(u)+W_\zeta(u)]}}
 {\sum_v e^{-\beta[H_B(v)+W_\zeta(v)]}}.
\]
The potential $W_\zeta$ depends only on $\zeta|_{E_B}$.
In the local block formulas below, exterior configurations are indexed
by $\xi\in\Omega_{E_B}$; $W_\xi$ and $\pi_B^\xi$ are defined using any extension of $\xi$ to $B^c$.
In operator formulas, we use the same symbol for the conditional Gibbs
state on $\HH_B$, defined by
\begin{equation}
 \pi_B^\xi:=\sum_{u\in\Omega_B}\pi_B^\xi(u)|u\rangle\langle u|.
 \label{qp:conditional-gibbs-state}
\end{equation}
Thus $\pi_B^\xi(u)$ denotes its diagonal entry, and classical
Wasserstein distances between these states refer to their probability laws.
Recall that, for $j\notin B$, $c_{Bj}(\pi)$ defined in
\eqref{eq:classical-conditional-response} is the largest Hamming
Wasserstein distance between $\pi_B^\xi$ and $\pi_B^{\xi'}$ over
exterior configurations agreeing off $j$. The formula below has the same Gibbs-weighted pinching structure as
the conditional expectations in \cite[Theorem~4.7]{SCGLcss}; here we
prove it directly for the QBP generator of a general local classical
Hamiltonian.
For each $u\in\Omega_B$, define the diagonal operator on $\HH_{E_B}$
\[
 Q_{B,u}:=\sum_{\xi\in\Omega_{E_B}}
       \pi_B^\xi(u)|\xi\rangle\langle\xi|.
\]
These operators commute, are positive, and satisfy
$\sum_u Q_{B,u}=I_{E_B}$. Let $(P_C)_C$ be their nonzero joint
spectral projections, with joint eigenvalues $p_{B,C}(u)$:
\[
 Q_{B,u}=\sum_C p_{B,C}(u)P_C,\qquad u\in\Omega_B.
\]
Thus $p_{B,C}(u)\ge0$ and $\sum_u p_{B,C}(u)=1$. Define the density
matrix on $\HH_B$
\[
 \pi_B^C:=\sum_{u\in\Omega_B}p_{B,C}(u)|u\rangle\langle u|.
\]
For any exterior configuration $\xi$ with $P_C|\xi\rangle=|\xi\rangle$,
one has $p_{B,C}(u)=\pi_B^\xi(u)$ for every $u$. By
\eqref{qp:conditional-gibbs-state}, $\pi_B^C=\pi_B^\xi$ as density
matrices; this state is independent of the choice of $\xi$ within
that spectral subspace.
Define the pinching map and conditional Gibbs density by
\[
 \mathscr P_B(Y)=\sum_C P_CYP_C,\qquad
 \widehat\rho_B=\sum_C\pi_B^C\otimes P_C,
 \qquad \Tr_B\widehat\rho_B=I_{E_B}.
\]
By convention, all maps below act as the identity on $R_B^c$.
\begin{lemma}
\label{qp:intrinsic-projection}
The limit $\Gamma_B:=\lim_{t\to\infty}e^{t\mathcal K_B(H^{\mathrm c})}$
exists in diamond norm on $R_B$. Its Heisenberg adjoint
$\mathbb E_B=\Gamma_B^*$ is the conditional expectation
\begin{equation}
 \mathbb E_B(A)=I_B\otimes\mathscr P_B\!\left(
 \Tr_B[\widehat\rho_B^{1/2}A\widehat\rho_B^{1/2}]\right).
 \label{qp:intrinsic-heisenberg}
\end{equation}
\end{lemma}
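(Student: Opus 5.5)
The plan is to compute everything explicitly, using that all operators involved are diagonal in the product basis away from the $B$ tensor factor.

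\emph{Reduction to a block computation.} Since $H^{\mathrm c}$ is classical, \Cref{rem:commuting-BLMT} gives $O_B=e^{-\beta W_{\partial B}^{\mathrm c}/4}$. This operator is diagonal and supported on $R_B$. Moreover $S_B$ is diagonal, so $[H^{\mathrm c},S_B]=0$ and $G_B=0$. Hence $\mathcal K_B=\mathcal K_B(H^{\mathrm c})$ acts only on $R_B$, with Heisenberg form
\[
\mathcal K_B^*(A)=O_B^{-1}\bigl(I_B\otimes\Tr_B[\sigma_\beta^B O_BAO_B]\bigr)O_B^{-1}-\tfrac12\{S_B,A\}.
\]
By \Cref{qg:KMS}, $\mathcal K_B^*$ is self-adjoint for the KMS inner product of the faithful state $\sigma_\beta$. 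It is therefore diagonalizable with real spectrum. Since $e^{t\mathcal K_B^*}$ is a contraction in the operator norm, this spectrum is $\le0$. Consequently $\lim_{t\to\infty}e^{t\mathcal K_B^*}$ exists in operator norm on the finite-dimensional space $\BB(\HH_{R_B})$, hence in cb/diamond norm. The limit is the KMS-orthogonal projection onto $\ker\mathcal K_B^*$. It remains to identify this kernel and this projection with \eqref{qp:intrinsic-heisenberg}.

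\emph{Entrywise dynamics.} Write $A$ in matrix units $|u\xi\rangle\langle v\xi'|$ with $u,v\in\Omega_B$ and $\xi,\xi'\in\Omega_{E_B}$, the $R_B^c$ factor being a spectator. The map $\mathcal K_B^*$ preserves each exterior pair $(\xi,\xi')$. On that sector it acts as follows.
\begin{itemize}
\item Entries with $u\ne v$ obey $\dot A_{uv}=-\tfrac12(s_\xi(u)+s_{\xi'}(v))A_{uv}$. Here $s_\xi(u)=\lambda_\xi(u)\ge s_*>0$ by \Cref{qg:loss}, so these entries decay to zero.
\item The diagonal vector $f(u)=A_{u\xi,u\xi'}$ obeys
\[
\dot f=-D f+a\langle b,f\rangle,
\]
where $D=\mathrm{diag}\bigl(\tfrac12(\lambda_\xi(u)+\lambda_{\xi'}(u))\bigr)$, $a(u)=e^{\beta(W_\xi(u)+W_{\xi'}(u))/4}$ and $b(w)=\sigma_\beta^B(w)e^{-\beta(W_\xi(w)+W_{\xi'}(w))/4}$.
\end{itemize}
This is a diagonal-plus-rank-one system. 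Its kernel is nonzero exactly when $f\propto D^{-1}a$ and $\langle b,D^{-1}a\rangle=1$.

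\emph{Identifying the kernel (main obstacle).} Using $\lambda_\xi(u)=Z_\xi e^{\beta W_\xi(u)/2}$, I would write
\[
\langle b,D^{-1}a\rangle=\sum_w\sigma_\beta^B(w)\,\frac{2}{Z_\xi x_w^{-1}+Z_{\xi'}x_w},
\qquad x_w=e^{\beta(W_{\xi'}(w)-W_\xi(w))/4}.
\]
The goal is to show this is $\le1$, with equality iff $\mu_\xi=\mu_{\xi'}$. Equivalently the condition is $\pi_B^\xi=\pi_B^{\xi'}$, i.e.\ $\xi$ and $\xi'$ lie in the same joint spectral class $C$ of the $Q_{B,u}$. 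For the inequality I would try AM--GM, $\tfrac12(Z_\xi x^{-1}+Z_{\xi'}x)\ge\sqrt{Z_\xi Z_{\xi'}}$, together with Cauchy--Schwarz: $\sum_w\sigma_\beta^B(w)\sqrt{\cdot}\le\sqrt{Z_\xi Z_{\xi'}}$ after rewriting $Z_\xi Z_{\xi'}$. Equality forces $x_w$ to be constant, which means $W_{\xi'}-W_\xi$ is constant in $w$. This is precisely equality of the conditional laws. This step is delicate, and I expect the main work here.

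\emph{The kernel and the projection.} The step above gives
\[
\ker\mathcal K_B^*=\Bigl\{A:\ A=I_B\otimes Y\text{ up to the }e^{\beta W/4}\text{ weights},\ Y=\mathscr P_B(Y)\Bigr\}.
\]
Because $\mathscr P_B(Y)$ commutes with the weights within each class $C$, this reduces to $\{I_B\otimes Y:\ Y=\mathscr P_B(Y)\}$.

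Finally I would verify that the right-hand side of \eqref{qp:intrinsic-heisenberg} does four things:
\begin{itemize}
\item it is unital and CP, using $\Tr_B\widehat\rho_B=I_{E_B}$;
\item it is idempotent, with range equal to this kernel;
\item it preserves $\sigma_\beta$;
\item it is KMS-symmetric. Within each class $C$, $\widehat\rho_B$ is the conditional Gibbs density, and $\sigma_\beta$ factorizes as $\pi_B^C\otimes(\cdot)$ on $P_C$.
\end{itemize}
A KMS-symmetric idempotent with this range is the KMS-orthogonal projection onto $\ker\mathcal K_B^*$. By the first step this projection is the limit, which proves \eqref{qp:intrinsic-heisenberg}.
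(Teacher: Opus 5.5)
Your route is correct, but it differs from the paper's. The paper never uses KMS symmetry. It works in the Schr\"odinger picture on matrix units $|u,\xi\rangle\langle v,\eta|$:
\begin{itemize}
\item the anticommutator kills the coherences with $u\neq v$;
\item on each sector spanned by $|u,\xi\rangle\langle u,\eta|$, the generator is a matrix with nonnegative off-diagonal entries $\sqrt{r_\xi(u,v)r_\eta(u,v)}$ and column sums $-\tfrac12\sum_v(\sqrt{r_\xi(u,v)}-\sqrt{r_\eta(u,v)})^2$;
\item if $\pi_B^\xi=\pi_B^\eta$, this is the classical block chain and converges to $\pi_B^\xi$; otherwise, positivity of the semigroup and Gr\"onwall on the total mass give decay like $e^{-\gamma_{\xi\eta}t}$.
\end{itemize}
Summing the sector limits gives $\Gamma_B(X)=\sum_C\pi_B^C\otimes P_C(\Tr_BX)P_C$ directly, and the Heisenberg formula is its adjoint.

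You instead proceed abstractly. You get existence of the limit from KMS self-adjointness plus operator-norm contractivity, and identify it as the $\sigma_\beta$-KMS-orthogonal projection onto $\ker\mathcal K_B^*$. You then compute that kernel sector by sector using the diagonal-plus-rank-one structure and the inequality $\langle b,D^{-1}a\rangle\le1$, which comes from AM--GM and Cauchy--Schwarz, with equality iff $W_{\xi'}-W_\xi$ is constant. Finally, you recognise the candidate $\mathbb E_B$ as the unique KMS-symmetric idempotent with that range, using $\sigma_\beta=\widehat\rho_B\,(I_B\otimes\sigma_{\beta,B^c})$ with commuting diagonal factors.

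What each buys:
\begin{itemize}
\item Your version avoids tracking decay in the mismatched sectors. It also shows for free that $\Gamma_B$ is the KMS-orthogonal projection onto the fixed-point algebra, hence Gibbs preserving and KMS symmetric.
\item The price is a separate check of idempotence and KMS symmetry of the formula. You must also run the spectral argument on $\BB(\HH_\Lambda)$, where the KMS proposition is stated, and then restrict to $R_B$; this is harmless because $\mathcal K_B^*$ acts as the identity on $R_B^c$.
\item The paper's computation is self-contained, needs no detailed balance, and yields explicit sector-wise rates.
\end{itemize}

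Two small repairs are needed.

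First, your displayed value of $\langle b,D^{-1}a\rangle$ drops the factor $e^{-\beta(W_\xi(w)+W_{\xi'}(w))/4}$. Since $a(w)b(w)=\sigma_\beta^B(w)$, the correct expression is
\[
\langle b,D^{-1}a\rangle=\sum_w\sigma_\beta^B(w)\,\frac{2}{\lambda_\xi(w)+\lambda_{\xi'}(w)}.
\]
Without that factor, AM--GM only gives $1/\sqrt{Z_\xi Z_{\xi'}}$, which need not be at most one, and your Cauchy--Schwarz step has nothing to act on. With the factor restored, AM--GM on $\lambda_\xi,\lambda_{\xi'}$ followed by Cauchy--Schwarz on $\sum_w\sigma_\beta^B(w)e^{-\beta W_\xi(w)/4}e^{-\beta W_{\xi'}(w)/4}$ gives the bound.

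Second, the kernel does reduce to $\{I_B\otimes Y:\ Y=\mathscr P_B(Y)\}$, but not because $\mathscr P_B(Y)$ commutes with the weights $e^{\beta W/4}$. In general it does not, since $W_{\xi'}=W_\xi+c$ with $c\neq0$ is allowed within a class. The correct reason is that in a matched sector $\lambda_\xi=\lambda_{\xi'}$, so $D^{-1}a\equiv(Z_\xi Z_{\xi'})^{-1/2}$ is independent of $w$. Hence kernel elements are literally of the form $I_B\otimes Y$.
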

\begin{proof}
Recall from \eqref{qg:generator} that the jump map is the completely
positive map $\mathcal T_B$, while $S_B$ is the positive operator
that determines the loss term:
\[
 \mathcal T_B(X)=O_B\left[\sigma_\beta^B\otimes
 \Tr_B(O_B^{-1}XO_B^{-*})\right]O_B^*,
 \qquad S_B=\mathcal T_B^*(\one).
\]
For a classical Hamiltonian, $O_B=e^{-\beta W_{\partial B}^{\mathrm c}/4}$
and $S_B$ are diagonal, and the coherent correction vanishes, as
shown in \Cref{rem:classicalrates}. Thus
\[
 \mathcal K_B(H^{\mathrm c})(X)
 =\mathcal T_B(X)-\tfrac12(S_BX+XS_B).
\]
 The diagonal
entries of the loss operator are
\[
 \langle u,\xi|S_B|u,\xi\rangle
 =\sum_{z\in\Omega_B}r_\xi(u,z).
\]
where the rate $r_\xi(u,v)$ was defined in \eqref{eq:classicalrates}, with $r_\xi(u,u)=\sigma_\beta^B(u)$.
When $u\ne v$, the jump map $\mathcal T_B$ annihilates
$|u,\xi\rangle\langle v,\eta|$, since its partial trace over $B$
vanishes and the conjugating operators are diagonal.
The anticommutator term multiplies this matrix unit by the negative scalar
$-\tfrac12\sum_{z\in\Omega_B}[r_\xi(u,z)+r_\eta(v,z)]$.
Thus exponentiating on this one-dimensional subspace gives
\[
 \begin{aligned}
 e^{t\mathcal K_B(H^{\mathrm c})}(|u,\xi\rangle\langle v,\eta|)=\exp\!\left(-\frac t2\sum_{z\in\Omega_B}
 [r_\xi(u,z)+r_\eta(v,z)]\right)|u,\xi\rangle\langle v,\eta|
 \longrightarrow0.
 \end{aligned}
\]
The limit is zero because all the rates in the sum in the exponent are positive.
Next, when the block indices agree, $\mathcal T_B$ can change their common
value. For each fixed exterior pair $\xi,\eta$, the subspace spanned by
$|u,\xi\rangle\langle u,\eta|$, $u\in\Omega_B$, is invariant.
On this subspace the generator has matrix entries
\begin{equation}
 \begin{aligned}
 \langle v,\xi|\mathcal K_B(H^{\mathrm c})
 (|u,\xi\rangle\langle u,\eta|)|v,\eta\rangle=\sqrt{r_\xi(u,v)r_\eta(u,v)}
 -\frac{\delta_{vu}}2\sum_{z\in\Omega_B}
 [r_\xi(u,z)+r_\eta(u,z)].
 \end{aligned}
 \label{qp:classical-coherence-sector}
\end{equation}
Every power of the generator preserves this same subspace.
Consequently, the power series for $e^{t\mathcal K_B(H^{\mathrm c})}$
restricts to the ordinary matrix exponential of
\eqref{qp:classical-coherence-sector}. We now determine its limit.
The entries in \eqref{qp:classical-coherence-sector} are nonnegative
off the diagonal, and the sum of column $u$ is
\[
 \begin{aligned}
 \sum_{v\in\Omega_B}
 \langle v,\xi|\mathcal K_B(H^{\mathrm c})
 (|u,\xi\rangle\langle u,\eta|)|v,\eta\rangle=-\frac12\sum_{v\in\Omega_B}
    (\sqrt{r_\xi(u,v)}-\sqrt{r_\eta(u,v)})^2.
 \end{aligned}
\]
This sum vanishes exactly when $r_\xi(u,v)=r_\eta(u,v)$ for every $v$.
By \eqref{eq:classicalrates} and the strict positivity of
$\sigma_\beta^B$, this is equivalent to $\beta(W_\xi-W_\eta)$
being constant, hence to $\pi_B^\xi=\pi_B^\eta$.
In particular, either all column sums vanish or all are strictly negative. Thus:

If $\pi_B^\xi=\pi_B^\eta$, the rates coincide, so
\eqref{qp:classical-coherence-sector} is exactly the classical block
Markov generator. Its off-diagonal transition rates are positive,
and its stationary law is $\pi_B^\xi$, as shown in
\Cref{rem:classicalrates}. Its exponential therefore converges to the
matrix whose every column is $\pi_B^\xi$.

If $\pi_B^\xi\ne\pi_B^\eta$, all column sums are strictly negative, so
\[
 \gamma_{\xi\eta}:=\frac12\min_{z\in\Omega_B}
 \sum_{v\in\Omega_B}
 (\sqrt{r_\xi(z,v)}-\sqrt{r_\eta(z,v)})^2>0.
\]
Since $\tfrac12\sum_z[r_\xi(u,z)+r_\eta(u,z)]\le\|S_B\|$,
the matrix in \eqref{qp:classical-coherence-sector}, shifted by
$\|S_B\|\id$, has nonnegative entries. Hence, for $t\ge0$,
\[
 \begin{aligned}
 \langle v,\xi|e^{t\mathcal K_B(H^{\mathrm c})}
 (|u,\xi\rangle\langle u,\eta|)|v,\eta\rangle=e^{-t\|S_B\|}\sum_{n=0}^{\infty}\frac{t^n}{n!}
 \langle v,\xi|(\mathcal K_B(H^{\mathrm c})+\|S_B\|\id)^n
 (|u,\xi\rangle\langle u,\eta|)|v,\eta\rangle\ge0.
 \end{aligned}
\]
Differentiating the exponential and using the computed column sums gives
\[
 \begin{aligned}
 \frac{d}{dt}\sum_{v\in\Omega_B}
 \langle v,\xi|e^{t\mathcal K_B(H^{\mathrm c})}
 (|u,\xi\rangle\langle u,\eta|)|v,\eta\rangle&=-\frac12\sum_{z\in\Omega_B}
 \left[\sum_{v\in\Omega_B}
 (\sqrt{r_\xi(z,v)}-\sqrt{r_\eta(z,v)})^2\right]\\
 &\qquad\qquad \times
 \langle z,\xi|e^{t\mathcal K_B(H^{\mathrm c})}
 (|u,\xi\rangle\langle u,\eta|)|z,\eta\rangle\\
 &\le-\gamma_{\xi\eta}\sum_{z\in\Omega_B}
 \langle z,\xi|e^{t\mathcal K_B(H^{\mathrm c})}
 (|u,\xi\rangle\langle u,\eta|)|z,\eta\rangle.
 \end{aligned}
\]
The column sum equals $1$ at $t=0$. Gr\"onwall's inequality therefore yields
\[
 \begin{aligned}
 0&\le\langle v,\xi|e^{t\mathcal K_B(H^{\mathrm c})}
 (|u,\xi\rangle\langle u,\eta|)|v,\eta\rangle\\
 &\le\sum_{z\in\Omega_B}
 \langle z,\xi|e^{t\mathcal K_B(H^{\mathrm c})}
 (|u,\xi\rangle\langle u,\eta|)|z,\eta\rangle\\
 &\le e^{-\gamma_{\xi\eta}t}\longrightarrow0.
 \end{aligned}
\]
By the definition of the joint spectral projections and
\eqref{qp:conditional-gibbs-state},
$P_C|\xi\rangle=|\xi\rangle$ if and only if
$\pi_B^\xi=\pi_B^C$. The limits established above therefore give
\[
 \lim_{t\to\infty}e^{t\mathcal K_B(H^{\mathrm c})}
 (|u,\xi\rangle\langle v,\eta|)
 =\delta_{uv}\sum_C\pi_B^C\otimes
 P_C|\xi\rangle\langle\eta|P_C.
\]
For an arbitrary operator $X$ on $\HH_{R_B}$, recall that
\[
 \Tr_BX
 =\sum_{\xi,\eta\in\Omega_{E_B}}
 \left(\sum_{u\in\Omega_B}
 \langle u,\xi|X|u,\eta\rangle\right)
 |\xi\rangle\langle\eta|.
\]
Expanding $X$ in matrix units and applying the preceding limit yields
\begin{equation}
 \begin{aligned}
 \Gamma_B(X)
 &=\sum_C\sum_{\substack{u\in\Omega_B\\
                         \xi,\eta\in\Omega_{E_B}}}
 \langle u,\xi|X|u,\eta\rangle\,
 \pi_B^C\otimes P_C|\xi\rangle\langle\eta|P_C\\
 &=\sum_C\pi_B^C\otimes P_C(\Tr_BX)P_C.
 \end{aligned}
 \label{qp:intrinsic-projection-formula}
\end{equation}
Taking the adjoint gives
\eqref{qp:intrinsic-heisenberg}.
\end{proof}
\noindent For $j\in E_B$, put
\[
 S_{Bj}=\{j\}\cup
 \bigcup_{\substack{X\text{ crossing }B:\,j\in X}}
          (X\setminus B),
\]
where the union uses the prescribed possible interaction supports,
including those with zero coefficient. Since each interaction set
containing $j$ has diameter at most $r$,
\[
 S_{Bj}\subseteq\{j\}^{\le r},\qquad |S_{Bj}|\le(2r+1)^D.
\]
The same bound holds for the number of exterior sites in crossing
terms containing any given $j\in B$.

\begin{lemma}[Full quantum influence of the conditional projection]
\label{qp:lift-lemma}
For arbitrary Hermitian inputs, including inputs entangled with the exterior,
\begin{equation}
 D_{\Gamma_B}(j)\le
 \begin{cases}
 0,&j\in B,\\
 2|S_{Bj}|+2c_{Bj}(\pi),&j\in E_B,\\
 1,&j\notin R_B.
 \end{cases}
 \label{qp:lift-influence}
\end{equation}
\end{lemma}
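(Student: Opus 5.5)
We work directly with the explicit Schrödinger-picture formula \eqref{qp:intrinsic-projection-formula},
\[
 \Gamma_B(X)=\sum_C\pi_B^C\otimes P_C(\Tr_BX)P_C,
\]
acting as the identity on $R_B^c$. We treat the three cases of \eqref{qp:lift-influence} separately.

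\emph{Cases $j\in B$ and $j\notin R_B$.} If $j\in B$ and $\Tr_jX=0$, then $\Tr_BX=\Tr_{B\setminus\{j\}}\Tr_jX=0$. Hence $\Gamma_B(X)=0$ and $D_{\Gamma_B}(j)=0$. If $j\notin R_B$, then $\Gamma_B$ is a channel supported away from $j$; it is CPTP as a diamond-norm limit of channels. \Cref{lem:transport} then gives $D_{\Gamma_B}(j)=1$.

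\emph{Case $j\in E_B$.} Let $X=X^*$ with $\Tr_jX=0$, and put $Y=\Tr_BX$, an operator on $E_B\cup R_B^c$ with $\Tr_jY=0$. We split the pinching into a part insensitive to $j$ and a remainder. Let $S=S_{Bj}$. The conditional energy $W_\xi$ depends on $\xi$ at $j$ only through the crossing terms containing $j$, so the sites that matter near $j$ lie in $S$.
\begin{enumerate}
\item Fix a reference configuration $\bar\xi_S$ on $S$. Define the comparison channel
\[
 \widetilde\Gamma(X)=\sum_{\zeta}\pi_B^{(\bar\xi_S,\zeta)}\otimes\langle\zeta|\cdot|\zeta\rangle\text{-pinching of }\Tr_BX,
\]
where the conditional law uses $\bar\xi_S$ on $S$, $\zeta$ runs over configurations of $E_B\setminus S$, and the pinching is taken in $E_B\setminus S$ only. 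More precisely, $\widetilde\Gamma$ is the map obtained by first dephasing $E_B\setminus S$, then tensoring $\pi_B^{(\bar\xi_S,\zeta)}$ on $B$, and leaving $S$ and $R_B^c$ untouched. This channel is supported away from $j$, so $\Tr_j\widetilde\Gamma(X)=\widetilde\Gamma'(\Tr_jX)=0$, and $\|\widetilde\Gamma(X)\|_{W_1}\le\|X\|_1/2$.
\item The difference $Z=(\Gamma_B-\widetilde\Gamma)(X)$ has $\Tr_{B\cup S}Z=0$. This needs checking. The marginal off $B$ of $\Gamma_B(X)$ is $\mathscr P_B(Y)$. The pinching by joint spectral projections of the $Q_{B,u}$ refines the dephasing in the configuration basis on whichever sites the $Q$'s depend on. The trickiest point is that $\mathscr P_B$ may not fully dephase $S$ while $\widetilde\Gamma$ does not dephase it at all. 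I would handle this by also inserting the $S$-dephasing in $\widetilde\Gamma$ at the cost of possibly losing $j$-freeness on the diagonal, or alternatively by writing $\mathscr P_B-\widetilde{\mathscr P}$ and tracing over $S$, using $\Tr_SP_CYP_C$ structure. This is the main obstacle, and I expect some case analysis here.
\item Granting the marginal identity, \eqref{qp:compact-reconstruction} bounds the off-$B$ part by $|S|\,\|Z\|_1\le 2|S|\,\|X\|_1$. The $B$-part is bounded by classical transport: in the configuration basis of $S$, each block state $\pi_B^C$ differs from the reference $\pi_B^{(\bar\xi_S,\zeta)}$ by a Hamming $W_1$ of at most $c_{Bj}$ per changed site. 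Pairing configurations that differ at $j$, using $\Tr_jX=0$, yields $c_{Bj}(\pi)\|X\|_1$. Here $W_1$ on diagonal block states equals classical $W_H$ by \cite[Proposition~6]{DPMTL}, and decompositions tensored with exterior operators keep their cost.
\end{enumerate}
Summing these bounds and dividing by $\|X\|_1/2$ gives $2|S_{Bj}|+2c_{Bj}(\pi)$, absorbing the factor from $\widetilde\Gamma$ into the slack.
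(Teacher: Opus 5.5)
Your treatment of $j\in B$ and $j\notin R_B$ is correct. The case $j\in E_B$ has a genuine gap, exactly where you flagged the main obstacle, and what is missing is a locality fact rather than case analysis. Split an exterior configuration as $(a,\zeta,\eta)$: $a$ is the spin at $j$, $\zeta$ is the configuration on $S_{Bj}\setminus\{j\}$, and $\eta$ is the configuration on $E_B\setminus S_{Bj}$. Every crossing term containing $j$ has exterior support inside $S_{Bj}$, so it cancels in $W_{a\zeta\eta}(u)-W_{a\zeta\eta'}(u)$; every other term is independent of $a$. Hence this difference does not depend on $a$. Since $\pi_B^{a\zeta\eta}=\pi_B^{a\zeta\eta'}$ exactly when that difference is constant in $u$, for each fixed $\zeta$ the partition of the $\eta$'s into classes of equal conditional law is the same for every $a$. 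Equivalently, the slices $P_C^\zeta$ of the pinching projections do not depend on $a$. This is what lets you use $\Tr_jX=0$ after pinching: $A_{a,\zeta,C}:=P_C^\zeta\langle a,\zeta|\Tr_BX|a,\zeta\rangle P_C^\zeta$ then satisfies $\sum_aA_{a,\zeta,C}=0$. With $a$-dependent pinchings, the pinched blocks need not cancel over $a$, and no pairing of $a$ with a reference value $a_0$ is available. The same fact is what makes your step-2 marginal identity true. Indeed, $\Tr_{S_{Bj}}\widetilde\Gamma(X)=0$ trivially because $\widetilde\Gamma$ leaves $j$ untouched. By contrast, $\Tr_{B\cup S_{Bj}}\Gamma_B(X)=\sum_{\zeta,C}\sum_aA_{a,\zeta,C}=0$ holds only because of this $a$-independence.

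Step 3 then compares with the wrong reference and loses constants. Replacing the configuration on all of $S_{Bj}$ by $\bar\xi_S$ changes up to $|S_{Bj}|$ exterior sites, but $c_{Bj}(\pi)$ controls only a change at the single site $j$. The correct comparison keeps $\zeta$ and a class representative $\eta_{\zeta,C}$ fixed and varies only $a$. This gives $\Tr_{S_{Bj}}\Gamma_B(X)=\sum_{a,\zeta,C}(\pi_B^{a\zeta\eta_{\zeta,C}}-\pi_B^{a_0\zeta\eta_{\zeta,C}})\otimes A_{a,\zeta,C}$. A coupling bounds this in $W_1$ by $c_{Bj}(\pi)\sum\|A_{a,\zeta,C}\|_1\le c_{Bj}(\pi)\|X\|_1$; the last step holds because partial trace, measurement of $(a,\zeta)$ and pinching together form a channel. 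Separately, your own estimates $\|\widetilde\Gamma(X)\|_{W_1}\le\|X\|_1/2$ and $|S_{Bj}|\|Z\|_1\le2|S_{Bj}|\|X\|_1$ add up to only $D_{\Gamma_B}(j)\le1+4|S_{Bj}|+2c_{Bj}(\pi)$. The stated bound has no slack to absorb this excess. The comparison channel is unnecessary. Apply \eqref{qp:compact-reconstruction} directly to $Y=\Gamma_B(X)$ with $S=S_{Bj}$ to get $\|\Gamma_B(X)\|_{W_1}\le|S_{Bj}|\|X\|_1+\|\Tr_{S_{Bj}}\Gamma_B(X)\|_{W_1}$. Combined with the bound above, this yields exactly $2|S_{Bj}|+2c_{Bj}(\pi)$.
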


\begin{proof}
Let $X=X^*$ satisfy $\Tr_jX=0$. If $j\in B$, then $\Tr_BX=0$, so
\eqref{qp:intrinsic-projection-formula} gives $\Gamma_B(X)=0$.
If $j\notin R_B$, the channel is supported away from $j$, and
\Cref{lem:transport} gives $D_{\Gamma_B}(j)=1$.

It remains to consider $j\in E_B$. Since $\Tr X=0$ and $\Gamma_B$
is a channel, $\Gamma_B(X)$ is traceless and Hermitian. Applying
\eqref{qp:compact-reconstruction} with $Y=\Gamma_B(X)$ and $S=S_{Bj}$ gives
\[
 \begin{aligned}
 \|\Gamma_B(X)\|_{W_1}
 &{\le}
 |S_{Bj}|\,\|\Gamma_B(X)\|_1+
 \|\Tr_{S_{Bj}}\Gamma_B(X)\|_{W_1}\\
 &\le |S_{Bj}|\,\|X\|_1+
 \|\Tr_{S_{Bj}}\Gamma_B(X)\|_{W_1},
 \end{aligned}
\]
where the second line uses trace-norm contractivity. It therefore remains to prove
$\|\Tr_{S_{Bj}}\Gamma_B(X)\|_{W_1}\le c_{Bj}(\pi)\|X\|_1$. Write the exterior configuration as $(a,\zeta,\eta)$, with
\[
 a\in[d_j],\qquad
 \zeta\in\Omega_{S_{Bj}\setminus\{j\}},\qquad
 \eta\in\Omega_{E_B\setminus S_{Bj}},
\]
and choose $a_0\in[d_j]$. Every crossing term containing $j$ has
its exterior support contained in $S_{Bj}$. Such terms cancel in the
following difference, while all other terms are independent of $a$:
\[
 W_{a\zeta\eta}(u)-W_{a\zeta\eta'}(u)
 =W_{a_0\zeta\eta}(u)-W_{a_0\zeta\eta'}(u).
\]
Consequently, for each $\zeta$, the commuting family
\[
 \left(\sum_{\eta\in\Omega_{E_B\setminus S_{Bj}}}
 \pi_B^{a\zeta\eta}(u)|\eta\rangle\langle\eta|\right)_{u\in\Omega_B}
\]
has the same joint spectral projections for every $a$; denote them
by $(P_C^\zeta)_C$. Choose one configuration $\eta_{\zeta,C}$ in
each nonzero projection, so that
$P_C^\zeta|\eta_{\zeta,C}\rangle=|\eta_{\zeta,C}\rangle$. Thus, for every $a,u$,
\[
 \sum_\eta\pi_B^{a\zeta\eta}(u)|\eta\rangle\langle\eta|
 =\sum_C\pi_B^{a\zeta\eta_{\zeta,C}}(u)P_C^\zeta.
\]
Define the Hermitian operators
\[
 A_{a,\zeta,C}:=P_C^\zeta
 \langle a,\zeta|\Tr_BX|a,\zeta\rangle P_C^\zeta,
\]
retaining all factors on $R_B^c$. Independence of $P_C^\zeta$ from
$a$ and $\Tr_jX=0$ give
\[
 \sum_aA_{a,\zeta,C}
 =P_C^\zeta\langle\zeta|\Tr_{B\cup\{j\}}X|\zeta\rangle P_C^\zeta=0.
\]
Partial trace, measurement of $(a,\zeta)$, and spectral pinching,
with their outcome labels retained, form a trace-preserving completely
positive map. Hence
\[
 \sum_{a,\zeta,C}\|A_{a,\zeta,C}\|_1\le\|X\|_1.
\]
Using the conditional Gibbs states \eqref{qp:conditional-gibbs-state}
in \eqref{qp:intrinsic-projection-formula} and tracing $S_{Bj}$ yields
\[
 \begin{aligned}
 \Tr_{S_{Bj}}\Gamma_B(X)
 &=\sum_{a,\zeta,C}\pi_B^{a\zeta\eta_{\zeta,C}}\otimes A_{a,\zeta,C}\\
 &=\sum_{a,\zeta,C}
 (\pi_B^{a\zeta\eta_{\zeta,C}}-\pi_B^{a_0\zeta\eta_{\zeta,C}})
 \otimes A_{a,\zeta,C}.
 \end{aligned}
\]
In the following tensor-product estimate, $\|\cdot\|_{W_1}$ is the
quantum transport norm
\eqref{eq:Wdef} on the retained sites $\Lambda\setminus S_{Bj}$,
whereas $W_{H,B}$ is the classical Hamming Wasserstein distance on $B$.
The states $\pi_B^\xi$ in these tensor products are defined by
\eqref{qp:conditional-gibbs-state}.
For any Hermitian operator $A$ on $\Lambda\setminus(B\cup S_{Bj})$,
the operator $(\pi_B^\xi-\pi_B^{\xi'})\otimes A$ is traceless and
Hermitian, so this quantum norm applies even when $A$ is neither
diagonal nor positive.

For a one-site Hamming step $x\sim_i y$ in $B$,
\[
 \Tr_i\bigl[(|x\rangle\langle x|-|y\rangle\langle y|)\otimes A\bigr]=0,
 \qquad
 \frac12\bigl\|(|x\rangle\langle x|-|y\rangle\langle y|)\otimes A\bigr\|_1
 =\|A\|_1.
\]
Thus \Cref{lem:transport} gives quantum $W_1$-norm $\|A\|_1$ for
each such step. Connecting arbitrary $x,y\in\Omega_B$ by
$d_H(x,y)$ one-site steps and applying the triangle inequality yields
\[
 \bigl\|(|x\rangle\langle x|-|y\rangle\langle y|)\otimes A\bigr\|_{W_1}
 \le d_H(x,y)\|A\|_1.
\]
Now let $q(x,y)$ be an optimal classical coupling of $\pi_B^\xi$
and $\pi_B^{\xi'}$, so that $q(x,y)\ge0$ and
\[
 \sum_yq(x,y)=\pi_B^\xi(x),\qquad
 \sum_xq(x,y)=\pi_B^{\xi'}(y).
\]
By \eqref{qp:conditional-gibbs-state}, these marginals give the operator identity
\[
 (\pi_B^\xi-\pi_B^{\xi'})\otimes A
 =\sum_{x,y\in\Omega_B}q(x,y)
 (|x\rangle\langle x|-|y\rangle\langle y|)\otimes A.
\]
Consequently,
\[
 \begin{aligned}
 \|(\pi_B^\xi-\pi_B^{\xi'})\otimes A\|_{W_1}
 &\le\sum_{x,y}q(x,y)
 \bigl\|(|x\rangle\langle x|-|y\rangle\langle y|)\otimes A\bigr\|_{W_1}\\
 &\le\sum_{x,y}q(x,y)d_H(x,y)\|A\|_1\\
 &=W_{H,B}(\pi_B^\xi,\pi_B^{\xi'})\|A\|_1.
 \end{aligned}
\]
The configurations $a\zeta\eta_{\zeta,C}$ and
$a_0\zeta\eta_{\zeta,C}$ agree off $j$, so the definition of $c_{Bj}(\pi)$ implies
\[
 \|\Tr_{S_{Bj}}\Gamma_B(X)\|_{W_1}
 \le c_{Bj}(\pi)\sum_{a,\zeta,C}\|A_{a,\zeta,C}\|_1
 \le c_{Bj}(\pi)\|X\|_1.
\]
Substituting this bound into the estimate at the start of the proof gives
\[
 \|\Gamma_B(X)\|_{W_1}\le(|S_{Bj}|+c_{Bj}(\pi))\|X\|_1.
\]
Dividing by $\|X\|_1/2$ in \eqref{eq:influence} proves
$D_{\Gamma_B}(j)\le2|S_{Bj}|+2c_{Bj}(\pi)$.
\end{proof}

\noindent The next lemma passes from the limiting channels of
\Cref{qp:intrinsic-projection,qp:lift-lemma} to finite-time updates,
with times bounded independently of the volume when all local
interaction strengths, including the on-site terms, are bounded.

\begin{lemma}[Uniform finite-time influence bounds]
\label{qp:compact-lift}
\label{app:perturbation-compactness}
\label{app:perturbation-lift}
Let $H^{\mathrm c}=\sum_Xh_X^{\mathrm c}$ be a classical Hamiltonian
on $\Lambda\subset\mathbb Z^D$ with $\dim\HH_i\le d$ and interaction
range at most $r$. Suppose that
\begin{equation}
 \sup_{i\in\Lambda}\sum_{X\ni i}\|h_X^{\mathrm c}\|\le J,
 \qquad |R_B|\le r_0,\qquad
 c_{Bj}(\pi)\le C_W\quad(j\in E_B),
 \label{qp:bounded-classical-strength}
\end{equation}
for every block $B$, where the sum includes the on-site terms and
$c_{Bj}(\pi)$ is defined in \eqref{eq:classical-conditional-response}.
Set
\[
 q_B(j)=
 \begin{cases}
 0,&j\in B,\\
 2(2r+1)^D+2C_W,&j\in R_B\setminus B,\\
 1,&j\notin R_B.
 \end{cases}
\]
For every $\eta>0$ there is a finite $T>0$, depending only on
$\beta,d,D,r,r_0,J,C_W,\eta$, such that each block admits a time
$0<T_B\le T$ with
\[
 D_{e^{T_B\mathcal K_B(H^{\mathrm c})}}(j)
 \le q_B(j)+\eta\mathbf1_{\{j\in R_B\}}.
\]
\end{lemma}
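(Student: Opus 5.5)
The plan is to compare $e^{t\mathcal K_B(H^{\mathrm c})}$ with its limit $\Gamma_B$ from \Cref{qp:intrinsic-projection}, and then to get a time bound that is uniform in the volume by a compactness argument over the finitely many local block shapes and their bounded coefficients.

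\emph{Step 1: comparison with the limit.} Both $\Phi_t:=e^{t\mathcal K_B(H^{\mathrm c})}$ and $\Gamma_B$ are channels supported on $R_B$. For $X=X^*$ with $\Tr_jX=0$, the triangle inequality and \eqref{eq:localW} give
\[
 \|\Phi_t(X)\|_{W_1}\le\|\Gamma_B(X)\|_{W_1}+r_0\|\Phi_t-\Gamma_B\|_\diamond\|X\|_1 .
\]
Hence
\[
 D_{\Phi_t}(j)\le D_{\Gamma_B}(j)+2r_0\|\Phi_t-\Gamma_B\|_\diamond .
\]
\Cref{qp:lift-lemma}, together with $|S_{Bj}|\le(2r+1)^D$ and $c_{Bj}(\pi)\le C_W$, gives $D_{\Gamma_B}(j)\le q_B(j)$. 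For $j\notin R_B$, \Cref{lem:transport} gives $D_{\Phi_t}(j)=1$ directly. These bounds hold for arbitrary entangled inputs and any exterior, because the diamond norm already includes references. For a single block, it therefore suffices to pick $t$ with $\|\Phi_t-\Gamma_B\|_\diamond\le\eta/(2r_0)$, which is possible by \Cref{qp:intrinsic-projection}.

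\emph{Step 2: uniformity.} The generator $\mathcal K_B(H^{\mathrm c})$ restricted to $R_B$ depends only on a finite amount of local data $\theta$. This data consists of the combinatorial shape of $(B,R_B)$ and of the interaction supports meeting $R_B$, of which there are finitely many up to isomorphism since $|R_B|\le r_0$, $d_i\le d$ and the range is $r$. It also includes the diagonal coefficients of those terms, which lie in a compact set by the bound $J$. The maps $\theta\mapsto\sigma_\beta^B$, $O_B=e^{-\beta W_{\partial B}^{\mathrm c}/4}$, $S_B$ and hence $\mathcal K_B(\theta)$ are continuous, with $G_B=0$. So for fixed $t$ the map $\theta\mapsto e^{t\mathcal K_B(\theta)}$ is diamond-continuous.

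The main obstacle is that $\Gamma_B(\theta)$ is \emph{not} continuous in $\theta$, since its pinching projections $P_C$ jump when conditional laws coincide. The convergence rate also degenerates, through $\gamma_{\xi\eta}$. I would therefore avoid continuity of $\Gamma_B$ and argue pointwise.

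For each $\theta_0$ in the closure $\mathcal C$ of the set of realizable data, choose $t_{\theta_0}$ with $\|e^{t_{\theta_0}\mathcal K_B(\theta_0)}-\Gamma_B(\theta_0)\|_\diamond\le\eta/(4r_0)$. Then choose an open neighbourhood $U_{\theta_0}$ on which $\|e^{t_{\theta_0}\mathcal K_B(\theta)}-e^{t_{\theta_0}\mathcal K_B(\theta_0)}\|_\diamond<\eta/(4r_0)$. For $\theta\in U_{\theta_0}$, the argument of Step 1 with $\Gamma_B(\theta_0)$ in place of $\Gamma_B(\theta)$ gives
\[
 D_{e^{t_{\theta_0}\mathcal K_B(\theta)}}(j)\le D_{\Gamma_B(\theta_0)}(j)+\eta .
\]
It remains to know $D_{\Gamma_B(\theta_0)}(j)\le q_B(j)$ at limit points $\theta_0$. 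The data $\theta_0$ is still a classical Hamiltonian on $R_B$, so \Cref{qp:lift-lemma} applies; I would also check that $c_{Bj}$ is continuous in $\theta$, because conditional laws depend continuously on the coefficients and $W_{H,B}$ is continuous. With that, the hypothesis $c_{Bj}\le C_W$ passes to $\mathcal C$.

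By compactness, finitely many $U_{\theta_0}$ cover $\mathcal C$. Set $T$ to be the maximum of the corresponding $t_{\theta_0}$. This $T$ depends only on $\beta,d,D,r,r_0,J,C_W,\eta$, and each actual block receives some $T_B=t_{\theta_0}\le T$. The delicate point to verify is precisely the use of the fixed reference $\Gamma_B(\theta_0)$ instead of $\Gamma_B(\theta)$, since $\Gamma_B$ itself need not vary continuously.
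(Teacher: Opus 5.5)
Your proposal is correct and follows essentially the same route as the paper's proof. The paper also chooses, for each admissible local coefficient set $H'$, a time $t_{H'}$ with $e^{t_{H'}\mathcal K_B(H')}$ close to $\Gamma_B(H')$, and it uses diamond-norm continuity of the finite-time channel in the coefficients. It then compares against the fixed reference $\Gamma_B(H')$ rather than $\Gamma_B(H'')$, exactly the subtlety you flag, via \eqref{eq:localW}, and concludes by a finite subcover of the compact set of admissible data, taking the maximum time over the finitely many support configurations.
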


\begin{proof}
Fix $B\subseteq R_B$, the site dimensions and the prescribed
interaction supports in $R_B$. Let $\mathfrak H_B$ consist of the
diagonal interactions $H'=\sum_{X\subseteq R_B}h_X'$ on these
supports satisfying \eqref{qp:bounded-classical-strength}, with
$\pi'$ the Gibbs distribution of $H'$. This set is compact in the
interaction terms: they are bounded, and the conditional laws and
$c_{Bj}(\pi')$ depend continuously on them. Moreover,
\[
 \|H'_B\|\le r_0J,\qquad
 \sigma_{\beta,H'}^B\ge d^{-r_0}e^{-2\beta r_0J}\one_B>0.
\]
Thus \Cref{qp:intrinsic-projection,qp:lift-lemma} apply to every
$H'\in\mathfrak H_B$. Denote its limiting block channel by
$\Gamma_B(H')$. For each $H'$, choose $t_{H'}>0$ such that
\[
 \|e^{t_{H'}\mathcal K_B(H')}-\Gamma_B(H')\|_\diamond
 <\frac{\eta}{4r_0}.
\]
By continuity of the generator formula \eqref{qg:generator} and
the matrix exponential, there exists $\delta_{H'}>0$ such that
\[
 \begin{gathered}
 H''=\sum_Xh_X''\in\mathfrak H_B,\quad 
 \sum_{X\subseteq R_B}\|h_X''-h_X'\|<\delta_{H'}
 \Longrightarrow
 \|e^{t_{H'}\mathcal K_B(H'')}-e^{t_{H'}\mathcal K_B(H')}\|_\diamond
 <\frac{\eta}{4r_0}.
 \end{gathered}
\]
The sum runs over the fixed prescribed interaction supports.
For every such $H''$, the triangle inequality gives
\[
 \begin{aligned}
 \|e^{t_{H'}\mathcal K_B(H'')}-\Gamma_B(H')\|_\diamond
\le\|e^{t_{H'}\mathcal K_B(H'')}-e^{t_{H'}\mathcal K_B(H')}\|_\diamond+\|e^{t_{H'}\mathcal K_B(H')}-\Gamma_B(H')\|_\diamond
 <\frac{\eta}{2r_0}.
 \end{aligned}
\]
For $j\in R_B$ and any nonzero Hermitian $X$ on $\HH_\Lambda$
with $\Tr_jX=0$, \eqref{eq:localW} gives
\[
 \|(e^{t_{H'}\mathcal K_B(H'')}-\Gamma_B(H'))(X)\|_{W_1}
 \le |R_B|\,
 \|e^{t_{H'}\mathcal K_B(H'')}-\Gamma_B(H')\|_\diamond\|X\|_1.
\]
Apply the triangle inequality, divide by $\|X\|_1/2$, and take the
supremum over these $X$ as in \eqref{eq:influence}. Since
$|R_B|\le r_0$, this yields
\[
 \begin{aligned}
 D_{e^{t_{H'}\mathcal K_B(H'')}}(j)
 &\le D_{\Gamma_B(H')}(j)
   +2r_0\|e^{t_{H'}\mathcal K_B(H'')}-\Gamma_B(H')\|_\diamond\\
 &<q_B(j)+\eta\qquad(j\in R_B).
 \end{aligned}
\]
The open balls $\sum_X\|h_X''-h_X'\|<\delta_{H'}$, centred at
$H'\in\mathfrak H_B$, cover $\mathfrak H_B$. Choose a finite subcover
with centres $H'_1,\ldots,H'_m$. Since $H^{\mathrm c}_{R_B}\in\mathfrak H_B$,
it belongs to the ball centred at some $H'_k$. Apply the preceding
influence bound with
\[
 H'=H'_k,\qquad H''=H^{\mathrm c}_{R_B},\qquad
 T_B:=t_{H'_k}\le\max_{1\le a\le m}t_{H'_a}.
\]
By \Cref{rem:classicalrates},
$\mathcal K_B(H^{\mathrm c}_{R_B})=\mathcal K_B(H^{\mathrm c})$,
so this choice gives the required bound for the original Hamiltonian.
There are only finitely many choices of supports, site dimensions and
$B\subseteq R_B$, up to relabelling of at most $r_0$ sites.
Taking the maximum of the selected times over these choices gives
$T$ with the stated dependence. Outside $R_B$, the influence is one
by \Cref{lem:transport}.
\end{proof}

\noindent Combining the preceding lemmas, we now show that classical
strong spatial mixing implies the quantum DS condition for sufficiently
large blocks and suitably chosen evolution times.

\begin{proposition}[Classical strong spatial mixing implies quantum DS]
\label{qp:classical-qds}
Consider a family of finite-range classical Hamiltonians
$H^{\mathrm c}=\sum_Xh_X^{\mathrm c}$ on $\Lambda\subset\mathbb Z^D$
satisfying the uniform strong spatial mixing condition of
\Cref{def:classicalDS}. Assume that
\[
 \sup_{i\in\Lambda}\sum_{X\ni i}\|h_X^{\mathrm c}\|\le J
\]
with the same finite $J$ for all Hamiltonians and volumes in the
family. Fix bounds $d$ on the
one-site dimensions and $r$ on the interaction range.
There are a block scale $\ell$, a finite $T>0$, and times
$0<T_B\le T$ for the labelled blocks \eqref{eq:classical-blocks},
with rates $a_B=\ell^{-D}$, such that
\begin{equation}
 \Phi_B^{\mathrm c}=e^{T_B\mathcal K_B(H^{\mathrm c})},\qquad
 \min_j\sum_Ba_B[1-D_{\Phi_B^{\mathrm c}}(j)]\ge\frac12.
 \label{qp:classical-reference-certificate}
\end{equation}
The choices of $\ell$ and
the upper bound $T$ depend only on $L_0,C_{\rm DS},a_{\rm DS}$
from \Cref{def:classicalDS} and on $\beta,J,d,D,r$.
Hence the family $(\Phi_B^{\mathrm c})_B$, with rates $a_B=\ell^{-D}$,
satisfies the quantum DS condition \eqref{eq:qds} with constant $1/2$.
\end{proposition}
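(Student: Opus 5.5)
The plan is to combine the finite-time influence bound of \Cref{qp:compact-lift} with the block-counting argument behind \eqref{eq:classical-block-count}. Parameters are chosen in a fixed order: first a tolerance $\eta$, then the block scale $\ell$, then the time bound $T$ supplied by the lemma. This order avoids circularity. The lemma's $T$ may depend on $r_0$, hence on $\ell$, but $\ell$ itself will be chosen using only $\eta$ and quantities that do not involve $T$.

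\emph{Step 1: verifying the hypotheses of \Cref{qp:compact-lift}.} Take the labelled blocks $B_x$ of \eqref{eq:classical-blocks}. Since $\Lambda\in\mathscr V_{L_0}$, each nonempty $B_x$ is a union of $L_0$-cubes contained in $\Lambda$, so $B_x\in\mathscr V_{L_0}$. Applying \eqref{eq:DSwasserstein} with $A=U=B$ then gives $c_{Bj}(\pi)\le C_W$ for every $j\notin B$, and in particular for every $j\in E_B$. Here $C_W$ depends only on $C_{\rm DS},a_{\rm DS},D$. Next, $R_B=B^{\le r}$ satisfies $|R_B|\le r_0:=(\ell L_0+2r+1)^D$. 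The interaction bound $J$ is assumed. The lemma therefore gives, for any fixed $\eta>0$, times $0<T_B\le T$ with
\[
D_{\Phi^{\mathrm c}_B}(j)\le q_B(j)+\eta\mathbf 1_{\{j\in R_B\}} .
\]

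\emph{Step 2: summing the influences at a fixed site $j$.} With $a_B=\ell^{-D}$, split the blocks into three classes.
\begin{itemize}
\item Blocks containing $j$. There are exactly $m=\ell^D$ of them, and each contributes at least $\ell^{-D}(1-\eta)$. Their total is at least $1-\eta$.
\item Blocks with $j\notin R_B$. Here $q_B(j)=1$, so each contributes exactly $0$.
\item Blocks with $j\in R_B\setminus B$. Each contributes at least $-\ell^{-D}\bigl(2(2r+1)^D+2C_W+\eta\bigr)$. Such a block contains a site within distance $r$ of $j$. As in the derivation of \eqref{eq:classical-block-count}, with $h=\lceil r/L_0\rceil+1$, their starting labels lie in a coarse cube of side $\ell+2h$ minus the $\ell^D$ labels whose blocks contain $j$. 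So there are at most $(\ell+2h)^D-\ell^D$ of them.
\end{itemize}
Adding the three classes gives
\[
\sum_Ba_B[1-D_{\Phi_B^{\mathrm c}}(j)]\ge 1-\eta-\bigl(2(2r+1)^D+2C_W+\eta\bigr)\bigl[(1+2h/\ell)^D-1\bigr].
\]

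\emph{Step 3: choosing the parameters.} Take $\eta=1/4$. Then choose $\ell$, depending only on $D,r,L_0,C_W$, so large that the bracketed term times its prefactor is at most $1/4$. This yields \eqref{qp:classical-reference-certificate} with constant $1/2$, uniformly in $j$ and $\Lambda$. With $\ell$ fixed, $r_0$ is fixed, and \Cref{qp:compact-lift} produces $T$ depending only on $\beta,d,D,r,r_0,J,C_W$, hence only on the stated data. The quantum DS conclusion \eqref{eq:qds} follows because each site lies in exactly $m=\ell^D$ blocks.

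I expect the main obstacle to be the boundary count in Step 2. The quantum influence on $E_B$ carries the large additive constant $2|S_{Bj}|$, not merely $c_{Bj}$. This constant is harmless only because the number of boundary blocks is $o(\ell^D)$. I must therefore check that every block with $j\in R_B\setminus B$ is counted within the $(\ell+2h)^D-\ell^D$ labels, and that repeated labels near $\partial\Lambda$ are handled consistently with the convention in \Cref{def:quantumDS}.
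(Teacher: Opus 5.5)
Your proposal is correct and matches the paper's proof essentially step for step: $c_{Bj}(\pi)\le C_W$ from \eqref{eq:DSwasserstein} with $A=U=B$, then \Cref{qp:compact-lift} with $r_0$ of order $(\ell L_0+2r)^D$, then the three-way split of blocks, and finally the $(1+2h/\ell)^D-1$ count of boundary blocks. The differences are cosmetic. You fix $\eta=1/4$ before choosing $\ell$, whereas the paper chooses $\ell$ first and then $\eta=[4(1+2h/\ell)^D]^{-1}$; either order avoids circularity because $\ell$ never depends on $T$. You also bound each boundary block by $-\ell^{-D}(2(2r+1)^D+2C_W+\eta)$ instead of keeping the $+1$ from $1-D$, which only makes your bound slightly weaker.
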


\begin{proof}
Use the neighbourhoods $R_B=B^{\le r}$ chosen above, and put
$h=\lceil r/L_0\rceil+1$. Equation~\eqref{eq:DSwasserstein} gives
$c_{Bj}(\pi)\le C_W$. Since $|R_B|\le(\ell L_0+2r)^D$,
\Cref{qp:compact-lift} gives, for any $\eta>0$, times $T_B\le T$
such that
\[
 D_{\Phi_B^{\mathrm c}}(j)\le
 \begin{cases}
 \eta,&j\in B,\\
 2(2r+1)^D+2C_W+\eta,&j\in R_B\setminus B,\\
 1,&j\notin R_B.
 \end{cases}
\]
Each site belongs to $\ell^D$ labelled blocks in
\eqref{eq:classical-blocks}, and the normalized number of sets
$R_B\setminus B$ containing it is at most $(1+2h/\ell)^D-1$.
With $\chi=\sup_j\sum_{B:j\in R_B}a_B\le(1+2h/\ell)^D$, summing
the preceding influence bounds gives
\begin{equation}
 \min_j\sum_Ba_B[1-D_{\Phi_B^{\mathrm c}}(j)]
 \ge 1-[2(2r+1)^D+2C_W-1][(1+2h/\ell)^D-1]-\eta\chi.
 \label{qp:ssm-choice}
\end{equation}
Choose $\ell$ so the boundary term is at most $1/4$, then take
$\eta=[4(1+2h/\ell)^D]^{-1}$. The last term is at most $1/4$,
which proves \eqref{qp:classical-reference-certificate}.
\end{proof}

}

\section{One-dimensional mixing at every finite temperature}
\label{q1:section}

Throughout this section, $\Lambda=[N]=\{1,\ldots,N\}$ is a
chain with distance $\dist(i,j)=|i-j|$ and local Hilbert spaces
$\HH_i\simeq\CC^{d_i}$, where $2\le d_i\le d$.
We consider the nearest-neighbour Hamiltonian
\begin{equation}
 H=\sum_{i=1}^{N}V_i+\sum_{i=1}^{N-1}W_i.
 \label{q1:H}
\end{equation}
Here $V_i=V_i^*$ acts on site $i$, and $W_i=W_i^*$ acts on
sites $i,i+1$. The on-site terms $V_i$ may have arbitrary finite norms.
In particular,
$\|W_i\|\le J_{\mathrm{int}}$ for every $1\le i<N$.
Constants denoted by $C_d,A_d$ may change from line to line and depend
only on $d$; $c,C$ denote universal positive constants.

      The main result of  this section,
\Cref{q1:main},
 establishes rapid
mixing of the smoothed heat-bath dynamics  at every finite inverse temperature.
Our Gibbs sampler is defined as follows: for $\ell\ge2$, consider the labelled family of blocks
\[
 \mathscr W_\ell=(B_a)_{a=2-\ell}^{N},\qquad
 B_a=[a,a+\ell-1]\cap[N].
\]
We retain distinct labels when intervals coincide. Every site belongs
to exactly $\ell$ labelled blocks.
For an evolution time $T>0$, define the block channels and their dynamics by
\begin{equation}
 \Phi_B=e^{T\mathcal K_B},\qquad
 \mathcal L=\frac{1}{\ell}\sum_{B\in\mathscr W_\ell}(\Phi_B-\id).
 \label{q1:outer}
\end{equation}

  \begin{theorem}[All-temperature chain mixing]
 \label{q1:main}

  For every $\beta>0$, there are a block length $\ell\ge2$ and an
evolution time $T>0$, depending only on $\beta,J_{\mathrm{int}},d$,
such that $(\Phi_B)_{B\in\mathscr W_\ell}$ satisfies the quantum DS
condition
\eqref{eq:qds}
with $\kappa=1/2$.
These parameters can be chosen with
\begin{equation}
 \ell\le C_d(1+\beta J_{\mathrm{int}})^8e^{8\beta J_{\mathrm{int}}},\qquad
 T\le C_de^{2\beta J_{\mathrm{int}}}\log(2+\beta J_{\mathrm{int}}).
 \label{q1:parameters}
\end{equation}
 Then $\sigma_{\beta,H}$ is the unique stationary state of $\mathcal L$,
and for every state $\rho$ and $t\ge0$,
\begin{equation}
 \|e^{t\mathcal L}(\rho)-\sigma_{\beta,H}\|_1
 \le 2N e^{-t/2}.
 \label{q1:mixing}
\end{equation}

 \end{theorem}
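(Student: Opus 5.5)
The final two assertions, uniqueness of the stationary state and \eqref{q1:mixing}, follow from \Cref{thm:qds} once we have \eqref{eq:qds} with $\kappa=1/2$. There $a_B=1/\ell$ and $a=\sum_Ba_B=(N+\ell-1)/\ell$, and the bound $\|e^{t\mathcal L}\rho-\sigma\|_1\le 2Ne^{-\kappa t}$ appears inside that proof. So the work is to bound, for each site $j$, the influences $D_{\Phi_B}(j)$ over the $\ell$ blocks containing $j$ and over the blocks near $j$ that do not contain it. The target is
\[
 \frac1\ell\sum_{B}\bigl[D_{\Phi_B}(j)-1\bigr]\le-\tfrac12 .
\]
Far from $B$ the influence is essentially $1$; the only error is the Lieb--Robinson tail of the quasi-local channel.

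\textbf{Localization.} I first replace $O_B$, $S_B$, $G_B$ by versions built from $H_\Omega$, with $\Omega$ a collar of width $q$ around the two boundary bonds of $B$. I use the ODE \eqref{qg:ODE} together with \Cref{qg:transform-locality} (for $\mathcal F$ and for $\mathcal G$) and Gr\"onwall. The resulting generator $\widetilde{\mathcal K}_B$ satisfies $\|\mathcal K_B-\widetilde{\mathcal K}_B\|_\diamond\le C_d e^{C\beta J_{\mathrm{int}}}\log(2+\beta J_{\mathrm{int}})e^{-\lambda q}$. It is supported on $B^{\le q}$ up to the dressing near the two ends. The channels then differ by at most $T$ times this quantity, and \eqref{eq:localW} converts this into an influence error of at most $|B^{\le q}|$ times it. Choosing $q\sim (1+\beta J_{\mathrm{int}})\log(\ell T\cdots)$ makes this error summable and small.

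\textbf{Interior sites.} Take $j\in B$ at distance more than $q$ from $\partial B$. The localized dressing operators act trivially on $j$. Use the Duhamel identity of \Cref{lem:trajectories}, $\Phi_T=\mathcal N_T+\int_0^T\mathcal N_{T-t}\mathcal T_B\Phi_t\,dt$, in the form where $\mathcal T_B$ acts after the no-jump part: $\Phi_T=\mathcal N_T+\int\Phi_{T-t}\mathcal T_B\mathcal N_t$. Here $\mathcal T_B\mathcal N_t(X)$ contains $\Tr_B$ of something with $\Tr_jX=0$. But $Z_t$ need not act trivially on $j$, because $S_B$ and $G_B$ are supported on all of $B$ ($\mathcal R_B^*$ spreads). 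I would therefore check that $\Tr_B(O^{-1}Z_tXZ_t^*O^{-*})$ still vanishes. If it does not, I would instead use the trace-norm bound together with clustering of $\sigma_\beta^B$ (\Cref{qc:clustering}) to show that the jump term transmits only an influence decaying in the distance from $j$ to the boundary. The no-jump term has norm at most $\|Z_T\|^2\le e^{-Ts}$ with $s\ge e^{-2\beta J_{\mathrm{int}}}$ by \Cref{qg:loss}, since $w_B\le2J_{\mathrm{int}}$. Its $W_1$ cost is at most its trace norm times the support size, so I take the minimum with the trivial bound. This gives an interior influence of about $e^{-Te^{-2\beta J_{\mathrm{int}}}}$ plus a clustering term of order $e^{-c\,\dist(j,\partial B)/\xi}$. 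The choice $T\sim e^{2\beta J_{\mathrm{int}}}\log(2+\beta J_{\mathrm{int}})$ makes the first small.

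\textbf{Boundary and exterior sites.} For the $O(q+\xi)$ blocks in which $j$ is near an endpoint, or just outside, I use crude bounds. These are $D\le |B^{\le q}|$-type bounds, improved by clustering so that they are independent of $\ell$: the response of the block state to a change at $j$ decays away from $j$. Each such block contributes $O(\mathrm{poly}(1+\beta J_{\mathrm{int}})e^{C\beta J_{\mathrm{int}}})$, and there are $O(\text{correlation length})$ of them. Dividing by $\ell$ and choosing $\ell$ as in \eqref{q1:parameters} makes the total at most $1/4$. The interior blocks then contribute at most $-1+1/4$.

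\textbf{Main obstacle.} The hard step is the interior/near-boundary influence estimate, i.e.\ what the paper calls \Cref{q1:influence}. The difficulty is showing that a difference at $j$, possibly entangled with the exterior, propagates to the boundary only weakly under the jump term. This must be done with constants independent of $\ell$ and of the on-site norms. I would handle it by combining clustering of the finite-interval Gibbs states with the minimum against trace-norm contractivity before summing over distances.
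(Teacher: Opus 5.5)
Your outline follows the paper's architecture: collar localization, interior decay of the no-jump part, an $\ell$-independent boundary term from clustering with clipping, block counting, then \Cref{thm:qds}. However, the estimate that carries the theorem, the middle case of \Cref{q1:influence}, is only asserted, and your sketch of it points at the wrong mechanism. The reset state $\sigma_\beta^B$ does not depend on the exterior, so nothing ``responds to the change at $j$''. What decays is the correlation, in the fixed state $\sigma_\beta^B$, between the collar and interior intervals; after the last jump, all $X$-dependence sits in the collar.

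Concretely, the paper proceeds in four steps:
\begin{enumerate}
\item It completes the channel to $\widehat\Phi_T=\mathcal J_T+\mathcal R_{I_u(B)}\mathcal N_T$, at transport cost $2|B|e^{-M^{-4}T}$, where $M^{-4}=e^{-2\beta J_{\mathrm{int}}}$.
\item It writes $\mathcal J_T$ via the second Duhamel identity, so that after the final reset only the collar operators $L_{T-t}=Z_{T-t}O_{B,u}$ act.
\item It uses $\Tr_{B^{\le u+1}}X=0$ to show that $\Tr_{B^{\le u+1}\setminus D}\widehat\Phi_T(X)$ is exactly an integral of $(\sigma_\beta^B)_{C_{\rm in}\cup D}-(\sigma_\beta^B)_{C_{\rm in}}\otimes(\sigma_\beta^B)_D$ pushed through collar-supported maps. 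This is bounded by \Cref{q1:two-collars} at cost $d^{4(u+1)}$.
\item It telescopes the interior alternately from both ends, clipping each term at $\|X\|_1$.
\end{enumerate}
This also shows that your claim of an $\ell$-independent boundary contribution is false. There is a term $4|B|e^{-M^{-4}T}$; it is affordable only because at most $4(u+1)$ blocks have $j\in C_u(B)$, so $T$ need only beat $\log u$. Without this lemma and the explicit choices $u\asymp(1+\beta J_{\mathrm{int}})^2$, $\ell=\lceil32(u+1)A_u\rceil$, $T=M^4\log(8[1+16(u+1)])$, neither the DS inequality nor \eqref{q1:parameters} is established.

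Your concern about the interior case is unfounded, and the fallback you propose would cost you the stated bound on $T$. Since $\mathcal R_B^*(Y)=\one_B\otimes\Tr_B[(\sigma_\beta^B\otimes\one)Y]$ acts as the identity on $B$, the localized loss $S_{B,u}=O_{B,u}^{-*}\mathcal R_B^*(O_{B,u}^*O_{B,u})O_{B,u}^{-1}$ is supported on $C_u(B)$. So is $G_{B,u}=\mathcal G_{H_{C_u(B)}}(S_{B,u})$, whose approximation error is controlled in \Cref{q1:coherent-locality}. Hence:
\begin{enumerate}
\item $Z_t$ acts trivially on $I_u(B)$, so $\Tr_B(O_{B,u}^{-1}Z_tXZ_t^*O_{B,u}^{-*})=0$.
\item The first Duhamel identity then gives $\Phi_T(X)=\mathcal N_T(X)$ with $\Tr_j\mathcal N_T(X)=0$.
\item The one-site identity of \Cref{lem:transport} gives interior influence at most $e^{-M^{-4}T}$, with no clustering term.
\end{enumerate}
Your ``trace norm times support size'' bound would instead put a factor of order $\ell$ on each of the $\ell$ blocks containing $j$. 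That forces $T\gtrsim e^{2\beta J_{\mathrm{int}}}\log\ell$, which for $\ell$ of order $e^{8\beta J_{\mathrm{int}}}$ exceeds the bound on $T$ in \eqref{q1:parameters}.

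Two further points on the localization step. First, \eqref{eq:localW} requires both channels to be supported on a common finite region, which $\Phi_B$ is not; you need the telescoping over radii of \Cref{qg:chain-transport}. Second, a single radius $q$ for all $N+\ell-1$ blocks leaves a total error proportional to $N$. The paper instead compares each block whose boundary bonds lie at distance $\delta_B(j)>u$ from $j$ with $\Phi_{B,\delta_B(j)-1}$, for which $j$ is interior or outside. This makes the errors summable uniformly in $N$.
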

\noindent  To establish Theorem \ref{q1:main}, we combine   decay of correlations recalled in
Section \ref{qc:section}  with locality bounds for the
block dynamics, proved in Section~\ref{qg:collars},  to verify the
quantum DS condition. To obtain explicit
temperature dependence, we use  trace-norm contractivity    to bound
each marginal error before summing over distances. We choose the evolution time $T$
large enough to suppress the residual interior influence, independently
of the block length $\ell$; we then increase $\ell$ until the total
interior contraction dominates the boundary influence .

\subsection{
 Decay of correlations with arbitrary on-site terms }
\label{qc:section}

For a state $\rho$ and observables $A,B$, we define their
covariance by
\[
 \Cov_\rho(A,B):=\Tr(\rho AB)-\Tr(\rho A)\Tr(\rho B).
\]
Exponential clustering of covariance for translation-invariant infinite quantum
chains goes back to Araki \cite{Araki1969}; later extensions of
complex-time locality appear in \cite{PGPH}.
Our mixing proof below needs a summable covariance profile valid in every
interval restriction. The following theorem combines QBP \cite{HastingsQBP} with the
inclusion--exclusion strategy of
Kimura--Kuwahara \cite{KKclustering} to obtain such a profile with
explicit temperature dependence. Full proofs are postponed to Appendix~\ref{app:clustering}.

\begin{theorem}
\label{qc:clustering}

There are universal constants
$C_*,c_*>0$ such that, for every $0<\beta<\infty$ and observables
$A,B$ supported on disjoint intervals at distance $R$,
\begin{equation}
 \abs{\Cov_{\sigma_\beta}(A,B)}
 \le C_*\norm{A}\norm{B}
       \exp\!\left[-c_*\frac{e^{-4\beta J_{\mathrm{int}}}}{1+\beta J_{\mathrm{int}}}\sqrt
R\right].
 \label{qc:covariance}
\end{equation}
The constants do not depend on $N$, the lengths of the observable
supports, or the local Hilbert-space dimensions.
The same estimate, with the same constants, holds for the Gibbs state
of the Hamiltonian restricted to any subinterval.
\end{theorem}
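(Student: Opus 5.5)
The plan follows the Kimura--Kuwahara inclusion--exclusion strategy, with QBP supplying the basic factorization estimate. Let $A$ be supported on an interval $I_A$ and $B$ on $I_B$, with a gap of length $R$ between them. Split the gap into $n\approx\sqrt R$ consecutive cuts $b_1,\dots,b_n$ (single bonds), spaced at distance roughly $\sqrt R$. For a set $S$ of cut bonds, write $H_S=H-\sum_{b\in S}W_b$.

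\textbf{Step 1: inclusion--exclusion identity.} Let $Z_S=\Tr e^{-\beta H_S}$. Inclusion--exclusion over the set of removed bonds expresses the covariance, or equivalently the log of a generating function $\Tr e^{-\beta H}e^{xA}e^{yB}$ differentiated in $x,y$, as an alternating sum over $S\subseteq\{b_1,\dots,b_n\}$. When $S\neq\varnothing$, the Gibbs operator $e^{-\beta H_S}$ factorizes across the first removed bond. Hence every term in which both $A$ and $B$ sit on opposite sides of some removed bond gives a zero covariance contribution. What remains must be controlled by products of ``bond-reinsertion'' factors.

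\textbf{Step 2: reinsertion factor.} Reinserting a single bond $W_b$ is done through the QBP relation of \Cref{qg:congruence}, applied with $B$ equal to a half-chain, so that $W_{\partial B}=W_b$. This gives $e^{-\beta H_{S}}=O\,e^{-\beta H_{S\cup\{b\}}}O^*$ with $O=O(\beta)$ (full temperature) and $\|O^{\pm1}\|\le e^{\beta J_{\mathrm{int}}/2}$. By \Cref{qg:transform-locality}, $O$ is quasi-local around $b$, with decay length $\propto 1+\beta J_{\mathrm{int}}$, uniformly in the on-site terms $V_i$, since these are handled in the interaction picture.

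From this I expect each expectation ratio to lie within $1-e^{-c\beta J_{\mathrm{int}}}$-type factors. More precisely, the conditional effect of cutting a bond on an observable far away is bounded by a ``mixing per cut'' parameter $1-\epsilon$ with $\epsilon\gtrsim e^{-4\beta J_{\mathrm{int}}}$. This follows from a Doeblin-type argument: the operator $O^*O$ satisfies $e^{-\beta J}\le O^*O\le e^{\beta J}$ up to localization errors, so conditioning on one side only reweights by a bounded density.

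\textbf{Step 3: iterate.} Iterating over the $n$ cuts, the correlation transmitted through the gap is multiplied at each cut by at most $(1-\epsilon)$. The localization errors of the $O$'s, each of size $e^{-\lambda\sqrt R}$, are added at each step. This yields
\[
|\Cov|\le C\|A\|\|B\|\bigl[(1-\epsilon)^{\sqrt R}+\sqrt R\,e^{-c\sqrt R/(1+\beta J_{\mathrm{int}})}\bigr],
\]
which combines into \eqref{qc:covariance} with rate $c_*e^{-4\beta J_{\mathrm{int}}}/(1+\beta J_{\mathrm{int}})$. Since every estimate uses only bond norms and Lieb--Robinson constants that are independent of the $V_i$ and of $d$, the same proof applies verbatim to any subinterval restriction.

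\textbf{Main obstacle.} I expect the main obstacle to be Step 2--3: turning the QBP congruence into a genuine per-cut contraction factor $1-\epsilon$ uniform in on-site fields and in the outside state, since $O$ is not exactly local. I would first prove a two-sided bound of the form $e^{-2\beta J}\sigma_{S\cup b}\le \Ad_O$-twisted states, then use it as a minorization on the reduced states across the cut, absorbing the quasi-locality tails of $O$ into the $\sqrt R$ spacing.
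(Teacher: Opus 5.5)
Your outline has the right ingredients (about $\sqrt R$ cuts, inclusion--exclusion, QBP reinsertion, localization). But there is a genuine gap: the step that carries the theorem, turning these into a product of per-cut factors each strictly below one, is missing, and the mechanism you sketch for it does not work as stated. First, the covariance is not linear in the Gibbs operator. An alternating sum of $\Cov_{\sigma_S}(A,B)$, or of $\partial_x\partial_y\log\Tr(e^{-\beta H_S}e^{xA}e^{yB})$, is therefore vacuous: every term with a cut vanishes and nothing factorizes. The paper first linearizes by doubling the chain. It writes $\Cov_{\sigma_\beta}(A,B)=\Tr(\Psi e^{\beta\mathbb K})/\mathbb Z$ with $\Psi=(A\otimes I)(B\otimes I-I\otimes B)$ and a normalization $\mathbb Z$ common to all terms; a removed cut then gives $\Tr(\Psi e^{\beta\mathbb K_\lambda})=0$ by swap antisymmetry. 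Only for this linear functional can one substitute $e^{\beta\mathbb K_\lambda}\approx\widetilde P_\lambda e^{\beta\mathbb K_0}\widetilde P_\lambda^*$. Here the localized reinsertion operators $Q_j$ have disjoint supports in the gap and commute with $\Psi$, so the distributive law collapses the $2^{n}$-term alternating sum into $\Tr[\Psi\prod_j(Q_j^*Q_j-I)e^{\beta\mathbb K_0}]$.

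Second, the contraction factor does not come from a Doeblin-type minorization. For a noncommuting Gibbs state there is no conditional law to minorize and no Markov structure along the cuts to iterate, which is why you found this to be the obstacle. Instead it comes from an operator inequality that needs a sign. Your two-sided bound $e^{-\beta J_{\mathrm{int}}}\le O^*O\le e^{\beta J_{\mathrm{int}}}$ leaves $O^*O-I$ of indefinite sign and of relative size up to $e^{\beta J_{\mathrm{int}}}-1$, which need not be below one. The paper shifts to $K=-H+(N-1)J_{\mathrm{int}}I$, whose edge terms are positive, so QBP for adding a positive edge gives $Q_j^*Q_j\ge I$. Hence $0\le Q_j^*Q_j-I\le(1-e^{-4\beta J_{\mathrm{int}}})Q_j^*Q_j$. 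Positivity of these commuting, disjointly supported products then bounds the localized sum by $2q^{n}\Tr[\prod_jQ_j^*Q_je^{\beta\mathbb K_0}]$. Trace monotonicity, $\Tr e^{\beta\mathbb K_\lambda}\le\mathbb Z$ (again from positivity), controls the relative errors.

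Finally, localization errors are not merely added. Each is amplified by $e^{2\beta J_{\mathrm{int}}(n-1)}$ through the other reinsertion factors and then by $2^{n}$ in the alternating sum, giving roughly $2^{n}n\,e^{4\beta J_{\mathrm{int}}n}e^{-\mu r}$. With $n\approx\sqrt R$ cuts at spacing $\sqrt R$ this does not decay. One needs $r\gtrsim(1+\beta J_{\mathrm{int}})^2n$, hence $n\sim\sqrt R/(1+\beta J_{\mathrm{int}})$. This is where both the $\sqrt R$ and the factor $1/(1+\beta J_{\mathrm{int}})$ in the rate come from.
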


\begin{remark}[Exponential decay and dependence on on-site terms]
\label{qc:two-profiles}
The exponent $\sqrt R$ in \eqref{qc:covariance} comes from the
localization errors in our proof. In \eqref{qc:cut-bound}, these errors
are amplified exponentially in the number of bonds removed between
the observable supports. We control them by taking the separation
proportional to this number, so only $O(\sqrt R)$ such bonds fit in
the intervening region. This does not rule out exponential decay
with constants independent of the on-site norms.
For uniformly bounded local terms, exponential clustering is
known for translation-invariant infinite chains \cite{Araki1969}
and for finite chains without translation invariance
\cite[Corollary~I.2]{BC}. Applying the latter result after including
the on-site terms in adjacent bonds gives constants depending on
$\sup_i\|V_i\|$. Our mixing proof uses only \Cref{qc:clustering},
which avoids this dependence and gives explicit temperature bounds.
\end{remark}

\subsection{Local block dynamics and boundary estimates}
\label{qg:collars}

We first approximate each block generator by a generator supported on the block and a
neighbourhood of its boundary. We bound the QBP operator, the jump map, and the coherent
correction separately, then combine these estimates to control the block channels in
transport norm. The block replacement channel $\mathcal R_B$ is kept unchanged; only the
operators coupling the block to its complement are localized.

\subsubsection{Boundary neighbourhoods and the QBP operators}

For an interval $B=[a,b]\cap\ZZ\subseteq[N]$, let
\begin{equation}
 \partial B=\{a-1,b\}\cap\{1,\ldots,N-1\}.
 \label{q1:cuts}
\end{equation}
Each $q\in\partial B$ labels the bond $\{q,q+1\}$ at an end of $B$.
For an integer $u\ge0$, define the boundary
neighbourhood $C_u(B)$ of radius $u$ around these bonds by
\begin{equation}
 C_u(B)=[N]\cap\bigcup_{q\in\partial B}[q-u,q+u+1],
 \qquad I_u(B)=B\setminus C_u(B).
 \label{qg:collar-def}
\end{equation}
Thus $C_u(B)$ consists of sites at distance at most $u$ from a crossing bond, and
\[
 |C_u(B)|\le4(u+1),\qquad |B\cap C_u(B)|\le2(u+1),\qquad
 B\cup C_u(B)=B^{\le u+1}.
\]
The radius $u$ is used only to estimate the exact block channels. If $B=[N]$, then
$C_u(B)=\varnothing$ and $\mathcal K_B=\mathcal R_B-\id$; all localized maps below agree
with the exact maps.

Recall the QBP operator $O_B(\theta)$, $\theta\in[0,1]$, from \eqref{qg:ODE} and the crossing
interaction $W_{\partial B}$ from \eqref{eq:crossing-interaction}.
With the bond labels just defined,
\[
 W_{\partial B}=\sum_{q\in\partial B}W_q,\qquad
 \|W_{\partial B}\|\le\sum_{q\in\partial B}\|W_q\|
 \le2J_{\mathrm{int}}.
\] Let
$O_{B,u}(\theta)$ solve the same equation with $H_B(\theta)$ replaced by
$H_{C_u(B)}-(1-\theta)W_{\partial B}$, and set $O_{B,u}=O_{B,u}(1)$. Both $O_{B,u}$ and its
inverse are supported on $C_u(B)$.

\begin{lemma}[Bounds on the localized QBP operators]
\label{q1:boundary-operators}
Set $M=e^{\beta J_{\mathrm{int}}/2}$ and $\lambda=c/(1+\beta J_{\mathrm{int}})$, with
$c>0$ as in \Cref{qg:transform-locality}, decreased if necessary so that $\lambda\le1$.
Then
\[
 \|O_B^{\pm1}\|,\ \|O_{B,u}^{\pm1}\|\le M,
\]
and
\begin{equation}
 \|O_B-O_{B,u}\|+\|O_B^{-1}-O_{B,u}^{-1}\|
 \le C\beta J_{\mathrm{int}}e^{\beta J_{\mathrm{int}}/2}e^{-\lambda u}.
 \label{qg:O-tail}
\end{equation}
\end{lemma}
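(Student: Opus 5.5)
The plan is to treat the norm bounds and the localization error separately, both through Grönwall-type estimates on the defining ODE \eqref{qg:ODE}.

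\emph{Norm bounds.} For $O_B$, the bound $\|O_B^{\pm1}\|\le e^{\beta\|W_{\partial B}\|/4}\le e^{\beta J_{\mathrm{int}}/2}=M$ is \eqref{qg:O-norm} combined with $\|W_{\partial B}\|\le2J_{\mathrm{int}}$. For $O_{B,u}$, the same proof applies verbatim. Its ODE has generator $-\tfrac\beta4\mathcal F_{K_\theta}(W_{\partial B})$ with $K_\theta=H_{C_u(B)}-(1-\theta)W_{\partial B}$ self-adjoint, and \Cref{lem:qbp-kernels} gives $\|\mathcal F_{K_\theta}(W_{\partial B})\|\le\|W_{\partial B}\|$. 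Grönwall applied to the equation and to its inverse equation $\tfrac{d}{d\theta}O^{-1}=\tfrac\beta4O^{-1}\mathcal F(\cdot)$ gives the same bound $M$. This step does not use the congruence identity.

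\emph{Localization error.} Write $F(\theta)=\tfrac\beta4\mathcal F_{H_B(\theta)}(W_{\partial B})$ and $F_u(\theta)=\tfrac\beta4\mathcal F_{K_\theta}(W_{\partial B})$. Duhamel for the two linear ODEs gives
\[
O_B(1)-O_{B,u}(1)=-\int_0^1 O_B(1)O_B(\theta)^{-1}\bigl(F(\theta)-F_u(\theta)\bigr)O_{B,u}(\theta)\,d\theta .
\]
The propagator bound \eqref{qg:O-interval} and its analogue for $O_{B,u}$ give $\|O_B(1)O_B(\theta)^{-1}\|\,\|O_{B,u}(\theta)\|\le e^{(1-\theta)\beta J_{\mathrm{int}}/2}e^{\theta\beta J_{\mathrm{int}}/2}=M$. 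The same argument for the inverses, using $\tfrac{d}{d\theta}(O^{-1})=O^{-1}F$, yields an identical bound. It therefore remains to bound $\sup_\theta\|F(\theta)-F_u(\theta)\|$.

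\emph{The main step} is comparing $\mathcal F_{H_B(\theta)}$ with $\mathcal F_{K_\theta}$. Note that $H_B(\theta)=H-(1-\theta)W_{\partial B}$. Also, $K_\theta$ is exactly the same expression with $H$ replaced by its restriction $H_{C_u(B)}$, because $W_{\partial B}$ is supported in $C_u(B)$ and every retained term of $H_B(\theta)$ lies in $H_{C_u(B)}$ up to the crossing bonds. So $H_B(\theta)$ is a nearest-neighbour chain Hamiltonian with interaction strength at most $J_{\mathrm{int}}$ (the bond $W_q$ is scaled by $\theta\in[0,1]$), and $K_\theta$ is its restriction to $\Omega=C_u(B)$. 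The observable $W_{\partial B}=\sum_{q\in\partial B}W_q$ splits into at most two bond terms, each supported on $S=\{q,q+1\}$, and $S^{\le u}\subseteq C_u(B)$ by \eqref{qg:collar-def}. Applying \eqref{qg:allT-F} of \Cref{qg:transform-locality} to each bond term, with the bond Hamiltonian $H_B(\theta)$ in place of $H$, gives
\[
\|F(\theta)-F_u(\theta)\|\le\tfrac\beta4\cdot 2\cdot C\cdot2J_{\mathrm{int}}e^{-\lambda u}.
\]
The point to check is that \Cref{qg:transform-locality} only uses Lieb--Robinson bounds that are uniform in on-site terms and depend on $J_{\mathrm{int}}$, which is not increased by the scaling by $\theta$. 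It also requires that the restriction $(H_B(\theta))_{C_u(B)}$ coincides with $K_\theta$, including near the ends of the chain, where $C_u(B)$ is truncated to $[N]$ and the inclusion $S^{\le u}\cap[N]\subseteq\Omega$ still holds.

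Multiplying by $M$ and summing the two Duhamel estimates gives $C\beta J_{\mathrm{int}}e^{\beta J_{\mathrm{int}}/2}e^{-\lambda u}$, which is \eqref{qg:O-tail}. Decreasing $c$ so that $\lambda\le1$ only weakens the exponential and is harmless.
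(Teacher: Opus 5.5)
Your proposal is correct and follows essentially the same route as the paper: the norm bounds come from the Gr\"onwall argument behind \eqref{qg:O-norm}, the difference is written by variation of constants with the propagator bounds \eqref{qg:O-interval}, and \Cref{qg:transform-locality} is applied to the at most two crossing bonds to control the difference of the ODE coefficients. Your explicit check that $H_{C_u(B)}-(1-\theta)W_{\partial B}$ is exactly the restriction of $H_B(\theta)$ to $C_u(B)$, including near the chain ends, is a detail the paper leaves implicit.
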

\begin{proof}
The norm bounds follow from \Cref{qg:congruence} and $\|W_{\partial
B}\|\le2J_{\mathrm{int}}$. Write
\begin{align*}
 A_B(\theta)&=-\frac\beta4\mathcal F_{H_B(\theta)}(W_{\partial B}),\\
 A_{B,u}(\theta)&=-\frac\beta4
 \mathcal F_{H_{C_u(B)}-(1-\theta)W_{\partial B}}(W_{\partial B}).
\end{align*}
Applying \Cref{lem:qbp-kernels,qg:transform-locality} to the at most two crossing bonds
gives
\[
 \|A_B(\theta)\|,\|A_{B,u}(\theta)\|\le\beta J_{\mathrm{int}}/2,
 \qquad
 \|A_B(\theta)-A_{B,u}(\theta)\|\le C\beta J_{\mathrm{int}}e^{-\lambda u}.
\]
For $\theta,\theta'\in[0,1]$, define
\[
 U_B(\theta',\theta)=O_B(\theta')O_B(\theta)^{-1},\qquad
 U_{B,u}(\theta',\theta)=O_{B,u}(\theta')O_{B,u}(\theta)^{-1}.
\]
Equation~\eqref{qg:O-interval} yields $\|U_B(\theta',\theta)\|,\|U_{B,u}(\theta',\theta)\| \le
e^{\beta
J_{\mathrm{int}}|\theta'-\theta|/2}$. Variation of constants gives
\begin{align*}
 O_B-O_{B,u}
 &=\int_0^1U_B(1,\theta)[A_B(\theta)-A_{B,u}(\theta)]U_{B,u}(\theta,0)\,d\theta,\\
 O_B^{-1}-O_{B,u}^{-1}
 &=-\int_0^1U_B(0,\theta)[A_B(\theta)-A_{B,u}(\theta)]U_{B,u}(\theta,1)\,d\theta.
\end{align*}
Each integral is bounded by
\[
 C\beta J_{\mathrm{int}}e^{-\lambda u}
 \int_0^1e^{\beta J_{\mathrm{int}}\theta/2}
             e^{\beta J_{\mathrm{int}}(1-\theta)/2}\,d\theta
 =C\beta J_{\mathrm{int}}e^{\beta J_{\mathrm{int}}/2}e^{-\lambda u}.
\]
Adding these bounds proves \eqref{qg:O-tail}.
\end{proof}

\subsubsection{Locality of the jump map}

Recall from \eqref{qg:generator} that $\mathcal T_B=\operatorname{Ad}_{O_B}\circ\mathcal
R_B \circ\operatorname{Ad}_{O_B^{-1}}$ and $S_B=\mathcal T_B^*(\one)$. Define their
localized versions by
\[
 \mathcal T_{B,u}=\operatorname{Ad}_{O_{B,u}}\circ\mathcal R_B
                  \circ\operatorname{Ad}_{O_{B,u}^{-1}},
 \qquad S_{B,u}=\mathcal T_{B,u}^*(\one).
\]

\begin{lemma}[Bounds on the jump map]
\label{q1:jump-locality}
The map $\mathcal T_{B,u}$ is supported on $B^{\le u+1}$,
and $S_{B,u}$ is supported on $C_u(B)$.

With $M,\lambda$ from \Cref{q1:boundary-operators},
\begin{equation}
 M^{-4}\one\le S_B,S_{B,u}\le M^4\one,
 \qquad \|\mathcal T_B\|_\diamond,\|\mathcal T_{B,u}\|_\diamond\le M^4.
 \label{qg:chain-loss}
\end{equation}
Moreover,
\begin{equation}
 \|\mathcal T_B-\mathcal T_{B,u}\|_\diamond+\|S_B-S_{B,u}\|
 \le C\beta J_{\mathrm{int}}e^{2\beta J_{\mathrm{int}}}e^{-\lambda u}.
 \label{qg:T-tail}
\end{equation}
\end{lemma}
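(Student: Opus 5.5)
The plan is to read off supports from the factorized form of $\mathcal T_{B,u}$, to derive the order and diamond bounds from the positivity argument of \Cref{qg:loss}, and to handle the tail estimate by a telescoping decomposition combined with \Cref{q1:boundary-operators}.

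\emph{Supports.} Since $O_{B,u}^{\pm1}$ are supported on $C_u(B)$ and $\mathcal R_B$ acts on $B$, the map $\mathcal T_{B,u}$ acts nontrivially only on $B\cup C_u(B)=B^{\le u+1}$. For $S_{B,u}$ we write
\[
S_{B,u}=O_{B,u}^{-*}\,\mathcal R_B^*(O_{B,u}^*O_{B,u})\,O_{B,u}^{-1},
\qquad \mathcal R_B^*(Y)=\one_B\otimes\Tr_B[(\sigma_\beta^B\otimes\one)Y].
\]
Because $O_{B,u}^*O_{B,u}$ is supported on $C_u(B)$, applying $\mathcal R_B^*$ yields an operator on $C_u(B)\setminus B$. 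After conjugation by $O_{B,u}^{-1}$ the result is supported on $C_u(B)$.

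\emph{Order and diamond bounds.} The argument of \Cref{qg:loss} applies verbatim. By \Cref{q1:boundary-operators} we have $M^{-2}\one\le O^*O\le M^2\one$ for $O\in\{O_B,O_{B,u}\}$. Since $\mathcal R_B^*$ is positive and unital, it preserves these bounds. Conjugating by $O^{-1}$ then multiplies them by at most $\|O^{-1}\|^2\le M^2$, and likewise for the lower bound. This gives $M^{-4}\one\le S\le M^4\one$. For a CP map the diamond norm equals $\|\mathcal T^*(\one)\|=\|S\|\le M^4$.

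\emph{Tail bound.} Write $O=O_B$, $O'=O_{B,u}$, and set $\Delta=\|O-O'\|+\|O^{-1}-O'^{-1}\|\le C\beta J_{\mathrm{int}}M e^{-\lambda u}$ by \eqref{qg:O-tail}. We telescope
\[
\mathcal T_B-\mathcal T_{B,u}=\Ad_O\mathcal R_B(\Ad_{O^{-1}}-\Ad_{O'^{-1}})+(\Ad_O-\Ad_{O'})\mathcal R_B\Ad_{O'^{-1}}.
\]
We then use $\|\Ad_D\|_\diamond=\|D\|^2$ and $\|\Ad_D-\Ad_{D'}\|_\diamond\le\|D-D'\|(\|D\|+\|D'\|)$, the latter from $DXD^*-D'XD'^*=(D-D')XD^*+D'X(D-D')^*$. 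Together with $\|\mathcal R_B\|_\diamond=1$, each term is bounded by $2M^3\Delta$. Likewise,
\[
S_B-S_{B,u}=(\Ad_{O^{-*}}-\Ad_{O'^{-*}})\mathcal R_B^*(O^*O)+\Ad_{O'^{-*}}\mathcal R_B^*(O^*O-O'^*O').
\]
Its norm is at most $2M\Delta\cdot M^2+M^2\cdot 2M\Delta$, using $\|O^*O-O'^*O'\|\le 2M\|O-O'\|$. Summing gives a bound of order $M^3\Delta\le C\beta J_{\mathrm{int}}M^4e^{-\lambda u}=C\beta J_{\mathrm{int}}e^{2\beta J_{\mathrm{int}}}e^{-\lambda u}$, which is \eqref{qg:T-tail}.

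The only step needing care is the bookkeeping of powers of $M$ in the tail estimate, so that the final prefactor is no worse than $e^{2\beta J_{\mathrm{int}}}$. Tracking it as above, the product $M^3\cdot M$ lands exactly at $M^4=e^{2\beta J_{\mathrm{int}}}$.
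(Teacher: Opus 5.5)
Your proof is correct and follows the paper's argument: supports come from the factorized form and the adjoint formula for $S_{B,u}$, the order bounds come from \Cref{qg:loss} with $\|O^{\pm1}\|\le M$, and the tail comes from changing one conjugating factor at a time and applying \eqref{qg:O-tail}. The one small difference is that the paper bounds the loss operators via duality, $\|S_B-S_{B,u}\|=\|(\mathcal T_B-\mathcal T_{B,u})^*(\one)\|\le\|\mathcal T_B-\mathcal T_{B,u}\|_\diamond$, so it does not need your separate telescoping of $S_B-S_{B,u}$.
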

\begin{proof}
The support of $\mathcal T_{B,u}$ follows from its definition:
$\mathcal R_B$ acts only on $B$, and $O_{B,u}^{\pm1}$ are supported
on $C_u(B)$, so their composition acts only on
$B\cup C_u(B)=B^{\le u+1}$. For $S_{B,u}$, the adjoint formula gives
\begin{align*}
 S_{B,u}
 &=O_{B,u}^{-*}\mathcal R_B^*(O_{B,u}^*O_{B,u})O_{B,u}^{-1}\\
 &=O_{B,u}^{-*}\bigl[\one_B\otimes
       \Tr_B(\sigma_\beta^B O_{B,u}^*O_{B,u})\bigr]O_{B,u}^{-1}.
\end{align*}
Since $O_{B,u}^*O_{B,u}$ acts trivially on $I_u(B)=B\setminus C_u(B)$,
tracing first over $I_u(B)$ replaces $\sigma_\beta^B$ by its marginal
on $B\cap C_u(B)$. Tracing over the latter region then gives an
operator supported on $C_u(B)\setminus B$. Thus
$\mathcal R_B^*(O_{B,u}^*O_{B,u})$ is supported on $C_u(B)$.
Multiplying by $O_{B,u}^{-*}$ and $O_{B,u}^{-1}$ preserves this
support, proving the claim for $S_{B,u}$.

The operator inequalities follow from \Cref{qg:loss} and \Cref{q1:boundary-operators}.
Since the jump maps are completely positive, their diamond norms equal $\|S_B\|$ and
$\|S_{B,u}\|$. For any operators $A,A'$,
\[
 \|\operatorname{Ad}_A-\operatorname{Ad}_{A'}\|_\diamond
 \le(\|A\|+\|A'\|)\|A-A'\|.
\]
Using $\|\mathcal R_B\|_\diamond=1$ and changing one conjugating factor at a time, we
obtain
\begin{align*}
 \|\mathcal T_B-\mathcal T_{B,u}\|_\diamond
 &\le M^2\|\operatorname{Ad}_{O_B}-\operatorname{Ad}_{O_{B,u}}\|_\diamond
    +M^2\|\operatorname{Ad}_{O_B^{-1}}-
                    \operatorname{Ad}_{O_{B,u}^{-1}}\|_\diamond\\
 &\le2M^3(\|O_B-O_{B,u}\|+\|O_B^{-1}-O_{B,u}^{-1}\|),\\
 \|S_B-S_{B,u}\|
 &\le\|\mathcal T_B-\mathcal T_{B,u}\|_\diamond.
\end{align*}
Substitution of \eqref{qg:O-tail} proves \eqref{qg:T-tail}.
\end{proof}

\subsubsection{Locality of the coherent correction}

Recall $G_B=\mathcal G_H(S_B)$ from \eqref{qg:generator} and set $G_{B,u}=\mathcal
G_{H_{C_u(B)}}(S_{B,u})$, which is supported on $C_u(B)$. We estimate this operator
using the multiplier bound of \Cref{qg:multiplier} and the real-time locality estimate
\eqref{qg:allT-G}.

\begin{lemma}
\label{q1:coherent-locality}
With $\lambda=c/(1+\beta J_{\mathrm{int}})$ as in
\Cref{q1:boundary-operators},
\begin{align*}
 \|G_B\|,\|G_{B,u}\|
 &\le C_de^{2\beta J_{\mathrm{int}}}(1+\beta J_{\mathrm{int}})^4
                  \log(2+\beta J_{\mathrm{int}}),\\
 \|G_B-G_{B,u}\|
 &\le C_de^{2\beta J_{\mathrm{int}}}(1+\beta J_{\mathrm{int}})^4
                  \log(2+\beta J_{\mathrm{int}})e^{-\lambda u/2}.
\end{align*}
\end{lemma}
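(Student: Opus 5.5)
The plan is to bound the norms first and then the difference. For the norm bounds, write $G_B=\mathcal G_H(S_B)$. The multiplier bound of \Cref{qg:multiplier} cannot be applied on the whole chain, since the dimension factor $\log_2\dim\HH_\Lambda$ grows with $N$. I would therefore use a multiscale decomposition of $S_B$. Since $S_B$ is in general not supported on $C_u(B)$, I would write the telescoping sum
\[
 S_B=S_{B,0}+\sum_{u\ge0}\bigl(S_{B,u+1}-S_{B,u}\bigr),
\]
with the convention that the sum terminates once $C_u(B)$ saturates $[N]$. By \eqref{qg:T-tail}, the $u$-th term has norm at most $2C\beta J_{\mathrm{int}}e^{2\beta J_{\mathrm{int}}}e^{-\lambda u}$ and is supported on $C_{u+1}(B)$, which has at most $4(u+2)$ sites. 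The first term satisfies $\|S_{B,0}\|\le M^4=e^{2\beta J_{\mathrm{int}}}$.

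For each term $A_u$ supported on $C_{u+1}(B)$, I would compare $\mathcal G_H(A_u)$ with $\mathcal G_{H_{\Omega_u}}(A_u)$, where $\Omega_u=C_{2u+1}(B)$. Then \eqref{qg:allT-G}, applied with $q\approx u$ and $|S|\le4(u+2)$, gives an error of order $\log(2+\beta J_{\mathrm{int}})\,u\,\|A_u\|e^{-\lambda u}$. The local term $\mathcal G_{H_{\Omega_u}}(A_u)$ is bounded by \Cref{qg:multiplier}, giving
\[
 \|\mathcal G_{H_{\Omega_u}}(A_u)\|\le(4+8(u+1)\log_2 d)\|A_u\|.
\]
Summing over $u$ yields
\[
 \|G_B\|\le C_d\log(2+\beta J_{\mathrm{int}})\,e^{2\beta J_{\mathrm{int}}}(1+\beta J_{\mathrm{int}})\sum_u (1+u)^2e^{-2\lambda u}.
\]
Here I use the product $e^{-\lambda u}\cdot e^{-\lambda u}$ loosely: a single factor $e^{-\lambda u}$ is in fact enough. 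The resulting sum is at most $C\lambda^{-3}=C(1+\beta J_{\mathrm{int}})^3$, so the total is $C_de^{2\beta J_{\mathrm{int}}}(1+\beta J_{\mathrm{int}})^4\log(2+\beta J_{\mathrm{int}})$, as claimed. The bound for $G_{B,u}=\mathcal G_{H_{C_u(B)}}(S_{B,u})$ follows by the same decomposition, truncated at $u$, using $S_{B,u}=S_{B,0}+\sum_{v<u}(S_{B,v+1}-S_{B,v})$. This works because the restricted-Hamiltonian version of \eqref{qg:allT-G}, stated at the end of \Cref{qg:transform-locality}, applies with $\Omega'=C_u(B)$.

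For the difference, I would split
\[
 G_B-G_{B,u}=\mathcal G_H(S_B-S_{B,u})+\bigl[\mathcal G_H(S_{B,u})-\mathcal G_{H_{C_u(B)}}(S_{B,u})\bigr].
\]
For the first term, $S_B-S_{B,u}=\sum_{v\ge u}(S_{B,v+1}-S_{B,v})$, and the same per-scale estimate as above gives $\sum_{v\ge u}C(1+v)^2e^{-\lambda v}\cdots$. This is at most the norm bound times $e^{-\lambda u/2}$, after absorbing the polynomial factor using the spare $e^{-\lambda v/2}$. For the second term, I would again decompose $S_{B,u}$ into its scale pieces $A_v$ with $v<u$. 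Each piece is supported on $C_{v+1}(B)$ at distance roughly $u-v$ from the complement of $C_u(B)$. By \eqref{qg:allT-G}, in its version for nested regions, piece $v$ contributes
\[
 C\log(2+\beta J_{\mathrm{int}})(v+2)\|A_v\|e^{-\lambda(u-v-1)}.
\]
Since $\|A_v\|\lesssim e^{-\lambda v}$, the product $e^{-\lambda v}e^{-\lambda(u-v)}=e^{-\lambda u}$ is uniform in $v$. Summing the factors $(v+2)$ over $v<u$ gives $u^2e^{-\lambda u}\le C\lambda^{-2}e^{-\lambda u/2}$. The $v=0$ piece $S_{B,0}$ contributes directly with an $e^{2\beta J_{\mathrm{int}}}$ prefactor.

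The main obstacle is bookkeeping the powers of $(1+\beta J_{\mathrm{int}})$: each factor $\lambda^{-1}$ comes either from a geometric sum or from absorbing a polynomial into $e^{-\lambda u/2}$. I expect at most four such factors overall, but verifying that exactly the exponent $4$ comes out may require tuning the choice of $\Omega_u$ (for instance $C_{2u+1}$ versus $C_{u+q}$).
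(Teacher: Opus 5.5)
Your argument is correct, but it is organized differently from the paper's. The paper telescopes the localized coherent corrections themselves. With $D_0=S_{B,0}-\one$ and $D_m=S_{B,m}-S_{B,m-1}$, it writes $G_{B,u}-G_{B,u-1}=\mathcal G_{H_{C_u(B)}}(D_u)+\sum_{m<u}[\mathcal G_{H_{C_u(B)}}-\mathcal G_{H_{C_{u-1}(B)}}](D_m)$. It bounds this increment in \eqref{qg:G-boundary} by $C_d e^{2\beta J_{\mathrm{int}}}(1+\beta J_{\mathrm{int}})\log(2+\beta J_{\mathrm{int}})(u+1)^2e^{-\lambda u}$. Since $G_{B,u}=G_B$ once $C_u(B)=[N]$, summing this one sequence gives both the norm bound and the tail bound. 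Thus $\mathcal G$ is only applied with Hamiltonians on the finite regions $C_u(B)$, and $G_B$, $G_{B,u}$ and their difference are handled together.

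You instead decompose $S_B$ (resp.\ $S_{B,u}$) into scale pieces and localize each piece once, to $C_{2v+1}(B)$. In the restricted case you intersect with $C_u(B)$ and use the nested-region form of \eqref{qg:allT-G}. You then treat the difference by splitting off $\mathcal G_H(S_B-S_{B,u})$ from a restriction error for $S_{B,u}$. All of these steps check out.

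Your route gives a slightly better temperature dependence. Because each piece is localized only once, the multiplier factor $C_d(v+1)$ and the restriction error $C\log(2+\beta J_{\mathrm{int}})(v+2)e^{-\lambda v}$ add rather than accumulate over earlier scales. Each scale therefore costs $C_d\log(2+\beta J_{\mathrm{int}})\beta J_{\mathrm{int}}e^{2\beta J_{\mathrm{int}}}(v+1)e^{-\lambda v}$. The resulting sums, including $\sum_{v<u}(v+2)e^{-\lambda u}\le C\lambda^{-2}e^{-\lambda u/2}$ in the difference, produce $(1+\beta J_{\mathrm{int}})^3$ instead of the paper's $(1+\beta J_{\mathrm{int}})^4$. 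So your closing worry is unfounded: no tuning of $\Omega_u$ is needed.

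One small correction: your displayed sum $\sum_u(1+u)^2e^{-2\lambda u}$ is not literally an upper bound for the multiplier contribution, which carries only one factor $e^{-\lambda u}$. As you note, one factor suffices.
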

\begin{proof}
Set $D_0=S_{B,0}-\one$ and $D_m=S_{B,m}-S_{B,m-1}$ for $m\ge1$. By \eqref{qg:chain-loss}
and \eqref{qg:T-tail},
\begin{align*}
 \|D_0\|&\le2e^{2\beta J_{\mathrm{int}}},\\
 \|D_m\|&\le\|S_{B,m}-S_B\|+\|S_B-S_{B,m-1}\|\\
 &\le C\beta J_{\mathrm{int}}e^{2\beta J_{\mathrm{int}}}
                         (1+e^\lambda)e^{-\lambda m}.
\end{align*}
Consequently, for all $m\ge0$,
\[
 \operatorname{supp}D_m\subseteq C_m(B),\qquad
 \|D_m\|\le Ce^{2\beta J_{\mathrm{int}}}(1+\beta J_{\mathrm{int}})e^{-\lambda m}.
\]
As $\mathcal G_K(\one)=0$ and $|C_0(B)|\le4$, \Cref{qg:multiplier} gives $\|G_{B,0}\|\le
C_de^{2\beta J_{\mathrm{int}}}$. For $u\ge1$, linearity gives
\begin{align}
 G_{B,u}-G_{B,u-1}
 ={}\mathcal G_{H_{C_u(B)}}(D_u)+\sum_{m=0}^{u-1}
 [\mathcal G_{H_{C_u(B)}}-\mathcal G_{H_{C_{u-1}(B)}}](D_m).
 \label{qg:G-boundary-identity}
\end{align}
The multiplier bound controls the first term by $C_d(u+1)\|D_u\|$. For the $m$th
summand, apply \eqref{qg:allT-G} with support size at most $4(m+1)$ and distance at
least $\max\{u-m-2,0\}$. Thus
\begin{align*}
 \|G_{B,u}-G_{B,u-1}\|
 &\le C_d(u+1)\|D_u\|
 +C\log(2+\beta J_{\mathrm{int}})
       \sum_{m=0}^{u-1}(m+1)\|D_m\|e^{-\lambda\max\{u-m-2,0\}}\\
 &\le C_de^{2\beta J_{\mathrm{int}}}(1+\beta J_{\mathrm{int}})
       \log(2+\beta J_{\mathrm{int}})e^{-\lambda u}
       \left[(u+1)+\sum_{m=0}^{u-1}(m+1)\right].
\end{align*}
Since the bracket is at most $(u+1)^2$,
\begin{equation}
 \|G_{B,u}-G_{B,u-1}\|
 \le C_de^{2\beta J_{\mathrm{int}}}(1+\beta J_{\mathrm{int}})
       \log(2+\beta J_{\mathrm{int}})(u+1)^2e^{-\lambda u}.
 \label{qg:G-boundary}
\end{equation}
For a proper block in a finite chain, $C_u(B)=[N]$ for all large $u$, and then
$G_{B,u}=G_B$. The estimates
\begin{align*}
 \sum_{m\ge0}(m+1)^2e^{-\lambda m}
 &\le C(1+\beta J_{\mathrm{int}})^3,\\
 \sum_{m\ge u}(m+1)^2e^{-\lambda m}
 &\le e^{-\lambda u/2}\sum_{m\ge0}(m+1)^2e^{-\lambda m/2}
 \le C(1+\beta J_{\mathrm{int}})^3e^{-\lambda u/2}
\end{align*}
therefore prove the norm and tail bounds by telescoping \eqref{qg:G-boundary}. The case
$B=[N]$ has $G_B=G_{B,u}=0$.
\end{proof}

\subsubsection{Locality of block dynamics}

Next, we combine \Cref{q1:jump-locality,q1:coherent-locality} to control the localized
Lindblad generators and their associated dynamics,
\begin{equation}
 \mathcal K_{B,u}(X)=\mathcal T_{B,u}(X)
       -\tfrac12\{S_{B,u},X\}-i[G_{B,u},X],
 \qquad \Phi_{B,u}(T)=e^{T\mathcal K_{B,u}}.
 \label{qg:truncated-generator}
\end{equation}
These maps are supported on $B^{\le u+1}$, while $O_{B,u}^{\pm1},S_{B,u},G_{B,u}$ are
supported on $C_u(B)$.

\begin{proposition}
\label{qg:chain-locality}
With the same $\lambda$ and a constant $C_d<\infty$,
\begin{align}
 \|\mathcal K_B\|_\diamond,\ \|\mathcal K_{B,u}\|_\diamond
 &\le q_\beta:=C_de^{2\beta J_{\mathrm{int}}}(1+\beta J_{\mathrm{int}})^4
                       \log(2+\beta J_{\mathrm{int}}),\label{qg:chain-norm}\\
 \|\mathcal K_B-\mathcal K_{B,u}\|_\diamond
 &\le q_\beta e^{-\lambda u/2},\label{qg:chain-tail}\\
 \|\Phi_B(T)-\Phi_{B,u}(T)\|_\diamond
 &\le q_\beta T e^{-\lambda u/2}.
 \label{qg:channel-diamond}
\end{align}
The same bounds hold after tensoring the maps with the identity on an
arbitrary finite-dimensional reference system.
\end{proposition}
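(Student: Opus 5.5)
The plan is to bound the three pieces of each generator separately, combining the earlier lemmas, and then pass to the semigroups by Duhamel.

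\emph{Norm bound \eqref{qg:chain-norm}.} Write $\mathcal K_B=\mathcal T_B-\tfrac12\{S_B,\cdot\}-i[G_B,\cdot]$. \Cref{q1:jump-locality} gives $\|\mathcal T_B\|_\diamond\le M^4=e^{2\beta J_{\mathrm{int}}}$. Since $\|\{S,\cdot\}\|_\diamond\le2\|S\|$, the anticommutator term contributes at most $M^4$. Since $\|[G,\cdot]\|_\diamond\le2\|G\|$, the coherent term contributes at most $2\|G_B\|$, and \Cref{q1:coherent-locality} bounds this by $C_de^{2\beta J_{\mathrm{int}}}(1+\beta J_{\mathrm{int}})^4\log(2+\beta J_{\mathrm{int}})$. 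All three contributions are dominated by $q_\beta$ after enlarging $C_d$. The identical argument, using the localized bounds from the same lemmas, handles $\mathcal K_{B,u}$.

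\emph{Tail bound \eqref{qg:chain-tail}.} Estimate the difference term by term:
\[
\|\mathcal K_B-\mathcal K_{B,u}\|_\diamond\le\|\mathcal T_B-\mathcal T_{B,u}\|_\diamond+\|S_B-S_{B,u}\|+2\|G_B-G_{B,u}\|.
\]
By \eqref{qg:T-tail}, the first two terms are at most $C\beta J_{\mathrm{int}}e^{2\beta J_{\mathrm{int}}}e^{-\lambda u}$. Because $\lambda\le1$ and $u\ge0$, we have $e^{-\lambda u}\le e^{-\lambda u/2}$, and $\beta J_{\mathrm{int}}\le(1+\beta J_{\mathrm{int}})^4$, so these terms fit within $q_\beta e^{-\lambda u/2}$. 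The coherent term is bounded by \Cref{q1:coherent-locality}. Enlarging $C_d$ once absorbs all constants. The only point requiring care is matching the exponents, since the $G$ estimate carries only $e^{-\lambda u/2}$, which is why that rate appears in the final bound.

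\emph{Channel bound \eqref{qg:channel-diamond}.} Use Duhamel's formula
\[
\Phi_B(T)-\Phi_{B,u}(T)=\int_0^Te^{(T-t)\mathcal K_B}(\mathcal K_B-\mathcal K_{B,u})e^{t\mathcal K_{B,u}}\,dt.
\]
Both semigroups are CPTP: $\Phi_B$ by \Cref{qg:KMS}, and $\Phi_{B,u}$ because $\mathcal K_{B,u}$ is in Lindblad form with $S_{B,u}=\mathcal T_{B,u}^*(\one)$ and $G_{B,u}$ Hermitian, so it falls under \Cref{lem:trajectories}. Hence both have diamond norm one, and submultiplicativity gives the bound $T\,q_\beta e^{-\lambda u/2}$.

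\emph{Reference systems.} The statement with a tensored reference system is automatic: the diamond norm already includes arbitrary ancillas, and the operator-norm estimates are unchanged under tensoring with the identity.

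I expect no real obstacle here. The one detail to verify is that $G_{B,u}$ is Hermitian, which holds because $\mathcal G_K$ maps Hermitian operators to Hermitian operators: $\tanh$ is odd and real, and it is multiplied by $i/2$.
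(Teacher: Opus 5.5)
Your proposal is correct and follows the paper's proof essentially verbatim: the same three-term decomposition of $\mathcal K_B$ and $\mathcal K_B-\mathcal K_{B,u}$ using \Cref{q1:jump-locality,q1:coherent-locality}, with the slower $e^{-\lambda u/2}$ rate inherited from the coherent correction. The channel bound then follows from Duhamel's formula and diamond-norm contractivity of the two CPTP semigroups, as in the paper. Your explicit checks that $G_{B,u}$ is Hermitian and that $\mathcal K_{B,u}$ is in Lindblad form are correct and merely make explicit what the paper uses implicitly.
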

\begin{proof}
The preceding two lemmas give
\begin{align*}
 \|\mathcal K_B\|_\diamond
 &\le\|\mathcal T_B\|_\diamond+\|S_B\|+2\|G_B\|\le q_\beta,\\
 \|\mathcal K_B-\mathcal K_{B,u}\|_\diamond
 &\le\|\mathcal T_B-\mathcal T_{B,u}\|_\diamond
       +\|S_B-S_{B,u}\|+2\|G_B-G_{B,u}\|
 \le q_\beta e^{-\lambda u/2}.
\end{align*}
The same norm bound holds for $\mathcal K_{B,u}$. Duhamel's formula and diamond-norm
contractivity yield
\begin{align*}
 \Phi_B(T)-\Phi_{B,u}(T)
 &=\int_0^T e^{(T-t)\mathcal K_B}
       (\mathcal K_B-\mathcal K_{B,u})e^{t\mathcal K_{B,u}}\,dt,\\
 \|\Phi_B(T)-\Phi_{B,u}(T)\|_\diamond
 &\le T\|\mathcal K_B-\mathcal K_{B,u}\|_\diamond
 \le q_\beta T e^{-\lambda u/2}.
\end{align*}
\end{proof}

\noindent We next convert the diamond-norm approximation from
\Cref{qg:chain-locality} into a transport-norm bound. This will allow
\Cref{q1:influence-section} to transfer influence bounds for the localized
channel $\Phi_{B,u}(T)$ to those of the original channel $\Phi_B(T)$.

\begin{proposition}
\label{qg:chain-transport}
With $\lambda$ from \Cref{qg:chain-locality}, for every
Hermitian $X$ and $|B|\le\ell$,
\begin{equation}
 \|[\Phi_B(T)-\Phi_{B,u}(T)](X)\|_{W_1}
 \le K_\beta T(\ell+u+2)e^{-\lambda u/2}\|X\|_1,
 \label{qg:channel-transport}
\end{equation}
where $K_\beta=C_de^{2\beta J_{\mathrm{int}}}(1+\beta J_{\mathrm{int}})^6 \log(2+\beta
J_{\mathrm{int}})$. 
\end{proposition}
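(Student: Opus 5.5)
The natural route is to combine the diamond-norm estimate \eqref{qg:channel-diamond} with the local transport bound \eqref{eq:localW} of \Cref{lem:transport}. Both $\Phi_B(T)$ and $\Phi_{B,u}(T)$ are channels. The second is supported on $B^{\le u+1}$, but the first is not strictly supported there, so \eqref{eq:localW} cannot be applied directly.

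Instead, I would apply \eqref{qp:compact-reconstruction} to the traceless Hermitian operator $Y=[\Phi_B(T)-\Phi_{B,u}(T)](X)$, with $S=R:=B^{\le u+1}$. This gives
\[
 \|Y\|_{W_1}\le|R|\,\|Y\|_1+\|\Tr_RY\|_{W_1},
\]
where $|R|\le\ell+2u+2$. The first term is at most $|R|\,q_\beta Te^{-\lambda u/2}\|X\|_1$ by \eqref{qg:channel-diamond}.

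For the second term, trace preservation of $\Phi_{B,u}(T)$ on $R$ gives $\Tr_R\Phi_{B,u}(T)(X)=\Tr_RX$. So it remains to control $\Tr_R[\Phi_B(T)(X)-X]$ in the transport norm on $R^c$. Here the plan is to exploit that $\Phi_B(T)$ acts "mostly" on $R$.

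\emph{Tail step.} Telescope over radii: write
\[
 \Phi_B(T)=\Phi_{B,u}(T)+\sum_{v\ge u}\bigl[\Phi_{B,v+1}(T)-\Phi_{B,v}(T)\bigr],
\]
which is a finite sum since $\Phi_{B,v}=\Phi_B$ once $C_v(B)=[N]$. Each difference $\Psi_v:=\Phi_{B,v+1}(T)-\Phi_{B,v}(T)$ is a difference of two channels supported on $B^{\le v+2}$. By \eqref{qg:channel-diamond} and the triangle inequality, $\|\Psi_v\|_\diamond\le2q_\beta Te^{-\lambda v/2}$. Applying \eqref{eq:localW} to each $\Psi_v$ gives
\[
 \|\Psi_v(X)\|_{W_1}\le(\ell+2v+4)\,2q_\beta Te^{-\lambda v/2}\|X\|_1 .
\]
Summing over $v\ge u$ then controls the whole of $Y=\sum_{v\ge u}\Psi_v(X)$ directly, with no need for the reconstruction step. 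The sum is
\[
 \sum_{v\ge u}(\ell+2v+4)e^{-\lambda v/2}\le C(\ell+u+2)\lambda^{-2}e^{-\lambda u/2},
\]
using $\sum_{v\ge u}e^{-\lambda v/2}\le C\lambda^{-1}e^{-\lambda u/2}$ and $\sum_{v\ge u}(v-u)e^{-\lambda v/2}\le C\lambda^{-2}e^{-\lambda u/2}$, with $\lambda\le1$. Since $\lambda^{-2}=c^{-2}(1+\beta J_{\mathrm{int}})^2$, this yields the claimed constant
\[
 K_\beta=C_de^{2\beta J_{\mathrm{int}}}(1+\beta J_{\mathrm{int}})^{6}\log(2+\beta J_{\mathrm{int}}).
\]

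The main obstacle is justifying that \eqref{qg:channel-diamond} holds with $\Phi_{B,v+1}$ in place of $\Phi_B$, so that consecutive localized channels are close. \Cref{qg:chain-locality} is stated only against the full channel; I would obtain closeness between consecutive radii by the triangle inequality through $\Phi_B$, which costs only a factor $2$. A second point to check is the truncation at the chain ends: once $C_v(B)$ saturates to $[N]$, the localized objects equal the exact ones, so the telescoping sum is finite and the boundary case $B=[N]$ is trivial.
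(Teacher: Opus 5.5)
Your proposal is correct and is essentially the paper's proof: telescope over localization radii, bound each consecutive difference $\Phi_{B,v+1}(T)-\Phi_{B,v}(T)$ (two channels supported on $B^{\le v+2}$) in diamond norm by $2q_\beta Te^{-\lambda v/2}$ using \eqref{qg:channel-diamond} and the triangle inequality, pass to $W_1$ with \eqref{eq:localW}, and sum the geometric series to gain the factor $(1+\beta J_{\mathrm{int}})^2$. The initial reconstruction step via \eqref{qp:compact-reconstruction} is unnecessary, as you yourself noticed when you dropped it.
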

\begin{proof}
We telescope the localized channels:
\[
 \Phi_B(T)-\Phi_{B,u}(T)
 =\sum_{m\ge u}[\Phi_{B,m+1}(T)-\Phi_{B,m}(T)].
\]
This sum is finite because the localized maps eventually coincide with the exact map.
Each difference is supported on $B^{\le m+2}$, a set of at most $\ell+4(m+2)$ sites, and
annihilates the partial trace over that set. By \Cref{lem:transport} and
\eqref{qg:channel-diamond},
\begin{align*}
 \|[\Phi_B(T)-\Phi_{B,u}(T)](X)\|_{W_1}
 &\le\sum_{m\ge u}[\ell+4(m+2)]
       \|\Phi_{B,m+1}(T)-\Phi_{B,m}(T)\|_\diamond\|X\|_1\\
 &\le Cq_\beta T\|X\|_1
                 \sum_{m\ge u}(\ell+m+2)e^{-\lambda m/2}.
\end{align*}
The geometric sums give
\begin{align*}
 \sum_{m\ge u}(\ell+m+2)e^{-\lambda m/2}
 &=e^{-\lambda u/2}\left[\frac{\ell+u+2}{1-e^{-\lambda/2}}
                    +\frac{e^{-\lambda/2}}{(1-e^{-\lambda/2})^2}\right]\\
 &\le C(1+\beta J_{\mathrm{int}})
             (\ell+u+1+\beta J_{\mathrm{int}})e^{-\lambda u/2}\\
 &\le C(1+\beta J_{\mathrm{int}})^2(\ell+u+2)e^{-\lambda u/2}.
\end{align*}

\end{proof}

\subsection{Influence of a block update}
\label{q1:influence-section}

Our next goal is to bound $D_{\Phi_B(T)}(j)$, the influence of the block
update $\Phi_B(T)$ at site $j$, defined in \eqref{eq:influence}.
These are the coefficients summed in the quantum DS condition \eqref{eq:qds}.
The estimates below will allow us to show in \Cref{q1:counting} that contraction
in block interiors outweighs the possible growth near block boundaries.
The approximation bound \eqref{qg:channel-transport} from
\Cref{qg:chain-transport} reduces the task to estimating the influence of the
localized channel $\Phi_{B,u}(T)$. We proceed in two steps:
\Cref{q1:two-collars} controls correlations between the ends of the block
Gibbs state $\sigma_\beta^B$ and its interior; \Cref{q1:influence} uses this
to bound the localized channel's influence according to the position of $j$.

\subsubsection{From scalar covariances to a trace-norm bound}

The following lemma converts the scalar clustering bound of
\Cref{qc:clustering} into a trace-norm estimate for correlations between
the ends of a block and its interior.

For an interval $B\subseteq\Lambda$, recall the Gibbs state $\sigma_\beta^B$
of $H_B$ from \eqref{eq:restricted-hamiltonian}. We write
$\sigma_\beta^B(O):=\Tr_{\HH_B}(\sigma_\beta^B O)$ for the expectation of
$O\in\BB(\HH_B)$, and $(\sigma_\beta^B)_U=\Tr_{B\setminus U}\sigma_\beta^B$
for the marginal on $U\subseteq B$, following the convention of
Section~\ref{sec:prelim}. With this notation, \Cref{qc:clustering} gives
\begin{equation}
 \begin{gathered}
 |\sigma_\beta^B(AY)-\sigma_\beta^B(A)\sigma_\beta^B(Y)|
 \le C_{\rm cl}\|A\|\|Y\|f(\dist(A,Y)),
 \\
 f(v)=e^{-c_{\rm cl}\sqrt v},\quad
 c_{\rm cl}\ge c e^{-4\beta J_{\mathrm{int}}}/(1+\beta J_{\mathrm{int}}),
 \end{gathered}
 \label{q1:clustering}
\end{equation}
where $C_{\rm cl}$ is a numerical constant.

\begin{lemma}[From interval covariances to a joint boundary marginal]
\label{q1:two-collars}
Suppose $\sigma_\beta^B$ satisfies \eqref{q1:clustering} with a nonincreasing
profile $f$. Let $A,D,C\subseteq B$ be intervals in this
left-to-right order, with either end interval allowed to be empty. If $D$ is at distance
at least $v$ from both nonempty end intervals, then
\begin{equation}
 \|(\sigma_\beta^B)_{A\cup C\cup D}-(\sigma_\beta^B)_{A\cup C}\otimes(\sigma_\beta^B)_D\|_1
 \le 3C_{\rm cl}d^{2(|A|+|C|)}f(v).
 \label{q1:trace-clustering}
\end{equation}
\end{lemma}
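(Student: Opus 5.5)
Write $E=A\cup C$ (the two end intervals, possibly one empty) and $\rho=(\sigma_\beta^B)_{E\cup D}$. The plan is to bound the trace norm by duality and then expand the dual observable in a product operator basis on $E$. This basis costs a factor $d^{2|E|}$, but leaves only covariances that \eqref{q1:clustering} controls.

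\textbf{Step 1: duality.} We have
\[
 \|\rho-\rho_E\otimes\rho_D\|_1=\sup_{\|Z\|\le1}\bigl|\Tr[Z(\rho-\rho_E\otimes\rho_D)]\bigr|,
\]
with $Z$ supported on $E\cup D$.

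\textbf{Step 2: expand on $E$.} Choose an orthonormal basis $(F_a)$ of $\BB(\HH_E)$ consisting of generalized Pauli (Weyl) operators, normalized so that $\|F_a\|\le1$ and $\Tr(F_a^*F_b)=\dim\HH_E\,\delta_{ab}$. There are $(\dim\HH_E)^2\le d^{2|E|}$ of them. Any $Z$ can be written uniquely as $Z=\sum_aF_a\otimes Y_a$, where
\[
 Y_a=(\dim\HH_E)^{-1}\Tr_E[(F_a^*\otimes\one)Z].
\]
Each $Y_a$ is a normalized partial trace of $Z$ multiplied by an operator of norm at most one, so $\|Y_a\|\le\|Z\|\le1$.

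\textbf{Step 3: reduce to covariances.} For each term,
\[
 \Tr[(F_a\otimes Y_a)(\rho-\rho_E\otimes\rho_D)]=\sigma_\beta^B(F_aY_a)-\sigma_\beta^B(F_a)\,\sigma_\beta^B(Y_a),
\]
which is a covariance in the block Gibbs state.

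\textbf{Step 4: apply clustering.} Here $F_a$ is supported on $A\cup C$, which need not be an interval. Theorem~\ref{qc:clustering}, and hence \eqref{q1:clustering}, is stated for observables on disjoint intervals. To handle this, expand $F_a$ as a tensor product $F_a=F_a^A\otimes F_a^C$, and treat the three intervals $A,D,C$ by a small inclusion--exclusion on covariances. Since $F_a$ is a product across $A$ and $C$, one gets
\[
 \Cov(F^AF^C,Y)=\Cov(F^A,F^CY)+\sigma(F^A)\Cov(F^C,Y)+\sigma(F^C)\Cov(F^A,Y).
\]
The term $\Cov(F^A,F^CY)$ pairs $A$ with the interval $D\cup\dots\cup C$ lying to its right, and this interval is still at distance at least $v$ from $A$. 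To see that it is an interval, fill the gap between $D$ and $C$ with identities. Each of the three covariances is then bounded by $C_{\rm cl}f(v)$, using that $f$ is nonincreasing, and this produces the factor $3$. If one end interval is empty, a single term suffices.

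\textbf{Step 5: sum.} Summing over the at most $d^{2(|A|+|C|)}$ basis elements gives \eqref{q1:trace-clustering}.

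The main obstacle is the non-interval support of $A\cup C$. This is resolved by the decomposition above, which uses the product form of the Weyl basis so that each covariance involves interval supports. The second point requiring care is the bound $\|Y_a\|\le1$, which relies on the unitary-type normalization of the basis. The dimension factor could be improved, but it is not needed here.
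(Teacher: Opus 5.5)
Your plan is the paper's argument, but the identity displayed in Step~4 is false. Expanding your right-hand side gives $\sigma(F^AF^CY)+\sigma(F^C)\sigma(F^AY)-2\sigma(F^A)\sigma(F^C)\sigma(Y)$. It therefore exceeds $\Cov(F^AF^C,Y)=\sigma(F^AF^CY)-\sigma(F^AF^C)\sigma(Y)$ by $\sigma(F^C)\Cov(F^A,Y)+\sigma(Y)\Cov(F^A,F^C)$, which is not zero in general.

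For example, take the state $p|111\rangle\langle111|+(1-p)|000\rangle\langle000|$ on three sites ordered $A,D,C$. Let $F^A$, $Y$, $F^C$ each be the Weyl operator $|0\rangle\langle0|-|1\rangle\langle1|$ on its site, and put $m=1-2p$. The left side is $0$, while your right side is $2m(1-m^2)$.

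The correct decomposition, which is the one the paper uses, is
\[
 \Cov(F^AF^C,Y)=\Cov(F^A,F^CY)+\sigma(F^A)\Cov(F^C,Y)-\sigma(Y)\Cov(F^A,F^C).
\]
The repair costs nothing. The new third term is a covariance between $A$ and $C$, which are at distance at least $v$ because $D$ lies between them, and $|\sigma(Y)|\le\|Y\|\le1$. Each of the three terms is therefore still at most $C_{\rm cl}f(v)$, and the factor $3$ survives. With an empty end interval the identity reduces to a single covariance, as you say.

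With this correction, your proof is the paper's argument in dual form. The paper expands the difference $\Delta$ itself in matrix units $|p\rangle\langle p'|$ on $A\cup C$. It then bounds the trace norm of each operator coefficient $\Delta_{pp'}$ on $D$ by duality, testing against $L_A\otimes R_C=|p'\rangle\langle p|$.

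You instead apply duality once and expand the test observable $Z$ in a product unitary basis. That is why you need $\|Y_a\|\le1$, and your justification via the normalized partial trace is correct. Both routes give the same factor $d^{2(|A|+|C|)}$. Matrix units would also work in your version, since $\|(\langle p|\otimes\one)Z(|p'\rangle\otimes\one)\|\le\|Z\|$. Treating $F^CY$ as supported on the interval spanning $D\cup C$ is exactly what the paper does.
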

\begin{proof}
For norm-one operators $L_A,Y_D,R_C$, write
\begin{align*}
 \sigma_\beta^B(L_AY_DR_C)-\sigma_\beta^B(L_AR_C)\sigma_\beta^B(Y_D)
 ={}&\sigma_\beta^B(L_AY_DR_C)-\sigma_\beta^B(L_A)\sigma_\beta^B(Y_DR_C)\\
 &+\sigma_\beta^B(L_A)[\sigma_\beta^B(Y_DR_C)-\sigma_\beta^B(Y_D)\sigma_\beta^B(R_C)]\\
 &-[\sigma_\beta^B(L_AR_C)-\sigma_\beta^B(L_A)\sigma_\beta^B(R_C)]\sigma_\beta^B(Y_D).
\end{align*}
The first covariance is between $A$ and the interval spanning $D\cup C$; the other two
are also between separated intervals. Each is bounded by $C_{\rm cl}f(v)$. With an empty
end interval the same conclusion follows by taking its observable to be the scalar one.

We now turn this scalar covariance bound into the trace-norm bound. Set
$\Delta:=(\sigma_\beta^B)_{A\cup C\cup D}-(\sigma_\beta^B)_{A\cup C}\otimes(\sigma_\beta^B)_D$
and order its tensor
factors as $A,C,D$. Choose orthonormal bases $\{|a\rangle_A\}$ and $\{|c\rangle_C\}$,
and write their product basis as $|p\rangle=|a\rangle_A\otimes|c\rangle_C$, with
$p=(a,c)$. Expanding only the $A,C$ factors gives
 \[
 \Delta=\sum_{p,p'}|p\rangle\langle p'|\otimes\Delta_{pp'},\qquad
 \Delta_{pp'}:=(\langle p|\otimes\one_D)\Delta(|p'\rangle\otimes\one_D).
 \]
Thus each coefficient $\Delta_{pp'}$ is an operator on $\HH_D$. For $p=(a,c)$ and
$p'=(a',c')$, take $L_A=|a'\rangle\langle a|$ and $R_C=|c'\rangle\langle c|$. Both have
operator norm one, and $L_A\otimes R_C=|p'\rangle\langle p|$. For any operator $Y_D$
with $\norm{Y_D}\le1$, the definition of the marginals gives
 \begin{align*}
 \Tr(Y_D\Delta_{pp'})
 &=\Tr\bigl[(|p'\rangle\langle p|\otimes Y_D)\Delta\bigr]\\
 &=\sigma_\beta^B(L_A Y_D R_C)-\sigma_\beta^B(L_A R_C)\sigma_\beta^B(Y_D).
 \end{align*}
This is precisely the covariance bounded above. The coefficients $\Delta_{pp'}$ need not
be Hermitian, so we use trace-norm duality over all operators $Y_D$:
 \[
 \nOne{\Delta_{pp'}}
 =\sup_{\norm{Y_D}\le1}|\Tr(Y_D\Delta_{pp'})|
 \le3C_{\rm cl}f(v).
 \]
Each of $p,p'$ ranges over at most $d^{|A|+|C|}$ basis vectors, giving at most
$d^{2(|A|+|C|)}$ coefficients. Since each matrix unit $|p\rangle\langle p'|$ has trace
norm one and the trace norm multiplies on tensor products, the triangle inequality
yields
 \begin{align*}
 \nOne\Delta
 &\le\sum_{p,p'}\nOne{|p\rangle\langle p'|\otimes\Delta_{pp'}}
 =\sum_{p,p'}\nOne{\Delta_{pp'}}\\
 &\le3C_{\rm cl}d^{2(|A|+|C|)}f(v),
 \end{align*}
which proves \eqref{q1:trace-clustering}.
\end{proof}

\subsubsection{A uniform estimate on localized influences}

Write $D_{B,u}(j)$ and $D_B(j)$ for the influences of $\Phi_{B,u}(T)$ and
$\Phi_B(T)$ defined in \eqref{eq:influence}. We now use \Cref{q1:two-collars}
to estimate the influence of the localized channel. The definition requires a
bound for every Hermitian operator $X$ with $\Tr_jX=0$. The following lemma
separates three cases: a site in the interior $I_u(B)$, in the boundary
neighbourhood $C_u(B)$, or outside $B^{\le u+1}$. The first Duhamel identity
in \eqref{eq:duhamelidentities} gives interior contraction. For $j\in C_u(B)$,
we combine clustering with the second Duhamel identity, applied to the local
generator $\mathcal K_{B,u}$ from \eqref{qg:truncated-generator}. Finally,
locality gives influence one outside the support. The key point is that the
boundary influence is bounded independently of $|B|$, apart from a term
that decays exponentially in $T$. This will allow the interior contraction
to dominate when summing over large blocks.

\begin{lemma}[Uniform localized influences]
 \label{q1:influence}
With $M$ from \Cref{q1:boundary-operators}, set
\begin{equation}
 A_u:=8(u+1)+4\sum_{v=0}^\infty
 \min\{1,3M^8C_{\rm cl}d^{4(u+1)}f(v)\}.
 \label{q1:Ar}
\end{equation}
For every $T\ge0$, interval $B\subseteq[N]$, integer $u\ge0$, and site $j$,
\begin{equation}
 D_{B,u}(j)\le
 \begin{cases}
 e^{-M^{-4}T},&j\in I_u(B),\\
 A_u+4|B|e^{-M^{-4}T},&j\in C_u(B),\\
 1,&j\notin B^{\le u+1}.
 \end{cases}
 \label{q1:influence-bound}
\end{equation}
For the profile in \eqref{q1:clustering}, we also have
\begin{equation}
 A_u\le C_d\bigl[(u+1)+(1+\beta J_{\mathrm{int}})^2e^{8\beta
J_{\mathrm{int}}}(u+\beta J_{\mathrm{int}}+1)^2\bigr].
 \label{q1:Ar-bound}
\end{equation}
\end{lemma}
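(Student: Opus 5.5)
The plan is to treat the three cases in \eqref{q1:influence-bound} separately, using only the structure of the localized generator $\mathcal K_{B,u}$ from \eqref{qg:truncated-generator}. Its dressing operators $O_{B,u}^{\pm1}$, its loss operator $S_{B,u}$ and its coherent term $G_{B,u}$ are all supported on $C_u(B)$, and the channel itself is supported on $B^{\le u+1}$. I will apply \Cref{lem:trajectories} with $V_a=L_{B,a,u}$, $G=G_{B,u}$ and $Z_t=e^{-t(S_{B,u}/2+iG_{B,u})}$. By \eqref{qg:chain-loss} we may take $s=M^{-4}$, so $\|\mathcal N_t\|_\diamond\le e^{-M^{-4}t}$.

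The outer case $j\notin B^{\le u+1}$ is immediate: the channel is supported away from $j$, so \Cref{lem:transport} gives $D_{B,u}(j)=1$.

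For the interior case $j\in I_u(B)$, I use the first Duhamel identity $\Phi_{B,u}(T)=\mathcal N_T+\int_0^T\Phi_{T-t}\mathcal T_{B,u}\mathcal N_t\,dt$. Since $Z_t$ acts trivially on $j$, $\Tr_jX=0$ gives $\Tr_j\mathcal N_t(X)=0$. The same holds after applying $\operatorname{Ad}_{O_{B,u}^{-1}}$. The partial trace $\Tr_B$ inside $\mathcal R_B$ contains $\Tr_j$ because $j\in B$, so $\mathcal T_{B,u}\mathcal N_t(X)=0$ and the jump integral vanishes. Hence $\Phi_{B,u}(T)(X)=\mathcal N_T(X)$, which still has $\Tr_j=0$. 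The one-site identity of \Cref{lem:transport} then gives $\|\mathcal N_T(X)\|_{W_1}=\|\mathcal N_T(X)\|_1/2\le e^{-M^{-4}T}\|X\|_1/2$, which is the first line of \eqref{q1:influence-bound}.

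The boundary case $j\in C_u(B)$ is the main obstacle, and there I use the second Duhamel identity $\Phi_{B,u}(T)=\mathcal N_T+\int_0^T\mathcal N_{T-t}\mathcal T_{B,u}\Phi_t\,dt$. I split the output as $Y=\mathcal N_T(X)+\mathcal J_T(X)$.

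The first piece is traceless on $B^{\le u+1}$ only after combining it with the second. So instead I apply \eqref{qp:compact-reconstruction} with $S=B^{\le u+1}$. Trace preservation and $\Tr_jX=0$ give $\Tr_SY=\Tr_SX=0$, so the norm $\|Y\|_{W_1}$ only needs to be controlled through the telescoping sums of \eqref{eq:telescoping}. The $\mathcal N_T$ part then costs at most $|B^{\le u+1}|e^{-M^{-4}T}\|X\|_1$, and the $|B|$ portion of this produces the term $4|B|e^{-M^{-4}T}$ after dividing by $\|X\|_1/2$. The collar portion is absorbed into $8(u+1)$.

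For the jump part, each integrand has the form $\operatorname{Ad}_{Z_{T-t}O_{B,u}}[\sigma_\beta^B\otimes Y_t]$, where $Y_t=\Tr_B(\cdots)$ and $\int_0^T\|Y_t\|_1\,dt\le M^4\|\mathcal J_T(X)\|_1\le2M^4\|X\|_1$; in particular $\mathcal J_T(X)$ is CP with mass at most $2\|X\|_1$. All conjugations act only on $C_u(B)$. I telescope $\eqref{eq:telescoping}$ by first removing the sites of $C_u(B)$, which costs $|C_u(B)|\le4(u+1)$ times the trace norm. I then remove the interior sites of $B$ in order of increasing distance $v$ from the two collar pieces $A=B\cap C_u$ (left) and $C$ (right), which have total size at most $2(u+1)$.

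For an interior site at distance $v$, the partial trace over the previously removed sites leaves the block state only through $(\sigma_\beta^B)_{A\cup C\cup D}$, where $D$ is the interval of not-yet-removed sites. I replace this marginal by $(\sigma_\beta^B)_{A\cup C}\otimes(\sigma_\beta^B)_D$ using \Cref{q1:two-collars}. The resulting product term has the same trace as the original over the sites being traced. Because the exterior input satisfies $\sum$-cancellation from $\Tr_jX=0$, as the trace-preservation argument shows, its contribution cancels in the telescoping differences. The error is $3C_{\rm cl}d^{2\cdot2(u+1)}f(v)$, multiplied by the conjugation factor $\|O_{B,u}\|^2\|O_{B,u}^{-1}\|^2\cdot$ mass $\lesssim M^8$. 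Taking the minimum with the trivial trace-norm bound $1$ for each step and summing over $v$, with two sites per distance, gives the series in \eqref{q1:Ar}. The factors $4$ and $8$ account for the mass $2\|X\|_1$ and the normalization by $\|X\|_1/2$.

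Finally, for \eqref{q1:Ar-bound}, set $K=3M^8C_{\rm cl}d^{4(u+1)}$, so that $\log K\le C_d(1+\beta J_{\mathrm{int}}+u)$. The terms of the series equal one up to $v_0=(\log K/c_{\rm cl})^2\le C_d(1+\beta J_{\mathrm{int}})^2e^{8\beta J_{\mathrm{int}}}(u+\beta J_{\mathrm{int}}+1)^2$. The tail $\sum_{v\ge v_0}Ke^{-c_{\rm cl}\sqrt v}$ is $O(c_{\rm cl}^{-2}(1+c_{\rm cl}\sqrt{v_0}))$ by comparison with an integral, which is of the same order, and this yields \eqref{q1:Ar-bound}.
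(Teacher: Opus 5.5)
Your proof follows the paper's: the same interior and exterior arguments, the second Duhamel identity with $Z_{T-t}O_{B,u}$ supported on $C_u(B)$, the replacement of $(\sigma_\beta^B)_{C_{\rm in}\cup D}$ by a product via \Cref{q1:two-collars}, the telescoping that removes the collar first and then alternates between the two ends with clipping at $1$, and the same integral comparison for \eqref{q1:Ar-bound}. There is one organizational difference. The paper telescopes the auxiliary channel $\widehat\Phi_T=\mathcal J_T+\mathcal R_{I_u(B)}\mathcal N_T$, for which the product term cancels exactly. It pays for the no-jump part once, through the $\Tr_{I_u(B)}$-traceless operator $(\id-\mathcal R_{I_u(B)})\mathcal N_T(X)$. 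Telescoping $\Phi_T(X)$ directly, as you do, also works and gives the same bound $A_u+4|I_u(B)|e^{-M^{-4}T}$, once the residue is counted correctly.

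Your bookkeeping needs three corrections.

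First, the product term does not cancel. Because $\Tr_{B^{\le u+1}}\Phi_T(X)=\Tr_{B^{\le u+1}}X=0$, it equals $-(\sigma_\beta^B)_D\otimes\Tr_{B^{\le u+1}}\mathcal N_T(X)$. Each interior step therefore costs $2e^{-M^{-4}T}\|X\|_1$ (this term together with $\Tr_{B^{\le u+1}\setminus D}\mathcal N_T(X)$) plus the clustering error. Using $\min\{1,a+b\}\le a+\min\{1,b\}$, you clip only the clustering error. This is what produces $4|B|e^{-M^{-4}T}$, not your count $|B^{\le u+1}|e^{-M^{-4}T}\|X\|_1$, which would give only $2|B|e^{-M^{-4}T}$ from the block sites. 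The collar steps need no such term, since each is bounded by $\|X\|_1$ directly.

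Second, the inequality $\int_0^T\|Y_t\|_1\,dt\le M^4\|\mathcal J_T(X)\|_1$ is false. When $W_{\partial B}=0$, the left side is $T\|\Tr_BX\|_1$, while $\|\mathcal J_T(X)\|_1=(1-e^{-T})\|\Tr_BX\|_1$. The $T$-uniform factor must come from the decay $\|Z_{T-t}O_{B,u}\|^2\le M^2e^{-M^{-4}(T-t)}$ combined with $\|Y_t\|_1\le M^2\|X\|_1$; this integrates to $M^8\|X\|_1$.

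Third, with this weighting the clustering error is exactly $3M^8C_{\rm cl}d^{4(u+1)}f(v)\|X\|_1$, which is the quantity inside the minimum in \eqref{q1:Ar}. The constants $8$ and $4$ there come from $|C_u(B)|\le4(u+1)$ and from each value of $\lfloor k/2\rfloor$ occurring twice, each doubled by the normalization $\|X\|_1/2$. They do not come from a mass $2\|X\|_1$; a factor $2$ in the mass would double the constant inside the minimum.
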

\begin{proof}
Fix $B,u$ and a Hermitian operator $X$ with $\Tr_jX=0$.
We apply \Cref{lem:trajectories} to the localized generator
$\mathcal K_{B,u}$ from \eqref{qg:truncated-generator}. Explicitly, define
\begin{align*}
 \Phi_t&:=e^{t\mathcal K_{B,u}}=\Phi_{B,u}(t),\qquad
 Z_t:=\exp[-t(S_{B,u}/2+iG_{B,u})],\\
 \mathcal N_t(Y)&:=Z_tYZ_t^*,\qquad
 \mathcal J_T:=\Phi_T-\mathcal N_T.
\end{align*}
Thus $\mathcal N_t$ is the semigroup generated by
$\mathcal K_{B,u}-\mathcal T_{B,u}$. The lower bound
$S_{B,u}\ge M^{-4}\one$ in \eqref{qg:chain-loss} and Hermiticity of $G_{B,u}$
give, by \Cref{lem:trajectories},
\[
 \|Z_t\|^2\le e^{-M^{-4}t},\qquad
 \|\mathcal N_t\|_\diamond\le e^{-M^{-4}t}.
\]
The same lemma shows that $\Phi_t$ is a channel and $\mathcal J_T$ is
completely positive, and gives the two Duhamel identities
\begin{equation}
 \mathcal J_T=\int_0^T\Phi_{T-t}\mathcal T_{B,u}\mathcal N_t\,dt
 =\int_0^T\mathcal N_{T-t}\mathcal T_{B,u}\Phi_t\,dt.
 \label{q1:first-last-jump}
\end{equation}
Moreover, $O_{B,u}^{\pm1}$ and $S_{B,u}$ are supported on $C_u(B)$ by
\Cref{q1:boundary-operators,q1:jump-locality}, as is $G_{B,u}$ by definition.
The formula for $Z_t$ gives the same support for $Z_t$.

\proofstep{Interior influence: $j\in I_u(B)$}
Since $O_{B,u}^{-1}$ and $Z_t$ act trivially on $j$,
\[
 \Tr_B(O_{B,u}^{-1}\mathcal N_t(X)O_{B,u}^{-*})=0.
\]
Thus $\mathcal T_{B,u}\mathcal N_t(X)=0$, and the first Duhamel identity
in \eqref{q1:first-last-jump} gives $\mathcal J_T(X)=0$.
Also $\Tr_j\mathcal N_T(X)=0$. The one-site identity in
\Cref{lem:transport}, applied to $\mathcal N_T(X)=\Phi_T(X)$, therefore gives
\[
 \|\Phi_T(X)\|_{W_1}=\tfrac12\|\mathcal N_T(X)\|_1
 \le\tfrac12 e^{-M^{-4}T}\|X\|_1.
\]
Dividing by $\|X\|_1/2$ and taking the supremum over $X$ proves the first case.

\proofstep{Boundary influence: $j\in C_u(B)$}
Write $(\sigma_\beta^B)_{I_u(B)}=\Tr_{B\setminus I_u(B)}\sigma_\beta^B$, and let
\[
 \mathcal R_{I_u(B)}(Y)=(\sigma_\beta^B)_{I_u(B)}\otimes\Tr_{I_u(B)}Y,
 \qquad
 \widehat\Phi_T=\mathcal J_T+\mathcal R_{I_u(B)}\mathcal N_T.
\]
For $I_u(B)=\varnothing$, take $\mathcal R_{I_u(B)}=\id$.
Then $\widehat\Phi_T$ is completely positive and supported on $B^{\le u+1}$;
trace preservation follows from that of $\mathcal R_{I_u(B)}$ and the channel
$\mathcal J_T+\mathcal N_T=\Phi_T$. The difference
\[
 (\Phi_T-\widehat\Phi_T)(X)
 =(\id-\mathcal R_{I_u(B)})\mathcal N_T(X)
\]
has zero partial trace over $I_u(B)$ and trace norm at most
$2e^{-M^{-4}T}\|X\|_1$. Hence \Cref{lem:transport} gives
\begin{equation}
 \|[\Phi_T-\widehat\Phi_T](X)\|_{W_1}
 \le2|I_u(B)|e^{-M^{-4}T}\|X\|_1
 \le2|B|e^{-M^{-4}T}\|X\|_1.
 \label{q1:completion}
\end{equation}

It remains to bound $\widehat\Phi_T(X)$ independently of $|B|$. Suppose first that
$I_u(B)\ne\varnothing$. The inner boundary region $C_{\rm in}=B\cap C_u(B)$ is the union
of at most two end intervals of total size at most $2(u+1)$. Let $D\subseteq I_u(B)$ be
a remaining interval at distance at least $v$ from each nonempty end interval; see
\Cref{fig:influenceregions}. Set $\zeta_u=3C_{\rm cl}d^{4(u+1)}$.
Lemma~\ref{q1:two-collars} gives

 \begin{equation}
 \|(\sigma_\beta^B)_{C_{\rm in}\cup D}
 -(\sigma_\beta^B)_{C_{\rm in}}\otimes(\sigma_\beta^B)_D\|_1
 \le\zeta_u f(v).
 \label{eq:qbp-inner-clustering}
\end{equation}
 \begin{figure}[H]
 \centering
 \begin{tikzpicture}[x=0.35cm,y=0.75cm,font=\small,
 region/.style={line width=0.5pt},
 guide/.style={gray!45,densely dashed,line width=0.4pt}]

 \draw[gray!65,line width=0.6pt] (1,0)--(31,0);
 \foreach \site in {1,...,31}
 \fill[black!75] (\site,0) circle[radius=1.2pt];
 \node[anchor=east] at (-0.4,0) {$\Lambda$};
 \node[above=4pt] at (1,0) {$1$};
 \node[above=4pt] at (7,0) {$a$};
 \node[above=4pt] at (25,0) {$b$};
 \node[above=4pt] at (31,0) {$N$};
 \draw[guide] (6.5,0.22)--(6.5,-4.12);
 \draw[guide] (25.5,0.22)--(25.5,-4.12);
 \node[anchor=east] at (-0.4,-0.75) {$B^{\le u+1}$};
 \draw[region,draw=blue!45!black,fill=blue!12]
 (3.5,-0.89) rectangle (28.5,-0.61);
 \node[anchor=east] at (-0.4,-1.35) {$B=[a,b]$};
 \draw[region,draw=blue!60!black,fill=blue!35]
 (6.5,-1.49) rectangle (25.5,-1.21);
 \node[anchor=east] at (-0.4,-1.95) {$C_u(B)$};
 \draw[region,draw=violet!75!black,fill=violet!24]
 (3.5,-2.09) rectangle (9.5,-1.81);
 \draw[region,draw=violet!75!black,fill=violet!24]
 (22.5,-2.09) rectangle (28.5,-1.81);
 \node[anchor=east] at (-0.4,-2.55) {$C_{\rm in}$};
 \draw[region,draw=orange!65!black,fill=orange!45]
 (6.5,-2.69) rectangle (9.5,-2.41);
 \draw[region,draw=orange!65!black,fill=orange!45]
 (22.5,-2.69) rectangle (25.5,-2.41);
 \node[anchor=east] at (-0.4,-3.15) {$I_u(B)$};
 \draw[region,draw=teal!65!black,fill=teal!22]
 (9.5,-3.29) rectangle (22.5,-3.01);
 \node[anchor=east] at (-0.4,-3.75) {$D$};
 \draw[region,draw=teal!65!black,fill=teal!65]
 (12.5,-3.89) rectangle (19.5,-3.61);

 \draw[gray!60,densely dotted] (9,-2.69)--(9,-4.43);
 \draw[gray!60,densely dotted] (13,-3.89)--(13,-4.43);
 \draw[gray!60,densely dotted] (19,-3.89)--(19,-4.43);
 \draw[gray!60,densely dotted] (23,-2.69)--(23,-4.43);
 \draw[<->,line width=0.5pt] (9,-4.32)--node[below=2pt] {$\ge v$} (13,-4.32);
 \draw[<->,line width=0.5pt] (19,-4.32)--node[below=2pt] {$\ge v$} (23,-4.32);
 \end{tikzpicture}
 \caption{Regions in the proof of
\Cref{q1:influence}, shown for a block away from
 the ends of the chain. All rows refer to the same sites.
 The dashed lines mark the boundary bonds of $B$;
 $C_{\rm in}=B\cap C_u(B)$, $I_u(B)=B\setminus C_u(B)$, and
 $B^{\le u+1}=B\cup C_u(B)$. The interval $D\subseteq I_u(B)$ is at distance
 at least $v$ from each component of $C_{\rm in}$.}
 \label{fig:influenceregions}
 \end{figure}

\noindent The second Duhamel identity in \eqref{q1:first-last-jump} reads
\begin{equation}
 \mathcal J_T(X)=\int_0^T
 L_{T-t}[\sigma_\beta^B\otimes Q_t(X)]L_{T-t}^*\,dt,
 \label{q1:last-jump}
\end{equation}
where
\[
 L_{T-t}=Z_{T-t}O_{B,u},\qquad
 Q_t(X)=\Tr_B(O_{B,u}^{-1}\Phi_t(X)O_{B,u}^{-*}).
\]
Here $L_{T-t}$ is supported on $C_u(B)$, since both $O_{B,u}$ and $Z_{T-t}$
have this support, as shown at the start of the proof. Their product
therefore acts trivially on $I_u(B)$. Moreover,
\[
 \|L_{T-t}\|^2\le M^2e^{-M^{-4}(T-t)},\qquad
 \|Q_t(X)\|_1\le M^2\|X\|_1.
\]
The second bound follows from $\|O_{B,u}^{-1}\|\le M$ and trace-norm
contractivity of the channel $\Phi_t$ and partial trace.
Now, since $L_{T-t}$ acts trivially on $I_u(B)$,
\begin{align*}
 &\Tr_{B^{\le u+1}\setminus D}
 \bigl(L_{T-t}[\sigma_\beta^B\otimes Q_t(X)]L_{T-t}^*\bigr)\\
 &\quad=\Tr_{C_u(B)}\bigl(L_{T-t}
 [\Tr_{I_u(B)\setminus D}(\sigma_\beta^B)\otimes Q_t(X)]L_{T-t}^*\bigr)\\
 &\quad=\Tr_{C_u(B)}\bigl(L_{T-t}
 [(\sigma_\beta^B)_{C_{\rm in}\cup D}\otimes Q_t(X)]L_{T-t}^*\bigr).
\end{align*}
Here $Q_t(X)$ acts on $B^c$; the reduced operator acts on
$D\cup(B^{\le u+1})^c$, with tensor factors ordered by site labels.

Replacing
the Gibbs marginal by its product approximation in \eqref{eq:qbp-inner-clustering} gives
\begingroup
\medmuskip=2mu
\thickmuskip=3mu
\begin{align*}
 \Tr_{C_{{u}}(B)}\bigl(L_{T-t}
 [(\sigma_\beta^B)_{C_{\rm in}}\otimes(\sigma_\beta^B)_D
 \otimes Q_t(X)]L_{T-t}^*\bigr)&=(\sigma_\beta^B)_D\otimes
 \Tr_{C_{u}(B)}\bigl(L_{T-t}
 [(\sigma_\beta^B)_{C_{\rm in}}\otimes Q_t(X)]L_{T-t}^*\bigr)\\
 &=(\sigma_\beta^B)_D\otimes\Tr_{B^{\le u+1}}
 (\mathcal N_{T-t}\mathcal T_{B,u}\Phi_t(X)).
\end{align*}
\endgroup
The first equality uses that $L_{T-t}$ acts trivially on $D$. For the second, tracing
the original integrand over $I_u(B)$ replaces $\sigma_\beta^B$ by its marginal on
$C_{\rm in}$. Multiplying the clustering estimate \eqref{eq:qbp-inner-clustering} by the
preceding norm bounds and integrating gives
\begin{align}
 &\|\Tr_{B^{\le u+1}\setminus D}\mathcal J_T(X)
 -(\sigma_\beta^B)_D\otimes
 \Tr_{B^{\le u+1}}\mathcal J_T(X)\|_1\notag\\
 &\quad\le M^4\zeta_u f(v)\|X\|_1
 \int_0^T e^{-M^{-4}(T-t)}\,dt\nonumber\\
 &\le M^8\zeta_u f(v)\|X\|_1.
 \label{q1:replacement-error}
\end{align}

We now show that the difference estimated in \eqref{q1:replacement-error}
is exactly $\Tr_{B^{\le u+1}\setminus D}\widehat\Phi_T(X)$.
First, the definition of $\mathcal R_{I_u(B)}$ and $D\subseteq I_u(B)$ give
\begin{align*}
 &\Tr_{B^{\le u+1}\setminus D}(\mathcal R_{I_u(B)}\mathcal N_T)(X)\\
 &\quad=\Tr_{I_u(B)\setminus D}((\sigma_\beta^B)_{I_u(B)})
 \otimes\Tr_{C_u(B)}\Tr_{I_u(B)}\mathcal N_T(X)\\
 &\quad=(\sigma_\beta^B)_D\otimes\Tr_{B^{\le u+1}}\mathcal N_T(X).
\end{align*}
Next, $\Tr_jX=0$ and $j\in C_u(B)\subseteq B^{\le u+1}$ imply
$\Tr_{B^{\le u+1}}X=0$. Since the channel $\Phi_T$ is supported on
$B^{\le u+1}$, it preserves the marginal on $(B^{\le u+1})^c$.
Thus $\mathcal N_T=\Phi_T-\mathcal J_T$ gives
\begin{align*}
 \Tr_{B^{\le u+1}}\mathcal N_T(X)
 &=\Tr_{B^{\le u+1}}\Phi_T(X)-\Tr_{B^{\le u+1}}\mathcal J_T(X)\\
 &=\Tr_{B^{\le u+1}}X-\Tr_{B^{\le u+1}}\mathcal J_T(X)\\
 &=-\Tr_{B^{\le u+1}}\mathcal J_T(X).
\end{align*}
Substituting both identities into
$\widehat\Phi_T=\mathcal J_T+\mathcal R_{I_u(B)}\mathcal N_T$ yields
\begin{align*}
 \Tr_{B^{\le u+1}\setminus D}\widehat\Phi_T(X)
 ={}&\Tr_{B^{\le u+1}\setminus D}\mathcal J_T(X)
 +(\sigma_\beta^B)_D\otimes\Tr_{B^{\le u+1}}\mathcal N_T(X)\\
 ={}&\Tr_{B^{\le u+1}\setminus D}\mathcal J_T(X)
 -(\sigma_\beta^B)_D\otimes\Tr_{B^{\le u+1}}\mathcal J_T(X).
\end{align*}
The last line is precisely the operator bounded in \eqref{q1:replacement-error}.
Applying that estimate gives
\begin{equation}
 \|\Tr_{B^{\le u+1}\setminus D}\widehat\Phi_T(X)\|_1
 \le M^8\zeta_u f(v)\|X\|_1.
 \label{q1:marginal}
\end{equation}

The reduced map on the left is itself a channel, so it also contracts the trace norm of
Hermitian inputs. Combining the two bounds gives
\begin{equation}
 \|\Tr_{B^{\le u+1}\setminus D}\widehat\Phi_T(X)\|_1
 \le\min\{1,M^8\zeta_u f(v)\}\|X\|_1.
 \label{q1:clipped-marginal}
\end{equation}

We now turn these marginal bounds into a transport estimate.
Enumerate the sites of $C_u(B)$ as $c_1,\ldots,c_{|C_u(B)|}$ in any order.
List the sites of $I_u(B)$ as $i_1,\ldots,i_{|I_u(B)|}$ by taking its
leftmost site, then its rightmost remaining site, then its leftmost
remaining site, and so on. This order works whether $C_{\rm in}$ occupies
one or both ends of $B$. Since $\Tr_{B^{\le u+1}}\widehat\Phi_T(X)=0$,
the telescoping estimate \eqref{eq:telescoping} in \Cref{lem:transport},
with all the $c$ sites followed by all the $i$ sites, gives
\begin{align*}
 \|\widehat\Phi_T(X)\|_{W_1}
 &\le\sum_{k=0}^{|C_u(B)|-1}
 \|\Tr_{\{c_1,\ldots,c_k\}}\widehat\Phi_T(X)\|_1\\
 &\quad+\sum_{k=0}^{|I_u(B)|-1}
 \|\Tr_{C_u(B)\cup\{i_1,\ldots,i_k\}}\widehat\Phi_T(X)\|_1\\
 &\le |C_u(B)|\|X\|_1+\sum_{k=0}^{|I_u(B)|-1}
 \|\Tr_{C_u(B)\cup\{i_1,\ldots,i_k\}}\widehat\Phi_T(X)\|_1.
\end{align*}
For $k=0$, the enumerated sets are empty and $\Tr_\varnothing=\id$.
The last inequality uses trace-norm contractivity of $\widehat\Phi_T$ and
partial trace to bound every term of the first sum by $\|X\|_1$.
In the second sum, the remaining interval is
$D=I_u(B)\setminus\{i_1,\ldots,i_k\}$, so the set being traced out is
$B^{\le u+1}\setminus D$. After $k$ removals, at least $\lfloor k/2\rfloor$
sites have been removed from each end of $I_u(B)$; hence $D$ is at distance
at least $\lfloor k/2\rfloor$ from each component of $C_{\rm in}$.
Equation~\eqref{q1:clipped-marginal} therefore bounds the $k$th term by
$\min\{1,M^8\zeta_u f(\lfloor k/2\rfloor)\}\|X\|_1$.
Each value of $\lfloor k/2\rfloor$ occurs at most twice, giving
\[
 \|\widehat\Phi_T(X)\|_{W_1}
 \le\left[|C_u(B)|+2\sum_{v=0}^\infty
 \min\{1,M^8\zeta_u f(v)\}\right]\|X\|_1
 \le\tfrac12 A_u\|X\|_1.
\]
Here we used $|C_u(B)|\le4(u+1)$ and $\zeta_u=3C_{\rm cl}d^{4(u+1)}$.
If $I_u(B)=\varnothing$, only the first sum is needed, since
$B^{\le u+1}=C_u(B)$. Finally, combine this bound with \eqref{q1:completion},
divide by $\|X\|_1/2$, and take the supremum over $X$.
This proves the middle case of \eqref{q1:influence-bound}.

\proofstep{Outside influence: $j\notin B^{\le u+1}$}
The channel $\Phi_T=\Phi_{B,u}(T)$ is supported on $B^{\le u+1}$ and
therefore acts trivially on $j$. Since $\Tr_jX=0$, this gives
$\Tr_j\Phi_T(X)=0$. The one-site identity in \Cref{lem:transport} and
trace-norm contractivity imply
\[
 \|\Phi_T(X)\|_{W_1}=\tfrac12\|\Phi_T(X)\|_1\le\tfrac12\|X\|_1.
\]
Taking the supremum in the definition of $D_{B,u}(j)$ proves the last case;
in fact, \Cref{lem:transport} gives $D_{B,u}(j)=1$.

Finally, consider the profile in \eqref{q1:clustering}.
For $a,c>0$, put $b=\max\{0,\log a\}$. A decreasing-function integral
comparison gives
\begin{align*}
 \sum_{v=0}^\infty\min\{1,ae^{-c\sqrt v}\}
 &\le1+\int_0^\infty\min\{1,e^{b-c\sqrt t}\}\,dt\\
 &=1+c^{-2}(b^2+2b+2).
\end{align*}
For the equality, split at $t=b^2/c^2$ and substitute $z=c\sqrt t$ in the tail.
Apply this with $a=3M^8C_{\rm cl}d^{4(u+1)}$ and $c=c_{\rm cl}$.
Then $b\le C_d(u+\beta J_{\mathrm{int}}+1)$ and $c_{\rm cl}^{-2}\le
C(1+\beta J_{\mathrm{int}})^2e^{8\beta J_{\mathrm{int}}}$.
Substitution in \eqref{q1:Ar} proves \eqref{q1:Ar-bound}.
\end{proof}

\subsection{Quantum DS for the block updates}
\label{q1:counting}

We now combine the influence estimates of \Cref{q1:influence} with the
transport approximation \eqref{qg:channel-transport} to prove the quantum
DS condition for the original channels $\Phi_B(T)$. We recall that each
site belongs to $m=\ell$ labelled blocks. For any integer radius $u\ge0$,
at most $4(u+1)$ blocks place it within distance $u$ of a boundary bond.
Thus, for large blocks, the contraction in their interiors outweighs the
boundary contributions. The localized channels provide the intermediate
influence bounds; we then sum the approximation errors to obtain the
following result.

\begin{proposition}[Quantum DS for the smoothed heat-bath block updates]
\label{q1:balance-proposition}
For every fixed $\beta,J_{\mathrm{int}},d$, there are an integer $\ell\ge2$
and a time $T>0$, depending only on these parameters, such that the family
$(\Phi_B(T))_{B\in\mathscr W_\ell}$ satisfies the quantum DS condition of
\Cref{def:quantumDS} with $\kappa=1/2$, namely
\begin{equation}
 \sum_{B\in\mathscr W_\ell}[D_B(j)-1]\le-\frac\ell2,
 \qquad j\in[N].
 \label{q1:strict-DS}
\end{equation}
The parameters satisfy \eqref{q1:parameters}. In particular, the same
$\ell,T$ work for
every chain length $N$ and every choice of the on-site terms $V_i$.
\end{proposition}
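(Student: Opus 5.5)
The plan is to transfer the localized influence bounds of \Cref{q1:influence} to the exact channels using \Cref{qg:chain-transport}, and then count, for a fixed site $j$, how many of the $\ell$ labelled blocks containing $j$ see it in their interior. For the exact influence we use the triangle inequality in \eqref{eq:influence}: for Hermitian $X$ with $\Tr_jX=0$,
\[
 D_B(j)\le D_{B,u}(j)+2K_\beta T(\ell+u+2)e^{-\lambda u/2}=:D_{B,u}(j)+\epsilon_u .
\]
We will apply this bound only to blocks with $j\in B^{\le u+1}$. For every other block, $\Phi_B$ may have long-range tails but trivially $D_B(j)\le 1$? That is not automatic. Instead, for blocks with $j\notin B^{\le u+1}$ we take a $j$-dependent radius: with $u_j=\dist(j,B)-2$, the localized channel $\Phi_{B,u_j}$ has influence $1$ at $j$, so $D_B(j)-1\le\epsilon_{u_j}$. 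Summing over all such blocks gives $\sum_{k\ge 0}\epsilon_{k}\cdot O(1)$ times the number of blocks at each distance. That number is at most $2$ per distance, up to the $\ell$ factor in $\epsilon$, so the total is $\le C K_\beta T\sum_k(\ell+k)e^{-\lambda k/2}\le CK_\beta T(1+\beta J_{\rm int})^2\ell$. This term is linear in $\ell$, so it must be made small by the exponential factor. We therefore split at $u$ and use $e^{-\lambda u/2}$ for all far blocks: blocks at distance $>u$ contribute at most $CK_\beta T(1+\beta J_{\rm int})^2(\ell+u)e^{-\lambda u/2}$.

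For a fixed $u$, the $\ell$ blocks containing $j$ split into two groups. At most $2(u+1)$ of them have $j\in C_u(B)$; each of these contributes $D_B(j)-1\le A_u+4\ell e^{-M^{-4}T}+\epsilon_u$. The remaining blocks, at least $\ell-2(u+1)$ of them, have $j\in I_u(B)$ and contribute $\le e^{-M^{-4}T}-1+\epsilon_u$. Blocks not containing $j$ but with $j\in C_u(B)$ number at most $2(u+1)$ and contribute $\le A_u+4\ell e^{-M^{-4}T}+\epsilon_u$ each. Altogether,
\[
\sum_B[D_B(j)-1]\le -\ell\bigl(1-e^{-M^{-4}T}\bigr)+4(u+1)\bigl(A_u+1+4\ell e^{-M^{-4}T}\bigr)+C\ell K_\beta T(1+\beta J_{\rm int})^2(\ell+u)e^{-\lambda u/2}.
\]
The parameters are then chosen in the following order. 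First, pick $T=M^4\log(C(u+1))$ so that $e^{-M^{-4}T}\le 1/(64(u+1))$; this makes the interior contraction at least $\ell(1-1/64)$ and the term $16(u+1)\ell e^{-M^{-4}T}\le\ell/4$. Next, pick $u$ so that $CK_\beta T(1+\beta J_{\rm int})^2(\ell+u)e^{-\lambda u/2}\le 1/8$. Since $\ell$ appears here, we take $u\simeq C\lambda^{-1}\log(K_\beta T\ell(1+\beta J_{\rm int}))$, which is logarithmic in $\ell$. Finally, choose $\ell\ge 32(u+1)(A_u+1)$; this is consistent because $A_u$ is polynomial in $u$ by \eqref{q1:Ar-bound}, while $u$ grows only logarithmically in $\ell$. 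Together these give total $\le-\ell/2$.

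The main obstacle is the bookkeeping needed to reach the explicit bounds \eqref{q1:parameters}. The key issue is whether $T$ should depend on $u$. The stated bound $T\le C_de^{2\beta J_{\rm int}}\log(2+\beta J_{\rm int})$ suggests that $T$ is independent of $u$. That requires the boundary term $4\ell e^{-M^{-4}T}$ to be handled by comparing it with the clipped sum rather than multiplying it by $u+1$. Alternatively, $u$ is of order $(1+\beta J_{\rm int})\log(\cdot)$, so $\log(u+1)$ is absorbed into $\log(2+\beta J_{\rm int})$ up to $\log\log$ terms from $\ell$. I would first try this absorption. The $\ell$ bound then comes from $(u+1)A_u\sim(1+\beta J_{\rm int})^2e^{8\beta J_{\rm int}}u^3$ with $u\sim(1+\beta J_{\rm int})\log(\cdots)$; this gives $(1+\beta J_{\rm int})^5e^{8\beta J_{\rm int}}$ times logarithms, which is within the stated $(1+\beta J_{\rm int})^8$ after bounding the logarithms crudely. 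Uniformity in $N$ and in the $V_i$ follows because every constant used (\Cref{q1:influence}, \Cref{qg:chain-transport}, \Cref{qc:clustering}) is independent of these.
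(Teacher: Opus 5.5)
Your argument is correct in substance and follows the paper's skeleton: transfer via \eqref{qg:channel-transport}, the three cases of \Cref{q1:influence}, the count of at most $4(u+1)$ blocks with $j\in C_u(B)$, and $T=M^4\log(C(u+1))$. Where you differ is the accounting of the localization error, and this changes how the parameters can be chosen.

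\textbf{Your route.} You charge each of the $\ell$ blocks containing $j$ the radius-$u$ error $\epsilon_u\propto K_\beta T(\ell+u)e^{-\lambda u/2}$. The total error is then of order $\ell^2K_\beta Te^{-\lambda u/2}$, so you need $u\gtrsim\lambda^{-1}\log(\ell K_\beta T)$. This makes $u$, $T$ and $\ell$ mutually dependent and requires a fixed-point step.

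\textbf{The paper's route.} It uses the adaptive radius $v=\delta_B(j)-1$ for every block with $\delta_B(j)>u$, where $\delta_B(j)$ is the distance from $j$ to the nearest boundary bond. Crucially, this includes blocks that contain $j$ deep in their interior. For those, $j\in I_v(B)$, so \Cref{q1:influence} still gives localized influence at most $e^{-M^{-4}T}$, and at most four blocks share each value of $\delta_B(j)$. Hence only $O(u+1+\beta J_{\mathrm{int}})$ blocks pay an error of size $(\ell+u)e^{-\lambda u/2}$, and the total error is linear in $\ell$. After dividing by $\ell$, the condition on $u$ no longer involves $\ell$. This lets the paper fix $u=\lceil A_d(1+\beta J_{\mathrm{int}})^2\rceil$, then $T$, then $\ell=\lceil 32(u+1)A_u\rceil$ explicitly.

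\textbf{Why your bounds still match.} Your route still yields \eqref{q1:parameters}, because $\log\ell=O(1+\beta J_{\mathrm{int}})$. Your $u$ is therefore also $O((1+\beta J_{\mathrm{int}})^2)$. The ``logarithms'' you bound crudely are in fact linear in $\beta J_{\mathrm{int}}$, and three of them give exactly the exponent $8$. Also $\log(u+1)=O(\log(2+\beta J_{\mathrm{int}}))$, so your absorption alternative is the right resolution for $T$. The paper's $T$ depends on $u$ too, and no $u$-independent choice is needed.

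\textbf{Making your version rigorous.} Fix $u=\lceil A_d(1+\beta J_{\mathrm{int}})^2\rceil$ at the outset and define $T$ and $\ell$ from it. Then verify your $\ell$-dependent smallness condition by taking $A_d$ large. This works because $e^{-cA_d(1+\beta J_{\mathrm{int}})/2}$ beats the polynomial factors in $A_d$ and $1+\beta J_{\mathrm{int}}$ coming from $\ell$, $K_\beta$ and $T$.

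\textbf{Constants.} Your final tally does not quite close. With $e^{-M^{-4}T}\le 1/(64(u+1))$ it gives $-\ell+\ell/64+\ell/4+\ell/8+\ell/8=-\ell/2+\ell/64$. Tightening to $e^{-M^{-4}T}\le 1/(128(u+1))$ fixes this.
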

\begin{proof}
\proofstep{Counting the blocks}
Fix $j\in[N]$. Recall that $B_a=[a,a+\ell-1]\cap[N]$, with
$2-\ell\le a\le N$. Then $j\in B_a$ exactly when
$j-\ell+1\le a\le j$, so precisely $\ell$ labelled blocks contain $j$.
Write
\[
 \delta_B(j)=\min_{q\in\partial B}\dist(j,\{q,q+1\}),
\]
with $\delta_B(j)=\infty$ when $B=[N]$. A boundary bond labelled by $q$
can occur only for the block labels $a=q+1$ and $a=q-\ell+1$.
For each integer distance $v\ge0$, at most two bonds are at that distance
from $j$, namely $q=j+v$ and $q=j-v-1$ when these lie in $[1,N-1]$.
Hence at most four labelled blocks $B$ have $\delta_B(j)=v$.
Recall from \eqref{qg:collar-def} that
$C_u(B)=[N]\cap\bigcup_{q\in\partial B}[q-u,q+u+1]$ consists of the
sites within distance $u$ of a boundary bond of $B$. Thus $j\in C_u(B)$
is equivalent to $\delta_B(j)\le u$. Summing over $v=0,\ldots,u$
shows that at most $4(u+1)$ labelled blocks satisfy $j\in C_u(B)$.

\proofstep{Summing the influences}
Fix an integer $u\ge0$ and a time $T\ge0$, and put $\eta=e^{-M^{-4}T}$.
For each localized channel, \Cref{q1:influence} and $|B|\le\ell$ give
\[
 D_{B,u}(j)-1\le
 \begin{cases}
  -1+\eta,&j\in I_u(B),\\
  A_u+4\ell\eta-1,&j\in C_u(B),\\
  0,&j\notin B^{\le u+1}.
 \end{cases}
\]
Recall that $I_u(B)=B\setminus C_u(B)$ and $B^{\le u+1}=B\cup C_u(B)$.
In the interior and outside cases, the bound is exactly
$(-1+\eta)\one_{\{j\in B\}}$. In the boundary case $j\in C_u(B)$,
adding $A_u+4\ell\eta$ to this quantity gives at least
$A_u+4\ell\eta-1$, since $\eta\ge0$. Thus all three cases satisfy
\[
 D_{B,u}(j)-1\le(-1+\eta)\one_{\{j\in B\}}
       +(A_u+4\ell\eta)\one_{\{j\in C_u(B)\}}.
\]
The indicators count precisely $\ell$ blocks with $j\in B$ and at most
$4(u+1)$ blocks with $j\in C_u(B)$. Summing therefore gives
\begin{align*}
 \sum_{B\in\mathscr W_\ell}(D_{B,u}(j)-1)
 &\le(-1+\eta)\ell+4(u+1)(A_u+4\ell\eta)\\
 &=-\ell+[1+16(u+1)]\ell\eta+4(u+1)A_u.
\end{align*}
Now, for blocks $B$ with $\delta_B(j)\le u$,
\eqref{qg:channel-transport} and \eqref{eq:influence} give directly
\[
 D_B(j)\le D_{B,u}(j)+2K_\beta T(\ell+u+2)e^{-\lambda u/2}.
\]
For blocks with $u<\delta_B(j)<\infty$, use radius
$v=\delta_B(j)-1$ instead. Then $j\notin C_v(B)$: if $j\in B$, it lies
in $I_v(B)$; otherwise, $j\notin B^{\le v+1}$. Since $0<\eta\le1$,
\Cref{q1:influence} therefore gives
\begin{align*}
 D_{B,v}(j)-1&\le(-1+\eta)\one_{\{j\in B\}}\le0,\\
 D_B(j)-1&\le(-1+\eta)\one_{\{j\in B\}}
       +2K_\beta T(\ell+v+2)e^{-\lambda v/2}.
\end{align*}
For $B=[N]$, the localized channel is exact and its influence is
$e^{-T}\le\eta$, so $D_B(j)-1\le-1+\eta$ without an approximation error.

We now bound the total change in the influence coefficients when these
localized channels are replaced by $\Phi_B(T)$. The same approximation
bound applies to the absolute difference of the influences, and hence
\begin{align*}
 &\sum_{\substack{B\in\mathscr W_\ell\\\delta_B(j)\le u}}
       |D_B(j)-D_{B,u}(j)|
  +\sum_{\substack{B\in\mathscr W_\ell\\u<\delta_B(j)<\infty}}
       |D_B(j)-D_{B,\delta_B(j)-1}(j)|\\
 &\quad\le2K_\beta T(\ell+u+2)e^{-\lambda u/2}
       \underbrace{\#\{B\in\mathscr W_\ell:\delta_B(j)\le u\}}_{\le4(u+1)}\\
 &\qquad+2K_\beta T\sum_{v=u}^\infty(\ell+v+2)e^{-\lambda v/2}
       \underbrace{\#\{B\in\mathscr W_\ell:\delta_B(j)=v+1\}}_{\le4}\\
 &\quad\le8K_\beta T\left[(u+1)(\ell+u+2)e^{-\lambda u/2}
              +\sum_{v=u}^\infty(\ell+v+2)e^{-\lambda v/2}\right]\\
 &\quad=8K_\beta T e^{-\lambda u/2}
       [(u+1+g_0)(\ell+u+2)+g_1],
\end{align*}
where
\[
 g_0=\sum_{n=0}^\infty e^{-\lambda n/2}=\frac1{1-e^{-\lambda/2}},\qquad
 g_1=\sum_{n=0}^\infty ne^{-\lambda n/2}
                  =\frac{e^{-\lambda/2}}{(1-e^{-\lambda/2})^2}.
\]
The indicator bound thus applies with radius $u$ for $\delta_B(j)\le u$
and radius $\delta_B(j)-1$ for $u<\delta_B(j)<\infty$.
Adding the approximation errors and using the block counts gives
\begin{equation}
 \begin{split}
 \sum_{B\in\mathscr W_\ell}[D_B(j)-1]
 \le{}&(-1+\eta)\sum_{B\in\mathscr W_\ell}\one_{\{j\in B\}}
 +(A_u+4\ell\eta)\sum_{B\in\mathscr W_\ell}\one_{\{j\in C_u(B)\}}\\
 &+8K_\beta T e^{-\lambda u/2}[(u+1+g_0)(\ell+u+2)+g_1]\\
 \le{}&(-1+\eta)\ell+4(u+1)(A_u+4\ell\eta)\\
 &+8K_\beta T e^{-\lambda u/2}[(u+1+g_0)(\ell+u+2)+g_1]\\
 ={}&-\ell+[1+16(u+1)]\ell e^{-M^{-4}T}+4(u+1)A_u\\
 &+8K_\beta T e^{-\lambda u/2}[(u+1+g_0)(\ell+u+2)+g_1].
 \end{split}
 \label{q1:balance}
\end{equation}
These counts also apply when $N<\ell$, since distinct labels are retained
when blocks coincide with the whole chain.

\proofstep{Choosing the block length and time}
For a constant $A_d\ge1$ to be fixed below, set
\begin{equation}
 u=\lceil A_d(1+\beta J_{\mathrm{int}})^2\rceil,\qquad
 \ell=\lceil32(u+1)A_u\rceil,\qquad
 T=M^4\log\bigl(8[1+16(u+1)]\bigr).
 \label{q1:choices}
\end{equation}
Recall that $M=e^{\beta J_{\mathrm{int}}/2}$ by
\Cref{q1:boundary-operators}. These choices give
\[
 [1+16(u+1)]\ell e^{-M^{-4}T}=\frac\ell8,
 \qquad 4(u+1)A_u\le\frac\ell8.
\]
It remains to bound the localization error in \eqref{q1:balance} by
$\ell/4$. Since $A_u\ge8(u+1)$, we have $\ell\ge256(u+1)^2$.
Also $g_0\le C(1+\beta J_{\mathrm{int}})$ and
$g_1\le C(1+\beta J_{\mathrm{int}})^2$. Thus
\[
 E_u:=\frac{8K_\beta T}{\ell}e^{-\lambda u/2}
       [(u+1+g_0)(\ell+u+2)+g_1]
 \le CK_\beta T(u+1)e^{-\lambda u/2}.
\]
Using $\lambda=c/(1+\beta J_{\mathrm{int}})$ and
$K_\beta\le C_de^{2\beta J_{\mathrm{int}}}(1+\beta J_{\mathrm{int}})^7$
from \Cref{qg:chain-transport}, we obtain
\begin{align*}
 E_u
 &\le C_d(A_d+1)e^{4\beta J_{\mathrm{int}}}
       (1+\beta J_{\mathrm{int}})^9
       [1+\log(A_d+1)+\log(1+\beta J_{\mathrm{int}})]
       e^{-cA_d(1+\beta J_{\mathrm{int}})/2}\\
 &\le C_d(A_d+1)[1+\log(A_d+1)]
       e^{-(cA_d/2-14)(1+\beta J_{\mathrm{int}})}.
\end{align*}
The second line uses
$(1+\beta J_{\mathrm{int}})^9[1+\log(1+\beta J_{\mathrm{int}})]
\le Ce^{10(1+\beta J_{\mathrm{int}})}$.
Choose $A_d$ so that $cA_d/2>14$ and the last bound is at most $1/4$
at $\beta J_{\mathrm{int}}=0$; it is then at most $1/4$ for every
$\beta J_{\mathrm{int}}\ge0$. Substitution in \eqref{q1:balance} gives
\[
 \sum_{B\in\mathscr W_\ell}[D_B(j)-1]
 \le-\ell+\frac\ell8+\frac\ell8+\frac\ell4=-\frac\ell2.
\]
The channels preserve $\sigma_{\beta,H}$ by \Cref{qg:KMS}. Since
$m=\ell$, this proves the quantum DS condition with $\kappa=1/2$.

Finally, \eqref{q1:Ar-bound} and $u\le C_d(1+\beta J_{\mathrm{int}})^2$
give
\begin{align*}
 \ell
 &\le C_d(u+1)(1+\beta J_{\mathrm{int}})^2e^{8\beta J_{\mathrm{int}}}
       (u+\beta J_{\mathrm{int}}+1)^2
 \le C_d(1+\beta J_{\mathrm{int}})^8e^{8\beta J_{\mathrm{int}}},\\
 T&\le C_de^{2\beta J_{\mathrm{int}}}\log(2+\beta J_{\mathrm{int}}).
\end{align*}
This proves \eqref{q1:parameters}.
\end{proof}

\begin{proof}[Proof of Theorem~\ref{q1:main}]
\Cref{q1:balance-proposition} gives the parameter bounds
\eqref{q1:parameters} and the quantum DS condition with $\kappa=1/2$.
Applying \Cref{thm:qds} with $a_B=1/\ell$ gives uniqueness of the
stationary state and \eqref{q1:mixing}.
\end{proof}

\section{Stability under local perturbations}
\label{qp:section}
{%
In this section we prove that a strict quantum DS condition persists
under sufficiently small local perturbations of a reference Hamiltonian.
We then apply the general theorem to the Ising model at temperatures where
single-site conditions fail
(\Cref{qp:examples}).
}
{%
Fix a finite-range interaction geometry of polynomial volume growth
with exponent $D\ge1$, bounded local dimension $d$, and bounded-size,
bounded-diameter labelled blocks $B$ with rates $a_B>0$. Consider a perturbed Hamiltonian
\begin{equation}
 H^{(0)}=\sum_X h_X^{(0)},\qquad
 H=H^{(0)}+Q,\qquad Q=\sum_X Q_X,\qquad
 \delta=\sup_i\sum_{X\ni i}\|Q_X\|.
 \label{qp:local-norm}
\end{equation}
We also fix the reference intersite strength to be bounded by
$J_{\mathrm{int}}^{(0)}$, while on-site terms may have arbitrary finite
norms.
}
{ Choose a neighbourhood $R_B$
containing $B$ and every prescribed support crossing $B$, for both
the reference and the perturbation. We assume that, for every $r\ge0$,
\begin{equation}
 |R_B^{\le r}|\le V(1+r)^D,\qquad
 \sup_j\sum_{B:j\in R_B^{\le r}}a_B\le \Xi(1+r)^D.
 \label{qp:density}
\end{equation}
Here $V,\Xi$ are independent of the system size. For example, on a
finite lattice $\Lambda\subseteq\ZZ^D$ with the $\ell^\infty$ distance,
fix integers $L_0,\ell\ge1$
and take the nonempty blocks
\[
 B_x=\Lambda\cap\bigl(L_0x+\{0,\ldots,\ell L_0-1\}^D\bigr),
 \qquad x\in\ZZ^D,\qquad a_{B_x}=\ell^{-D}.
\]
Each block has at most $(\ell L_0)^D$ sites, and each site belongs
to exactly $\ell^D$ labelled blocks. Choosing $R_B$ as a fixed-radius
neighbourhood of $B$ containing all crossing supports gives
\eqref{qp:density}, with $V,\Xi$ depending only on $D,L_0,\ell$
and that radius. 
Write
\begin{equation}
 b=\sup_B|B|,\qquad
 w=\beta\sup_B\sum_{X\text{ crossing }B}\|h_X^{(0)}\|.
 \label{qp:parameters}
\end{equation}
Note that the sum defining $w$ excludes on-site terms. All blocks, rates, and
internal evolution times are fixed when varying the Hamiltonian.
}

{%
\begin{theorem}[Stability of the quantum DS condition]
\label{qp:commuting-main}
\label{qp:general-main}
Fix $\beta>0$, a volume-independent upper bound $T>0$, block times
$0<T_B\le T$, and $\kappa_0>0$.
With $D_\Phi(j)$ defined in \eqref{eq:influence}, suppose the QBP channels
$\Phi_B(H^{(0)})=e^{T_B\mathcal K_B(H^{(0)})}$ satisfy the quantum DS condition
with rates $a_B$ and constant $\kappa_0$:
\begin{equation}
 \min_j\sum_Ba_B[1-D_{\Phi_B(H^{(0)})}(j)]\ge\kappa_0>0.
 \label{qp:reference-certificate}
\end{equation}
Then there exists $c\in(0,1]$, independent of the volume, such that whenever
$\delta\le\delta_*:=c/\beta$, the QBP channels
$\Phi_B(H)=e^{T_B\mathcal K_B(H)}$ satisfy the quantum DS condition
with constant $\kappa_0/2$.
The constant $c$ depends only on the fixed geometric and intersite-interaction
bounds, $\beta$, the block parameters and rates, $T$, and $\kappa_0$. In
particular, with these data fixed, it is independent of the reference
on-site norms. 
The times $T_B$ are kept fixed under perturbation.
\end{theorem}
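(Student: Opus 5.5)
We will show that the total influence $\sum_B a_B D_{\Phi_B(H)}(j)$ changes by at most $\kappa_0/2$ when $H^{(0)}$ is replaced by $H$, uniformly in $j$ and $N$. Since
\[
|D_\Phi(j)-D_\Psi(j)|\le \sup_{\Tr_jX=0}\frac{\|(\Phi-\Psi)(X)\|_{W_1}}{\|X\|_1/2},
\]
it suffices to bound the transport norm of $\Phi_B(H)-\Phi_B(H^{(0)})$ applied to inputs with $\Tr_jX=0$. This bound must be summable over blocks $B$ with weights $a_B$ and decay in $\dist(j,B)$.

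\textbf{Localization.} For each block and radius $r\ge0$ we define restricted generators $\mathcal K_{B,r}(H)$ and $\mathcal K_{B,r}(H^{(0)})$, following the chain construction of \Cref{qg:collars}. We solve the QBP equation \eqref{qg:ODE} and form the coherent correction with the Hamiltonian restricted to $R_B^{\le r}$, keeping the crossing interaction and the reset $\mathcal R_B$.

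\Cref{qg:lr}, \Cref{lem:qbp-kernels} and the multiplier bound \Cref{qg:multiplier} give the analogues of \Cref{q1:boundary-operators,q1:jump-locality,q1:coherent-locality} on graphs of polynomial growth. The result is
\[
\|e^{T_B\mathcal K_B}-e^{T_B\mathcal K_{B,r}}\|_\diamond\le C T(1+r)^{p}e^{-\mu r}
\]
for both Hamiltonians. Here $\mu,C$ depend only on the intersite strength $J^{(0)}_{\rm int}+\delta\le J^{(0)}_{\rm int}+c/\beta$, on $w$, $\beta$ and the geometry. They do not depend on the on-site norms, since Lieb--Robinson bounds are insensitive to them and the on-site terms stay in the restricted Hamiltonians.

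Telescoping in $r$ exactly as in \Cref{qg:chain-transport}, with \eqref{eq:localW} and $|R_B^{\le r}|\le V(1+r)^D$, converts each localization error into a transport-norm error $C T(1+r)^{p+D}e^{-\mu r}\|X\|_1$.

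\textbf{Perturbation estimate on finite regions.} On the finite region $R_B^{\le r}$ we compare $\mathcal K_{B,r}(H)$ with $\mathcal K_{B,r}(H^{(0)})$. The difference of the QBP generators $\mathcal F$ is handled by a Duhamel formula in real time. This is the main obstacle, because $\tau_t^{H}$ and $\tau_t^{H^{(0)}}$ differ by $\int_0^t\tau_s^{H}(i[Q,\cdot])\,ds$, and $Q$ has total norm $|R_B^{\le r}|\delta$.

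We will use the interaction picture relative to the on-site terms together with the kernel tail $e^{-2\pi L/\beta}$. This bounds the $\mathcal F$-difference by $C\beta\delta(1+r)^D$. The same $\beta\delta$-type bound controls $\mathcal G$ through \eqref{eq:qbp-G-source}, using $[H,S_B]$ via the split into a part local near $S_B$ and a commutator with $Q$ only. Grönwall on \eqref{qg:ODE} then gives $\|O_{B,r}(H)^{\pm1}-O_{B,r}(H^{(0)})^{\pm1}\|\le C\beta\delta(1+r)^{D}e^{w}$.

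The perturbation also changes the crossing term and $\sigma^B_\beta$. These changes are controlled by $\beta\delta b$, via the Duhamel formula for $e^{-\beta H_B}$ relative to its norm. For the Gibbs state we use $\|\sigma^B_{\beta,H}-\sigma^B_{\beta,H^{(0)}}\|_1\le 2\beta\|Q_B\|\le 2\beta b\delta$, which is independent of the on-site terms.

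Collecting these, the diamond distance between the two restricted channels is at most $C T\beta\delta(1+r)^{q}$. Via \eqref{eq:localW}, the transport error is $C T\beta\delta(1+r)^{q+D}\|X\|_1$.

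\textbf{Summation.} For a block $B$ at distance $\rho$ from $j$, we first localize both channels at radius $r_B=\max\{\rho-1,r_0\}$. If $j\notin R_B^{\le r_B}$, both restricted channels have influence exactly one by \Cref{lem:transport}, so only the localization tails $CT(1+\rho)^{p+D}e^{-\mu\rho}$ contribute. Blocks with $j\in R_B^{\le r_0}$ contribute the perturbation term plus tails.

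Summing with \eqref{qp:density} gives
\[
\Big|\sum_B a_B\bigl(D_{\Phi_B(H)}(j)-D_{\Phi_B(H^{(0)})}(j)\bigr)\Big|\le \Xi\Big[CT\beta\delta(1+r_0)^{q+2D}+CT\sum_{\rho\ge r_0}(1+\rho)^{p+2D}e^{-\mu\rho}\Big].
\]
We choose $r_0$ so that the tail term is at most $\kappa_0/4$, and then choose $c$ so that the first term is at most $\kappa_0/4$ whenever $\beta\delta\le c$. Combined with \eqref{qp:reference-certificate}, this yields the quantum DS condition with constant $\kappa_0/2$.
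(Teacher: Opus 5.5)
Your architecture matches the paper's: a Lipschitz comparison of $\mathcal K_{B,U}(H)$ and $\mathcal K_{B,U}(H^{(0)})$ on a fixed finite region (\Cref{qp:lipschitz}), exponentially small restriction increments for both Hamiltonians (\Cref{qp:boundaries}), conversion to influence differences through \eqref{eq:localW}, and a density-weighted sum in which the radius is fixed before $c$. Your block-dependent radius $r_B=\max\{\rho-1,r_0\}$ is equivalent to the paper's common radius $R$ followed by telescoping.

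The genuine gap is in the perturbation estimate for the coherent correction. You control $\mathcal G$ through \eqref{eq:qbp-G-source}, that is, through $\mathcal F_K([K,A])$. But $\|[H_U,A]\|$ is governed by the full Hamiltonian near $\supp A$, including the reference on-site terms, whose norms are arbitrary. Take any split of $[H_U,S_{B,U}(H)]-[H_U^{(0)},S_{B,U}(H^{(0)})]$ into a commutator with the perturbation and a ``local part''. You are still left with terms such as $[H_U^{(0)},S_{B,U}(H)-S_{B,U}(H^{(0)})]$ and $(\mathcal F_{H_U}-\mathcal F_{H_U^{(0)}})\bigl([H_U^{(0)},S_{B,U}(H^{(0)})]\bigr)$. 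For these, your route gives only bounds that grow with the norms of the reference on-site terms. The resulting $c$ would then depend on data the theorem explicitly excludes.

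The fix is to avoid commutators with the Hamiltonian. With $S=S_{B,U}(H)$ and $S^{(0)}=S_{B,U}(H^{(0)})$, split
\[
 \mathcal G_{H_U}(S)-\mathcal G_{H_U^{(0)}}(S^{(0)})
 =\mathcal G_{H_U}(S-S^{(0)})+\bigl[\mathcal G_{H_U}-\mathcal G_{H_U^{(0)}}\bigr](S^{(0)}).
\]
\begin{itemize}
\item The first term is bounded by the multiplier estimate $\|\mathcal G_{H_U}\|_{\rm cb}\le4+|U|\log_2d$ of \Cref{qg:multiplier}, which depends only on the dimension.
\item For the second term, use the $\sinh$ representation \eqref{qg:G-integral}. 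Combine it with the real-time Duhamel bound $\|\tau_{\pm t}^{H_U}(A)-\tau_{\pm t}^{H_U^{(0)}}(A)\|\le2|t|\,\|H_U-H_U^{(0)}\|\,\|A\|$ and the identity $\int_0^\infty t/\sinh(2\pi t/\beta)\,dt=\beta^2/16$. This gives $\tfrac\beta4\|H_U-H_U^{(0)}\|\,\|S^{(0)}\|$, which is \eqref{qp:G-lipschitz}.
\end{itemize}
The same Duhamel bound, integrated against $p_\beta$ using $\int|t|p_\beta(t)\,dt\le\beta$, already gives your $\mathcal F$-estimate $2\beta\|H_U-H_U^{(0)}\|\,\|A\|$. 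So that step, which you called the main obstacle, needs neither the interaction picture nor the kernel tail.

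With these replacements, your finite-region generator bound becomes $C\beta\delta|U|^2$, with $C$ depending only on $d,b,w$. The rest of your argument then goes through.
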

}

{%
\subsection{Proof of Theorem~\ref{qp:general-main}}
\label{qp:proofs}
The proof of Theorem \ref{qp:general-main} follows the same outline as the proof of Theorem \ref{q1:main}.
First, \Cref{qp:lipschitz} bounds the change of a block generator on
a finite region, while \Cref{qp:boundaries} controls the effect of
enlarging that region. Duhamel's formula and \Cref{lem:transport}
turn these estimates into bounds on the channel influences.
Summing over blocks then shows that sufficiently
small $\delta$ gives the quantum DS condition with constant $\kappa_0/2$,
uniformly in the system size.
}

{For $U\supseteq R_B$, write $O_{B,U}(H)$,
$\mathcal T_{B,U}(H)$, $S_{B,U}(H)$ and $\mathcal K_{B,U}(H)$
for the objects in \eqref{qg:ODE} and \eqref{qg:generator} constructed from
$H_U$, using the interpolation
\[
 H_U-(1-s)W_{\partial B}(H),\qquad 0\le s\le1.
\]
Here $W_{\partial B}(H)$ is the crossing interaction
\eqref{eq:crossing-interaction} of $H$. Since $U$ contains $B$ and
every support crossing $B$,
\[
 H_U=H_B+H_{U\setminus B}+W_{\partial B}(H).
\]
The crossing interaction is therefore unchanged. The block Hamiltonian
also remains $H_B$, so the reset in \eqref{qg:reset} uses the same
Gibbs state $\sigma_\beta^B$ and acts on $U$ as
\[
 \mathcal R_B(H)(X)=\sigma_\beta^B\otimes\Tr_B X,
 \qquad X\in\BB(\HH_U).
\]
The reference operators $O_{B,U}(H^{(0)})$,
$\mathcal T_{B,U}(H^{(0)})$, $S_{B,U}(H^{(0)})$ and
$\mathcal K_{B,U}(H^{(0)})$ are defined by replacing $H$ with
$H^{(0)}$ throughout this construction, including the Gibbs state
in the reset.}

{%
\begin{lemma}[Local Lipschitz bound]
\label{qp:lipschitz}
For $\beta\delta\le1$, $n=|U|$, and an arbitrary
reference $H^{(0)}$, there is a constant $C_{\rm loc}$ such that
\begin{equation}
 \|\mathcal K_{B,U}(H)-\mathcal K_{B,U}(H^{(0)})\|_\diamond
 \le C_{\rm loc}\,\beta\delta n^2,
 \label{qp:local-lipschitz}
\end{equation}
where $C_{\rm loc}$ depends only on
$d,b,w$.
\end{lemma}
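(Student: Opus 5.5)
The plan is to compare the three pieces of \eqref{qg:generator} separately on the finite region $U$: the jump map $\mathcal T_{B,U}$, the loss operator $S_{B,U}$, and the coherent term $G_{B,U}$. Two features of the construction make this possible. First, the Hamiltonian enters only through the real-time evolutions inside $\mathcal F$ and $\mathcal G$, and through the block Gibbs state $\sigma_\beta^B$. Second, the two Hamiltonians on $U$ differ by $Q_U:=\sum_{X\subseteq U}Q_X$, with $\|Q_U\|\le n\delta$; in particular the on-site terms never need to be bounded. Throughout I write $\Delta$ for the difference between the perturbed and reference objects.

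\textbf{Step 1: elementary inputs.} The crossing interactions satisfy
\[
\|W_{\partial B}(H)-W_{\partial B}(H^{(0)})\|\le b\delta,
\]
since every perturbation term crossing $B$ contains a site of $B$. Hence $\beta\|W_{\partial B}(H)\|\le w+b\beta\delta\le w+b$, and \Cref{qg:congruence} and \Cref{qg:loss} bound $O^{\pm1}$, $S$ and $\|\mathcal T\|_\diamond$ for both Hamiltonians by constants depending only on $w,b$.

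For the time evolutions I will use Duhamel's bound $\|\tau_t^K(A)-\tau_t^{K_0}(A)\|\le 2|t|\,\|K-K_0\|\,\|A\|$. It is insensitive to the common on-site part of $K$ and $K_0$. Integrating it against $p_\beta$ and using the first moment in \eqref{eq:qbp-kernel-tail} gives
\[
\|\mathcal F_K(A)-\mathcal F_{K_0}(A)\|\le 2\beta\|K-K_0\|\,\|A\|.
\]
Integrating it against $[\beta\sinh(2\pi t/\beta)]^{-1}$ in \eqref{qg:G-integral} gives $\|\mathcal G_K(A)-\mathcal G_{K_0}(A)\|\le C\beta\|K-K_0\|\,\|A\|$, because $\int_0^\infty t/\sinh(2\pi t/\beta)\,dt=O(\beta^2)$.

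Finally, for the Gibbs states of $H_B$ and $H_B^{(0)}$, which differ by at most $b\delta$ in norm, a Duhamel expansion of $e^{-\beta H_B}$ together with normalization gives $\|\Delta\sigma_\beta^B\|_1\le2\beta b\delta$. This also holds independently of the on-site norms.

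\textbf{Step 2: QBP operator, jump map and loss.} In the interpolation $H_U-(1-s)W_{\partial B}$, the generator $A(s)=-\frac\beta4\mathcal F_{H_U(s)}(W_{\partial B})$ changes in two ways. The change in $W_{\partial B}$ costs at most $\beta b\delta/4$, using \Cref{lem:qbp-kernels}. The change in the Hamiltonian costs at most $\frac\beta4\cdot2\beta\|\Delta H_U(s)\|\,\|W\|\le C(w)\beta(n+b)\delta$. Variation of constants, exactly as in the proof of \Cref{q1:boundary-operators}, then gives $\|\Delta O^{\pm1}\|\le C(w,b)\beta\delta n$.

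Changing one factor at a time in $\operatorname{Ad}_O\circ\mathcal R_B\circ\operatorname{Ad}_{O^{-1}}$, where $\|\Delta\mathcal R_B\|_\diamond\le\|\Delta\sigma_\beta^B\|_1$, yields
\[
\|\Delta\mathcal T_{B,U}\|_\diamond+\|\Delta S_{B,U}\|\le C\beta\delta n.
\]

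\textbf{Step 3: coherent term (the main obstacle).} I split
\[
\Delta G=\mathcal G_{H_U}(\Delta S)+\bigl[\mathcal G_{H_U}-\mathcal G_{H_U^{(0)}}\bigr](S^{(0)}).
\]
The second term is at most $C\beta n\delta\|S^{(0)}\|$ by Step 1. The first cannot be bounded using the commutator bound of \Cref{lem:qbp-coherent-transform}, since $[H_U,\cdot]$ involves the unbounded on-site terms. Instead I use the multiplier bound \Cref{qg:multiplier}: $\|\mathcal G_{H_U}\|_{\mathrm{cb}}\le4+n\log_2d$, so the first term is at most $(4+n\log_2d)\,C\beta\delta n$. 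This is the source of the factor $n^2$ and is the step I expect to need the most care.

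Collecting the three steps gives
\[
\|\Delta\mathcal K_{B,U}\|_\diamond\le\|\Delta\mathcal T\|_\diamond+\|\Delta S\|+2\|\Delta G\|\le C_{\rm loc}\beta\delta n^2,
\]
with $C_{\rm loc}$ depending only on $d,b,w$, as claimed in \eqref{qp:local-lipschitz}.
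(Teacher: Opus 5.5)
Your proposal is correct and matches the paper's proof step for step: Duhamel bounds for $\mathcal F$ and $\mathcal G$ that do not see the on-site terms, the Gibbs-state perturbation bound for the reset, and variation of constants for $O_{B,U}^{\pm1}$. It also uses the same split of the coherent correction, with the multiplier bound of \Cref{qg:multiplier} producing the factor $n^2$. The only difference is cosmetic: you bound the interpolating-Hamiltonian difference by $(n+b)\delta$ instead of the paper's $n\delta$, which only changes the constant.
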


\begin{proof}
\proofstep{(i) Estimates for the reset and the maps $\mathcal F,\mathcal G$}
Recall that $\mathcal F_K$ is defined by \eqref{qg:F} and that
$\mathcal G_K$ has the integral representation \eqref{qg:G-integral}.
For any Hermitian $K,V$, differentiating the exponential and its
normalization by Duhamel's formula gives
\cite[Sec.~V.A]{Kliesch2014Locality}
\[
 \frac{d}{dz}\sigma_{\beta,K+zV}
 =-\beta\int_0^1
  \sigma_{\beta,K+zV}^{1-t}
  \bigl(V-\Tr(\sigma_{\beta,K+zV}V)I\bigr)
  \sigma_{\beta,K+zV}^{t}\,dt.
\]
The Schatten H\"older inequality bounds its trace norm by $2\beta\|V\|$.
Since $\|H_B-H_B^{(0)}\|\le |B|\delta$, the resets in
\eqref{qg:reset} satisfy
\[
 \|\mathcal R_B(H)-\mathcal R_B(H^{(0)})\|_\diamond
 =\|\sigma_{\beta,H_B}-\sigma_{\beta,H_B^{(0)}}\|_1
 \le2|B|\beta\delta.
\]
For $0\le s\le1$, the difference of the two interpolating Hamiltonians
has coefficient $s$ on crossing perturbation terms and coefficient
one on the remaining perturbation terms. Hence
\[
 \begin{aligned}
 \big\|H_U-H_U^{(0)}-(1-s)
       \bigl(W_{\partial B}(H)-W_{\partial B}(H^{(0)})\bigr)\big\|\le\sum_{X\subseteq U}\|Q_X\|\le n\delta.
 \end{aligned}
\]
For every operator $A$ on $U$ and every $t\in\RR$, the Duhamel identity gives
\[
 \begin{aligned}
 \big\|\tau_t^{H_U-(1-s)W_{\partial B}(H)}(A)
       -\tau_t^{H_U^{(0)}-(1-s)W_{\partial B}(H^{(0)})}(A)\big\|\le2|t|n\delta\|A\|.
 \end{aligned}
\]
The crossing interactions satisfy
\[
 \begin{gathered}
 \|W_{\partial B}(H)-W_{\partial B}(H^{(0)})\|\le |B|\delta,
 \qquad \beta\|W_{\partial B}(H^{(0)})\|\le w,\\
 \beta\|W_{\partial B}(H)\|\le w+b\beta\delta\le w+b.
 \end{gathered}
\]
Integrating in \eqref{qg:F}, using
$\int |t|p_\beta(t)\,dt\le\beta$ from \eqref{eq:qbp-kernel-tail}, gives
\begin{equation}
 \begin{aligned}
 &\big\|\mathcal F_{H_U-(1-s)W_{\partial B}(H)}(W_{\partial B}(H))
       -\mathcal F_{H_U^{(0)}-(1-s)W_{\partial B}(H^{(0)})}
          (W_{\partial B}(H^{(0)}))\big\|\\
 &\quad\le\|W_{\partial B}(H)-W_{\partial B}(H^{(0)})\|
       +2\beta n\delta\|W_{\partial B}(H^{(0)})\|\\
 &\quad\le (|B|+2nw)\delta.
 \end{aligned}
 \label{qp:F-lipschitz}
\end{equation}
At $s=1$, the same Duhamel estimate, applied at positive and negative
times in \eqref{qg:G-integral}, gives
\begin{equation}
 \|[\mathcal G_{H_U}-\mathcal G_{H_U^{(0)}}](A)\|
 \le\frac{\beta}{4}\|H_U-H_U^{(0)}\|\|A\|
 \le\frac{\beta}{4}n\delta\|A\|,
 \label{qp:G-lipschitz}
\end{equation}
since
\[
 \int_0^\infty \frac{t}{\sinh(2\pi t/\beta)}\,dt=\frac{\beta^2}{16}.
\]
\Cref{qg:multiplier} gives
$\|\mathcal G_{H_U}\|_{\rm cb}\le4+n\log_2d$ on $U$.

\proofstep{(ii) Estimate for the block generator}
Since $B\subseteq U$, we have $|B|\le n$.
By \eqref{qp:F-lipschitz}, the difference of the ODE coefficients in
\eqref{qg:ODE} has norm at most
\[
 \frac{(|B|+2nw)\beta\delta}{4}
 \le\frac{1+2w}{4}n\beta\delta.
\]
Variation of constants in \eqref{qg:ODE}, using
\eqref{qg:O-interval}, gives
\[
 \begin{aligned}
 \|O_{B,U}(H)^{\pm1}-O_{B,U}(H^{(0)})^{\pm1}\|
 &\le\frac{1+2w}{4}n\beta\delta
       \int_0^1 e^{(w+b)(1-v)/4}e^{(w+b)v/4}\,dv\\
 &=\frac{1+2w}{4}e^{(w+b)/4}n\beta\delta.
 \end{aligned}
\]
Using \eqref{qg:generator} and the bound, valid for any operators $A,A'$,
\[
 \|\operatorname{Ad}_A-\operatorname{Ad}_{A'}\|_\diamond
 \le(\|A\|+\|A'\|)\|A-A'\|,
\]
and including the reset difference gives
\begin{align*}
 \|\mathcal T_{B,U}(H)-\mathcal T_{B,U}(H^{(0)})\|_\diamond
 &\le e^{w+b}\bigl[(1+2w)n+2|B|\bigr]\beta\delta\\
 &\le e^{w+b}(3+2w)n\beta\delta.
\end{align*}
Taking adjoints and evaluating at $\one$ gives the same bound for
$S_{B,U}(H)-S_{B,U}(H^{(0)})$. Thus
\begin{equation}
 \begin{aligned}
 \|\mathcal T_{B,U}(H)-\mathcal T_{B,U}(H^{(0)})\|_\diamond
 &\le e^{w+b}(3+2w)n\beta\delta,\\
 \|S_{B,U}(H)-S_{B,U}(H^{(0)})\|
 &\le e^{w+b}(3+2w)n\beta\delta.
 \end{aligned}
 \label{qp:jump-continuity}
\end{equation}
Both $\|S_{B,U}(H)\|$ and
$\|S_{B,U}(H^{(0)})\|$ are at most $e^{w+b}$. In particular, the coherent corrections satisfy
\begin{equation}
 \begin{aligned}
 &\|\mathcal G_{H_U}(S_{B,U}(H))
          -\mathcal G_{H_U^{(0)}}(S_{B,U}(H^{(0)}))\|\\
 &\quad\le\|\mathcal G_{H_U}\|_{\rm cb}
                   \|S_{B,U}(H)-S_{B,U}(H^{(0)})\|
       +\tfrac\beta4\|H_U-H_U^{(0)}\|\|S_{B,U}(H^{(0)})\|\\
 &\quad\le e^{w+b}\left[(4+n\log_2d)(3+2w)+\frac14\right]n\beta\delta.
\end{aligned}
 \label{qp:coherent-continuity}
\end{equation}
Combining \eqref{qp:jump-continuity} and \eqref{qp:coherent-continuity}
in the generator decomposition \eqref{qg:generator} gives
\begin{align*}
 &\|\mathcal K_{B,U}(H)-\mathcal K_{B,U}(H^{(0)})\|_\diamond\\
 &\qquad\le\|\mathcal T_{B,U}(H)-\mathcal T_{B,U}(H^{(0)})\|_\diamond
       +\|S_{B,U}(H)-S_{B,U}(H^{(0)})\|\\
 &\qquad\quad+2\|\mathcal G_{H_U}(S_{B,U}(H))
       -\mathcal G_{H_U^{(0)}}(S_{B,U}(H^{(0)}))\|\\
 &\qquad\le e^{w+b}\left[2(5+n\log_2d)(3+2w)+\frac12\right]n\beta\delta\\
 &\qquad\le e^{w+b}\left[2(5+\log_2d)(3+2w)+\frac12\right]n^2\beta\delta.
\end{align*}
The last bound therefore gives
\eqref{qp:local-lipschitz} with
\begin{equation}
 C_{\rm loc}=e^{w+b}\left[2(5+\log_2d)(3+2w)+\frac12\right].
 \label{qp:local-constant}
\end{equation}
\end{proof}
}

{%
\noindent We next control the error from restricting the block generators
for $H$ and $H^{(0)}$ to $R_B^{\le r}$, uniformly in the volume.

\begin{lemma}[Differences of localized generators]
\label{qp:boundaries}
Under the assumptions of \Cref{qp:lipschitz}, put $H_0=H^{(0)}$ and
$H_1=H$. For $\beta\delta\le1$, there are $C_s,\mu>0$, independent of the
volume, such that for every $a\in\{0,1\}$ and every integer $r\ge0$,
\begin{equation}
 \|\mathcal K_{B,R_B^{\le r+1}}(H_a)
       -\mathcal K_{B,R_B^{\le r}}(H_a)\|_\diamond
 \le C_s(r+1)^{D+1}e^{-\mu r}.
 \label{qp:boundary-envelope}
\end{equation}
The constants depend only on $\beta,d,b,w,V,J_{\mathrm{int}}^{(0)}$,
the interaction-range bound in \eqref{eq:labelled-interactions}, and the
volume-growth constants $C_{\rm vol},D$ defined in
Section~\ref{sec:interactions}, uniformly for $\beta\delta\le1$ and
independently of the on-site norms.
\end{lemma}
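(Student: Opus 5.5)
Write $U=R_B^{\le r}$ and $U'=R_B^{\le r+1}$; both contain $R_B$, so they share the block Hamiltonian $H_B$, the reset $\mathcal R_B$ and the crossing interaction $W_{\partial B}(H_a)$. The two generators differ only because the QBP interpolations use $H_U$ versus $H_{U'}$, and because the coherent corrections use $\mathcal G_{H_U}$ versus $\mathcal G_{H_{U'}}$. I would follow the chain argument of \Cref{q1:boundary-operators,q1:jump-locality,q1:coherent-locality}, replacing \Cref{qg:transform-locality} by its polynomial-growth analogue. The key fact is the last clause of \Cref{qg:transform-locality}: for $S\subseteq\Omega\subseteq\Omega'$ with $S^{\le q}\cap\Omega'\subseteq\Omega$, the maps $\mathcal F$ and $\mathcal G$ built from $H_\Omega$ and $H_{\Omega'}$ differ by $|S|\|A\|e^{-\lambda q}$, up to a $\log$ factor for $\mathcal G$. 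For general geometry I would reprove this from \Cref{qg:lr}, splitting the time integrals at $t_*\propto q/v_{\rm LR}$ exactly as before. For $H_1=H$, the intersite strength is at most $J^{(0)}_{\rm int}+\delta\le J^{(0)}_{\rm int}+1/\beta$, so $v_{\rm LR}$ is uniform for $\beta\delta\le1$. On-site terms never enter, since \Cref{qg:lr} is uniform in them.

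\textbf{Step 1 (QBP operators).} Take $S=\supp W_{\partial B}(H_a)\subseteq R_B$, so $|S|\le V$ and $q=r$ (the distance from $R_B$ to $U'\setminus U$ is at least $r$). Then the ODE coefficients in \eqref{qg:ODE} differ by at most $C\beta\|W_{\partial B}\|V e^{-\lambda r}\le C(w+b)Ve^{-\lambda r}$. Variation of constants together with \eqref{qg:O-interval} gives $\|O_{B,U'}^{\pm1}-O_{B,U}^{\pm1}\|\le Ce^{(w+b)/4}e^{-\lambda r}$. Next, the estimate $\|\mathrm{Ad}_A-\mathrm{Ad}_{A'}\|_\diamond\le(\|A\|+\|A'\|)\|A-A'\|$ gives the same type of bound for $\|\mathcal T_{B,U'}-\mathcal T_{B,U}\|_\diamond$ and $\|S_{B,U'}-S_{B,U}\|$, with prefactor $e^{w+b}$.

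\textbf{Step 2 (coherent correction; main obstacle).} Here $\mathcal G_{H_U}(S_{B,U})$ must be compared with $\mathcal G_{H_{U'}}(S_{B,U'})$. The difficulty is that $S_{B,U}$ is supported on all of $U$ and is not localized near $R_B$, so neither the cb bound of \Cref{qg:multiplier} (which grows with $|U|\sim(1+r)^D$) nor a single application of the locality estimate suffices. I would mimic \Cref{q1:coherent-locality}: telescope $S_{B,U}=S_{B,R_B}+\sum_{m<r}D_m$ with $D_m=S_{B,R_B^{\le m+1}}-S_{B,R_B^{\le m}}$. Each $D_m$ is supported on $R_B^{\le m+1}$ with $\|D_m\|\le Ce^{-\lambda m}$ by Step 1. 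Then write
\[
\mathcal G_{H_{U'}}(S_{B,U'})-\mathcal G_{H_U}(S_{B,U})=\mathcal G_{H_{U'}}(D_r)+\sum_{m\le r-1}\bigl[\mathcal G_{H_{U'}}-\mathcal G_{H_U}\bigr](D_m)+\bigl[\mathcal G_{H_{U'}}-\mathcal G_{H_U}\bigr](S_{B,R_B}).
\]
The first term is bounded by \Cref{qg:multiplier} by $C(1+r)^De^{-\lambda r}$. The remaining terms use the locality estimate with support size at most $V(m+2)^D$ and distance at least $r-m-1$, giving $\sum_m(m+2)^De^{-\lambda m}e^{-\lambda(r-m-1)}\le C(r+1)^{D+1}e^{-\lambda r}$. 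This is the source of the exponent $D+1$.

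\textbf{Step 3.} Summing the jump, loss and coherent contributions in \eqref{qg:generator} yields \eqref{qp:boundary-envelope} with $\mu=\lambda$. All constants depend only on $\beta,d,b,w,V,J^{(0)}_{\rm int}$, the range bound and $C_{\rm vol},D$, and are uniform for $\beta\delta\le1$.
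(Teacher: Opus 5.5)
Your proposal is correct and follows the paper's proof essentially step for step. The paper likewise first reproves the $\mathcal F$/$\mathcal G$ restriction estimate from \Cref{qg:lr} with the common velocity $C_{\rm LR}(J^{(0)}_{\rm int}+1/\beta)$. It then bounds the QBP operators, jump map and loss by variation of constants, and controls the coherent correction by the same telescoping of $S$ over the nested regions $R_B^{\le m}$ combined with \Cref{qg:multiplier}, which is where the $(r+1)^{D+1}$ factor comes from.
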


\begin{proof}
\proofstep{(i) Restriction estimates for $\mathcal F$ and $\mathcal G$}
{By \eqref{qp:local-norm} and \eqref{qp:parameters}, for
$\beta\delta\le1$ we have
\begin{equation}
 \begin{aligned}
 \sup_i\sum_{\substack{X\ni i\\|X|\ge2}}\|h_X^{(0)}\|
 &\le J_{\mathrm{int}}^{(0)},\\
 \sup_i\sum_{\substack{X\ni i\\|X|\ge2}}\|h_X^{(0)}+Q_X\|
 &\le J_{\mathrm{int}}^{(0)}+\delta
 \le J_{\mathrm{int}}^{(0)}+\frac1\beta,\\
 \beta\|W_{\partial B}(H^{(0)})\|&\le w,\\
 \beta\|W_{\partial B}(H)\|&\le w+|B|\beta\delta\le w+b.
 \end{aligned}
 \label{qp:common-interaction-bounds}
\end{equation}
The constants in \eqref{qg:lr-eq} can therefore be chosen independently
of the on-site norms.}
Let $A$ be supported on a nonempty set
$S\subseteq\Omega\subseteq\Omega'\subseteq\Lambda$, and let $q\ge0$
satisfy $S^{\le q}\cap\Omega'\subseteq\Omega$.
Apply the Lieb--Robinson bound \eqref{qg:lr-eq} from
\Cref{qg:lr} within $\Omega'$, with the common choice
$v_{\rm LR}=C_{\rm LR}(J_{\mathrm{int}}^{(0)}+1/\beta)>0$
allowed by \eqref{qp:common-interaction-bounds}. Together with unitary
norm preservation, this gives
\[
 \|\tau_t^{H_{\Omega'}}(A)-\tau_t^{H_\Omega}(A)\|
 \le |S|\|A\|\min\bigl\{2,
 C_{\rm LR}(e^{v_{\rm LR}|t|}-1)e^{-\mu_{\rm LR}q}\bigr\}.
\]
Splitting \eqref{qg:G-integral} at
$t=\mu_{\rm LR}(1+q)/(2v_{\rm LR})$ gives
\[
 \begin{aligned}
 &\|\mathcal G_{H_{\Omega'}}(A)-\mathcal G_{H_\Omega}(A)\|\\
 &\quad\le 2C_{\rm LR}|S|\|A\|e^{-\mu_{\rm LR}q}
 \int_0^{\mu_{\rm LR}(1+q)/(2v_{\rm LR})}
 \frac{e^{v_{\rm LR}t}-1}{\beta\sinh(2\pi t/\beta)}\,dt\\
 &\qquad+4\|A\|
 \int_{\mu_{\rm LR}(1+q)/(2v_{\rm LR})}^{\infty}
 \frac{dt}{\beta\sinh(2\pi t/\beta)}.
 \end{aligned}
\]
The factor $2$ accounts for positive and negative times; in the
second integral, each difference of two unitary evolutions has
norm at most $2\|A\|$.
Since $\beta\sinh(2\pi t/\beta)\ge2\pi t$ and
$e^{v_{\rm LR}t}-1\le v_{\rm LR}t\,e^{v_{\rm LR}t}$, the first
integral satisfies
\[
 \begin{aligned}
 \int_0^{\mu_{\rm LR}(1+q)/(2v_{\rm LR})}
 \frac{e^{v_{\rm LR}t}-1}{\beta\sinh(2\pi t/\beta)}\,dt
 &\le\frac{v_{\rm LR}}{2\pi}
 \int_0^{\mu_{\rm LR}(1+q)/(2v_{\rm LR})}e^{v_{\rm LR}t}\,dt\\
 &=\frac{e^{\mu_{\rm LR}(1+q)/2}-1}{2\pi}.
 \end{aligned}
\]
Multiplying by $e^{-\mu_{\rm LR}q}$ bounds the corresponding
contribution by $C|S|\|A\|e^{-\mu_{\rm LR}q/2}$.
For the second integral,
\[
 \begin{aligned}
 \int_{\mu_{\rm LR}(1+q)/(2v_{\rm LR})}^{\infty}
 \frac{dt}{\beta\sinh(2\pi t/\beta)}
 &=\frac1{2\pi}\log\coth
   \left(\frac{\pi\mu_{\rm LR}(1+q)}{2\beta v_{\rm LR}}\right)\\
 &\le\frac{e^{-\pi\mu_{\rm LR}(1+q)/(\beta v_{\rm LR})}}
 {\pi\bigl(1-e^{-\pi\mu_{\rm LR}/(\beta v_{\rm LR})}\bigr)}.
 \end{aligned}
\]
The inequality follows from $\log(1+x)\le x$ for $x\ge0$ and $q\ge0$.
The denominator is positive and independent of $q$, so the second
contribution is at most
$C\|A\|e^{-\pi\mu_{\rm LR}q/(\beta v_{\rm LR})}$.
For $\mathcal F$, splitting \eqref{qg:F} at the same time and using
$\int_{\mathbb R}p_\beta(t)\,dt=1$ and \eqref{eq:qbp-kernel-tail} gives
\[
 \begin{aligned}
 \|\mathcal F_{H_{\Omega'}}(A)-\mathcal F_{H_\Omega}(A)\|
 \le C_{\rm LR}|S|\|A\|e^{-\mu_{\rm LR}q}
       \bigl(e^{\mu_{\rm LR}(1+q)/2}-1\bigr)+2\|A\|e^{-\pi\mu_{\rm LR}(1+q)/(\beta v_{\rm LR})}.
 \end{aligned}
\]
Combining the two estimates gives
\[
 \begin{aligned}
 \|\mathcal F_{H_{\Omega'}}(A)-\mathcal F_{H_\Omega}(A)\|
 +\|\mathcal G_{H_{\Omega'}}(A)-\mathcal G_{H_\Omega}(A)\|
 &\le C|S|\|A\|
 \left(e^{-\mu_{\rm LR}q/2}
       +e^{-\pi\mu_{\rm LR}q/(\beta v_{\rm LR})}\right)\\
& \le C|S|\|A\|e^{-\mu_1q},
 \end{aligned}
\]
where $\mu_1=\min\{\mu_{\rm LR}/2,
\pi\mu_{\rm LR}/(\beta v_{\rm LR})\}>0$ and $C$ depends only on
$\beta,J_{\mathrm{int}}^{(0)}$ and the geometric parameters listed in
the lemma.

\proofstep{(ii) The QBP generator}
First consider $H$. Since $\beta\|W_{\partial B}(H)\|\le w+b$ by
\eqref{qp:common-interaction-bounds}, part~(i), applied along the
restricted interpolation with source $W_{\partial B}(H)$ supported
on $R_B$ and buffer $r$, together with variation of constants in
\eqref{qg:ODE} and \eqref{qg:O-interval}, gives
\[
 \begin{aligned}
 \|O_{B,R_B^{\le r}}(H)^{\pm1}\|&\le e^{(w+b)/4},\\
 \|O_{B,R_B^{\le r+1}}(H)^{\pm1}
       -O_{B,R_B^{\le r}}(H)^{\pm1}\|&\le Ce^{-\mu_1r}.
 \end{aligned}
\]
The bounds on $O_{B,R_B}(H)^{\pm1}$ give
$\|S_{B,R_B}(H)\|\le C$.
The conjugation estimate used in the proof of \Cref{qp:lipschitz},
with the same reset for both restrictions, then gives
\[
 \begin{aligned}
 \|\mathcal T_{B,R_B^{\le r+1}}(H)
       -\mathcal T_{B,R_B^{\le r}}(H)\|_\diamond+\|S_{B,R_B^{\le r+1}}(H)-S_{B,R_B^{\le r}}(H)\|
 \le Ce^{-\mu_1r}.
 \end{aligned}
\]
For $m\ge1$, the difference
$S_{B,R_B^{\le m}}(H)-S_{B,R_B^{\le m-1}}(H)$ is supported on
$R_B^{\le m}$ and has norm at most $Ce^{-\mu_1m}$, after enlarging $C$.
Telescoping these differences and using linearity gives
\[
 \begin{aligned}
 &\mathcal G_{H_{R_B^{\le r+1}}}
       \bigl(S_{B,R_B^{\le r+1}}(H)\bigr)
  -\mathcal G_{H_{R_B^{\le r}}}\bigl(S_{B,R_B^{\le r}}(H)\bigr)\\
 &\quad=\mathcal G_{H_{R_B^{\le r+1}}}
       \bigl(S_{B,R_B^{\le r+1}}(H)-S_{B,R_B^{\le r}}(H)\bigr)\\
 &\qquad+\bigl[\mathcal G_{H_{R_B^{\le r+1}}}
                   -\mathcal G_{H_{R_B^{\le r}}}\bigr]
       \bigl(S_{B,R_B}(H)\bigr)\\
 &\qquad+\sum_{m=1}^{r}
       \bigl[\mathcal G_{H_{R_B^{\le r+1}}}
                   -\mathcal G_{H_{R_B^{\le r}}}\bigr]
       \bigl(S_{B,R_B^{\le m}}(H)-S_{B,R_B^{\le m-1}}(H)\bigr).
 \end{aligned}
\]
Since $(R_B^{\le m})^{\le r-m}\subseteq R_B^{\le r}$ for $m\le r$,
\Cref{qg:multiplier} and part~(i) give
\begin{align*}
 &\bigl\|\mathcal G_{H_{R_B^{\le r+1}}}
       \bigl(S_{B,R_B^{\le r+1}}(H)-S_{B,R_B^{\le r}}(H)\bigr)\bigr\|
 \le C(r+2)^D e^{-\mu_1r},\\
 &\bigl\|\bigl[\mathcal G_{H_{R_B^{\le r+1}}}
                   -\mathcal G_{H_{R_B^{\le r}}}\bigr]
       \bigl(S_{B,R_B}(H)\bigr)\bigr\|
 \le Ce^{-\mu_1r},\\
 &\bigl\|\bigl[\mathcal G_{H_{R_B^{\le r+1}}}
                   -\mathcal G_{H_{R_B^{\le r}}}\bigr]
       \bigl(S_{B,R_B^{\le m}}(H)-S_{B,R_B^{\le m-1}}(H)\bigr)\bigr\|\\
 &\qquad\le C(m+1)^D e^{-\mu_1m}e^{-\mu_1(r-m)}
 =C(m+1)^D e^{-\mu_1r},\qquad 1\le m\le r.
\end{align*}
Thus
\[
 \begin{aligned}
 \bigl\|\mathcal G_{H_{R_B^{\le r+1}}}
       \bigl(S_{B,R_B^{\le r+1}}(H)\bigr)
  -\mathcal G_{H_{R_B^{\le r}}}\bigl(S_{B,R_B^{\le r}}(H)\bigr)\bigr\|&\le C(r+2)^D e^{-\mu_1r}
       +Ce^{-\mu_1r}\sum_{m=0}^r(m+1)^D\\
& \le C(r+2)^{D+1}e^{-\mu_1r}.
 \end{aligned}
\]
Together with the estimates for
$\mathcal T_{B,R_B^{\le r+1}}(H)-\mathcal T_{B,R_B^{\le r}}(H)$ and
$S_{B,R_B^{\le r+1}}(H)-S_{B,R_B^{\le r}}(H)$, this gives
\eqref{qp:boundary-envelope} for $H$, after adjusting the constants.
The same bounds apply directly to $H^{(0)}$, with the same constants
$C_s,\mu$, by \eqref{qp:common-interaction-bounds}.

\end{proof}

}

{%
\begin{proof}[Proof of \Cref{qp:general-main}]
We use the constant $C_{\rm loc}$ from \Cref{qp:lipschitz}
and the constants $C_s,\mu$ from \Cref{qp:boundaries}.
For every integer $R\ge0$ and $\beta\delta\le1$, Duhamel's formula
and diamond-norm contractivity give
\[
 \bigl\|e^{T_B\mathcal K_{B,R_B^{\le R}}(H)}
       -e^{T_B\mathcal K_{B,R_B^{\le R}}(H^{(0)})}\bigr\|_\diamond
 \le TC_{\rm loc}\beta\delta|R_B^{\le R}|^2.
\]
With $H_a$ as in \Cref{qp:boundaries}, that lemma also gives
\[
 \sum_{a=0}^1
 \bigl\|e^{T_B\mathcal K_{B,R_B^{\le r+1}}(H_a)}
       -e^{T_B\mathcal K_{B,R_B^{\le r}}(H_a)}\bigr\|_\diamond
 \le 2TC_s(r+1)^{D+1}e^{-\mu r}.
\]
For channels supported on $U$, \Cref{lem:transport} and the
normalization in \eqref{eq:influence} imply
\[
 |D_\Phi(j)-D_\Psi(j)|\le2|U|\|\Phi-\Psi\|_\diamond.
\]
If $j\notin U$, both influences are one, so their difference is zero.
Consequently,
\[
 \begin{aligned}
 \left|D_{e^{T_B\mathcal K_{B,R_B^{\le R}}(H)}}(j)
       -D_{e^{T_B\mathcal K_{B,R_B^{\le R}}(H^{(0)})}}(j)\right|\le2TC_{\rm loc}\beta\delta|R_B^{\le R}|^3
       \mathbf1_{\{j\in R_B^{\le R}\}},
 \end{aligned}
\]
and
\[
 \begin{aligned}
 \sum_{a=0}^1
 \left|D_{e^{T_B\mathcal K_{B,R_B^{\le r+1}}(H_a)}}(j)
       -D_{e^{T_B\mathcal K_{B,R_B^{\le r}}(H_a)}}(j)\right|\le4TC_s|R_B^{\le r+1}|(r+1)^{D+1}e^{-\mu r}
       \mathbf1_{\{j\in R_B^{\le r+1}\}}.
 \end{aligned}
\]
Telescope both influence sequences from radius $R$ all the way up to the exact
channels. In a finite system the restrictions eventually agree with
the full generators on the components meeting $R_B$. If $j$ lies
outside those components, all relevant influences are one. We thus obtain
\begin{align}
 &D_{\Phi_B(H)}(j)\notag\\
 &\quad\le D_{\Phi_B(H^{(0)})}(j)
 +2TC_{\rm loc}\beta\delta|R_B^{\le R}|^3
       \mathbf1_{\{j\in R_B^{\le R}\}}+4TC_s\sum_{\substack{r\ge R:\,j\in R_B^{\le r+1}}}
        |R_B^{\le r+1}|(r+1)^{D+1}e^{-\mu r}.
 \label{qp:site-comparison}
\end{align}
Next, we sum \eqref{qp:site-comparison} with weights $a_B$. The two bounds in
\eqref{qp:density} give
\[
 2TC_{\rm loc}\beta\delta\sum_{B:j\in R_B^{\le R}}a_B|R_B^{\le R}|^3
 \le2TC_{\rm loc}V^3\Xi\,\beta\delta(1+R)^{4D},
\]
and
\[
 \begin{aligned}
 4TC_s\sum_{r\ge R}(r+1)^{D+1}e^{-\mu r}
          \sum_{B:j\in R_B^{\le r+1}}a_B|R_B^{\le r+1}|
 &\le4TC_sV\Xi\sum_{r\ge R}(r+2)^{2D}(r+1)^{D+1}e^{-\mu r}\\
 &\le4TC_sV\Xi\sum_{r\ge R}(r+2)^{3D+1}e^{-\mu r}.
 \end{aligned}
\]
Thus, writing
\begin{equation}
 \mathfrak e(\delta,R)
 =2TC_{\rm loc}V^3\Xi\,\beta\delta(1+R)^{4D}
  +4TC_sV\Xi\sum_{r\ge R}(r+2)^{3D+1}e^{-\mu r},
 \label{qp:error}
\end{equation}
the reference condition \eqref{qp:reference-certificate} gives
\[
 \min_j\sum_Ba_B[1-D_{\Phi_B(H)}(j)]
 \ge\min_j\sum_Ba_B[1-D_{\Phi_B(H^{(0)})}(j)]-\mathfrak e(\delta,R)
 \ge\kappa_0-\mathfrak e(\delta,R).
\]
Since
$(r+2)^{3D+1}e^{-\mu r/2}\le[2+2(3D+1)/\mu]^{3D+1}$ for $r\ge0$,
\[
 \sum_{r\ge R}(r+2)^{3D+1}e^{-\mu r}
 \le\frac{[2+2(3D+1)/\mu]^{3D+1}}{1-e^{-\mu/2}}e^{-\mu R/2}.
\]
Choose
\[
 R=\left\lceil\frac2\mu\log\left(
 1+\frac{16TC_sV\Xi[2+2(3D+1)/\mu]^{3D+1}}
 {\kappa_0(1-e^{-\mu/2})}\right)\right\rceil.
\]
Then
\[
 4TC_sV\Xi\sum_{r\ge R}(r+2)^{3D+1}e^{-\mu r}
 \le\frac{\kappa_0}{4}.
\]
Choose $c\in(0,1]$ small enough that
\[
 2TC_{\rm loc}V^3\Xi\,c(1+R)^{4D}
 \le\frac{\kappa_0}{4}.
\]
The displayed choice of $R$ depends only on $D,\mu,T,C_s,V,\Xi,\kappa_0$,
so $c$ depends only on $D,\mu,T,C_{\rm loc},C_s,V,\Xi,\kappa_0$.
For $\delta\le\delta_*:=c/\beta$, we have $\beta\delta\le c\le1$ and
$\mathfrak e(\delta,R)\le\kappa_0/2$, proving the quantum DS condition
with constant $\kappa_0/2$.
\Cref{qp:lipschitz,qp:boundaries} give the stated parameter dependence
and uniformity in the volume and the reference on-site norms.
\end{proof}
}

{%
\subsection{Example: transverse-field Ising model on a torus}
\label{qp:examples}
This example illustrates stability around an interacting classical reference in
two dimensions. Consider the transverse-field Ising Hamiltonian on the $L\times L$
square torus, with $N=L^2$ spins and zero longitudinal field,
\begin{equation}
 H(g)=-\frac{J_{\mathrm{int}}}{4}\sum_{\langle i,j\rangle}Z_iZ_j
      -g\sum_iX_i,\qquad J_{\mathrm{int}}>0,\quad g\in\RR.
 \label{qp:ising-model}
\end{equation}
Here $X_i,Z_i$ are Pauli operators and $\langle i,j\rangle$ runs
over unordered nearest-neighbour pairs. The reference is
$H^{(0)}=H(0)$ and the perturbation is $Q=-g\sum_iX_i$, so
$\delta=|g|$ in \eqref{qp:local-norm}, although $\|Q\|=N|g|$ is extensive.
At $\beta J_{\mathrm{int}}=1.4$, the classical single-site Dobrushin
test fails: its influence sum is $2\tanh(\beta J_{\mathrm{int}}/2)$,
so contraction requires $\beta J_{\mathrm{int}}<\log3$.
In this normalization, the infinite-volume square-lattice critical
inverse temperature is $2\log(1+\sqrt2)/J_{\mathrm{int}}$~\cite{Onsager1944}.
For every $\beta$ below this value, the classical reference Gibbs measures
satisfy uniform strong spatial mixing on unions of sufficiently large
coarse squares~\cite{MOS1994,Martinelli}.
Since $\beta J_{\mathrm{int}}=1.4<2\log(1+\sqrt2)$,
\Cref{qp:classical-qds} therefore supplies sufficiently large blocks
and bounded update times for which the smoothed heat-bath
channels satisfy the quantum DS condition with constant $1/2$. By \Cref{qp:general-main} with $\kappa_0=1/2$,
the perturbed channels satisfy the quantum DS condition with constant
$1/4$ whenever $|g|\le\delta_*$, for some $\delta_*>0$
independent of $L$.
For the resulting generator $\mathcal L_g$, \Cref{thm:qds} gives
\begin{equation}
 \sup_\rho\|e^{t\mathcal L_g}(\rho)-\sigma_{\beta,H(g)}\|_1
 \le 2Ne^{-t/4},
 \qquad t_{\mathrm{mix}}(\mathcal L_g,\varepsilon)\le4\log(2N/\varepsilon),
 \label{qp:ising-mixing}
\end{equation}
where the supremum includes arbitrary quantum initial states.
At fixed physical and block parameters, \Cref{qi:theorem} yields preparation
to trace-norm error $\varepsilon$ with $N\operatorname{polylog}(N/\varepsilon)$
gates and classical operations.

Efficient classical algorithms for ferromagnetic transverse-field Ising equilibrium properties
and partition functions are already known~\cite{Bravyi2015Stoquastic,BravyiGosset2017Ferromagnets}.
Other results include high-temperature mixing of classical worldline
heat-bath updates~\cite{CrossonSlezak2025} and spectral-width
algorithms~\cite{SinclairVuong2026TFIM}. Vuong \cite{Vuong2026TFIM} also gives
full quantum Gibbs-state preparation.
For $lL\times L$ tori, one can check that the condition for efficiently classical algorithms of~\cite{Vuong2026TFIM} yields $2\beta J_{\mathrm{int}}\le1$ in the $g\to0$ limit, and hence does not cover this small-field example. Thus, as far as we know, our result is the only one to provide efficient preparation for our range of parameters. Note, however, that the results in~\cite{Vuong2026TFIM} covers arbitrary graphs and here we are restricted to those with polynomial growth.

Dissipative ground-state mixing near product references~\cite{ZhanEtAl2026Ground}
and thermal mixing for weak interactions~\cite{SMB2025} concern different
perturbative settings. 

\begin{remark}[Classical mixing and partition functions]
Already at $g=0$, efficient partition function estimation and sampling complexity may differ: on the zero-field ferromagnetic square
torus, worst-case single-site Glauber mixing is rapid above the classical
critical temperature but takes $\exp(\Theta(L))$ time at fixed temperatures
below it~\cite{GheissariSinclairLowTemperature}.
Nevertheless, the ferromagnetic partition function admits a fully polynomial
randomized approximation scheme at every temperature~\cite{JerrumSinclair1993}.
Thus efficient partition-function estimation need not imply rapid Glauber
mixing from arbitrary initial states; efficient sampling from suitable
initializations remains however possible~\cite{GheissariSinclairLowTemperature}.
\end{remark}
}

\section{Conclusion}
We constructed Gibbs-preserving, KMS-reversible block dynamics by
combining local Gibbs resets with quantum belief propagation.
The QBP identity at inverse temperature $\beta/2$ preserves the algebraic identities
of the reset map, while real-time locality controls the effect of
truncating the dynamics to a neighbourhood of the block. The quantum
DS condition compares contraction inside the updated blocks with
the influence of the updates on their boundaries.

For one-dimensional systems, decay of correlations and the locality
estimates imply logarithmic mixing at every finite temperature, with single-exponential temperature dependence in the normalization.
Under the stated locality and geometric assumptions, a strict
quantum DS condition for smoothed heat-bath block channels, uniform in the volume,
persists throughout a system-size-independent neighbourhood in the
local interaction norm. In particular, the unperturbed Hamiltonians may be noncommuting.
For classical references with uniformly bounded interactions, uniform strong spatial mixing implies the quantum DS
condition by \Cref{qp:classical-qds}.
These include systems for which the single-site
criterion fails.

At fixed physical parameters, these dynamics prepare the Gibbs state
with a number of gates and classical operations that is nearly
linear in the system size. The implementation uses local approximations
to the same generators and explicit encodings of the QBP operators.

It would be interesting to improve the complexity of our $1$D algorithm as suggested in Remark~\ref{qi:temperature-remark} to have an end-to-end result that is exponential in a polynomial of the inverse temperature. Finally, the main open question left by our work is how to obtain entropy contraction or a modified log-Sobolev constant from the quantum DS conditions. 

\section*{Acknowledgements}
D.S.F. acknowledges financial support from the Novo Nordisk Foundation
(Grant No. NNF20OC0059939 Quantum for Life) and from the ERC grant GIFNEQ
101163938. CR is supported by France 2030 under the French National
Research Agency award number ``ANR-22-PNCQ-0002''.

\paragraph{Use of artificial intelligence.}
OpenAI Codex with ChatGPT 6 Astra assisted with exploring and checking
proof strategies, and with preparing the manuscript. The current
organisation and exposition of the proofs are due to the authors. The
authors are responsible for the mathematical claims, the final
presentation, and any remaining errors.

\AddToHookNext{env/thebibliography/begin}{%
  \setlength{\itemsep}{3pt}%
}
\bibliographystyle{unsrt}
\bibliography{main_qbp,main_qbp111_supplement}
\appendix

\section{Proofs of the clustering estimates}
\label{app:clustering}

This appendix proves \Cref{qc:clustering}.
We first estimate how adding one positive interaction term
changes the Gibbs operator, then combine many cuts by inclusion--exclusion.

\subsection{A positive cut and its localization}

We use positive-exponent Gibbs operators only in this equilibrium
argument. The change of sign is explicit below. Define
\begin{equation}
 f_\beta(t)=\frac{2}{\pi\beta}
 \log\left(\frac{e^{\pi|t|/\beta}+1}{e^{\pi|t|/\beta}-1}\right).
 \label{qc:kernel}
\end{equation}
This kernel is $p_{2\beta}$ from \Cref{lem:qbp-kernels}. It has unit mass and Fourier transform
\[
 \widehat f_\beta(\omega)
 =\frac{\tanh(\beta\omega/2)}{\beta\omega/2},
 \qquad \widehat f_\beta(0)=1.
\]
The elementary expansion
$\log((1+z)/(1-z))=2\sum_{k\ge0}z^{2k+1}/(2k+1)$ gives the useful
tail formula
\begin{equation}
 \int_{|t|>T}f_\beta(t)\,dt
 =\frac{8}{\pi^2}\sum_{k\ge0}
    \frac{e^{-(2k+1)\pi T/\beta}}{(2k+1)^2}
 \le e^{-\pi T/\beta}.
 \label{qc:kernel-tail}
\end{equation}

\begin{lemma}[Adding a positive interaction across one cut]
\label{qc:positive-cut}
Suppose $K(s)=K(0)+s h$, $0\le s\le1$, is a nearest-neighbor chain
Hamiltonian whose edge terms have norm at most $4J_{\mathrm{int}}$, whose on-site
terms are arbitrary, and where $h\ge0$ is supported on one edge with
$\norm{h}\le4J_{\mathrm{int}}$. Define
\[
 D_s=\frac{\beta}{2}\int_{\RR}f_\beta(t)
                  e^{itK(s)}h e^{-itK(s)}\,dt,
 \qquad F_s'=D_sF_s,\qquad F_0=I.
\]
Then
\begin{equation}
 e^{\beta K(s)}=F_s e^{\beta K(0)}F_s^*,
 \quad 0\le D_s\le2\beta J_{\mathrm{int}}I,
 \quad \norm{F_s}\le e^{2\beta J_{\mathrm{int}}s},
 \quad F_s^*F_s\ge I,
 \quad \norm{F_s^{-1}}\le1.
 \label{qc:cut-identities}
\end{equation}
Let $I_r$ be the interval of $2r$ sites centered on the selected edge.
Define $Q_s$ by
\[
 D_s^{I_r}=\frac{\beta}{2}\int_{\mathbb R}f_\beta(t)
                  e^{itK(s)_{I_r}}h e^{-itK(s)_{I_r}}\,dt,
 \qquad Q_s'=D_s^{I_r}Q_s,\qquad Q_0=I.
\]
The identities and inequalities in
\eqref{qc:cut-identities} also hold for $Q_s$. Moreover,
\begin{equation}
 \norm{F_1-Q_1}\le C_0e^{-\mu_{\mathrm{cut}} r},
 \qquad C_0=C\beta J_{\mathrm{int}} e^{2\beta J_{\mathrm{int}}},
 \qquad \mu_{\mathrm{cut}}\ge\frac{c}{1+\beta J_{\mathrm{int}}},
 \label{qc:cut-locality}
\end{equation}
with universal $C,c>0$. These estimates hold uniformly when any
collection of other edges is deleted or multiplied by parameters in
$[0,1]$.
\end{lemma}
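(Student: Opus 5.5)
The plan is to follow the pattern of \Cref{qg:congruence} and \Cref{q1:boundary-operators}, with the sign of the exponent reversed and $\beta/2$ replaced by $\beta$.

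\textbf{The congruence identity.} Put $P(s)=e^{\beta K(s)}$. In an eigenbasis of $K(s)$ the Fr\'echet derivative has divided-difference entries
\[
 \frac{e^{\beta E_a}-e^{\beta E_b}}{E_a-E_b}\,h_{ab}.
\]
The Fourier formula $\widehat f_\beta(\omega)=\tanh(\beta\omega/2)/(\beta\omega/2)$ should identify this with
\[
 P'(s)=\tfrac12\{D_s,P(s)\}
 =\tfrac{\beta}{4}\{\mathcal F(h),P\}\cdot 2 .
\]
The check is the identity $(e^{x}-e^{y})/(x-y)=\tfrac12(e^x+e^y)\tanh((x-y)/2)/((x-y)/2)$ with $x=\beta E_a$. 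The operator $F_sP(0)F_s^*$ satisfies the same linear ODE and has the same initial value, so uniqueness gives the identity in \eqref{qc:cut-identities}.

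\textbf{The norm and order bounds.} Since $f_\beta\ge0$ has unit mass and $h\ge0$, the operator $D_s$ is an average of unitary conjugates of $h$. Hence $0\le D_s\le \tfrac\beta2\|h\|\le 2\beta J_{\mathrm{int}}$. From this:
\begin{itemize}
\item $\frac{d}{ds}F_s^*F_s=2F_s^*D_sF_s\ge0$, which gives $F_s^*F_s\ge I$ and hence $\|F_s^{-1}\|\le1$.
\item Gr\"onwall gives $\|F_s\|\le e^{2\beta J_{\mathrm{int}}s}$.
\end{itemize}
The same argument applies verbatim to $Q_s$, since $K(s)_{I_r}$ is a legitimate Hamiltonian on $I_r$ and $D_s^{I_r}$ is again a positive average of conjugates of $h$. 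The first statement (the congruence) then holds for $K(s)_{I_r}$. This is all that is needed there: the claim for $Q_s$ concerns its own Gibbs operator, and presumably only the inequalities are used later.

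\textbf{The locality bound.} Variation of constants gives
\[
 F_1-Q_1=\int_0^1 F_1F_s^{-1}(D_s-D_s^{I_r})Q_s\,ds .
\]
The propagators are bounded by $\|F_1F_s^{-1}\|\le e^{2\beta J_{\mathrm{int}}(1-s)}$ (Gr\"onwall from $s$ to $1$) and $\|Q_s\|\le e^{2\beta J_{\mathrm{int}}s}$, so the product is at most $e^{2\beta J_{\mathrm{int}}}$.

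It then remains to show $\|D_s-D_s^{I_r}\|\le C\beta J_{\mathrm{int}}e^{-cr/(1+\beta J_{\mathrm{int}})}$. This is the main step. I would treat it exactly as in \Cref{qg:transform-locality}: split the $t$-integral at $t_*\propto r/v_{\rm LR}$. For $|t|<t_*$, apply the Lieb--Robinson estimate of \Cref{qg:lr} with $S$ the edge, $u\approx r$, and velocity $\propto 4J_{\mathrm{int}}$; this yields $e^{-\mu r/2}$ independently of the on-site terms. For $|t|>t_*$, use unitarity together with the tail bound \eqref{qc:kernel-tail}, which gives $e^{-\pi t_*/\beta}=e^{-c r/(\beta J_{\mathrm{int}})}$. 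Combining the two regimes yields the rate $c/(1+\beta J_{\mathrm{int}})$, and multiplying by $\beta/2\cdot\|h\|$ produces the prefactor $C_0$.

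\textbf{Uniformity.} Deleting edges, or multiplying them by parameters in $[0,1]$, only decreases the interaction strength. The Lieb--Robinson constants and all of the above bounds therefore stay the same.
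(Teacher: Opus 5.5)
Your proposal is correct and is essentially the paper's proof. It uses the same divided-difference/Fourier identification for the congruence, the same positivity and Gr\"onwall bounds for $D_s$, $F_s$, $F_s^{-1}$ (and verbatim for $Q_s$), the same variation-of-constants formula for $F_1-Q_1$, and the same split of the $t$-integral into a Lieb--Robinson part and a kernel-tail part. The only difference is that you cite \Cref{qg:lr} where the paper rederives the two-site estimate through the interaction picture; this is equivalent here, since $|S|=2$ and those constants are independent of on-site norms and local dimensions.

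One slip needs fixing. Your display should read $P'(s)=\{D_s,P(s)\}$, not $\tfrac12\{D_s,P(s)\}$. Your own identity $(e^x-e^y)/(x-y)=\tfrac12(e^x+e^y)\tanh((x-y)/2)/((x-y)/2)$ gives exactly $P'=\{D_s,P\}$. That is also the equation satisfied by $F_sP(0)F_s^*$, since $\frac{d}{ds}(F_sP(0)F_s^*)=\{D_s,F_sP(0)F_s^*\}$, so the uniqueness argument needs this factor.
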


\begin{proof}
In an eigenbasis of $K(s)$, the derivative of $e^{\beta K(s)}$ has
matrix entries
\[
 \left(\frac{d}{ds}e^{\beta K(s)}\right)_{ab}
 =h_{ab}\frac{e^{\beta E_a}-e^{\beta E_b}}{E_a-E_b},
\]
with the continuous value at equal eigenvalues. The Fourier transform
in \eqref{qc:kernel} identifies this derivative with
$D_s e^{\beta K(s)}+e^{\beta K(s)}D_s$. Uniqueness of this linear
matrix differential equation proves the
stated identity.
The positivity and unit mass of $f_\beta$ imply
$0\le D_s\le(\beta\norm{h}/2)I\le2\beta J_{\mathrm{int}}I$. Thus
$\norm{F_s}\le e^{2\beta J_{\mathrm{int}}s}$, while
\[
 \frac{d}{ds}(F_s^*F_s)=2F_s^*D_sF_s\ge0
\]
proves $F_s^*F_s\ge I$ and hence $\norm{F_s^{-1}}\le1$.

For localization, put the on-site terms into their interaction picture.
The on-site propagator is a tensor product of unitaries; conjugation
therefore preserves the support and norm of every edge interaction.
The time-dependent nearest-neighbor commutator recursion then gives,
for the present two-site observable and interval restriction,
\begin{equation}
 \norm{e^{itK(s)}h e^{-itK(s)}
       -e^{itK(s)_{I_r}}h e^{-itK(s)_{I_r}}}
 \le CJ_{\mathrm{int}}\min\{1,e^{v_0J_{\mathrm{int}}|t|-r}\},
 \label{qc:lr}
\end{equation}
where $v_0$ is universal. To recall why no support-size factor appears,
one expands the commutator recursion in paths of overlapping edges
starting at the fixed support of $h$. A path reaching an omitted
boundary edge has at least $r-1$ steps; the bounded number of extensions
at each step and the time-ordered integrals give an exponential bound
$C\norm{h}\,e^{v_0J_{\mathrm{int}}|t|-r}$. Duhamel comparison of the two evolutions
involves only the two edges crossing the interval boundary. The trivial
bound $2\norm{h}$ supplies the first entry in the minimum. This
argument applies unchanged to time-dependent interactions and deleted
edges, and explains the claimed uniformity.

For $\beta J_{\mathrm{int}}>0$, split the QBP integral at
$T=r/(2v_0J_{\mathrm{int}})$. Using \eqref{qc:lr} on the first part and
\eqref{qc:kernel-tail} on the second gives
\[
 \begin{aligned}
 \norm{D_s-D_s^{I_r}}
 &\le C\beta J_{\mathrm{int}}\bigl(e^{-r/2}+e^{-\pi r/(2v_0\beta
J_{\mathrm{int}})}\bigr)
 \\
 &\le C\beta J_{\mathrm{int}} e^{-\mu_{\mathrm{cut}} r},
 \\
 \mu_{\mathrm{cut}}&=\min\left\{\frac{1}{2},\frac{\pi}{2v_0\beta
J_{\mathrm{int}}}\right\}
 \ge\frac{c}{1+\beta J_{\mathrm{int}}}.
\end{aligned}
\]
For $0\le u\le s\le1$, set $U_F(s,u)=F_sF_u^{-1}$.
It satisfies
\[
 \partial_s U_F(s,u)=D_sU_F(s,u),\qquad U_F(u,u)=I,
 \qquad \|U_F(s,u)\|\le e^{2\beta J_{\mathrm{int}}(s-u)}.
\]
Variation of constants and \eqref{qc:cut-identities} therefore give
\begin{align*}
 F_1-Q_1
 &=\int_0^1 U_F(1,s)(D_s-D_s^{I_r})Q_s\,ds,\\
 \|F_1-Q_1\|
 &\le C\beta J_{\mathrm{int}} e^{-\mu_{\mathrm{cut}}r}
       \int_0^1e^{2\beta J_{\mathrm{int}}(1-s)}e^{2\beta J_{\mathrm{int}}s}\,ds
 =C\beta J_{\mathrm{int}} e^{2\beta J_{\mathrm{int}}}e^{-\mu_{\mathrm{cut}}r}.
\end{align*}
This gives \eqref{qc:cut-locality}.
\end{proof}

\subsection{Many cuts and inclusion--exclusion}

\begin{proof}[Proof of \Cref{qc:clustering}]
The case $J_{\mathrm{int}}=0$ is a product state. Assume $\beta
J_{\mathrm{int}}>0$. By homogeneity, we
first treat Hermitian $A,B$ with norm at most one. Relabel the
intervals so that $A$ is to the left of $B$. Put
\[
 K=-H+(N-1)J_{\mathrm{int}}I
   =\sum_i(-V_i)+\sum_i(J_{\mathrm{int}}I-W_i).
\]
Its edge terms are positive with norm at most $2J_{\mathrm{int}}$, and its normalized
positive-exponent Gibbs state is precisely $\sigma_\beta$.
Double the chain and regard the two copies of each site as a single
site. Write 
\[
 \mathbb K=K\otimes I+I\otimes K,
 \qquad \mathbb Z=\Tr(e^{\beta\mathbb K}),
 \qquad
 \Psi=(A\otimes I)(B\otimes I-I\otimes B).
\]
The doubled edge terms are positive with norm at most $4J_{\mathrm{int}}$.
Because the observable supports are disjoint, $\Psi$ is Hermitian
and $\norm{\Psi}\le2$. Direct expansion shows that
\begin{equation}
 \Cov_{\sigma_\beta}(A,B)
 =\frac{\Tr(\Psi e^{\beta\mathbb K})}{\mathbb Z}.
 \label{qc:double}
\end{equation}

Choose $n_{\rm cut}$ edges in the gap between the supports, with disjoint
boundary neighbourhoods
$I_1,\ldots,I_{n_{\rm cut}}$ of $2r$ sites that lie entirely in
that gap.
Let $h_j$ be the corresponding doubled edge terms, and define
\[
 \mathbb K_0=\mathbb K-\sum_{j=1}^{n_{\rm cut}} h_j,
 \qquad
 \mathbb K_\lambda=\mathbb K_0+
                  \sum_{j=1}^{n_{\rm cut}}\lambda_jh_j,
 \qquad \lambda\in\{0,1\}^{n_{\rm cut}}.
\]
If $\lambda_j=0$, the missing edge separates $A$ from $B$ in both
copies. The Gibbs operator factorizes across this cut. Its right
factor is invariant under swapping the two copies, whereas
$B\otimes I-I\otimes B$ is antisymmetric under the same swap.
Consequently,
\[
 \Tr(\Psi e^{\beta\mathbb K_\lambda})=0
 \quad\hbox{unless}\quad \lambda=\boldsymbol1.
\]
Inclusion--exclusion therefore gives the exact identity
\begin{equation}
 \Cov_{\sigma_\beta}(A,B)
 =\frac{\Tr(\Psi\Gamma)}{\mathbb Z},
 \qquad
 \Gamma=\sum_{\lambda\in\{0,1\}^{n_{\rm cut}}}
          (-1)^{n_{\rm cut}-\sum_j\lambda_j}e^{\beta\mathbb K_\lambda}.
 \label{qc:alternating}
\end{equation}

For each $\lambda$, add the selected edges in the order $n_{\rm cut},n_{\rm
cut}-1,\ldots,1$.
The exact QBP identities give
\[
 e^{\beta\mathbb K_\lambda}
 =P_\lambda e^{\beta\mathbb K_0}P_\lambda^*,
 \qquad
 P_\lambda=F_{1,\lambda}\cdots F_{n_{\rm cut},\lambda}.
\]
Here $F_{j,\lambda}=I$ if $\lambda_j=0$; otherwise it is the operator from \Cref{qc:positive-cut} obtained by adding $h_j$ to the Hamiltonian with the later-indexed edges already
added. Since the selected
edges have disjoint neighbourhoods, restriction
to $I_j$
removes all selected edges except the $j$th.
Thus the operator
obtained by restriction to $I_j$ depends only on $\lambda_j$:
write $Q_j(0)=I$, $Q_j(1)=Q_j$, and
\[
 \widetilde P_\lambda=Q_1(\lambda_1)\cdots Q_{n_{\rm cut}}(\lambda_{n_{\rm cut}}).
\]
The factors $Q_j$ have disjoint supports. With
$\eta_r=C_0e^{-\mu_{\mathrm{cut}} r}$, \Cref{qc:positive-cut} and
telescoping of the products imply
\begin{align}
 \norm{P_\lambda},\norm{\widetilde P_\lambda}&\le e^{2\beta J_{\mathrm{int}}n_{\rm
cut}},
 &\norm{P_\lambda^{-1}}&\le1,\notag\\
 \norm{P_\lambda-\widetilde P_\lambda}
 &\le n_{\rm cut}\eta_r e^{2\beta J_{\mathrm{int}}(n_{\rm cut}-1)}.
 \label{qc:products}
\end{align}
For $E_\lambda=\widetilde P_\lambda P_\lambda^{-1}$ this gives
$\norm{E_\lambda-I}\le n_{\rm cut}\eta_r e^{2\beta J_{\mathrm{int}}(n_{\rm cut}-1)}$
and
$\norm{E_\lambda}\le e^{2\beta J_{\mathrm{int}}n_{\rm cut}}$. The elementary inequality
\[
 \nOne{ESE^*-S}
 \le\norm{E-I}(1+\norm{E})\Tr S,
 \qquad S\ge0,
\]
follows by writing $ESE^*-S=(E-I)SE^*+S(E^*-I)$.
Apply it to $S=e^{\beta\mathbb K_\lambda}$ to obtain
\begin{equation}
 \nOne{\widetilde P_\lambda e^{\beta\mathbb K_0}
              \widetilde P_\lambda^*-e^{\beta\mathbb K_\lambda}}
 \le\varepsilon_{\mathrm{cut}}\Tr(e^{\beta\mathbb K_\lambda}),
 \qquad
 \varepsilon_{\mathrm{cut}}=2n_{\rm cut}C_0e^{-\mu_{\mathrm{cut}} r}e^{4\beta
J_{\mathrm{int}}n_{\rm cut}}.
 \label{qc:relative-error}
\end{equation}
Since the missing edge terms are positive,
$\mathbb K_\lambda\le\mathbb K$, and hence
$\Tr(e^{\beta\mathbb K_\lambda})\le\mathbb Z$.
This follows from trace monotonicity: along a positive perturbation
$L+tP$, the derivative of $\Tr(e^{\beta(L+tP)})$ is
$\beta\Tr(Pe^{\beta(L+tP)})\ge0$.
Let $\widetilde\Gamma$ be the alternating sum in
\eqref{qc:alternating} with each $e^{\beta\mathbb K_\lambda}$
replaced by $\widetilde P_\lambda e^{\beta\mathbb K_0}\widetilde P_\lambda^*$.
Summing \eqref{qc:relative-error} gives
\begin{equation}
 \nOne{\Gamma-\widetilde\Gamma}\le2^{n_{\rm cut}}\varepsilon_{\mathrm{cut}}\mathbb Z.
 \label{qc:sum-error}
\end{equation}

It remains to bound the localized alternating sum. Since every $Q_j$
commutes with $\Psi$ and all the $Q_j$ have disjoint supports,
cyclicity and the distributive law give
\begin{equation}
 \Tr(\Psi\widetilde\Gamma)
 =\Tr\!\left[
   \Psi\prod_{j=1}^{n_{\rm cut}}(Q_j^*Q_j-I)e^{\beta\mathbb K_0}\right].
 \label{qc:factorization}
\end{equation}
Put $q=1-e^{-4\beta J_{\mathrm{int}}}$. From
$I\le Q_j^*Q_j\le e^{4\beta J_{\mathrm{int}}}I$ we obtain
\[
 0\le Q_j^*Q_j-I\le qQ_j^*Q_j,
 \qquad
 0\le\prod_j(Q_j^*Q_j-I)\le q^{n_{\rm cut}}\prod_jQ_j^*Q_j.
\]
The product inequalities hold because the factors act on disjoint
supports. For the Hermitian operator $\Psi$, commutation with these
positive products and $\norm{\Psi}\le2$ imply
\[
 \abs{\Tr(\Psi\widetilde\Gamma)}
 \le2q^{n_{\rm cut}}\Tr\!\left[\prod_jQ_j^*Q_j e^{\beta\mathbb K_0}\right]
 \le2q^{n_{\rm cut}}(1+\varepsilon_{\mathrm{cut}})\mathbb Z.
\]
The last step is \eqref{qc:relative-error} for
$\lambda=\boldsymbol1$. Combining this with \eqref{qc:sum-error}
proves the explicit multi-cut estimate
\begin{equation}
 \begin{gathered}
 \abs{\Cov_{\sigma_\beta}(A,B)}
 \le2q^{n_{\rm cut}}(1+\varepsilon_{\mathrm{cut}})+2^{n_{\rm
 cut}+1}\varepsilon_{\mathrm{cut}},\\
 q=1-e^{-4\beta J_{\mathrm{int}}},\quad
 \varepsilon_{\mathrm{cut}}=2n_{\rm cut}C_0e^{-\mu_{\mathrm{cut}} r}e^{4\beta
J_{\mathrm{int}}n_{\rm cut}}.
 \end{gathered}
 \label{qc:cut-bound}
\end{equation}

We now choose the
number and spacing of the cuts so that the
covariance prefactor is independent of temperature. Let
$\log_+u=\max\{0,\log u\}$ and choose
\begin{equation}
 k=1+\left\lceil
  \frac{4\beta J_{\mathrm{int}}+\log2+3+\log_+C_0}{\mu_{\mathrm{cut}}}
 \right\rceil.
 \label{qc:kappa}
\end{equation}
Because $C_0=C\beta J_{\mathrm{int}}e^{2\beta J_{\mathrm{int}}}$ and
$\mu_{\mathrm{cut}}^{-1}\le C(1+\beta J_{\mathrm{int}})$, we have
$1\le k\le C(1+\beta J_{\mathrm{int}})^2$. Write $a=\max\supp(A)$,
$b=\min\supp(B)$, so $R=b-a$, and set
\[
 n_{\rm cut}=\left\lfloor\sqrt{\frac{R-1}{2(k+1)}}\right\rfloor,
 \qquad r=\lceil k n_{\rm cut}\rceil.
\]
When $n_{\rm cut}\ge1$, the boundary neighbourhoods
\[
 I_j=\{a+2(j-1)r+1,\ldots,a+2jr\},
 \qquad 1\le j\le n_{\rm cut},
\]
are disjoint and lie in the gap because
$2rn_{\rm cut}\le2(k+1)n_{\rm cut}^2\le R-1$.
Choose the central edge of each
boundary neighbourhood. Our choice of $k$ ensures
$C_0e^{-n_{\rm cut}\log_+C_0}\le1$ and therefore
\[
 2^{n_{\rm cut}+1}\varepsilon_{\mathrm{cut}}\le4n_{\rm cut}e^{-3n_{\rm cut}},
 \qquad 2q^{n_{\rm cut}}\varepsilon_{\mathrm{cut}}\le4n_{\rm cut}e^{-3n_{\rm cut}}.
\]
With $\zeta=e^{-4\beta J_{\mathrm{int}}}\le1$, $q^{n_{\rm cut}}\le e^{-\zeta n_{\rm
cut}}$, and
$n_{\rm cut} e^{-3n_{\rm cut}}\le e^{-\zeta n_{\rm cut}}$, \eqref{qc:cut-bound} yields
\[
 \abs{\Cov_{\sigma_\beta}(A,B)}\le10e^{-\zeta n_{\rm cut}}.
\]
Taking the integer part
changes the estimate by at most a universal factor: for $R\ge1$,
\[
 n_{\rm cut}\ge\sqrt{\frac{R}{2(k+1)}}-2.
\]
Thus the desired estimate for Hermitian observables follows from
$k+1\le C(1+\beta J_{\mathrm{int}})^2$. If $n_{\rm cut}=0$, the elementary covariance
bound $2$
is covered by the same universal prefactor. Decomposing arbitrary $A,B$ into their Hermitian
real and imaginary
parts increases the bound by
at most a factor of four. Homogeneity then proves
\eqref{qc:covariance}.
Every step is unchanged by restricting the chain to a subinterval.
\end{proof}

\section{Implementation of the smoothed heat-bath dynamics}
\label{app:implementation}

We prove \Cref{qi:theorem} by reducing the global evolution to a
sequence of local block updates. Locality restricts each update to
a neighbourhood of logarithmic radius, where we construct block
encodings of the QBP operators and apply Lindblad simulation.

\subsection{Input model and local approximation}
\label{app:implementation:model}

Throughout this section we assume that the Hamiltonian terms, interaction
graph, inverse temperature, update rates, and prescribed block durations
$0<T_B\le T$ are supplied to sufficiently high precision. We simulate
the update on $B$ for time $T_B$ and use the common upper bound $T$
in all accuracy and cost estimates. We first localize the block updates:

\begin{lemma}
\label{qi:certified-restrictions}
For an integer $u\ge r$, construct $\mathcal K_{B,u}$ by replacing
$H$ with $H_{B^{\le u}}$ in $\mathcal K_B$, retaining the reset
$\mathcal R_B$. Then $\Phi_{B,u}=e^{T_B\mathcal K_{B,u}}$ satisfies
\begin{equation}
 \|\Phi_B-\Phi_{B,u}\|_\diamond
 \le TC_{\rm loc}e^{-\mu u}.
 \label{qi:tail}
\end{equation}
One may take $\mu=\mu_{\rm LR}/[16(1+\beta v_{\rm LR})]$.
The constant $C_{\rm loc}$ depends only on
$\beta J_{\rm int},d,r,D,C_{\rm vol}$ and $b=\sup_B|B|$,
not on $N,T$ or the on-site norms.
For chains and $u\ge\max\{r,b\}$, one may instead use
$\mu=\lambda/2$ and $C_{\rm loc}=e^{\lambda/2}q_\beta$ from
\Cref{qg:chain-locality}.
\end{lemma}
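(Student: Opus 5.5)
The plan is to mirror the chain argument of \Cref{qg:chain-locality}, with the polynomial-growth Lieb--Robinson bound of \Cref{qg:lr} replacing the one-dimensional estimates. Write $U_u=B^{\le u}$. Since $u\ge r$, $U_u$ contains $B$ and every support crossing $B$. Hence the crossing interaction $W_{\partial B}$, the block Hamiltonian $H_B$, and the reset $\mathcal R_B$ are unchanged under restriction. Only three objects change: the QBP coefficient $\mathcal F$ along the interpolation, the loss operator $S_B$, and the coherent term $\mathcal G$.

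\emph{Restriction estimates for $\mathcal F$ and $\mathcal G$.} I would first apply part~(i) of the proof of \Cref{qp:boundaries} with $\delta=0$. For $A$ supported on $S$ and $S^{\le q}\cap\Omega'\subseteq\Omega$, it gives
\[
\|\mathcal F_{H_{\Omega'}}(A)-\mathcal F_{H_\Omega}(A)\|+\|\mathcal G_{H_{\Omega'}}(A)-\mathcal G_{H_\Omega}(A)\|\le C|S|\|A\|e^{-\mu_1 q},
\]
where $\mu_1=\min\{\mu_{\rm LR}/2,\pi\mu_{\rm LR}/(\beta v_{\rm LR})\}$. This rate is at least a constant multiple of $\mu_{\rm LR}/(1+\beta v_{\rm LR})$. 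Applying it to $A=W_{\partial B}$, which is supported in $B^{\le r}$ with size at most $C_{\rm vol}b(1+r)^D$, along the interpolation $H_U-(1-\theta)W_{\partial B}$ controls the difference of the ODE coefficients. Variation of constants with \eqref{qg:O-interval} then gives $\|O_B^{\pm1}-O_{B,U_u}^{\pm1}\|\le Ce^{\beta w_B/2}e^{-\mu_1(u-r)}$, where $w_B\le bJ_{\rm int}$ is independent of on-site norms. The conjugation estimate from \Cref{q1:jump-locality} then yields the same bound for $\|\mathcal T_B-\mathcal T_{B,u}\|_\diamond+\|S_B-S_{B,u}\|$.

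\emph{Coherent term: the main obstacle.} The step I expect to be hardest is $G_B-G_{B,u}$. The multiplier bound of \Cref{qg:multiplier} grows with the support size, so it cannot be applied to $S_B$ on all of $\Lambda$. I would follow the proof of \Cref{q1:coherent-locality} and telescope $S_{B,U_m}-S_{B,U_{m-1}}$ for $m\ge1$. Each increment is supported on $U_m$, which has size $\le C m^D$, and has norm $\le Ce^{-\mu_1 m}$. The restriction of $\mathcal G$ at the outer radius is handled by the estimate above, and the newest increment by the multiplier bound $4+|U_m|\log_2 d$.

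The case $m=0$ needs separate care. Here $\mathcal G$ acts on $S_{B,U_r}-\one$, which is supported on $U_r$. Its restricted Hamiltonian must be compared across increasing neighbourhoods, and each comparison costs $(m+1)^D$ times an exponential. The sum therefore carries polynomial prefactors, of the form $\sum_m (m+1)^{2D}e^{-\mu_1\max(u-m,0)}\cdots$. To absorb them, I would sacrifice part of the exponent: $(u+1)^{2D+1}e^{-\mu_1 u}\le C e^{-\mu_1u/2}$, with a further halving if needed. This is how the stated rate $\mu_{\rm LR}/[16(1+\beta v_{\rm LR})]$, with its loss of constant factors, would arise. The same step yields the generator bound $\|\mathcal K_B-\mathcal K_{B,u}\|_\diamond\le C_{\rm loc}e^{-\mu u}$.

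\emph{Conclusion.} Duhamel's formula with contractivity, as in \eqref{qg:channel-diamond}, gives $\|\Phi_B-\Phi_{B,u}\|_\diamond\le T_B\|\mathcal K_B-\mathcal K_{B,u}\|_\diamond\le TC_{\rm loc}e^{-\mu u}$. All constants come from interaction norms and geometry; the on-site terms enter only through unitary interaction-picture conjugations in \Cref{qg:lr}, so the constants are independent of on-site norms. For chains, \Cref{qg:chain-locality} with collar radius $u-1$, using $B^{\le u}=B\cup C_{u-1}(B)$, gives directly $q_\beta T e^{-\lambda(u-1)/2}=e^{\lambda/2}q_\beta Te^{-\lambda u/2}$.
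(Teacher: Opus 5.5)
Your argument for graphs of polynomial volume growth is sound and is essentially the paper's. The paper invokes the envelope of \Cref{qp:boundaries} for the successive restrictions $B^{\le v}$, $v\ge u$, and then sums $\|\Phi_B-\Phi_{B,u}\|_\diamond\le T\sum_{v\ge u}\|\mathcal K_{B,v+1}-\mathcal K_{B,v}\|_\diamond$. You instead redo that restriction argument directly between $\Lambda$ and $B^{\le u}$: one shot for $O$, $\mathcal T$ and $S$, and a telescoping sum only for the coherent term. The two routes are equivalent.

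The gap is in the chain case. The localized channel $\Phi_{B,u-1}(T)$ of \Cref{qg:chain-locality} is not merely \emph{supported} on $B\cup C_{u-1}(B)$. It is \emph{constructed} from the collar Hamiltonian $H_{C_{u-1}(B)}$, in two places:
\begin{itemize}
\item the QBP equation for $O_{B,u-1}$, whose interpolation is $H_{C_{u-1}(B)}-(1-\theta)W_{\partial B}$;
\item the coherent term $G_{B,u-1}=\mathcal G_{H_{C_{u-1}(B)}}(S_{B,u-1})$.
\end{itemize}
The lemma's $\mathcal K_{B,u}$ is built from $H_{B^{\le u}}$ instead. The identity $B^{\le u}=B\cup C_{u-1}(B)$ only matches supports. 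When the collar does not cover the block ($2u<|B|$ for an interior interval, $u<|B|$ for an end interval), $H_{B^{\le u}}$ contains interior terms of $H_B$ that $H_{C_{u-1}(B)}$ omits. Then the filtered sources $\mathcal F_{H_{B^{\le u}}-(1-\theta)W_{\partial B}}(W_{\partial B})$ and $\mathcal F_{H_{C_{u-1}(B)}-(1-\theta)W_{\partial B}}(W_{\partial B})$ differ, as do the coherent corrections. So the two localized generators are genuinely different, and \eqref{qg:channel-diamond} does not bound $\|\Phi_B-\Phi_{B,u}\|_\diamond$.

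Another restriction estimate could bound this discrepancy, but not with the stated constants $\mu=\lambda/2$, $C_{\rm loc}=e^{\lambda/2}q_\beta$. Your argument never uses the hypothesis $u\ge b$, and that hypothesis is exactly what closes the gap. For a proper interval and $u\ge b\ge|B|$, the collar covers the block, so $C_{u-1}(B)=[N]\cap[\min B-u,\max B+u]=B^{\le u}$. Hence $H_{C_{u-1}(B)}=H_{B^{\le u}}$ and the two constructions coincide; for $B=[N]$ both channels are exact. Only then does \eqref{qg:channel-diamond} at radius $u-1$ give $e^{\lambda/2}q_\beta Te^{-\lambda u/2}$.
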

\begin{proof}
The restriction estimate in the proof of \Cref{qp:boundaries}, applied to
$B^{\le v}$ with intersite strength $J_{\rm int}$ and crossing
norm at most $bJ_{\rm int}$, gives
$\|\mathcal K_{B,v+1}-\mathcal K_{B,v}\|_\diamond
\le C_s(v+1)^{D+1}e^{-2\mu v}$.
This step uses only the interaction bounds and locality.
Duhamel's formula and contractivity therefore yield
\[
 \|\Phi_B-\Phi_{B,u}\|_\diamond
 \le TC_s\sum_{v\ge u}(v+1)^{D+1}e^{-2\mu v}
 \le TC_{\rm loc}e^{-\mu u}.
\]
For a proper chain interval and $u\ge|B|$,
$C_{u-1}(B)=B^{\le u}$, so \eqref{qg:channel-diamond} gives the
stated chain constants. Finite-range chains are grouped as in
Section~\ref{q1:section}.
\end{proof}

\subsection{Encodings of the QBP operators}

To simulate a restricted update, we need encodings of its jump
operators and coherent correction. We first approximate the QBP
filter by a positive quadrature, then use it to encode the dressing
propagators and the commutator defining the correction.

An $(\alpha,\eta)$ block encoding of $A$ is a unitary $U$ with
$\|A-\alpha(\langle0|\otimes\one)U(|0\rangle\otimes\one)\|\le\eta$.
We use the sum and product rules of
\cite[Lemmas~52--53]{GSLW2019}, with separate ancillary registers.

Fix $R=B^{\le u}$, $u\ge r$, and put
$W=W_{\partial B}$ and $H_R(s)=H_R-(1-s)W$ for $s\in[0,1]$.
The restricted dressing solves
\begin{equation}
 A_R(s)=-\frac\beta4\mathcal F_{H_R(s)}(W),\qquad
 O_R'=A_RO_R,\qquad O_R(0)=\one.
 \label{qi:ode}
\end{equation}
Recall that
\begin{equation}
 p_\beta(t)=\frac4{\pi\beta}\log\coth\frac{\pi|t|}{\beta},
 \qquad \int_{\mathbb R}p_\beta(t)\,dt=1.
 \label{qi:kernel}
\end{equation}
Real-time Duhamel and \Cref{lem:qbp-kernels} give
\begin{equation}
 \sup_s\|A_R(s)\|\le\frac\beta4\|W\|,\qquad
 \sup_s\|A_R'(s)\|
 \le\frac\beta2\|W\|^2\int_{\mathbb R}|t|p_\beta(t)\,dt=O(1).
 \label{qi:odebounds}
\end{equation}
The local-term representations encode $H_R(s)$ and $W$ with
normalizations $O(|R|)$ and $O(1)$, respectively.

\begin{lemma}[Positive quadrature for the QBP filter]
\label{qi:filter}
Let $0<\xi<1$, $q=1+\log(1/\xi)$, and $|R|\le\operatorname{poly}(q)$.
There are nonnegative weights $w_\nu$ and real nodes $t_\nu$,
independent of $s$ and of the input operator $A$, such that
\begin{equation}
 \begin{gathered}
 \sum_\nu w_\nu\le1+\xi,\qquad
 \sup_{s\in[0,1]}\left\|
 \mathcal F_{H_R(s)}(A)
 -\sum_\nu w_\nu e^{it_\nu H_R(s)}A e^{-it_\nu H_R(s)}
 \right\|\le\xi\|A\|.
 \end{gathered}
 \label{qi:quadrature}
\end{equation}
Their number, description length, and classical construction cost
are polynomial in $q$, and $\max_\nu|t_\nu|=O(\beta q)$.
The same construction applies to $H_R=H_R(1)$.
\end{lemma}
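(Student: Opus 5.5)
The plan has three steps: reduce the filter to a scalar quadrature problem for the kernel $p_\beta$, truncate the kernel in time, and then discretize it with a positive rule. For every Hermitian $K$ and every operator $A$,
\[
\mathcal F_K(A)-\sum_\nu w_\nu\tau^K_{t_\nu}(A)
=\sum_\omega\Bigl(\widehat p_\beta(\omega)-\sum_\nu w_\nu e^{it_\nu\omega}\Bigr)A_\omega .
\]
A bound through Bohr components would bring in a dimension factor, so I will not use this identity directly. Instead I will write the quadrature error as an integral of the map $t\mapsto\tau_t^{K}(A)$ against a signed measure, and bound it by total variation plus a smoothness term. Unitary invariance then makes every estimate independent of $K$. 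Hence the same nodes and weights work for all $s\in[0,1]$, for $H_R=H_R(1)$, and for every input $A$.

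The first step is truncation. By \eqref{eq:qbp-kernel-tail}, the mass of $p_\beta$ outside $[-L,L]$ is at most $e^{-2\pi L/\beta}$. Choosing $L=\beta\log(4/\xi)/(2\pi)=O(\beta q)$ makes the discarded part contribute at most $\xi\|A\|/4$ in norm. This choice also gives $\max_\nu|t_\nu|\le L=O(\beta q)$.

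The second step is discretizing the kernel on $[-L,L]$. I split off a small neighbourhood $|t|\le\epsilon$ of the logarithmic singularity and place a single node at $t=0$ carrying weight $\int_{|t|\le\epsilon}p_\beta$. The resulting error is at most $\int_{|t|\le\epsilon}p_\beta(t)\,\|\tau_t^K(A)-A\|\,dt$.

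Controlling $\|\tau_t^K(A)-A\|$ is the delicate point: it involves $\|[K,A]\|$, which is unbounded in the on-site norms. I will avoid it differently. Because $\int_{|t|\le\epsilon}p_\beta=O(\epsilon\log(1/\epsilon)/\beta)$, the error is at most $2\|A\|\int_{|t|\le\epsilon}p_\beta$. Taking $\epsilon$ polynomially small in $\xi$ makes this at most $\xi\|A\|/4$, with no smoothness needed.

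On $\epsilon\le|t|\le L$ the kernel is smooth, positive and monotone. On each subinterval $I$ of a partition, I put the node at a point of $I$ with weight $\int_I p_\beta\ge0$. The error on $I$ is then at most $\int_I p_\beta(t)\,\|\tau_t^K(A)-\tau_{t_I}^K(A)\|\,dt$, which again requires continuity in $t$.

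This is the main obstacle: continuity in $t$ is not uniform in $K$ when the on-site norms are arbitrary. I would first try to exploit the setting, where the implementation theorem assumes bounded on-site norms, so that $\|K\|=\|H_R(s)\|=O(|R|)=\operatorname{poly}(q)$. Then $\|\tau_t(A)-\tau_{t'}(A)\|\le2|t-t'|\,\|K\|\,\|A\|$. A uniform mesh of width $h=\xi/(8\|K\|)$ therefore gives error at most $\xi\|A\|/4$, using a number of nodes polynomial in $q$ since $L,\|K\|,1/\epsilon$ are all $\operatorname{poly}(q)$.

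The weights are nonnegative by construction, and their sum equals the mass of $p_\beta$ over $[-L,L]$, which is at most $1\le1+\xi$. The slack $\xi$ allows the weights to be computed only approximately, by rational rounding to $O(q)$ bits. Each weight is an integral of $\log\coth$, evaluable to precision $\xi/\text{(number of nodes)}$ by the series in the proof of \Cref{lem:qbp-kernels}, truncated after $O(q)$ terms. This gives polynomial description length and classical cost.

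Summing the three error contributions yields \eqref{qi:quadrature} uniformly in $s$.
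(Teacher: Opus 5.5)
Your truncation at $|t|\le L=O(\beta q)$ is fine and matches the paper. So are your treatment of the logarithmic singularity (discarding, or lumping at $t=0$, an exponentially small mass), your use of $\sup_s\|H_R(s)\|=O(|R|)$ from the bounded on-site terms, and the positivity and total mass of your weights.

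The gap is the node count. Your rule on $\epsilon\le|t|\le L$ is first order. With the uniform mesh $h=\xi/(8\|K\|)$ you use about $16L\|K\|/\xi$ nodes. Since $1/\xi=e^{q-1}$, this is $\operatorname{poly}(q)\,e^{q}$, which is exponential in $q$, not polynomial as the lemma requires. The classical cost of computing these weights blows up in the same way, and the subsequent block encodings need cost polynomial in $1+\log(1/\eta)$.

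Your claim that ``$L,\|K\|,1/\epsilon$ are all $\operatorname{poly}(q)$'' conflates polynomial in $1/\xi$ with polynomial in $q=1+\log(1/\xi)$. Taking $\epsilon=\xi^{O(1)}$ gives $1/\epsilon=e^{\Theta(q)}$, and in any case $1/h$ already carries the factor $1/\xi$. No fixed-order refinement helps either: an order-$p$ rule still needs on the order of $\xi^{-1/p}$ nodes.

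The missing idea is spectral accuracy from analyticity. The map $t\mapsto\tau_t^K(A)$ extends to an entire function with $\|e^{izK}Ae^{-izK}\|\le e^{2\|K\||\operatorname{Im}z|}\|A\|$. Choose panels whose lengths are a small multiple of $\min\{t,\beta,(1+\Lambda_R)^{-1}\}$, so they are graded proportionally to $t$ near the singularity of $p_\beta$. On each rescaled panel, the full integrand $p_\beta(t)\tau_t^K(A)$ is then analytic and controlled on a fixed Bernstein ellipse. This gives:
\begin{itemize}
\item an $O(q)$-point Gauss--Legendre rule per panel with error $e^{-\Omega(q)}$;
\item nonnegative weights, since they are positive Gauss weights times values of $p_\beta$;
\item $O(q+L/\beta+L\Lambda_R)=\operatorname{poly}(q)$ panels in total.
\end{itemize}
This is the paper's argument.

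Within your own framework, the fix is to replace the single node per panel by an $O(q)$-point Gauss rule for the positive weight $p_\beta$ restricted to a panel of width at most of order $1/\|K\|$. Its error is at most twice the panel mass times the best polynomial approximation error of $t\mapsto\tau_t^K(A)$ on that panel, which decays geometrically in the number of nodes.
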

\begin{proof}
Take a bound $\Lambda_R\ge1+\sup_s\|H_R(s)\|=O(|R|)$ and
truncate the integral to $\beta e^{-cq}\le|t|\le c\beta q$.
The omitted mass is $O(qe^{-cq})+O(e^{-2\pi cq})$.
Partition the positive integration interval into subintervals whose lengths are a small
fixed multiple of $\min\{t,\beta,(1+\Lambda_R)^{-1}\}$, shortening
the last subinterval as needed, and reflect the partition to negative times.
There are polynomially many subintervals. On each rescaled subinterval the
integrand is analytic in a fixed Bernstein ellipse, with kernel
bounded by $O(q)$ and
\[
 \|e^{izH_R(s)}Ae^{-izH_R(s)}\|
 \le e^{2\Lambda_R|\operatorname{Im}z|}\|A\|=O(\|A\|).
\]
Thus the Gaussian quadrature estimate
\cite[Theorem~4.5, Eq.~(4.14)]{Trefethen2008Gauss}, with $O(q)$
nodes per subinterval, gives the claimed error. Applying it to matrix
elements between arbitrary unit vectors gives the operator-norm
bound, and applying it to the scalar kernel gives the weight sum.
The nodes and weights are computable in polynomial time \cite[Section~1]{JohanssonMezzarobba2018}, which concludes the proof.
\end{proof}

Combining this quadrature with controlled Hamiltonian simulation
\cite{BCCKS} turns an encoding of $A$ with normalization
$\alpha_A$ into one of $\mathcal F_{H_R(s)}(A)$ with normalization
$2\alpha_A$. For absolute error $\eta$, take
$\xi=\eta/[c(1+\alpha_A)]$ and allocate the remaining error to
the input encoding, controlled evolutions and scalar weights.
The cost is polynomial in $1+\log(1/\eta)$ whenever the input cost
and normalization are. Reversible evaluation of the coefficients
of $H_R-W+sW$ makes the construction coherent in a binary register
for $s$.

\begin{lemma}[Block encodings of the QBP operators]
\label{qi:dressing}
For $0<\eta<1$ and $|R|\le\operatorname{poly}(1+\log(1/\eta))$,
$O_R(1)$ and $O_R(1)^{-1}$ have $(\alpha,\eta)$ block encodings
with constant $\alpha$ and gate and classical construction cost
$\operatorname{poly}(1+\log(1/\eta))$, uniformly over the local inputs.
\end{lemma}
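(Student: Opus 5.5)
The plan is to discretize the ODE \eqref{qi:ode} by a product formula and to encode each factor with the positive quadrature of \Cref{qi:filter}. By \eqref{qi:odebounds}, $\|A_R(s)\|\le\beta\|W\|/4=O(1)$ and $\|A_R'(s)\|=O(1)$. Split $[0,1]$ into $n$ steps $s_k=k/n$ and set
\[
 \widehat O=\prod_{k=n-1}^{0}e^{A_R(s_k)/n}
 \qquad\text{(ordered product, later times on the left)}.
\]
A first-order Magnus/variation-of-constants comparison, using \eqref{qg:O-interval} to bound the propagators by $e^{\beta\|W\|/4}=O(1)$, gives $\|O_R(1)-\widehat O\|=O(1/n)$. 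Replacing each exponential by its truncated Taylor series of degree $K=O(\log(n/\eta))$ costs a further $O(\eta)$. This is only a polynomial-in-$1/\eta$ step count, so to reach $\operatorname{poly}(\log(1/\eta))$ cost I would switch to a higher-order scheme: truncated Dyson series (time-ordered Taylor expansion) of $O_R$ with Gauss--Legendre quadrature in $s$. This requires $s$-derivatives of $A_R$ of all orders, bounded like $k!\,C^k$. These follow from analyticity of $s\mapsto A_R(s)$ in a complex strip, since the quadrature nodes satisfy $|t_\nu|=O(\beta q)$ and $\|W\|=O(1)$.

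The encoding of $A_R(s)$ itself comes from the remark after \Cref{qi:filter}. Controlled Hamiltonian simulation of $H_R(s)$, with $s$ held in a binary register, combined with the quadrature gives a block encoding of $\mathcal F_{H_R(s)}(W)$ with normalization $2\alpha_W=O(1)$ and error $\xi$. The cost is polynomial in $\log(1/\xi)$, because the simulation times are $O(\beta q)$, $\|H_R\|=O(|R|)=\operatorname{poly}(q)$, and \cite{BCCKS} runs in time $\tilde O(|R|\beta q)$. The Dyson-series terms up to order $K$ are products of at most $K$ such encodings, with coefficients $1/k!$ summed over Gauss nodes. Combining them by the sum and product rules of \cite[Lemmas~52--53]{GSLW2019} yields an encoding of $O_R(1)$ with normalization $\sum_k (C)^k/k!=e^{C}=O(1)$, and additive errors that accumulate linearly. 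Choosing $\xi=\eta/\operatorname{poly}(K)$ and $K=O(\log(1/\eta))$ gives total error $\eta$ and $\operatorname{poly}(1+\log(1/\eta))$ cost. For $O_R(1)^{-1}$, which solves $(O_R^{-1})'=-O_R^{-1}A_R$, the same construction applies with the reversed ordering and the sign changed. Its normalization is again $O(1)$, by \eqref{qg:O-norm}.

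The main obstacle is the error control with a constant normalization $\alpha$. The normalizations multiply under products, so they must be kept tied to the true norms, roughly $e^{\beta\|W\|/4}$. A naive LCU product of $n$ factors would give $\alpha^n$. I would handle this by the Dyson-series LCU, whose total weight $\sum_k\|\beta W/2\|^k/k!$ is bounded by construction, or alternatively by using oblivious amplitude amplification per segment if needed. A second point is the uniformity of the derivative bounds of $A_R(s)$ in the on-site norms. Here I would use the interaction picture, as in \Cref{qg:lr}: $s$ enters only through $W$, so differentiating the filtered evolution only produces commutators with $W$ weighted by $|t|^k p_\beta(t)$, which are integrable with factorial growth.
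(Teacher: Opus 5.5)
Your construction is viable, and its core matches the paper's: a truncated Dyson-series LCU whose select oracle is the $s$-controlled filter encoding of $A_R(s)$, with the inverse obtained by reversing the order and flipping the sign. The difference is in how the $s$-integrals are discretized.

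The paper invokes \cite[Lemma~5]{FLT2023} as a black box on a fixed number $m\ge2\alpha_0$ of equal segments, so that each segment satisfies that lemma's short-time condition, and then multiplies the $m$ encodings. Because $m$ does not depend on $\eta$, the product normalization is a constant, and the $\alpha^n$ blow-up you worry about never arises. The only regularity the paper needs is the $O(1)$ bound on $\|A_R'\|$ in \eqref{qi:odebounds}. This suffices because the time variable sits in a coherent register: a Riemann sum with $\operatorname{poly}(1/\eta)$ nodes costs only $O(1+\log(1/\eta))$ register qubits and polylogarithmic gates. Your reason for abandoning first-order discretization applies to product formulas, not to this encoding. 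So the detour through analyticity and factorial derivative bounds is unnecessary; it buys only a smaller time register.

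If you keep Gauss--Legendre, two details must be fixed.

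\emph{The simplex.} The $k$-th Dyson term is an integral over the simplex. If you symmetrize it to $[0,1]^k$ by time ordering, the integrand is only Lipschitz across $s_i=s_j$: the two orderings differ by $[A_R(s_i),A_R(s_j)]$, which has a kink there. Tensor-product Gauss rules then lose exponential convergence. Use nested rules on $[0,s_{j+1}]$ instead, i.e.\ collapsed coordinates $s_j=y_ky_{k-1}\cdots y_j$. These keep the weights positive and of product form, so state preparation stays cheap, and they keep the integrand analytic. Complexifying the products forces a Bernstein parameter of $1+O(1/k)$, which costs only another polylogarithmic factor.

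\emph{The analyticity strip.} Derive it from the exact $A_R$ rather than from the truncated nodes $|t_\nu|=O(\beta q)$. For $s=x+iy$ one has $\|e^{\pm itH_R(s)}\|\le e^{|t||y|\|W\|}$ and $p_\beta(t)\lesssim e^{-2\pi|t|/\beta}$. Hence $A_R$ is analytic and bounded by $O(\beta\|W\|)$ for $|y|\le\pi/(2\beta\|W\|)$, uniformly in the on-site terms. Then substitute the filter encodings only at the quadrature nodes.
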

\begin{proof}
The filter construction encodes $A_R(s)$ coherently with constant
normalization $\alpha_0\ge1$, and \eqref{qi:odebounds} bounds its
total variation on $[0,1]$ by $O(1)$.
Partition this interval into a fixed number $m\ge2\alpha_0$ of
equal parts and apply the propagator block-encoding lemma
\cite[Lemma~5]{FLT2023} on each part. The preceding construction
supplies its coefficient oracle, with
$O(1+\log(1/\eta))$ qubits in the binary time register.
The product of these encodings gives $O_R(1)$; reducing the local
errors by the fixed product-normalization factors gives absolute
error $\eta$. Both $m$ and the resulting normalization are
independent of $N$ and $\eta$.
For the inverse, apply the same result to $-A_R(1-s)$, whose
propagator $O_R(1-s)O_R(1)^{-1}$ equals $O_R(1)^{-1}$ at $s=1$.
The reversed-time generator has the same norm and variation bounds.
\end{proof}

The restricted jumps and loss are
\begin{equation}
 L_{B,a}^R=O_R(1)(\sigma_\beta^B)^{1/2}F_{B,a}O_R(1)^{-1},
 \qquad S_B^R=\sum_a(L_{B,a}^R)^*L_{B,a}^R.
 \label{qi:jumps}
\end{equation}
There are at most $d^{2|B|}$ jumps, and the matrices
$(\sigma_\beta^B)^{1/2}F_{B,a}$ have constant dimension and can be computed efficiently. The sum and
product rules therefore give constant-normalization encodings
of the jumps and $S_B^R$. It remains to encode
\begin{equation}
 G_B^R=\mathcal G_{H_R}(S_B^R).
 \label{qi:G}
\end{equation}

\begin{lemma}
\label{qi:correction}
Under the same region-size assumption, $G_B^R$ has an
$(\alpha,\eta)$ block encoding with normalization and gate and
classical construction cost polynomial in $1+\log(1/\eta)$.
The same holds for $\mathcal G_{H_R}(A)$ when the Hermitian input
$A$ has such an encoding at the required precision.
\end{lemma}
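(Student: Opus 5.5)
Our plan is to encode $G_B^R=\mathcal G_{H_R}(S_B^R)$ through the commutator representation \eqref{eq:qbp-G-source},
\[
 \mathcal G_{H_R}(A)=\frac{i\beta}{8}\mathcal F_{H_R}([H_R,A]).
\]
This reduces the task to two operations already available: forming a commutator of block-encoded operators, and applying the QBP filter $\mathcal F_{H_R}$ through the positive quadrature of \Cref{qi:filter}. We avoid the $\sinh$-kernel form \eqref{qg:G-integral}. Its weight $1/(\beta\sinh(2\pi t/\beta))$ is not integrable at $t=0$, so a positive quadrature with bounded total weight is not directly available. In the commutator form, the only unbounded factor is $\|[H_R,A]\|\le2\|H_R\|\|A\|=O(|R|)\|A\|$. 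This is polynomial in $1+\log(1/\eta)$ under the region-size assumption.

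First, we encode the commutator. The local-term representation gives a block encoding of $H_R$ with normalization $O(|R|)$. Given an $(\alpha_A,\eta_A)$ encoding of the Hermitian input $A$, the product rule \cite[Lemma~53]{GSLW2019} yields encodings of $H_RA$ and $AH_R$ with normalization $O(|R|\alpha_A)$ and error $O(|R|\eta_A)$. The sum rule \cite[Lemma~52]{GSLW2019}, applied with coefficients $\pm1$, then encodes $[H_R,A]$ with normalization $\alpha_1=O(|R|\alpha_A)$.

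Second, we apply the filter. By \Cref{qi:filter}, with $H_R=H_R(1)$ and the construction described after that lemma, we implement
\[
 \sum_\nu w_\nu e^{it_\nu H_R}[H_R,A]e^{-it_\nu H_R}.
\]
The controlled evolutions come from \cite{BCCKS}, and a state-preparation register encodes the weights $w_\nu$. This gives an encoding of $\mathcal F_{H_R}([H_R,A])$ with normalization $2\alpha_1$. The quadrature contributes error $\xi\|[H_R,A]\|$. We choose $\xi=\eta/[c(1+\alpha_1)]$. The number of nodes is $\operatorname{poly}(q)$ and $\max_\nu|t_\nu|=O(\beta q)$, so the Hamiltonian simulation cost is $\operatorname{poly}(|R|,q)$.

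Third, we multiply by the scalar $i\beta/8$. This is absorbed into the normalization: the phase $i$ is applied by a controlled phase gate on the ancilla, and the factor $\beta/8$ enters $\alpha$ as a constant. We apply the construction to $A=S_B^R$. Its encoding has constant normalization by the sum and product rules applied to \eqref{qi:jumps} and \Cref{qi:dressing}, with at most $d^{2|B|}=O(1)$ terms. Errors are split by the triangle inequality: the input error is reduced by the factor $O(|R|)$ and the filter error by $2\alpha_1$, each multiplied by the fixed normalizations.

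The main obstacle we expect is error bookkeeping rather than structure. The final normalization is $\alpha=O(\beta|R|)$, which is polynomial rather than constant; this matches the statement. The precision $\eta_A$ required of the input $S_B^R$ is $\eta/\operatorname{poly}(|R|)$, and by \Cref{qi:dressing} it remains polylogarithmic in cost. If a constant normalization were required downstream, we would instead use uniform singular value amplification. We do not attempt this here, since the lemma allows polynomial normalization.
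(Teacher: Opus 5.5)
Your proposal is correct and is essentially the paper's proof. You encode $C=i[H_R,A]$ with normalization $O(|R|\alpha_A)$ via the product and sum rules, then apply the positive-quadrature filter through $\mathcal G_{H_R}(A)=\frac{\beta}{8}\mathcal F_{H_R}(C)$, which gives normalization $O(\beta|R|\alpha_A)$ at polylogarithmic cost. The only extra step in the paper is averaging the final encoding with its adjoint so that the encoded block is exactly Hermitian; this helps the downstream Lindblad simulation but is not needed for the lemma as stated.
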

\begin{proof}
Let $\alpha_H=O(|R|)$ and $\alpha_A$ normalize encodings of
$H_R$ and $A$. The product and sum rules encode $C=i[H_R,A]$
with normalization $2\alpha_H\alpha_A$. By
\eqref{eq:qbp-G-source},
\[
 \mathcal G_{H_R}(A)=\frac\beta8\mathcal F_{H_R}(C).
\]
The filter construction gives normalization
$O(\beta\alpha_H\alpha_A)$ and the claimed cost. Averaging the encoding with
its adjoint makes the encoded block Hermitian while preserving
the error bound.
\end{proof}

\subsection{Proof of the implementation theorem}

The proof has two steps. Poissonization writes $e^{t\mathcal L}$ as an
average over a random number of random block updates, and a truncation
bounds this number by $K=O(1+Nt+\log(1/\varepsilon))$. Each update is
then localized to a neighbourhood of logarithmic radius and simulated
there to precision $\varepsilon/(2K)$, at polylogarithmic cost.
Throughout, write $\Theta:=1+\log(2N(1+t)/\varepsilon)$.

\begin{proof}[Proof of \Cref{qi:theorem}]

\proofstep{Poissonization and truncation}
Put $a=\sum_Ba_B$ and $P=a^{-1}\sum_Ba_B\Phi_B$, as in \Cref{thm:qds}.
Then $\mathcal L=a(P-\id)$ and
\[
 e^{t\mathcal L}
 =e^{-\lambda}\sum_{k\ge0}\frac{\lambda^k}{k!}P^k
 =\mathbb E[P^Z],\qquad
 \lambda=at,\quad Z\sim\operatorname{Pois}(\lambda).
\]
Thus $e^{t\mathcal L}$ is realized by drawing $Z$ and applying $Z$
independent block updates, each block $B$ being chosen with probability
$a_B/a$. Since the rates are bounded by $a_*$ and $M=O(N)$, we have
$\lambda\le Ma_*t=O(Nt)$. Set
\[
 K=\left\lceil2\lambda+\log\frac4\varepsilon\right\rceil
 =O(Nt+\Theta).
\]
Markov's inequality applied to $e^Z$ gives
$\Pr(Z>K)\le e^{(e-1)\lambda-K}\le\varepsilon/4$. Replacing the law of
$Z$ by its conditional law given $Z\le K$ therefore changes the channel
by at most $2\Pr(Z>K)\le\varepsilon/2$ in diamond norm.

\proofstep{One block update}
Let $\delta=\varepsilon/(2K)$, so that $\log(1/\delta)=O(\Theta)$.
Fix a block $B$ and choose the integer
\[
 u=\max\left\{r,
   \left\lceil\frac1\mu\log\frac{2TC_{\rm loc}}\delta\right\rceil\right\}
 =O(\Theta),
\]
so that $\|\Phi_B-\Phi_{B,u}\|_\diamond\le\delta/2$ by
\Cref{qi:certified-restrictions}. The localized update acts on
$R=B^{\le u}$, and polynomial volume growth gives
$|R|\le bC_{\rm vol}(1+u)^D=\operatorname{poly}(\Theta)$.
By \Cref{qi:dressing,qi:correction}, the at most $d^{2b}$ localized jump
operators \eqref{qi:jumps} and the coherent correction \eqref{qi:G} admit
block encodings whose normalizations and costs are
$\operatorname{poly}(\Theta)$, and we construct them to precision
$\delta/\operatorname{poly}(T,\nu)$, where $\nu$ is their total
normalization. The Lindblad simulation algorithm of
\cite[Theorem~11]{LW}, which accepts such inexact encodings, then
produces a circuit for $e^{T_B\mathcal K_{B,u}}$ with diamond error at most
$\delta/2$. Since $T_B\le T$ and the number of jumps are bounded, its cost,
including compilation into elementary gates, is $\operatorname{poly}(\Theta)$.
Each update is thus implemented to diamond error $\delta$, and
contractivity bounds the accumulated error over at most $K$ updates by
$K\delta=\varepsilon/2$.

Adding the two contributions gives
$\|\widehat{\mathcal E}_t-e^{t\mathcal L}\|_\diamond\le\varepsilon$.

Finally, for the chain construction the block matrices and the
$d^{2\ell}$ jump operators make $C'$ exponential in $\ell$, and $\ell$ in
\eqref{intro:parameters} is exponential in $\beta J_{\mathrm{int}}$;
this gives the doubly exponential prefactor.
\end{proof}

\begin{remark}[Temperature dependence in one dimension]
\label{qi:temperature-remark-appendix}
We believe that the doubly exponential temperature dependence of the
one-dimensional circuit-cost prefactor in \Cref{qi:theorem} can be
reduced to a single exponential in a polynomial of
$\beta J_{\mathrm{int}}$. The present bound comes from dense
representations of the block Gibbs states and separate encodings of
exponentially many jump operators. A possible approach is to prepare
block Gibbs purifications recursively, joining adjacent intervals
through QBP, and then use the purification circuit to encode all
dressed reset jumps collectively. The normalization at each joining
step is controlled by the interaction across a single bond. This
suggests a prefactor $\exp[O((1+\beta J_{\mathrm{int}})^2)]$, while
retaining nearly linear dependence on $N$ and polylogarithmic
dependence on $N/\varepsilon$ at fixed physical parameters, under the
input assumptions of \Cref{qi:theorem}. We leave a detailed complexity
analysis for future work.
\end{remark}

\end{document}